          \def\cA{{\cal  A}}     
          \def\cB{{\cal  B}}     
          \def\cD{{\cal  D}}     
          \def\cE{{\cal  E}}     
         \def\cF{{\cal  F}}     
          \def\cH{{\cal  H}}     
          \def\cJ{{\cal  J}}     
          \def\cK{{\cal  K}}     
          \def\cL{{\cal  L}}     
          \def\cN{{\cal  N}}     
          \def\cP{{\cal  P}}     
          \def\cR{{\cal  R}}     
          \def\cS{{\cal  S}}     
          \def\cT{{\cal  T}}
          \def\cW{{\cal  W}}
          \def\cZ{{\cal  Z}}
\def \Up {{\Upsilon}}
\DeclareMathOperator*{\argmin}{argmin}
\DeclareMathOperator{\Cov}{Cov}
\def \EE {\mathbb{E}}
\def \RR {\mathbb{R}}
\def \PP {\mathbb{P}}
\def \SS {\mathbb{S}}
\def \NN {\mathbb{N}}
\def \FF {\mathbb{F}}
\newcommand*{\boldone}{\text{\usefont{U}{bbold}{m}{n}1}}
\newcommand{\beq}  {\begin{equation}}
\newcommand{\eeq}  {\end{equation}}
\newcommand{\beqn} {\begin{eqnarray}}
\newcommand{\eeqn} {\end{eqnarray}}
\newcommand{\beqnn}{\begin{eqnarray*}}
\newcommand{\eeqnn}{\end{eqnarray*}}
\theoremstyle{definition}
\newenvironment{aspt}[1]
  {\innercustomasp}
  {\endinnercustomasp}
\newtheorem*{rem}{Remark}
\theoremstyle{plain}
\newtheorem{lem}{Lemma}
\newtheorem{cor}{Corollary}
\newtheorem{prop}{Proposition}
\newtheorem{thm}{Theorem}
\newcommand{\true}[0]{\overline b_{n}^{E}}
\newcommand{\truej}[0]{\overline b_{n}^{j\cdot E}}
\newcommand{\truenj}[0]{g_{n}^{j\cdot E}}
\newcommand{\truestar}[0]{b_{n}^{E}}
\newcommand{\truestarj}[0]{b_{n}^{j\cdot E}}
\newcommand{\trueprime}[0]{\overline b_{n}^{\prime E}}
\newcommand{\truejprime}[0]{\overline b_{n}^{\prime j\cdot E}}
\newcommand{\truenjprime}[0]{g_{n}^{\prime j\cdot E}}
\newcommand{\mle}[0]{\widehat{\beta}_{n}^{E}}
\newcommand{\mlej}[0]{\widehat{\beta}_{n}^{j\cdot E}}
\newcommand{\mlenj}[0]{\widehat{\Gamma}_{n}^{j \cdot E}}
\newcommand{\mlenjr}[0]{\widehat{\gamma}_{n}^{j \cdot E}}
\newcommand{\mleprime}[0]{\widehat{\beta}_{n}^{\prime E}}
\newcommand{\mlejprime}[0]{\widehat{\beta}_{n}^{\prime j\cdot E}}
\newcommand{\mlenjprime}[0]{\widehat{\Gamma}_{n}^{\prime j \cdot E}}
\newcommand{\Rnj}[0]{A_1}
\newcommand{\Tn}[0]{A_2}
\newcommand{\lasso}[0]{\widehat{B}_{n,\lambda}}
\newcommand{\Z}[0]{\widehat{Z}_{n,\lambda}}
\newcommand{\zr}[0]{z_{\lambda}}
\newcommand{\Sgn}[0]{\widehat{S}_{n,\lambda}}
\newcommand{\sgnr}[0]{s_{\lambda}}
\newcommand{\D}[0]{\widehat{D}_{n,\lambda}}
\newcommand{\dr}[0]{d_{\lambda}}
\newcommand{\Ej}[0]{\eta_{j\cdot E}}
\newcommand{\Vj}[0]{\widehat{V}_{n}^{j\cdot E}}
\newcommand{\vjr}[0]{v^{j\cdot E}}
\newcommand{\Uj}[0]{\widehat{U}_{n}^{j\cdot E}}
\newcommand{\varmle}[0]{\Sigma_{E,E}}
\newcommand{\covmlej}[0]{\Sigma_{E, j}}
\newcommand{\varmlej}[0]{\sigma^2_{j}}
\newcommand{\rand}[0]{\omega_n}
\newcommand{\varrand}[0]{\Omega}
\newcommand{\Zn}[0]{\cZ_n} 
\newcommand{\pivot}[0]{\cP^{j\cdot E} \left(\cZ_n\right)}
\newcommand{\pivotup}[0]{\cP^{j\cdot E} \left(\Upsilon_n\right)}
\newcommand{\pivotz}[0]{\cP^{j\cdot E} \left(\cZ\right)}
\newcommand{\event}[0]{\left\{\D = \dr, \Vj= \vjr\right\}}
\newcommand{\RD}[0]{\operatorname{RD}_n} 
\newcommand{\rv}[0]{\cT_n} 
\newcommand{\tail}[0]{L_n} 
\newcommand{\arb}[0]{\cH} 
\newcommand{\mn}[0]{R} 
\newcommand{\W}[0]{\cW_{\alpha, \kappa}} 
\providecommand{\keywords}[1]
{
  \small	
  \textbf{Keywords:} #1
}
\begin{document}

\title{Asymptotically-exact selective inference for quantile regression}
\date{}
\author{Yumeng Wang \\ 
Department of Statistics, University of Michigan \\
and \\
Snigdha Panigrahi \\
Department of Statistics, University of Michigan \\
and \\
Xuming He \\
Department of Statistics and Data Science, Washington University in St. Louis
}

\maketitle

\begin{abstract}
In modern data analysis, it is common to select a model before performing statistical inference.
Selective inference tools make adjustments for the model selection process in order to ensure reliable inference post selection.
In this paper, we introduce an asymptotic pivot to infer about the effects of selected variables on conditional quantile functions.
Utilizing estimators from smoothed quantile regression, our proposed pivot is easy to compute and yields asymptotically-exact selective inference without making strict distributional assumptions about the response variable. 

At the core of our pivot is the use of external randomization variables, which allows us to utilize all available samples for both selection and inference, without partitioning the data into independent subsets or discarding samples at any step.
From simulation studies,  we find that: (i) the asymptotic confidence intervals based on our pivot achieve the desired coverage rates, even in cases where sample splitting fails due to insufficient sample size for inference; (ii) our intervals are consistently shorter than those produced by sample splitting across various models and signal settings.
We report similar findings when we apply our approach to study risk factors for low birth weights in a publicly accessible dataset of US birth records from 2022.
\end{abstract}

\keywords{Nonparametric statistics, Post-selection inference, Quantile regression, Randomization, Selective inference, Smoothed quantile regression.}

\setstretch{1.5}

\section{Introduction}
\label{sec:introduction}

Quantile regression, proposed by \cite{Koenker_and_Bassett}, estimates the conditional quantiles of the response by minimizing a piecewise linear loss function, also known as the check loss function.
Quantile regression does not assume a specific parametric family of distributions or a constant variance for the response variable and is therefore more robust against heavy-tailed errors and outliers than least squares regression.
Due to its robustness and versatility, quantile regression is a popular tool for analyzing the relationship between variables in heterogeneous data.   
See, for example, papers by \cite{ecology, quantile_panel, reference_growth, QRank} on a variety of application domains. 

There is a computational downside to consider, though. 
Estimating conditional quantiles using the non-differentiable check loss function does not scale easily with big data. 
This makes it challenging to select important covariates from a large pool of variables when using penalized versions of the standard quantile regression.
The task of performing inference with big data is even more difficult, and is complicated by the need to estimate nuisance parameters that rely on the true population conditional densities.

Recent papers by \cite{smoothing_fernandes} and \cite{smooth_quantile} have introduced a new method called the Smoothed Quantile Regression (SQR) that overcomes these limitations of the standard quantile regression. 
The SQR method uses a convolution-type smoothing technique to make the  loss function convex and differentiable. 
By adding a penalty term to the loss function, this method can be used to estimate conditional quantiles when analyzing big data with high dimensional covariates.
Although the SQR method can select important covariates and provide point estimators, it does not attach uncertainties to these point estimators or address the practical issue of inference for the effects of the selected variables on the conditional quantile function.
Using the post-selection inference framework, also known as selective inference, our paper provides a theoretically-justified and an easy to implement solution to this problem.

Our paper introduces an asymptotic pivot after selecting variables with the $\ell_1$-penalized SQR method. 
We form this pivot from a conditional distribution of quantile regression estimators given the event of selection.
This method allows us to use the entire set of observed samples for both selection and inference, without partitioning the data into independent subsets  or discarding a portion of the samples at either step.
Inverting our pivot yields asymptotically-exact selective inference for the effects of the selected variables, with coverage probability converging to the desired level as the sample size grows to infinity.

The appeal of our newly introduced pivot lies in its four key features:
\begin{enumerate}
\item it is easy to calculate and only involves one-dimensional integrals for its numerical calculation;
\item it circumvents the need to estimate the true conditional density functions, which are nuisance parameters in our inferential problem;
\item it takes the same form as pivots used in drawing selective inference with least squares regression,  but it stays true to the essence of quantile regression by making no parameteric assumptions about the conditional distribution of the response variable;
\item the confidence intervals based on this pivot continue to assure asymptotically-exact coverage guarantees even when selection events are rare and occur with vanishing probabilities.
\end{enumerate}

We start by giving a brief overview of the $\ell_1$-penalized SQR approach for quantile regression.
Then we present a data example demonstrating the limitations of directly adapting existing post-selection inference methods to the quantile regression problem, followed by a concise summary of our key contributions.

\subsection{The $\ell_1$-penalized SQR method}
\label{sec1.2}

Consider a $p$-dimensional covariate $x \in \RR ^ p$ and a scalar response variable $y \in \RR$. 
Let $\{(y_i, x_i)\}_{i=1} ^ n$ be a set of $n$ independent and identically distributed realizations of $(y, x)$. 
Suppose that $(y, x)$ follows the distribution $\FF$.
It follows that $\{(y_i, x_i)\}_{i=1} ^ n$ will have a joint distribution $\FF_n = \FF \times \cdots \times \FF \; (n \text{ times})$.
Using matrix notation, let $Y = (y_1, y_2, \ldots, y_n) ^ \top\in \RR^n$ and $X = (x_1, x_2, \ldots, x_n) ^ \top \in \RR^{n\times p}$ denote the response vector and the design matrix, respectively.

Suppose that we want to estimate the $\tau$-th  conditional quantile of $y$ given $x$ at a pre-specified quantile level $\tau$.
Let 
$$
\rho_\tau(u) = u\{\tau - 1(u<0)\}
$$ 
be the non-differentiable check loss function.
Additionally, let $K:\RR \to [0,\infty)$ be a symmetric, non-negative function that integrates to $1$ and let
$$K_h(u) = \frac{1}{h} K\left(\frac{u}{h}\right)$$
be a kernel function with a fixed bandwidth parameter $h>0$.
One commonly used kernel function, for example, is the Gaussian kernel function, which is given by
$K_h(u) = \frac{1}{\sqrt{2 \pi}h} e^{-\frac{u^2}{2h ^ 2}}.$
More examples of kernel functions can be found in \cite{smooth_quantile}.

Denote by
\begin{equation}
\widehat Q_{n;h} (\theta; Y) = \frac{1}{n} \sum_{i=1} ^ n \int_{-\infty} ^ \infty \rho_\tau(u) K_h(u + \theta_i - y_i) du
\label{convoluted:loss} 
\end{equation}
the convolution smoothed quantile loss with the kernel function $K_h$ and bandwidth parameter $h$, where $\theta$ is an $n$-dimensional vector.
To select a subset of the $p$ variables, the $\ell_1$-penalized SQR method solves
\begin{equation*}
	\underset{\beta \in \RR ^ p}{\operatorname{minimize}} \; \left\{\sqrt{n} \widehat Q_{n; h} (X\beta; Y)  + \lambda \|\beta\|_1\right\}.
\end{equation*}
When bandwidth parameter $h$ is of the order $\{\log (p) / n\}^{1 / 4}$, the difference between $\ell_1$-penalized SQR estimator and the $\ell_1$-penalized quantile regression estimator is negligible, as shown in Theorem 1 of \cite{high_smooth_quantile}.

In the rest of the paper, we focus on performing selective inference for the effects of variables in the support set estimated with the $\ell_1$-penalized SQR method.

\subsection{Selective inference and a first example}

Let $E\subset \{1,2,\ldots, p\}$ be the selected subset of variables using the $\ell_1$-penalized SQR estimator and let $q = |E|$.
Post selection, it is natural to consider selective inference for the population parameter 
\begin{equation}
\label{eq:infer_target:inf}
    \truestar = \argmin_{b \in \RR ^ q} \EE_{\FF_n} \left[\frac{1}{\sqrt{n}} \sum_{i = 1} ^ n \rho_\tau(y_i - x_{i, E} ^ \top b)\right],
\end{equation}
which is obtained by using the check loss function $\rho_\tau(\cdot)$ on the selected subset of variables with size $q$.
% \begin{equation}
% \label{eq:infer_target:inf}
%     \true = \argmin_{b \in \RR ^ {|E|}} \EE_{\FF_n} \left[\sqrt{n} \widehat Q_{n; h_2}(X_E b; Y)\right].
% \end{equation}

Note that in \eqref{eq:infer_target:inf}, $\EE_{\FF_n} [D]$ denotes the expectation with respect to the true distribution of the variables. 
This type of target is similar to the projection-based targets discussed in the survey article by \citep{zhang2022post}. 
Moreover, this population target provides meaningful parameters even when  the conditional quantile function given $x_E$ is  nonlinear. 
As shown in  \cite{angrist2006quantile}, this target parameter represents the best linear approximation to the conditional quantile function based on a weighted square  loss function.

In reality, the set $E$ is dependent on data. 
Valid selective inference for $\truestar$, without making any unrealistic assumptions about the quality of selection, can be achieved by conditioning on the selection event.
This approach has been investigated by several authors for least squares regression.  
Starting with \cite{lee_seletive}, this line of work, used properties of the $\ell_1$-penalty to show that the selection event in the fixed-$X$ least squares regression setting can be described by a set of linear inequalities in $Y$ and takes the shape of a polyhedron.
In the case that the response is a Gaussian variable, truncating the normal distribution of $Y$ to this polyhedron results in a truncated Gaussian pivot.
This method is commonly referred to as the polyhedral method for selective inference.
In the same setting, \cite{tibshirani_bootstrap} investigated large sample properties of the polyhedral method within a nonparametric family of distributions.

These existing methods may suggest that switching from a quadratic loss  to the convolution smoothed quantile loss using the same approach would yield valid selective inference for our problem.
However, Figure \ref{fig:randomized} demonstrates that the polyhedral method can be quite brittle, as the reliability of inference for $\truestar$, with a data-dependent set $E$, gets severely compromised.
An adaptation of the polyhedral method to the $\ell_1$-penalized SQR framework is shown as ``Previous'' for three regimes of signal strength, ``Low'', ``Mid'' and ``High'' in this figure.
The results are based on $500$ independent Monte Carlo experiments for each setting.
The data in this first example obeys: $y = x ^ \top \beta + \varepsilon - F_\varepsilon ^ {-1} (\tau)$ where $x \in \RR ^ p$ follows a multivariate Gaussian distribution with $p = 200$, $\varepsilon \in \RR$ follows a Gaussian distribution.
Here, $F_\varepsilon ^ {-1} (\tau)$ denotes the $\tau ^ {\text{th}}$ quantile of $\varepsilon$.
The coefficient vector $\beta \in \RR ^ p$ has $5$ nonzero components whose magnitudes vary according to the three different signal strengths. 
We set the sample size to $n = 800$, the quantile level to $\tau = 0.7$ and choose the tuning parameter as $\lambda = \sqrt{\log p/n}$.

The confidence intervals based on ``Previous'' not only fall short in coverage in the ``Low'' and ``Mid'' signal regimes (see leftmost panel of the figure), but also show  numerical instability with low power. 
As seen in the rightmost panel of the figure, the percentage of unbounded intervals with ``Previous'' can be as high as $50\%$ in the ``Low'' signal regime. 
In fact, the paper by \cite{kivaranovic2020tight} showed that the “Previous” method can generate infinitely long intervals, as seen in our ``Low'' and ``Mid'' signal regimes. 
However, it might be surprising to see that this method even falls short of attaining the intended level of coverage. 
This occurs especially in the ``Low'' signal regime when selection events become rare or have low probabilities.
The large sample guarantees for “Previous” in \cite{tibshirani_bootstrap} no longer hold in these settings.

\begin{figure}[t]
	\centering
	\includegraphics[scale = 0.55]{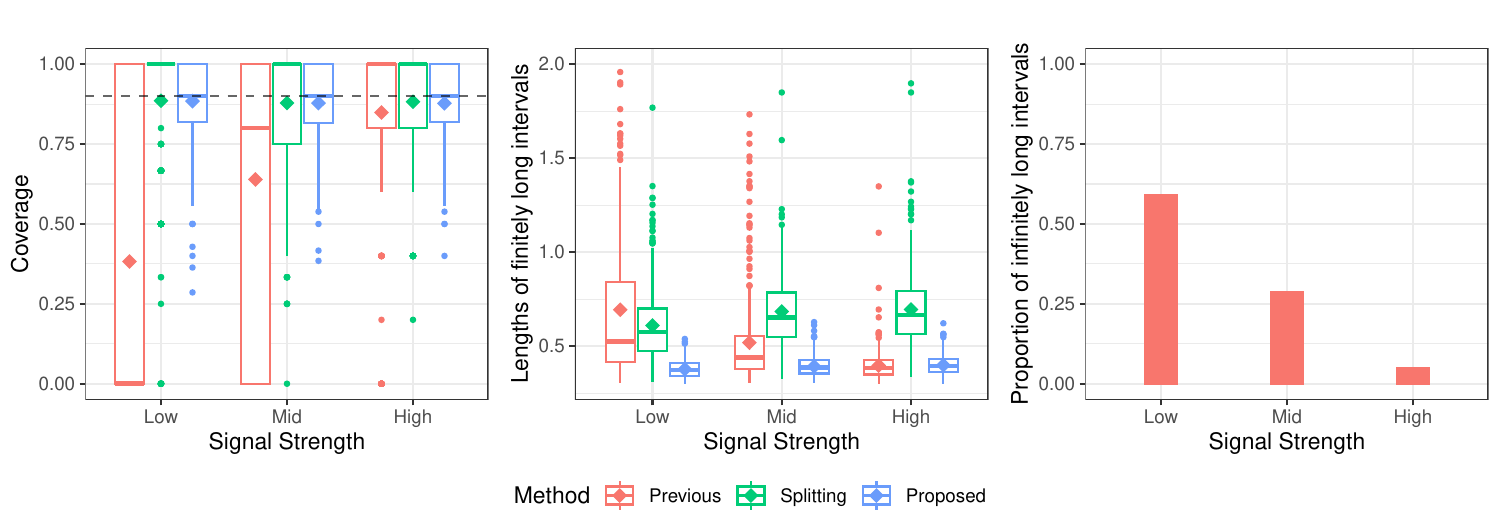}
	\caption{
    \small{Comparison between an adaptation of the polyhedral method (``Previous''), data splitting (``Splitting") and our proposed method (``Proposed'').
    \textit{Left:} Box plots for coverage probabilities of 90\% confidence intervals with the diamond symbol  denoting the mean coverage rate of intervals. The ``Proposed'' method provides valid selective inference across all signal regimes, as does ``Splitting''. However, the coverage rates of the ``Previous'' method are significantly lower than the desired level in both ``Low'' and ``Mid'' signal regimes.
    \textit{Middle:} Lengths of confidence intervals. The confidence intervals generated by ``Proposed'' are substantially shorter than that of ``Previous” in both ``Low'' and ``Mid'' signal regimes, and are also shorter than ``Splitting'' across all regimes.
    \textit{Right:} Proportions of infinitely long intervals. The ``Previous'' method has a high probability of generating infinite intervals in ``Low'' and ``Mid'' signal regimes, which is consistent with the finding in \cite{kivaranovic2020tight}.}
}
\label{fig:randomized}
\end{figure}

\subsection{Randomized selective inference}

In Figure \ref{fig:randomized}, we also present the empirical coverage rates and lengths of intervals based on our newly developed pivot in the paper, which is depicted as ``Proposed''. 
We observe that this method provides more reliable inference, producing bounded intervals that are always narrower than those formed by adapting the polyhedral method to provide inference. 
This observation holds true even after excluding the unbounded intervals produced by the ``Previous'', as shown in the middle panel of this figure.

In the same figure, we also observe that our intervals are significantly shorter compared to the intervals obtained through data splitting. 
To conduct data splitting, we use a random subsample that contains two-thirds of the total sample size for variable selection, and then use the remaining one-third samples for selective inference on the conditional quantile.

The success of our new pivot lies in the addition of external randomization variables to the optimization objective used by the $\ell_1$-penalized SQR method. 
Consider an external randomization variable $\sqrt{n}\rand$ from $\cN(0_p, \varrand)$, independent from our data, where $\varrand\in \RR^{p\times p}$ is a user-specified covariance matrix and $0_p$ is a vector of zeros of length $p$. 
Using $\rand$, we implement a randomized version of the $\ell_1$-penalized SQR method, given by
\begin{equation}
\label{eq:est_random}
  \underset{\beta \in \RR ^ p}{\text{minimize}} \;\left\{\sqrt{n} \widehat Q_{n;h} (X\beta; Y) + \lambda \|\beta\|_1  - \sqrt{n} \rand ^ \top \beta \right\}.
\end{equation}
As an example, consider fixing $\varrand= \delta^2 \cdot I_{p,p}$, i.e., we draw $p$ i.i.d. external Gaussian random variables with mean $0$ and variance $\delta^2$. 
The parameter $\delta^2$ allows us to control the amount of randomization introduced at the time of model selection.
%Note that a linear term involving $\rand$ is added to the $\ell_1$-penalized optimization objective.

In what follows, we develop an asymptotic pivot to infer about $\truestar$, where $E$ now depends on the set of variables selected by \eqref{eq:est_random}.
Including a randomization variable in this form is indeed crucial for obtaining a pivot that is easy to calculate.
Note that this type of randomization differs from the randomness involved in data splitting. 
As emphasized earlier, the form of randomization in \eqref{eq:est_random} enables us to use all samples during selection and inference, rather than discarding any of our samples at either of the two steps. 

In a second experiment illustrated in Figure \ref{fig:randomized_level}, we vary the randomization variance parameters $\delta^2 \in \{0.4, 0.6, 0.8, 1\}$ in the randomization scheme with $\varrand= \delta^2 \cdot I_{p,p}$. 
These different levels of randomization variance correspond to the Settings $1,2,3,4$ as shown on the x-axis of the figure, with Setting $0$ representing the case without randomization. 
In Setting $0$, we reuse the same data for inference without accounting for the selection process, which we refer to as the ``Naive'' method.

Note that with a small amount of randomization in the model selection process, we achieve a model with comparable power to the one trained on the full dataset.
This is evident from the ``Recall'' values in the leftmost panel of the figure, where ``Recall'' value is the proportion of active variables (with non-zero coefficients in the true model) correctly identified by the $\ell_1$-penalized SQR method.
While the ``Naive'' approach does not guarantee valid inference after selection—potentially leading to an inflated rate of false positives—our method ensures valid inference following randomized selection, as shown in the central panel of the figure.
%This is unlike data splitting, where obtaining valid inference is more challenging with fewer holdout samples as one increases the proportion of samples used for model selection.
Moreover, the coverage probabilities and interval lengths in our randomized approach remain stable, exhibiting low variability across different signal settings.

\begin{figure}[h]
	\centering
	\includegraphics[scale = 0.55]{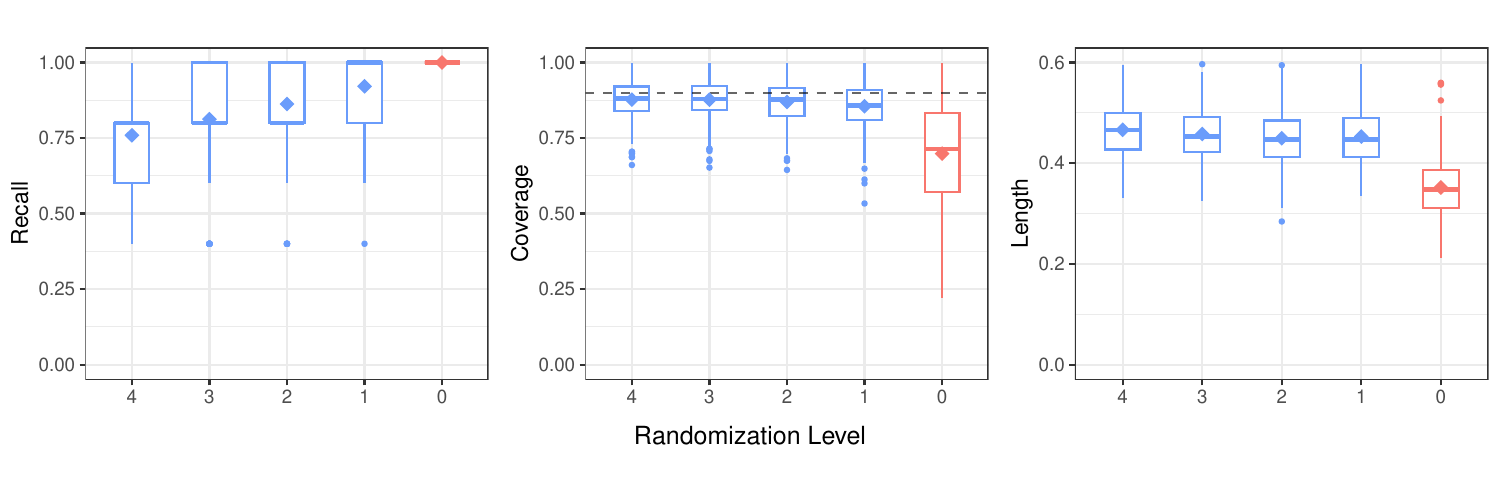}
    \vspace{-15pt}
	\caption{\small{The performance of selection and inference of our proposed method with varying randomization variance levels. Randomization levels 1, 2, 3, and 4 correspond to $\delta^2 = 0.4, 0.6, 0.8, 1$, while randomization level 0 corresponds to the ``Naive'' method.
    \textit{Left:} ``Recall'' across different randomization levels, showing the impact of varying $\delta^2$ on model selection. 
    \textit{Middle:} Coverage probabilities of 90\% confidence intervals.
    \textit{Right:} Lengths of confidence intervals.
    With a small randomization variance, our proposed method provides valid inference, whereas the ``Naive'' method underperforms.}}
\label{fig:randomized_level}
\end{figure}

A summary of our technical contributions is as follows. 
\begin{enumerate}
\item We derive two important asymptotic expansions in Section \ref{sec:pivot}. The first one is for the refitted SQR estimators using the selected set of variables in Proposition \ref{prop:beta_gamma_expression}, and the second one connects the refitted SQR estimators to the randomized $\ell_1$-penalized SQR estimators through randomization in Proposition \ref{prop:taylor_expansion}.  These expansions help us establish the marginal distributional properties of these estimators and guide the construct of our asymptotic pivot, linking it to the pivot form used in the rather well-studied least squares setting.
\item By making mild assumptions on the errors in the two asymptotic expansions, we prove rigorous inference guarantees for our pivot in Theorem \ref{thm:main_weak_convergence}. 
The asymptotically-exact guarantees in Section \ref{sec:theory} apply uniformly across a range of data generating distributions $\mathbb{F}_n$, assuming sub-Gaussian covariates but making no strong assumptions about the conditional distribution of $Y$. 
The proofs of our results rely on tail and moment bounds for the score function based on the SQR loss, which allow us to establish convergence rates for our asymptotic pivot.
\item In Sections \ref{sec:simulation} and \ref{sec:real}, we evaluate the performance of our method on simulated datasets and a publicly available dataset on US birth weights in $2022$.
To evaluate coverage rates, power, and estimation accuracy, we analyze various models and signal settings using our post-selection inferential pipeline.
Additional simulations, including data from misspecified models, are provided in the Supplementary Material.
Although we use simple white noise as randomization variables in our numerical analysis, we emphasize that our theory in Section \ref{sec:theory} covers the wider range of mean-centered Gaussian randomization variables. 

\end{enumerate}

\section{Related work}
\label{sec:related}

\subsection*{Inference in high dimensional quantile regression.} 
Statistical inference with penalized quantile regression has been widely acknowledged as an interesting and challenging task in the literature. 
Various strategies have been employed for fitting quantile regression, including kernel smoothing \citep{horowitz1998bootstrap, galvao2016smoothed}, using information criteria for model selection \citep{behl2014focussed}, and employing fast algorithms like the alternating direction method of multipliers \citep{gu2018admm}.
Due to its ease of implementation and ability to scale up with large datasets, the SQR method for estimating conditional quantile functions has become increasingly popular and has been adapted for different types of data. 
See, for example, the papers by \cite{Unified_algorithm_SQR} and \cite{Censored_SQR}.

We also draw attention to some earlier work on hypothesis testing and inference in quantile regression.
Marginal testing methods for detecting significant associations with conditional quantiles were developed by \cite{wang2018testing, tang2022conditional}. 
Work by \cite{belloni2019valid} tackled high dimensional inference using orthogonal score functions.
%\cite{dai2020post} developed inference in varying coefficient models for high dimensional data.
While \cite{smooth_quantile} analyzed the asymptotic properties of the SQR estimator in the low dimensional setting, \cite{debiased_quantile} proposed a debiased $\ell_1$-penalized SQR estimator and established confidence intervals using their debiased approach for high dimensional data. 
Note that these papers provide inference for the entire $p$-dimensional parameter vector in a predetermined model, not a selected model.
More specifically, when selection is done prior to making inference, and analysts report and use the selected subset of variables for downstream inferential tasks, these techniques are no longer applicable for the selected set of post-selection parameters.  
Our work fills this gap by providing analysts with a rigorous toolbox to construct and report valid confidence intervals (and p-values) for the effects of these selected variables on the conditional quantile function.

\subsection*{Post-selection inference or selective inference.}
The challenges of characterizing the finite sample distribution of post-selection estimators, which exhibit intricate dependence on model parameters, have been documented in \cite{leeb2003finite, leeb2005model}.
Instead of using plug-in estimators for nuisance parameters, the conditional approach, discussed in the earlier section, addresses this limitation by conditioning on the sufficient statistics for these parameters, constructing a pivot free of nuisance parameters. 
Alternatively, a simultaneous approach has been developed for selective inference in the normal linear model and the logistic regression model by \cite{BerkBrownBujaZhangZhao2013} and \cite{BachocPreinerstorferSteinberger2016}, respectively. 
An advantage of this approach is that it remains valid regardless of the method of selection. 
However, the long intervals generated by the  marginal viewpoint taken by this approach and the lack of easy computational tools to construct these intervals hinder its widespread application.

\subsection*{Randomized selective inference.}
Several recent papers have used the concept of utilizing an external randomization variable for performing selective inference. 
For example,
\cite{tian_randomized} and \cite{kivaranovic2020tight} added a randomization variable to the response for gaining power,
\cite{rasines2021splitting} used a similar randomization scheme to split their data into a selection set and validation set, and this idea was generalized to other distributions in \cite{LeinerDuanWassermanRamdas2021} and \cite{dharamshi2023generalized}. 
Furthermore, \cite{zrnic2023post} connected the use of randomization with algorithmic stability, and \cite{DaiLinXingLiu2023} added a Gaussian random variable to predictors for controlling false discovery rate control.
Note that the selective inference method in our paper employs a different form of randomization.
Specifically, our randomization scheme involves adding noise variables to the gradient of the SQR objective, rather than directly adding them to the response. 
In contrast to previous approaches that rely on specific parametric distributions for randomizing their data, our setup does not involve such modeling restrictions. 
Our randomization is inspired from \cite{panigrahi2021integrative} and \cite{panigrahi2018carving} that used a similar form of randomness in a class of techniques known as ``carving''.
This involves conducts selection on a random subset of the data as done during data splitting, while selective inference is performed on the full data after conditioning on the selection event from the first stage data. 

Drawing selective inference from conditional distributions can be quite challenging, as these distributions are usually computationally intractable.
Prior work has achieved approximate selective inference by utilizing feasible approximations of these conditional distributions. 
For example, tractable likelihood functions were developed by \cite{panigrahi2017mcmc} to approximate a pivot, and by \cite{panigrahi2023integrative} and \cite{panigrahi2023approximate} for sampling from a post-selection posterior distribution.
An approximate maximum likelihood approach was developed in \cite{approximate_snigdha} for normal data, which was generalized to cover generalized linear models and a broad category of M-estimation problems in \cite{huang2023selective}. 

Instead of constructing such approximate likelihood functions, we take a different approach in our paper by providing a pivot that yields asymptotically-exact post-selection inference for the effects of selected variables on the conditional quantile function.
Our method does not rely on approximations, ensuring consistent performance across a variety of scenarios without being affected by potential fluctuations in the quality of such approximations.
Moreover, the pivot we construct using the SQR estimators mirrors the form of the pivot introduced by \cite{panigrahi2023exact} in the least squares setting. 
It is worth pointing out that the pivot in previous work ensures valid selective inference for normal data. 
What sets our contribution apart in the current paper is that our pivot allows for valid inference without placing strict assumptions on the conditional distribution of the response variable.

\section{Pivot using SQR estimators}
\label{sec:pivot}

\subsection{Basics}

Let $[p]$ denote the set $\{1, 2,\ldots, p\}$.
Given a matrix $M \in \mathbb{R}^{p \times p}$ and two nonempty subsets of $[p]$, $E_1$ and $E_2$, we define $M_{E_1, E_2}$ as the submatrix of $M$ that consists of the rows indexed by $E_1$ and the columns indexed by $E_2$.

Consider $E_1=\{j_1, j_2, \ldots, j_r\} \subseteq [p]$. 
Then we let $\mathbb{S}_{E_1}$ be an $r \times p$ matrix with the $(k, j_k)^{\text{th}}$ entry equal to 1 for each $k \in [r]$ and 0 in the remaining entries, so that $\mathbb{S}_{E_1}M$ extracts the rows of $M$ in the set $E_1$. 
If $E_1=\{j\}$, a singleton set, then we simply denote the $1\times p$ matrix that extracts the $j^{\text{th}}$ row of $M$ as $\mathbb{S}_{j}$.
We denote by $\phi(x ; \mu, \Sigma)$ the Gaussian density function with mean vector $\mu$ and covariance matrix $\Sigma$. 

Let $\widehat{E}$ denote the set of non-zero entries of the solution to the randomized $\ell_1$-penalized SQR problem in \eqref{eq:est_random}.
We denote by $E$ the realized value of $\widehat{E}$ and by $E'$ its complement. 
We let $|E| = q$ and $|E'| = q' = p - q$.

After selection, we consider inference for each component of $\truestar$ in \eqref{eq:infer_target:inf}.
Without loss of generality, let us focus on the $j^{\text{th}}$ component of $\truestar$, denoted by $\truestarj$.
For this selective inference task, we will use SQR estimators, which will be defined as we develop our method in the paper. 
To analyze their distribution for inference, it is necessary to introduce the pseudo-parameters:
\begin{equation*}
%\label{conv:smooth:target}
    \trueprime = \argmin_{b \in \RR ^ {|E|}} \EE_{\FF_n} \left[\sqrt{n} \widehat Q_{n; h'}(X_E b; Y)\right],
\end{equation*}
where $\widehat Q_{n; h'}(X_E b; Y)$ represents the convolution smoothed quantile loss in \eqref{convoluted:loss} with the bandwidth parameter $h'$. 
Note that the pseudo-parameter vector $\trueprime$ depends on the bandwidth parameter $h'$, which may differ from the bandwidth $h$ used for selection. However, for the sake of simplicity, we choose to suppress this dependence in our notations.

We conclude this section with Proposition \ref{prop:diff_target}, showing that the difference between $\trueprime$ and $\truestar$, the pseudo-parameter and our target parameter, is negligible when $h’=o(n^{-1/4})$, similar to the bandwidth parameter $h$ in \eqref{eq:est_random}.
The implication of this result will become evident in the following section, when we introduce the refitted SQR estimators, based on the selected variables, to construct inference for the post-selection parameter $\truestar$. 

Before we present this result, we state the following assumptions on the data generating distribution. 

\begin{aspt}{A}
\label{aspt:moment_bound}
Let $\sigma \in \RR$ be a constant and let $x\in \RR^p$ represent the $p$-dimensional vector of covariates.
For any $\lambda \in \RR$ and a unit vector $u \in \RR ^ p$, assume that 
\[
	\EE_{\FF} \big[\exp \big(\lambda u ^ \top x\big)\big] 
	\leq \exp \left(\sigma ^ 2 \lambda ^ 2\right).
\]
\end{aspt}

\begin{aspt}{B}
\label{aspt:Lip}
Denote by $f_{y \mid x_E}$ the conditional density of $y$ given $x_E$. 
There exists a constant $l_{0, E} > 0$ such that $\left|f_{y \mid x_E}(u)-f_{y \mid x_E}(v)\right| \leq l_{0, E} |u-v|$ for all $u, v \in \RR$ almost surely (over $x_E$). 
Furthermore, for $m_{0,E}>0$, suppose that
$$\EE_{\mathbb{F}} \left[K_h(y - x_E^ \top \truestar) x_E x_E ^ \top\right] \succeq m_{0,E} > 0.$$
\end{aspt}

The condition in Assumption \ref{aspt:moment_bound} guarantees that the covariates $\{x_i\}_{i = 1} ^ n$ are sub-Gaussian.
This includes variables from common distributions such as Gaussian, Bernoulli, Binomial, and Uniform distributions, among others.
The first part of Assumption \ref{aspt:Lip} assumes that the conditional density of the response variable, given the selected set of covariates, is Lipschitz continuous. 
This holds true, for example, when this density has a bounded first derivative. 
The second part of the assumption is satisfied as long as the residual vector from the population fit $(y - x_E^ \top \truestar)$ is stochastically bounded, and $\EE_{\mathbb{F}} \left[x_E x_E ^ \top\right]$ is a positive definite matrix.

\begin{prop}
\label{prop:diff_target}
Let $\kappa = \int_{-\infty}^{\infty}|u|^2 K(u) d u$.
Under Assumptions \ref{aspt:moment_bound} and \ref{aspt:Lip}, we have that
\[
    \sup_{E} \frac{m_{0,E}}{l_{0,E} ^ 2} \times \big\|\trueprime - \truestar\big\| \leq 6 \kappa  \sigma^4  h ^ {\prime 2},
\]
where $h'$ denotes the bandwidth used for inference and $\|\cdot\|$ represents the $\ell_2$-norm.
\end{prop}

A proof for this result is provided in Appendix \ref{suppl:diff_target} of the Supplementary Material and follows similar arguments as the proof of Proposition 4.1 in \cite{smooth_quantile}.
As an immediate consequence of Proposition \ref{prop:diff_target}, for any fixed $E\subseteq [p]$, we have that 
$$
\sqrt{n}\big\|\trueprime - \truestar\big\| = o(1) 
$$
as long as $h' = o(n^{-1/4})$.

In the fixed $p$ and growing $n$ regime, we note that the bandwidth parameters $h$ and $h'$, which are used at the time of selection and inference, can have the same asymptotic order. 
For ease of exposition, we will hereafter assume that $h'=h$ in the subsequent theoretical results; see Remark \ref{hh}. 

With $h'=h$, we consider
\begin{equation}
\label{conv:smooth:target}
    \true = \argmin_{b \in \RR ^ {|E|}} \EE_{\FF_n} \left[\sqrt{n} \widehat Q_{n; h}(X_E b; Y)\right]
\end{equation}
as the pseudo-parameter vector, i.e., $\trueprime=\true$ when $h'=h$.
Let $\truej$ be the $j^{\text{th}}$ component of $\true$. The dependence of the  pseudo-parameter on $h$ is again suppressed in the notation here.

\subsection{The SQR estimators}
\label{subsec:sqr_estimators}

Now we introduce the refitted SQR estimators based on the selected variables, which are used in constructing our pivot.

Define the refitted SQR estimator obtained from regressing $Y$ against $X_E$ as
\begin{equation}
\label{eq:MLE}
	\mle = \argmin_{b \in \RR ^ {q}} \sqrt{n} \widehat Q_{n;h} (X_E b; Y).
\end{equation}
Let $\mlej= \mathbb{S}_j \mle$ denote the $j^{\text{th}}$ component of this estimator. 
We make two remarks here. 
\begin{rem}
The estimator $\mle$  depends on the bandwidth parameter $h$.
As in the case of the pseudo-parameter vector in \eqref{conv:smooth:target}, we have again suppressed the explicit dependence of the refitted SQR estimator on the value of the bandwidth parameter.
\end{rem}

\begin{rem} \label{hh}
Recall from the previous section that we fix $h' = h$ and develop the results in the main paper under the assumption that the bandwidth parameters, used for selection and for inference (after selection), are the same. 
The more general results involving the refitted SQR estimator with the bandwidth parameter $h'$, allowing for $h$ and $h'$ to differ, are presented in Appendix \ref{suppl:general} of the Supplementary Material.
\end{rem}

Unlike the case with a prefixed set $E$, we need to specify additional statistics to carry out our post-selection inferential task, because the set of variables selected by the randomized $\ell_1$-penalized SQR problem does not depend solely on $\mle$.
To this end, let
\begin{align*}
\begin{gathered}
	J =  \EE_{\FF_n} \left[X ^ \top \nabla ^ 2 \widehat Q_{n;h} (X_E \true; Y) X\right] 
    = \begin{bmatrix} J_{E, E} & J_{E, E'} \\ J_{E', E} & J_{E', E'}\end{bmatrix} \\
	H = \Cov_{\FF_n} \Big(\sqrt{n} X ^ \top \nabla \widehat Q_{n;h} (X_E \true; Y), \sqrt{n} X ^ \top \nabla \widehat Q_{n;h} (X_E \true; Y)\Big) 
    = \begin{bmatrix} H_{E, E} & H_{E, E'} \\ H_{E', E} & H_{E', E'}\end{bmatrix}
\end{gathered}	
\end{align*}
be two matrices in $\RR ^ {p\times p}$, based on moments of the Hessian and the gradient of the SQR loss, and let $\varmle = J_{E, E} ^ {-1} H_{E, E} J_{E, E} ^ {-1}$ and let $\varmlej$ be the $j^{\text{th}}$ diagonal entry of $\Sigma_{E, E}$.
Now, define the estimator
$$
    \mlenj =\begin{pmatrix} \mathbb{S}_{[E]\setminus j}\left(\mle - \frac{1}{\varmlej}\covmlej\mlej\right) \\ X_{E'} ^ \top \nabla \widehat Q_{n;h}(X_E \mle; Y) + (H_{E', E} H_{E, E} ^ {-1} J_{E, E} - J_{E', E}) \mle \end{pmatrix},
$$
which depends on  both $\mle$ and the variables that were not selected.
As we will see later, conditioning on $\mlenj$ enables us to get rid of the nuisance parameters, 
$$
    \truenj =\begin{pmatrix} \mathbb{S}_{[E]\setminus j}\left(\true - \frac{1}{\varmlej}\covmlej\truej\right) \\ \EE \left[X_{E'} ^ \top \nabla \widehat Q_{n;h}(X_E \mle; Y)\right] + (H_{E', E} H_{E, E} ^ {-1} J_{E, E} - J_{E', E}) \true \end{pmatrix},
$$
at the time of performing selective inference and leads to an asymptotic pivot for our target parameter.
Hereafter, we will refer to $\mle$ and $\mlenj$ as the refitted SQR estimators.

We start with an asymptotic representation for $\mlej \in \RR$  and $\mlenj \in \RR^{p-1}$, which holds for any fixed $E\subseteq [p]$ and $j\in E$. 
This asymptotic representation involves the $j^{\text{th}}$ component of the pseudo parameter, $\truej$, as defined in \eqref{conv:smooth:target}.

\begin{prop}
\label{prop:beta_gamma_expression}
Let $E$ be a fixed subset of $[p]$.
Define 
\begin{align*}
\begin{gathered}
\Lambda_1 = \begin{bmatrix} - \mathbb{S}_j J_{E,E} ^ {-1}~ & 0 ^ \top_{p-q} \end{bmatrix} H ^ {1/2} \in \RR^{1\times p},\\
\Lambda_2 =\begin{bmatrix} \Lambda_{2,1} \\ \Lambda_{2,2}\end{bmatrix}= \begin{bmatrix}  
		 \mathbb{S}_{[E] \setminus j}\left(\frac{1}{\varmlej} \covmlej \mathbb{S}_j J_{E, E} ^ {-1}  -  J_{E, E} ^ {-1}\right) & 0_{q-1, p-q} \\
		- H_{E', E} H_{E, E} ^ {-1} & I_{p-q, p-q} \end{bmatrix} 
		H ^ {1/2} \in \RR^{(p-1) \times p}.
\end{gathered}
\end{align*}		
For $j \in E$, it holds that
\begin{equation*}
	\sqrt{n} \begin{pmatrix}
		\mlej -  \truej \\ 
		\mlenj - \truenj
	\end{pmatrix}
	= \begin{pmatrix} \Lambda_1 \\ \Lambda_2 \end{pmatrix} \Up_n + \bar\Delta_1,
\end{equation*}
where 
\begin{align*}
\begin{gathered}
    \Up_n = \sqrt{n} H ^ {- 1/2} \left( \begin{pmatrix}  X_E^\top \nabla \widehat Q_{n;h} (X_E \true; Y)\\ X_{E'}^\top \nabla \widehat Q_{n;h} (X_E \true; Y) \end{pmatrix}- \begin{pmatrix} 0_q \\ \EE_{\FF_n} [X_{E'} ^ \top \nabla \widehat Q_{n;h} (X_E \true; Y)] \end{pmatrix}\right), \\ 
\bar\Delta_1 = o_p(1).
\end{gathered}
\end{align*}
\end{prop}

Combining Proposition \ref{prop:diff_target} with the above-stated representation for the refitted SQR estimators, we have an immediate corollary.
Note that we have a similar asymptotic expansion for the refitted SQR estimator $\mlej$ when centered at $\truestarj$, rather than at the pseudo parameter $\truej$.
\begin{cor}
\label{cor:beta_gamma_expression}
Suppose that the conditions in Assumptions \ref{aspt:moment_bound} and \ref{aspt:Lip} are met.
Then, we have
\begin{equation*}
	\sqrt{n} \begin{pmatrix}
		\mlej -  \truestarj \\ 
		\mlenj - \truenj
	\end{pmatrix}
	= \begin{pmatrix} \Lambda_1 \\ \Lambda_2 \end{pmatrix} \Up_n + \Delta_1,
\end{equation*}
where $\Up_n$ is as defined in Proposition \ref{prop:beta_gamma_expression} and $\Delta_1=o_p(1)$.
\end{cor}

As a follow-up result, we also have the following asymptotic normal distribution for the refitted SQR estimators.

% \textcolor{red}{Do we need the exact form of the covariance for $\mlenj$?}
\begin{cor}
\label{cor:asy_distribution}
For a fixed set $E$, we have
\[
    \sqrt{n} \begin{pmatrix}
		\mlej -  \truestarj \\ 
		\mlenj - \truenj
	\end{pmatrix} 
    \Rightarrow
    \cN \left(0_{p},
    \begin{bmatrix}
        \varmlej & 0_{1, p-1} \\
        0_{p-1, 1} & \Lambda_2 \Lambda_2 ^ \top 
    \end{bmatrix}
    \right).
\]
\end{cor}

\begin{rem}
As mentioned earlier, a similar asymptotic representation for the refitted SQR estimators holds with minor adjustments, even if $h$ and $h'$ differ. The more general version of Proposition \ref{prop:beta_gamma_expression}, along with the corresponding corollaries, is provided in Appendix \ref{suppl:general} of the Supplementary Material; refer to Proposition \ref{prop:beta_gamma_expression_general}, Corollary \ref{cor:beta_gamma_expression_general} and Corollary \ref{cor:asy_distribution_general}.
The difference between the representations in Proposition \ref{prop:beta_gamma_expression_general}  and Proposition \ref{prop:beta_gamma_expression} is now reflected in the matrices $\Lambda'_1$ and $\Lambda'_2$, which reduce to $\Lambda_1$ and $\Lambda_2$, as stated in Proposition \ref{prop:beta_gamma_expression}, when $h = h'$.
\end{rem}

Our next result in Proposition \ref{prop:taylor_expansion} establishes the connection between the randomized $\ell_1$-penalized SQR estimators and the refitted SQR estimators based on the selected set of variables, $\mlej$ and $\mlenj$.

Fixing some more notations, let $\lasso \in \RR ^ {q \times 1}$ denote the nonzero components of the randomized $\ell_1$-penalized SQR estimator from solving \eqref{eq:est_random}.
Let $|\lasso|$ denote the absolute values of $\lasso$ in a component-wise sense, and let
$$
    \D = \lambda \cdot \begin{pmatrix} \Sgn \\ \Z \end{pmatrix} \in \RR^{p\times 1}
$$
denote the subgradient of the $\ell_1$-penalty at the solution, where 
 $\Sgn \in \RR ^ {q \times 1}$ represents the sign vector of $\lasso$, and $\Z \in \RR^{q'\times 1}$ represents the inactive components of the subgradient vector that satisfies the constraint $\|\Z\|_{\infty} \leq 1$. 

Define the matrices
\begin{align*}
\begin{gathered}
	M ^ j = - \frac{1}{\varmlej} \begin{bmatrix} H_{E, E} \\ H_{E', E} \end{bmatrix} J_{E, E} ^ {-1} \mathbb{S}_j ^ \top  \in \RR ^ {p \times 1}, \ 
    N ^ j = \begin{bmatrix} - J_{E, E} \mathbb{S}_{[E]\setminus j} ^ \top & 0_{q, q'} \\ - H_{E', E} H_{E, E} ^ {-1} J_{E, E} \mathbb{S}_{[E]\setminus j} ^ \top & I_{q', q'}\end{bmatrix} \in \RR ^ {p \times(p-1)},\\
	T = \begin{bmatrix} J_{E, E} \\ J_{E', E} \end{bmatrix} \text{\normalfont Diag}(\sgnr) \in \RR ^ {p \times q},
\end{gathered}	
\end{align*}
where $\sgnr$ is the observed value $\Sgn$, the signs of the $\lasso$.

\begin{prop}
\label{prop:taylor_expansion}
Given $E$ and $\sgnr$, the observed selected set of variables and the signs for the nonzero components of the randomized $\ell_1$-penalized SQR estimator, we have that
\begin{equation*}
	T \sqrt{n} |\lasso| + \D  = \sqrt{n} \rand - M ^ j \sqrt{n} \mlej - N ^ j \sqrt{n} \mlenj + \Delta_2,
\end{equation*}
where $\Delta_2 = o_p(1)$.
\end{prop}
%\textcolor{red}{Make sure that the proof for Proposition \ref{prop:taylor_expansion} has the $\sqrt{n}$ removed before $\D$.}

\begin{rem}
Similar to the asymptotic representation of the refitted SQR estimators, their relationship with the randomized $\ell_1$-penalized SQR estimator, as presented in Proposition \ref{prop:taylor_expansion}, holds with slightly modified matrices when the smoothing parameters $h$ and $h'$ differ. Specifically, Proposition \ref{prop:taylor_expansion_general} in Appendix \ref{suppl:general} of the Supplementary Material establishes this relationship using the modified matrices $T'$, $M^{\prime j}$, and $N^{\prime j}$.
It is straightforward to observe that the more general representation simplifies to the one presented above when $h = h'$.
\end{rem}

\subsection{Our pivot}

Constructing a pivot for $\truestarj$ is a natural means to inference, allowing for the construction of confidence intervals or p-values for hypotheses related to this parameter.

First, we turn to our conditioning event.
Let
\begin{align*}
\begin{gathered}
	\Ej = T ^ \top \varrand ^ {-1} M ^ j \in \RR ^ {q \times 1}, \ \ 
\Psi = \left\{T ^ \top \varrand ^ {-1} T\right\} ^ {-1} \in \RR ^ {q \times q}.
\end{gathered}	
\end{align*}
Then, based on the nonzero components of the randomized $\ell_1$-penalized SQR estimator $\lasso$, define the variables
\begin{align*}
\label{def:UV}
\begin{gathered}
   \Uj = \sqrt{n}\Ej ^ \top |\lasso| \in \RR,\\
   \Vj = \sqrt{n}\bigg(I_{q, q} - \frac{\Psi \Ej}{\Ej ^ \top \Psi \Ej} \Ej ^ \top\bigg) |\lasso| \in \RR ^ {q \times 1}.
\end{gathered}
\end{align*}

Proposition \ref{prop:conditional_event} states  our conditioning event and provides a characterization of this event in terms of the randomized $\ell_1$-penalized SQR estimators.

\begin{prop}
\label{prop:conditional_event}
Fix $j\in E$.
We have that
\begin{equation*}
	\left\{\D = \dr, \Vj= \vjr\right\}
    = \left\{I^j_1 \leq \Uj \leq I^j_2, \Z = \zr, \Vj = \vjr\right\},
    \quad\quad
\end{equation*} 
where 
\[
\begin{aligned}
	I^j_1 = -\min_{k: \mathbb{S}_k  \Psi \Ej > 0}  \frac{\Ej ^ \top \Psi \Ej}{\mathbb{S}_k  \Psi \Ej} \mathbb{S}_k \vjr,
    \quad
    I^j_2 = -\max_{k: \mathbb{S}_k  \Psi \Ej < 0} \frac{\Ej ^ \top \Psi \Ej}{\mathbb{S}_k  \Psi \Ej} \mathbb{S}_k \vjr.
\end{aligned}
\]  
\end{prop}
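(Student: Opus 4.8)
The plan is to start from the KKT (stationarity) conditions for the randomized SQR problem \eqref{eq:est_random} and rewrite the selection event $\{\widehat{E}=E\}$ together with the values of the subgradient in terms of the randomization variable $\rand$, then re-express everything in the coordinates $(\Uj, \Vj, \Z)$. Writing $g_n = \sqrt{n}\,X^\top \nabla \widehat Q_{n;\tau}(X_E \mle; Y)$ for the gradient of the smoothed loss evaluated at the refitted estimator, the stationarity condition of \eqref{eq:est_random} reads $g_n + \sqrt{n}\,\D - n\,\rand = 0$, i.e. $\sqrt{n}\,\rand = g_n/\sqrt{n} + \D$ after rescaling. Restricting to the active block $E$ and using the definition $\D_E = \lambda\,\Sgn$, $\D_{E'} = \lambda \Z$, the event $\{\widehat{E}=E,\ \Sgn,\ \Z\}$ is the intersection of the linear reconstruction equations for $\sqrt{n}\rand$ with the sign constraints $\mathrm{Diag}(\Sgn)\lasso > 0$ componentwise and $\|\Z\|_\infty \le 1$. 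The first step is therefore to make this precise: condition additionally on $\mlenj$ and on the nuisance-free coordinates, so that, after substituting the asymptotic/Taylor-type linearizations from Propositions \ref{prop:beta_gamma_expression} and \ref{prop:taylor_expansion} that connect $g_n$ to the penalized estimator $\lasso$ via $J$ and $H$, the only remaining ``free'' randomness left on the event $\event$ is carried by $|\lasso| \in \RR^q_{>0}$.

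The second step is the change of variables. The matrices $M^j$, $T$, $\Ej$, $\Psi$ are precisely the coefficients that appear when one writes $\sqrt{n}\,\rand$ as an affine function of $|\lasso|$ (through $T$, after factoring out $\mathrm{Diag}(\Sgn)$) and of the target-plus-nuisance directions (through $M^j$, which isolates the $\truej$-direction scaled by $1/\varmlej$). Conditioning on $\D = \dr$ fixes $\lambda$, $\Sgn$ and the value $\zr$ of $\Z$; conditioning on $\Vj = \vjr$ fixes the component of $|\lasso|$ orthogonal (in the $\Psi$-inner-product sense) to $\Psi\Ej$, since by \eqref{def:UV} the map $|\lasso| \mapsto (\Uj, \Vj) = (\Ej^\top|\lasso|,\ (I - \Psi\Ej\Ej^\top/\Ej^\top\Psi\Ej)|\lasso|)$ is a bijective linear reparametrization of $\RR^q$ (its inverse is $|\lasso| = \Vj + \tfrac{\Psi\Ej}{\Ej^\top\Psi\Ej}\Uj$, which is the identity one should verify: apply $\Ej^\top$ to get $\Uj$ back, and note $(I - \Psi\Ej\Ej^\top/\Ej^\top\Psi\Ej)$ annihilates $\Psi\Ej$). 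So on the event $\event$ the single remaining scalar degree of freedom is $\sqrt{n}\,\Uj$.

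The third step is to translate the sign constraint $\mathrm{Diag}(\Sgn)\lasso > 0$, equivalently $|\lasso| > 0$ componentwise, into lower and upper bounds on $\sqrt{n}\,\Uj$. Substituting $|\lasso| = \vjr + \tfrac{\Psi\Ej}{\Ej^\top\Psi\Ej}\Uj$ componentwise and multiplying through by $\sqrt n$, the $k$-th constraint becomes $\mathbb{S}_k\sqrt{n}\,\vjr + \tfrac{\mathbb{S}_k\Psi\Ej}{\Ej^\top\Psi\Ej}\sqrt{n}\,\Uj > 0$; solving for $\sqrt n\,\Uj$ gives a lower bound when $\mathbb{S}_k\Psi\Ej > 0$ and an upper bound when $\mathbb{S}_k\Psi\Ej < 0$ (and no constraint when $\mathbb{S}_k\Psi\Ej = 0$), and taking the binding ones over $k$ yields exactly $I^j_1 = -\min_{k:\,\mathbb{S}_k\Psi\Ej>0}\tfrac{\Ej^\top\Psi\Ej}{\mathbb{S}_k\Psi\Ej}\mathbb{S}_k\sqrt n\,\vjr$ and $I^j_2 = -\max_{k:\,\mathbb{S}_k\Psi\Ej<0}\tfrac{\Ej^\top\Psi\Ej}{\mathbb{S}_k\Psi\Ej}\mathbb{S}_k\sqrt n\,\vjr$. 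The inactive constraint $\|\Z\|_\infty \le 1$ is automatically accounted for because we have conditioned on $\Z = \zr$, which already lies in the unit cube on the event of interest, so it contributes no further restriction. Assembling the three steps gives the claimed set identity.

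The main obstacle I expect is the first step: cleanly justifying that, after conditioning on $\D$, on $\mlenj$, and on $\Vj$, the reconstruction map really does reduce to the stated affine relation in $|\lasso|$ with no leftover dependence on nuisance parameters $\truenj$ — this is where the specific algebra defining $\mlenj$, $M^j$, $T$, and the cancellation built into $H_{E',E}H_{E,E}^{-1}J_{E,E} - J_{E',E}$ has to be used, and where one must be careful that the linearization errors ($\Delta_1$ in Proposition \ref{prop:beta_gamma_expression} and its analogue in Proposition \ref{prop:taylor_expansion}) do not affect the \emph{exact} set identity being claimed here (the proposition is a statement about the conditioning event as defined through $\D, \Vj$, so it should be an exact algebraic identity, with asymptotics entering only later when this event is used to build the pivot). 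Handling the degenerate cases — e.g. when $\{k:\mathbb{S}_k\Psi\Ej>0\}$ or $\{k:\mathbb{S}_k\Psi\Ej<0\}$ is empty, making $I^j_1=-\infty$ or $I^j_2=+\infty$ — is a minor bookkeeping point that should be noted but causes no real difficulty.
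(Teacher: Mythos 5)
Your core argument (the second and third paragraphs) is correct and is essentially the paper's proof: you identify that $|\lasso| = \Vj + \tfrac{\Psi\Ej}{\Ej^\top\Psi\Ej}\Uj$ is a bijective reparametrization, so on the event $\Vj=\vjr$ the componentwise sign constraint $\mathrm{Diag}(\Sgn)\lasso>0$ reduces to one-dimensional bounds on $\sqrt{n}\Uj$, giving exactly $I_1^j$ and $I_2^j$ by taking the binding constraints over $k$ according to the sign of $\mathbb{S}_k\Psi\Ej$. That matches the paper's one-line computation $-\mathrm{Diag}(\Sgn)\lasso = -\tfrac{\Psi\Ej}{\Ej^\top\Psi\Ej}\Uj - \Vj$.

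Your first paragraph, however, is a detour that does not belong in this proof and is the source of your own (unfounded) worry at the end. Proposition~\ref{prop:conditional_event} is a pure change-of-variables identity in $(\D,\Vj)$ versus $(\Uj,\Vj,\Z)$, using only the definitions in \eqref{def:UV}; it makes no reference to the KKT stationarity equation, to Propositions~\ref{prop:beta_gamma_expression} or~\ref{prop:taylor_expansion}, to $\mlenj$, or to the matrices $J$, $H$, $M^j$, $T$. There are no linearization errors to control because nothing is linearized --- $\Delta_1$ and $\Delta_2$ never appear. Conditioning on $\{\D=\dr\}$ already fixes the selected set, the active signs $\Sgn$, and $\Z=\zr$ by definition of the subgradient, and the remaining content of the selection event is precisely the sign-agreement constraint $\mathrm{Diag}(\Sgn)\lasso>0$, which your steps 2--3 then convert to the interval for $\sqrt n\Uj$. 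Incidentally, the KKT equation you wrote down is misstated (the gradient should be evaluated at $\lasso$, not $\mle$, and the randomization term scales as $\sqrt n\rand$, not $n\rand$; cf.\ the display ``$\sqrt{n} X^\top \nabla \widehat Q_n(X_E\lasso;Y)+\D-\sqrt n\rand=0_p$'' in the proof of Proposition~\ref{prop:taylor_expansion}), but since that entire step is extraneous here it does not invalidate your conclusion. Dropping the first paragraph entirely would leave a clean, correct proof identical in substance to the paper's.
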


There are a few things to note about our conditioning event in the above-stated result. 

For selective inference to be valid, we must condition on a subset of our selection event $\{\widehat{E}=E\}$.
In order to achieve a tractable conditional distribution for this purpose and an easy-to-calculate pivot from it, the subset that we choose must also admit a simple description in the involved estimators.
The conditioning event on the left-hand side of the claim in Proposition \ref{prop:conditional_event} satisfies both.
Firstly, it is easy to see that this event is a proper subset of $\{\widehat{E}=E\}$. 
Secondly, we can represent the same event by truncating $\Uj$ to a simple interval, as shown on the right-hand side of the claim in Proposition \ref{prop:conditional_event}.
This simplification reduces our event to an interval on the real line, making it possible to  construct a pivot in closed form.

We are now ready to present our pivot informally.
Let $I_1^j$ and $I_2^j$ be constants as defined in Proposition \ref{prop:conditional_event}.
For $z_1 \in \RR$ and $z_2 \in \RR ^ {p - 1}$, define
$W_0: \RR \times \RR ^ {p - 1} \rightarrow \RR$ as:
\begin{equation}
\label{eq:defn_W0}    
    W_0 (z_1, z_2)
    = \int_{I^j_1} ^ {I^j_2} \phi \left(Q ^ j t + M ^ j z_1 + N ^ j z_2 + P ^ j; 0_p, \varrand\right) dt,
\end{equation}
where $Q ^ j \in \RR ^ {p \times 1}$ and $P ^ j \in \RR ^ {p \times 1}$ are equal to:
\begin{align*}
\begin{gathered}
	Q ^ j = \frac{T \Psi \Ej}{\Ej ^ \top \Psi \Ej},
	\quad
	P ^ j = T \vjr + \dr.
\end{gathered}
\end{align*}
Then, our pivot for $\truestarj$ is equal to: 
\begin{equation}
\label{pivot}
    \mathrm{Pivot} ^ {j\cdot E} (\sqrt{n}\mlej, \sqrt{n}\mlenj; \truestarj)= \dfrac{
	\bigintss_{-\infty} ^ {\sqrt{n} \mlej} \phi \left(x; \sqrt{n} \truestarj, \varmlej\right)
	W_0 \left(x, \sqrt{n} \mlenj\right) dx}
	{
	\bigintss_{-\infty} ^ \infty \phi \left(x; \sqrt{n} \truestarj, \varmlej\right)
	W_0 \left(x, \sqrt{n} \mlenj\right) dx
	}.
\end{equation}

The form of our pivot in \eqref{pivot} suggests that an adjustment for the selection process is achieved by multiplying the weight function $W_0(\cdot, \sqrt{n} \mlenj)$ to the marginal normal density of the refitted SQR estimator $\sqrt{n}\mlej$. 

We highlight the role of the smoothing bandwidth parameters, used for variable selection and for obtaining the refitted SQR estimators, in the construct of our pivot.

\begin{rem}
We note that when the two smoothing bandwidth parameters are equal (i.e., $h = h'$), the matrices in Proposition \ref{prop:beta_gamma_expression} and \ref{prop:taylor_expansion} can be directly applied to construct the weight function $W_0$. 
If $h$ and $h'$ differ, then the modified matrices provided in Propositions \ref{prop:beta_gamma_expression_general} and \ref{prop:taylor_expansion_general} in the Supplementary Material are used to construct this weight function.
Importantly, the form of our pivot remains unchanged in both scenarios.
\end{rem}

At last, consider the standardized variable
$$
\Zn =\sqrt{n} \begin{pmatrix} \Lambda_1 \\ \Lambda_2 \end{pmatrix}^{-1}  \begin{pmatrix}
		\mlej -  \truestarj \\ 
		\mlenj - \truenj
	\end{pmatrix}.
$$
Alternatively, we can express the pivot in \eqref{pivot} in terms of the standardized variable $\Zn$ as:
\begin{equation}
\label{pivot:std}
    \cP ^ {j\cdot E} (\Zn)
    = \dfrac{
	\bigintsss_{-\infty} ^ {\Lambda_1 \Zn + \sqrt{n} \truestarj} \phi \left(x; \sqrt{n} \truestarj, \varmlej\right)
	W_0 \left(x, \Lambda_2 \Zn + \sqrt{n} \truenj\right) dx}
	{
	\bigintsss_{-\infty} ^ \infty \phi \left(x; \sqrt{n} \truestarj, \varmlej\right)
	W_0 \left(x, \Lambda_2 \Zn + \sqrt{n} \truenj\right) dx
	}.
\end{equation}
We use the latter form of our pivot when studying its asymptotic properties. In this form, we mainly analyze the pivot as a function of the standardized variable and therefore omit the parameter in its argument for the sake of simplicity in notation.

\subsection{Link with the least squares regression}

Let us consider the well-studied least squares regression with a fixed design matrix for normal data and homoscedastic errors with variance $\sigma^2$. 
After selecting the subset of variables $E$, our post-selection targets of inference are the components of 
$$
\truestar = \argmin_{b\in \RR^q} \mathbb{E}_{\mathbb{F}_n} \left[\|Y- X_Eb\|^2\right].
$$

In this scenario, we observe that 
$$\mle = (X_E ^ \top X_E) ^ {-1} X_E Y$$
is the least squares estimator using $(Y, X_E)$, and $\mlej$ is its $j^{\text{th}}$ component. 
Note, with the quadratic loss function in the least squares problem, $H = \sigma^2 J$ for the fixed matrix $J = X ^\top X$, and that $\Sigma_{E, E} = \sigma^2 \left(X_E ^\top X_E\right)^{-1}$, and
$$\mlenj = \begin{pmatrix} \mathbb{S}_{[E]\setminus j}\left(\mle - \frac{1}{\varmlej}\covmlej\mlej\right) \\ X_{E'} ^\top \left(X_E\mle - Y\right) \end{pmatrix}.$$
Moreover, in this special setting, the two representations in Propositions \ref{prop:beta_gamma_expression} and \ref{prop:taylor_expansion} are not asymptotic but rather exact.
This is formalized in our next result.
\begin{prop}
\label{prop:ls_exact}
In the setting described above, we have:
$$	\sqrt{n} \begin{pmatrix}
		\mlej -  \truestarj \\ 
		\mlenj - \truenj
	\end{pmatrix}
	= \begin{pmatrix} \Lambda_1 \\ \Lambda_2 \end{pmatrix} \cZ
$$	
where $\cZ \sim N(0, I_{p})$. Furthermore, the randomized $\ell_1$-penalized estimators satisfy
\[
    T \sqrt{n} |\lasso| + \D  = \sqrt{n} \rand - M ^ j \sqrt{n} \mlej - N ^ j \sqrt{n} \mlenj.
\]
\end{prop}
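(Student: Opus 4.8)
The plan is to exploit the one structural feature that separates the least squares loss from the convolution-smoothed quantile loss: with a quadratic loss, the map $b \mapsto \nabla \widehat{Q}_n(X_E b; Y)$ is affine in $b$ and in $Y$, so its Hessian is a constant matrix, independent of the evaluation point, and its gradient is an affine image of $Y$. Every $o_p(1)$ error in Propositions \ref{prop:beta_gamma_expression} and \ref{prop:taylor_expansion} arises as a Taylor remainder beyond the linear term of such a gradient expansion, and for a quadratic loss these remainders vanish identically; this is exactly what upgrades ``$\Delta_1 = o_p(1)$'' and ``$\Delta_2 = o_p(1)$'' to $\Delta_1 = \Delta_2 = 0$. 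Along the way I would record the population specializations in this setting: since the Hessian is constant, $J = \EE_{\FF_n}[X^\top \nabla^2 \widehat{Q}_n X] = X^\top X$; with homoscedastic normal errors of variance $\sigma^2$, the centered gradient $\sqrt{n}\, X^\top \nabla \widehat{Q}_n(X_E \true; Y)$ is a linear functional of $Y - \EE_{\FF_n}[Y]$ with covariance $H = \sigma^2 J$, so $\varmle = J_{E,E}^{-1} H_{E,E} J_{E,E}^{-1} = \sigma^2 (X_E^\top X_E)^{-1}$; and, crucially, $H_{E',E} H_{E,E}^{-1} J_{E,E} - J_{E',E} = \sigma^2 J_{E',E}(\sigma^2 J_{E,E})^{-1} J_{E,E} - J_{E',E} = 0$, so $\mlenj$ and $\truenj$ collapse to the stated forms, with second blocks $X_{E'}^\top(X_E \mle - Y)$ and $\EE_{\FF_n}[X_{E'}^\top(X_E \mle - Y)]$ respectively.

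For the first display I would argue directly rather than through the proof of Proposition \ref{prop:beta_gamma_expression}. The stationarity condition $X_E^\top(X_E \mle - Y) = 0$ is exact, so $\mle = (X_E^\top X_E)^{-1} X_E^\top Y$ and, on taking expectations, $\true = (X_E^\top X_E)^{-1} X_E^\top \EE_{\FF_n}[Y]$; hence both $\sqrt{n}(\mlej - \truej)$ and $\sqrt{n}(\mlenj - \truenj)$ are exact linear functionals of the Gaussian vector $Y - \EE_{\FF_n}[Y]$ and are therefore jointly exactly Gaussian. Setting $\cZ := \Up_n$ as defined in Proposition \ref{prop:beta_gamma_expression}, the vector $\cZ$ is an affine image of $Y$ with mean $0_p$ (since $\true$ solves the population stationarity equation) and covariance $I_p$ (by the $H^{-1/2}$ normalization), so $\cZ \sim N(0_p, I_p)$ exactly. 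It then remains to verify the purely algebraic identities $\sqrt{n}(\mlej - \truej) = \Lambda_1 \cZ$ and $\sqrt{n}(\mlenj - \truenj) = \Lambda_2 \cZ$, which follow by substituting $H = \sigma^2 J$ and $J = X^\top X$ into the definitions of $\Lambda_1$ and $\Lambda_2$ and invoking the collapsed forms of $\mlenj$ and $\truenj$ from the previous step.

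For the second display I would start from the stationarity (KKT) condition for the randomized problem \eqref{eq:est_random}, $\sqrt{n}\, X^\top \nabla \widehat{Q}_n(X\lasso; Y) + \sqrt{n}\, \D = \sqrt{n}\, \rand$, and use the exact quadratic identity $X^\top(X_E \mathbb{S}_E \lasso - Y) = X^\top(X_E \mle - Y) + X^\top X_E(\mathbb{S}_E \lasso - \mle)$, valid on the event $\{\widehat{E} = E\}$, where $\mathbb{S}_E \lasso = \text{Diag}(\Sgn)|\lasso|$. Rearranging, substituting $J = X^\top X$, recognizing $X^\top X_E\, \text{Diag}(\Sgn)|\lasso| = T|\lasso|$, and using the collapsed expressions for $M^j$, $N^j$ and $\mlenj$ turns this into $T\sqrt{n}|\lasso| + \sqrt{n}\, \D = \sqrt{n}\, \rand - M^j \sqrt{n}\mlej - N^j \sqrt{n}\mlenj$ with no leftover term, which is precisely Proposition \ref{prop:taylor_expansion} with $\Delta_2 = 0$.

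The main obstacle is not conceptual but is the block-matrix bookkeeping in the last two paragraphs: one must carry the $E/E'$ partitions through the definitions of $M^j$, $N^j$, $\Lambda_1$, $\Lambda_2$ and $\varmle$, and confirm that at every point where the general argument invoked a moment of the SQR Hessian or a higher-order Taylor term, the quadratic-loss substitution ($\nabla^2 \widehat{Q}_n$ constant, $H = \sigma^2 J$, remainder $=0$) produces an exact match. The identity $H_{E',E} H_{E,E}^{-1} J_{E,E} = J_{E',E}$ is the lynchpin that makes each of these matches line up.
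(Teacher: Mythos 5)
Your proposal is correct and follows essentially the same route as the paper's proof: specialize $J = X^\top X$, $H = \sigma^2 J$, observe that $H_{E',E}H_{E,E}^{-1}J_{E,E} = J_{E',E}$ (so the second block of $\mlenj$ collapses to $X_{E'}^\top(X_E\mle - Y)$), identify $\cZ$ with the normalized score, and then verify the two displays by direct block-matrix algebra plus the KKT condition. The paper carries out exactly these matrix verifications (its equations \eqref{eq:ls_equal1}--\eqref{eq:ls_equal3}); your conceptual preamble — that the quadratic loss makes the sample Hessian constant and kills the Taylor remainders, hence $\Delta_1 = \Delta_2 = 0$ — is an accurate and helpful framing of why the $o_p(1)$ terms disappear, but it does not change the underlying computation.
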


A probability integral transform, applied to the conditional distribution of $\mlej$—the $j^{\text{th}}$ entry of the least squares estimator—conditioned on the event in Proposition \ref{prop:conditional_event}, yields the following pivot.

\begin{prop}
\label{cor:pivot_cdf}
An exact pivot is equal to
\begin{equation*}
\label{eq:pivot_cdf}
	\dfrac{
	\bigintsss_{-\infty} ^ {\sqrt{n} \mlej} \phi \big(x; \sqrt{n} \truestarj, \varmlej\big)
	W_0 \big(x, \sqrt{n} \mlenj\big) dx}
	{
	\bigintsss_{-\infty} ^ \infty \phi \big(x; \sqrt{n} \truestarj, \varmlej\big)
	W_0 \big(x, \sqrt{n} \mlenj\big) d x
	},
\end{equation*} 
or, in terms of the standardized variable $\cZ$, is equal to
\[ \dfrac{
	\bigintsss_{-\infty} ^ {\Lambda_1 \cZ + \sqrt{n} \truestarj} \phi \left(x; \sqrt{n} \truestarj, \varmlej\right)
	W_0 \left(x, \Lambda_2 \cZ+ \sqrt{n} \truenj\right) dx}
	{
	\bigintsss_{-\infty} ^ \infty \phi \left(x; \sqrt{n} \truestarj, \varmlej\right)
	W_0 \left(x, \Lambda_2 \cZ + \sqrt{n} \truenj\right) dx.
	}
\]	
This pivot is distributed as a $\operatorname{Unif} (0, 1)$ random variable conditional on the event in Proposition \ref{prop:conditional_event}. 
\end{prop}

To conclude, Lemma \ref{lem:pivot_equal} in Appendix \ref{appdx:A} of the Supplementary Material confirms that the expression of this pivot matches the one given by \cite{panigrahi2023exact} for the $\ell_1$-penalized least squares regression.

From this, we observe that the pivot introduced in \eqref{pivot} has the same form as the exact pivot in least squares regression with normal data, except that the refitted SQR estimators replace the least squares regression estimators. 
As previously emphasized, our theory remarkably demonstrates that we can draw asymptotically-exact inference from this pivot without imposing strong assumptions on the conditional distribution of the response.

We provide the details in the next section.

\section{Asymptotic theory}
\label{sec:theory}

Consider the object that was defined in \eqref{pivot}. 
Let
\[
	\left[\mathrm{LCB}^\alpha_{n},\, \mathrm{UCB}^\alpha_{n}\right]
	= \left\{ b \in \RR: \frac{\alpha}{2} \leq \mathrm{Pivot}^{j\cdot E}(\sqrt{n} \mlej, \sqrt{n} \mlenj; b) \leq 1 - \frac{\alpha}{2} \right\}
\]
denote the $(1-\alpha)\cdot 100\%$ two-tailed confidence interval for $\truestarj$.
Our main result in this section shows that the confidence interval $\left[\mathrm{LCB}^\alpha_{n},\, \mathrm{UCB}^\alpha_{n}\right]$ has coverage probability converging to $1-\alpha$ as the sample size grows to $\infty$.

Throughout this section, we consider that the parameters in our problem vary with $n$ as: 
$$\sqrt{n}  \begin{pmatrix}  \truestarj ~ & (\truenj)^\top \end{pmatrix}^\top = r_n \beta,$$ for a given vector $\beta$ and $\{r_n: n \in \mathbb{N}\}$ a fixed sequence of non-negative numbers such that $r_n= o(n ^ {1 / 6})$.
The rate $r_n$ controls the probability of the selection event defined in Proposition \ref{prop:conditional_event}. 
This probability remains bounded away from zero when $r_n$ is bounded by a constant $C>0$. 
Asymptotic selective inferential guarantees for events of this type have been studied in previous work by \cite{tibshirani_bootstrap, rasines2021splitting}.
However, the selection probability may converge to zero when $r_n$ grows to $\infty$, and may result in rare selection events. 
As we will see later, our theory allows us to construct asymptotically-exact selective inference even for these rare events, provided that $r_n$ grows at a slower rate than $n ^ {1/6}$.
Before we state this result, we specify some regularity conditions for our theory to be valid.

Consider the representation in Proposition \ref{prop:taylor_expansion}.
Let 
\begin{equation}
    \sqrt{n}\widetilde{\omega}_n = \sqrt{n}\rand + \Delta_2
    \label{tilde:rand}.
\end{equation}
It is evident from the definition that $\sqrt{n}\widetilde{\omega}_n \Rightarrow \cN(0_p, \Omega)$.
\begin{aspt}{C}
Suppose that there exists $n_0$ such that distribution for $\sqrt{n}\widetilde{\omega}_n$ admits a Lebesgue density $q_n$ for all $n\geq n_0$.
For $z= (z_1, z_2)$ where $z_1 \in \RR$ and $z_2 \in \RR^{p-1}$, define
\[
    W_{0, n} \big(z_1, z_2\big)
    = \int_{I^j_1} ^ {I^j_2} q_n \left(Q ^ j t + M ^ j z_1 + N ^ j z_2 + P ^ j; 0_p, \varrand\right) dt,
\]
Then, assume that
\[
    \lim_n \sup_{z} \left|\frac{W_{0, n} (z)}{W_0 (z)} - 1\right| = 0.
\]
\label{aspt:randomization}
\end{aspt}

\begin{aspt}{D}
Assume for $\Delta_1$ defined in Corollary \ref{cor:beta_gamma_expression} that 
\[
    \lim_n \dfrac{1}{r_n^2}  \log \mathbb{P}_{\FF_n}\left[ \frac{1}{r_n}\|\Delta_1\|> \epsilon \right]= -\infty
\]
holds for any $\epsilon>0$, if $r_n\to \infty$ as $n\to \infty$.
\label{aspt:error}
\end{aspt}

Note that the conditions in Assumption \ref{aspt:randomization} and Assumption \ref{aspt:error} place some mild constraints on the errors in the asymptotic representations provided in Proposition \ref{prop:taylor_expansion} and Corollary \ref{cor:beta_gamma_expression}, respectively.
Assumption \ref{aspt:randomization} controls the behavior of $\sqrt{n}\widetilde{\omega}_n$ in \eqref{tilde:rand}, ensuring that this sequence of variables exhibits similar behavior as its limiting normal counterpart $\sqrt{n}\rand$.
Assumption \ref{aspt:error}, on the other hand, is only required when $r_n$ grows to $\infty$ with increasing sample size, and in this scenario, it imposes a condition on the rate of convergence of the error variable $\Delta_1$.

\begin{thm}
\label{thm:main_weak_convergence}
For each $n\in \NN$, denote by $\cF_n=\{\FF_n\}$ a collection of data-generating distributions satisfying Assumptions \ref{aspt:moment_bound}, \ref{aspt:Lip}, \ref{aspt:randomization} and \ref{aspt:error}. 
Then, we have
$$
\lim_{n} \sup_{\FF_n \in \cF_n} \Big|\mathbb{P}_{\FF_n} \left[\truestarj \in  \left[\mathrm{LCB}^\alpha_{n},\, \mathrm{UCB}^\alpha_{n}\right] \Big\lvert \event\right] -(1-\alpha)\Big| =0.
$$
\end{thm}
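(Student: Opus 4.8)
The plan is to reduce the claimed coverage statement to a convergence-in-distribution statement for the standardized variable $\Zn$ and then apply the continuous mapping theorem together with the exactness result from Corollary \ref{cor:pivot_cdf}. Recall that in the exact Gaussian-plus-randomization setting (Proposition \ref{prop:ls_exact}), the pivot written as $\pivotz$ with $\cZ \sim \cN(0_p, I_p)$ is exactly $\mathrm{Unif}(0,1)$ conditional on the event of Proposition \ref{prop:conditional_event}. Since our object $\pivot$ has the same functional form with $\Zn$ replacing $\cZ$, the strategy is: (i) show $\Zn \Rightarrow \cN(0_p, I_p)$ under $\FF_n$ conditional on $\event$, uniformly over $\cF_n$; (ii) show the map $z \mapsto \cP^{j\cdot E}(z)$ (for the limiting weight $W_0$) is sufficiently regular that the conditional law of $\pivot$ converges to the conditional law of $\pivotz$, which is $\mathrm{Unif}(0,1)$; (iii) translate uniform convergence of the pivot's CDF to $1-\alpha$ coverage of the inverted interval $[\mathrm{LCB}^\alpha_n, \mathrm{UCB}^\alpha_n]$.

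For step (i), I would start from Proposition \ref{prop:beta_gamma_expression}, which gives $\sqrt{n}\big((\mlej - \truej)^\top, (\mlenj - \truenj)^\top\big)^\top = (\Lambda_1^\top, \Lambda_2^\top)^\top \Up_n + \Delta_1$ with $\Delta_1 = o_p(1)$; hence $\Zn = \Up_n + (\Lambda_1^\top,\Lambda_2^\top)^{-\top}\Delta_1$. The leading term $\Up_n$ is a normalized sum of i.i.d. mean-zero vectors (the per-observation scores $\sqrt{n}H^{-1/2}(\cdots)$ built from $X^\top \nabla \widehat Q_{n;\tau}(X_E \true; Y)$, recentered), so by a triangular-array CLT (Lindeberg–Feller, with the Lindeberg condition verified using the sub-Gaussian covariate bound of Assumption \ref{aspt:moment_bound} together with moment/tail bounds for the SQR score alluded to in the contributions), $\Up_n \Rightarrow \cN(0_p, I_p)$, uniformly over $\FF_n \in \cF_n$ because the relevant moment bounds are uniform. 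The error term is handled by Assumption \ref{aspt:error}, which even gives the large-deviation rate $\frac{1}{r_n^2}\log\PP_{\FF_n}[\frac1{r_n}\|\Delta_1\| > \epsilon] \to -\infty$; since $\sqrt{n}\truej, \sqrt{n}\truenj$ scale like $r_n$, this is exactly the strength needed to control $\Delta_1$ relative to the location parameters and push it into the conditioning event cleanly. The main subtlety here is that we are conditioning on $\event = \{\D = \dr, \Vj = \vjr\}$, a rare event with possibly vanishing probability; the conditional CLT must be argued on the truncated region. By Proposition \ref{prop:conditional_event} this event is an interval $\{I^j_1 \le \sqrt{n}\Uj \le I^j_2, \Z = \zr, \Vj = \vjr\}$, and Proposition \ref{prop:taylor_expansion} expresses $T\sqrt{n}|\lasso| + \sqrt{n}\D = \sqrt{n}\rand - M^j\sqrt{n}\mlej - N^j\sqrt{n}\mlenj + \Delta_2$, so the conditioning can be rewritten as a linear constraint on $(\mlej, \mlenj, \rand)$; integrating out the Gaussian-type randomization $\sqrt{n}\widetilde\omega_n$ (density $q_n$) against the near-Gaussian law of the estimators, and invoking Assumption \ref{aspt:randomization} to replace $W_{0,n}$ by $W_0$ with uniformly negligible error, yields the conditional density of $\sqrt{n}\mlej$ in the form of Proposition \ref{prop:pivot_exact_distr} up to $o(1)$.

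For step (ii), given the conditional density of $\sqrt{n}\mlej$ converges (uniformly in the relevant sense) to $\phi(\cdot;\sqrt{n}\truej,\varmlej)W_0(\cdot,\sqrt{n}\mlenj)/(\text{normalizer})$, the probability integral transform — which is precisely $\pivot$ — converges in distribution to $\mathrm{Unif}(0,1)$. Concretely, I would show that for every $u \in (0,1)$, $\PP_{\FF_n}[\pivot \le u \mid \event] \to u$, uniformly over $\cF_n$; the weight $W_0$ is continuous, strictly positive, and bounded away from $0$ and $\infty$ on the compact $t$-range $[I^j_1, I^j_2]$, so the normalizing integral is bounded and the CDF map $z \mapsto \cP^{j\cdot E}(z)$ is continuous, letting us pass the weak limit of $\Zn$ through via the continuous mapping theorem; combining with the exact $\mathrm{Unif}(0,1)$ conclusion of Corollary \ref{cor:pivot_cdf} for the Gaussian surrogate gives the limit. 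Finally, step (iii): the interval is defined by $\{\bar b_n^{j\cdot E} : \alpha/2 \le \mathrm{Pivot}^{j\cdot E} \le 1-\alpha/2\}$, and since $b \mapsto \mathrm{Pivot}^{j\cdot E}$ evaluated at the truth is monotone in the location parameter, the event $\{\truej \in [\mathrm{LCB}^\alpha_n, \mathrm{UCB}^\alpha_n]\}$ equals $\{\alpha/2 \le \mathrm{Pivot}^{j\cdot E}(\mlej,\mlenj) \le 1-\alpha/2\}$ evaluated at the true $\truej$; its conditional probability therefore converges to $(1-\alpha/2) - (\alpha/2) = 1-\alpha$, uniformly, which is the assertion.

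\textbf{Main obstacle.} I expect the hardest part to be establishing the conditional (rather than marginal) CLT for $\Zn$ uniformly over $\cF_n$ when the conditioning event $\event$ has vanishing probability: one cannot simply divide by $\PP_{\FF_n}[\event]$ and take limits, so the argument must be carried out at the level of (sub-)densities, matching the near-Gaussian joint density of $(\sqrt{n}\mlej, \sqrt{n}\mlenj, \sqrt{n}\widetilde\omega_n)$ against the exact Gaussian reference pointwise on the constrained region and controlling the ratio uniformly. This is where Assumptions \ref{aspt:error} and \ref{aspt:randomization} do the real work — the former at large-deviation strength so that the $\Delta_1$ error is negligible even after the $r_n$-scaling of the location, and the latter ensuring $W_{0,n}/W_0 \to 1$ uniformly so the randomization density may be swapped for its Gaussian limit inside the weight — and verifying that the SQR score's tail and moment bounds are strong enough (uniformly in $\FF_n$) to drive a Berry–Esseen-type control of the leading term is the technical crux.
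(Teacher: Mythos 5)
Your reduction scheme (standardize, compare to the exact Gaussian construction of Corollary \ref{cor:pivot_cdf}, invert to coverage) matches the paper's high-level plan, and you correctly flag in the final paragraph that the rare-event conditioning is the crux. But the concrete mechanism you propose in step (ii) does not actually cross that gap, and it rests on a false premise. You assert that ``$W_0$ is continuous, strictly positive, and bounded away from $0$ and $\infty$ on the compact $t$-range $[I^j_1,I^j_2]$, so the normalizing integral is bounded.'' The $t$-range need not be compact (typically one endpoint is $\pm\infty$), and the real issue is not boundedness in $t$ but decay in the estimator argument: $W_0$ decays like a Gaussian in $\mn^j z + \tail$, and when $\sqrt{n}\begin{pmatrix}\truej & (\truenj)^\top\end{pmatrix}^\top = r_n\beta$ with $r_n\to\infty$, the reference density $\phi(\cdot;\sqrt{n}\truej,\varmlej)$ concentrates where $W_0$ is exponentially small. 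Thus $\EE_{\cN}[W_0(\cZ)]$ itself vanishes at rate $\operatorname{Exp}(-r_n\bar b,\cdot)$ (up to polynomial factors in $r_n$; see Proposition \ref{prop:denominator}), and the quantity you need to control is a $0/0$ ratio. A conditional CLT for $\Zn$ plus the continuous mapping theorem gives no information about such ratios: weak convergence of $\Zn\Rightarrow\cZ$ does not imply $\EE_{\FF_n}[W_0(\Zn)]/\EE_{\cN}[W_0(\cZ)]\to 1$ when both sides are vanishing exponentially.

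The paper avoids this by never passing through a conditional CLT or continuous mapping argument. It first rewrites conditional expectations of bounded test functions of the pivot as ratios of unconditional $W_0$-tilted expectations (Propositions \ref{cor:conditional_expect_n} and \ref{prop:conditional_expect}, the latter using Assumption \ref{aspt:randomization} to swap $W_{0,n}$ for $W_0$), so the goal becomes controlling the relative differences $\RD^{(1)}$, $\RD^{(2)}$ of Theorem \ref{thm:relative_diff}. It then bounds the numerators by a \emph{third-order} Stein interpolation bound (Lemma \ref{prop:stein_bound}), which — unlike a Berry–Esseen inequality — integrates the Stein solution's third derivatives against the exponentially decaying $W_0$ and so produces a bound whose Gaussian decay matches that of the denominator. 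In Case II ($r_n\to\infty$, $r_n=o(n^{1/6})$), the numerator is shown to be $\lesssim \frac{r_n^2}{\sqrt n}\operatorname{Exp}(-r_n\bar b,[\varrand+\mn^j\mn^{j\top}]^{-1})$ while the denominator is $\gtrsim r_n^{-1}\operatorname{Exp}(-r_n\bar b,[\varrand+\mn^j\mn^{j\top}]^{-1})$, so the ratio is $O(r_n^3/\sqrt n)\to 0$; Varadhan's lemma via the large-deviation strength of Assumption \ref{aspt:error} (Lemma \ref{lem:LDP}) is what controls the contribution of $\Delta_1$ at the exponential scale. This rate-matching across numerator and denominator, rather than a CLT plus continuity, is the missing idea in your argument; your ``Main obstacle'' paragraph gestures at working ``at the level of sub-densities,'' but you would need to replace your step (ii) entirely with something of this form to make the proof go through.
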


We make a few comments on the implications of our main result. 
\begin{enumerate}
\item First, note that the statement regarding the coverage of our confidence intervals is conditional on 
$$\event,$$
the event that was stated in Proposition \ref{prop:conditional_event}.
Since this event is a strict subset of $\big\{\widehat{E} =E\big\}$, the same guarantee applies to the larger event based on the total law of probability.

\item Second, we observe that the guarantees of inference in our theory are strong in that they not only ensure asymptotically-exact inference on an average for a range of selection events, but also for the specific selection event that was observed with the data at hand.
The guarantees of the former type, though, do not apply to any individual event or specifically the event observed in the dataset.

\item Third, the asymptotically-exact inference we offer holds uniformly across all distributions in the collection $\cF_n$. 
This ensures that, for any given value of $\epsilon > 0$, there exists a number $N(\epsilon)$ such that, for all $n \ge N(\epsilon)$, the confidence intervals obtained by inverting $$\mathrm{Pivot} ^ {j\cdot E} (\sqrt{n} \mlej, \sqrt{n} \mlenj; \truestarj)$$ will provide coverage of at least $1-\alpha-\epsilon$ post selection, regardless of the data-generating distribution in $\cF_n$. 
\end{enumerate}

The empirical results in the next section support our theory and show that the proposed confidence intervals achieve the desired coverage for the quantile parameters using the check loss function.
In the remaining section, we state two key results that establish the validity of our main theorem.

For the proof of our main result, we rely on the asymptotic pivot presented in terms of the standardized variable $\Upsilon_n$ as defined in Proposition \ref{prop:beta_gamma_expression}. Fixing some more notations, we express the weight function $W_0(\cdot)$ used in our pivot directly in terms of $\Upsilon_n$, and compactly rewrite it as
\[
	W_0 \big(\Upsilon_n\big)
	= \int_{I_1 ^ j} ^ {I_2 ^ j} \phi \big(Q ^ j t + \mn ^ j \Upsilon_n + \tail; 0_p, \varrand\big) dt,
\]
where  $\mn ^ j \in \RR ^ {p \times p}$ and $\tail \in \RR ^ {p \times 1}$ are defined as
\[
	\mn ^ j = M ^ j \Lambda_1 + N ^ j \Lambda_2,
	\quad \tail = P ^ j + M ^ j \sqrt{n} \truestarj + N ^ j \sqrt{n} \truenj.
\]

In order to prove Theorem \ref{thm:main_weak_convergence}, we first provide sufficient conditions for our main result on the asymptotic coverage guarantees with our pivot.

We define $\EE_{\cN} \big[g\left(\cZ\right)\big]$ as the expectation obtained by replacing $\Upsilon_n$ in $\EE_{\FF_n} \big[g \left(\Upsilon_n\right)\big]$ with the variable $\cZ$ from Proposition \ref{prop:ls_exact}.
Note that $\cN$ in the subscript of this expectation highlights that $\cZ$ is a normal variable.

\begin{thm}
\label{thm:relative_diff}
Let $\widetilde{D}_n$ be an increasing sequence of sets in $\mathbb{R}^p$ such that 
$$\lim_n \sup_{\FF_n\in \cF_n} \mathbb{P}_{\mathbb{F}_n}[\Upsilon_n \in \widetilde{D}_n^c] = 0.$$
For $\arb \in \mathbb{C} ^ 3 (\RR, \RR)$ an arbitrary function with bounded derivatives up to the third order, define
\begin{equation*}
\begin{gathered}
\RD ^ {(1)} = \dfrac{\Big|\EE_{\FF_n} \big[W_0 \left(\Upsilon_n\right)\boldone_{ \widetilde{D}_n} (\Upsilon_n)\big] - \EE_{\cN} \big[W_0 \left(\cZ\right)\boldone_{\widetilde{D}_n} (\cZ)\big]\Big|}
	{\EE_{\cN} \left[W_0 \left(\cZ\right)\right]} \\[7pt]
\RD ^ {(2)} = \dfrac{\Big|\EE_{\FF_n} \big[\arb \circ \pivotup \times W_0 \left(\Upsilon_n\right)\boldone_{\widetilde{D}_n}(\Upsilon_n)\big] - \EE_{\cN} \big[\arb \circ \pivotz \times W_0 \left(\cZ\right)\boldone_{\widetilde{D}_n}(\cZ)\big]\Big|}
	{\EE_{\cN} \left[W_0 \left(\cZ\right)\right]}
\end{gathered}
\end{equation*}
Suppose that
\begin{equation*}
\begin{gathered}
	\lim_{n} \sup_{\FF_n \in \cF_n}  \RD ^ {(1)} =0, \quad
	 \lim_{n} \sup_{\FF_n \in \cF_n} \RD ^ {(2)}  =0.
\end{gathered}
\end{equation*}
Then, under Assumptions \ref{aspt:randomization} and \ref{aspt:error}, the assertion stated in Theorem \ref{thm:main_weak_convergence} holds. 
\end{thm}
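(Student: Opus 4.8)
The plan is to establish that, conditionally on $\event$, the pivot $\pivot$ converges in distribution to a $\operatorname{Unif}(0,1)$ variable, uniformly over $\FF_n\in\cF_n$. This yields Theorem \ref{thm:main_weak_convergence} at once: by construction of the interval through pivot inversion, $\{\truej\in[\mathrm{LCB}^\alpha_n,\mathrm{UCB}^\alpha_n]\}=\{\alpha/2\le \pivot\le 1-\alpha/2\}$ with $\pivot$ evaluated at the data-generating parameter, and a $\operatorname{Unif}(0,1)$ variable lands in $[\alpha/2,1-\alpha/2]$ with probability $1-\alpha$. The benchmark to aim for is the exact least-squares identity of Propositions \ref{prop:ls_exact}--\ref{prop:pivot_exact_distr} and Corollary \ref{cor:pivot_cdf}: there $\pivotz$ is exactly $\operatorname{Unif}(0,1)$ conditionally on $\event$, equivalently, since conditionally on that event $\cZ$ has density proportional to its $N(0_p,I_p)$ density times $W_0(\cZ)$,
\[
\frac{\EE_\cN\!\left[\arb\circ\pivotz\cdot W_0(\cZ)\right]}{\EE_\cN\!\left[W_0(\cZ)\right]}=\EE\!\left[\arb(U)\right],\qquad U\sim\operatorname{Unif}(0,1),
\]
for every bounded measurable $\arb$.

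The first step is a ratio representation for the conditional law under $\FF_n$. Conditioning on $\event$ pins down the subgradient $\D$ and $\Vj$; by Propositions \ref{prop:taylor_expansion} and \ref{prop:conditional_event} this leaves a single free direction parametrized by $\sqrt{n}\Uj\in[I^j_1,I^j_2]$, along which $\sqrt{n}\widetilde{\omega}_n$ from \eqref{tilde:rand} moves affinely. Disintegrating over the randomization, whose Lebesgue density is $q_n$ (the $\Delta_2$ error of Proposition \ref{prop:taylor_expansion} having been absorbed into $\widetilde{\omega}_n$, and the constant Jacobian cancelling), gives, for any bounded $g$,
\[
\EE_{\FF_n}\!\left[g(\Zn)\,\middle|\,\event\right]=\frac{\EE_{\FF_n}\!\left[g(\Zn)\,W_{0,n}(\Zn)\right]}{\EE_{\FF_n}\!\left[W_{0,n}(\Zn)\right]},
\]
with $W_{0,n}$ the weight from Assumption \ref{aspt:randomization}.

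The core of the proof is to transport this ratio to the benchmark, uniformly over $\cF_n$. Fix $\arb\in\mathbb{C}^3(\RR,\RR)$ with bounded derivatives up to the third order, take $g=\arb\circ\pivot$, and abbreviate $A^{\mathrm{num}}_n=\EE_{\FF_n}[\arb\circ\pivot\,W_{0,n}(\Zn)]$, $A^{\mathrm{den}}_n=\EE_{\FF_n}[W_{0,n}(\Zn)]$, $B^{\mathrm{num}}=\EE_\cN[\arb\circ\pivotz\,W_0(\cZ)]$, $B^{\mathrm{den}}=\EE_\cN[W_0(\cZ)]$. Since $|B^{\mathrm{num}}|\le\|\arb\|_\infty B^{\mathrm{den}}$,
\[
\left|\frac{A^{\mathrm{num}}_n}{A^{\mathrm{den}}_n}-\frac{B^{\mathrm{num}}}{B^{\mathrm{den}}}\right|\le\frac{|A^{\mathrm{num}}_n-B^{\mathrm{num}}|+\|\arb\|_\infty\,|A^{\mathrm{den}}_n-B^{\mathrm{den}}|}{A^{\mathrm{den}}_n},
\]
so it suffices to bound $|A^{\mathrm{num}}_n-B^{\mathrm{num}}|$ and $|A^{\mathrm{den}}_n-B^{\mathrm{den}}|$ by terms that are $o(A^{\mathrm{den}}_n)$ uniformly. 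For each I would interpose four successive approximations: (i) replace $W_{0,n}$ by $W_0$, at the cost of $\sup_z|W_{0,n}(z)/W_0(z)-1|\cdot\EE_{\FF_n}[W_0(\Zn)]$, negligible by Assumption \ref{aspt:randomization} and boundedness of $W_0$; (ii) insert the indicator $\boldone_{\widetilde D_n}(\Zn)$, at the cost of a multiple of $\PP_{\FF_n}[\Zn\in\widetilde D_n^c]$, negligible uniformly by hypothesis; (iii) swap $\FF_n$ for $\cN$ and $\Zn$ for $\cZ$ inside the truncated integrals, which is exactly $\RD^{(2)}\cdot B^{\mathrm{den}}$ for the numerator and $\RD^{(1)}\cdot B^{\mathrm{den}}$ for the denominator, hence $o(B^{\mathrm{den}})$ by assumption; and (iv) remove $\boldone_{\widetilde D_n}(\cZ)$, controlled by the growth of $\widetilde D_n$, taken in the verification of the hypotheses so that $\EE_\cN[W_0(\cZ)\boldone_{\widetilde D_n^c}]=o(\EE_\cN[W_0(\cZ)])$. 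Tracking these through, and noting they force $A^{\mathrm{den}}_n=B^{\mathrm{den}}(1+o(1))$, yields $\lim_n\sup_{\FF_n\in\cF_n}\big|\EE_{\FF_n}[\arb\circ\pivot\mid\event]-\EE[\arb(U)]\big|=0$ for every such $\arb$.

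It remains to upgrade from smooth test functions to the probability of the interval. Given $\epsilon>0$, sandwich $\boldone_{[\alpha/2,\,1-\alpha/2]}$ between $\arb^-\le\boldone_{[\alpha/2,\,1-\alpha/2]}\le\arb^+$ in $\mathbb{C}^3(\RR,\RR)$ with bounded derivatives and $\EE[\arb^+(U)]-\EE[\arb^-(U)]<\epsilon$ — possible since $U$ has a continuous law — and apply the previous display to $\arb^{\pm}$; letting $\epsilon\downarrow0$ gives $\lim_n\sup_{\FF_n\in\cF_n}\big|\PP_{\FF_n}[\alpha/2\le\pivot\le1-\alpha/2\mid\event]-(1-\alpha)\big|=0$, which is the assertion of Theorem \ref{thm:main_weak_convergence}. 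The step I expect to be the main obstacle is the ratio bookkeeping in (i)--(iv): because the selection event may be rare, $B^{\mathrm{den}}=\EE_\cN[W_0(\cZ)]$ can tend to $0$, so every error term has to be controlled relative to $B^{\mathrm{den}}$ rather than merely shown to be $o(1)$ — which is exactly why $\RD^{(1)}$ and $\RD^{(2)}$ carry $\EE_\cN[W_0(\cZ)]$ in their denominators, and why $\widetilde D_n$ must be chosen large enough that its Gaussian complement is negligible on this scale; a secondary delicate point is justifying the $W_{0,n}$-reweighting representation of the conditional law, which rests on Propositions \ref{prop:taylor_expansion} and \ref{prop:conditional_event} together with the clean absorption of $\Delta_2$ into $\widetilde{\omega}_n$.
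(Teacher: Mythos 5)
Your proposal is correct and follows essentially the same route as the paper's own proof: reduce the conclusion of Theorem~\ref{thm:main_weak_convergence} to showing that $\EE_{\FF_n}\big[\arb\circ\pivot\mid\event\big]-\EE_\cN\big[\arb\circ\pivotz\mid\event\big]\to 0$ uniformly over $\cF_n$ for every smooth bounded $\arb$, express both conditional expectations in the weighted-ratio form (via Propositions~\ref{cor:conditional_expect_n} and~\ref{prop:conditional_expect}, i.e.\ your step~(i)), and then decompose the ratio error into the $\RD^{(1)}$, $\RD^{(2)}$, tail, and $W_{0,n}/W_0$-approximation pieces, exactly mirroring the paper's $\cB_1+\cB_2+\cB_3$ split followed by Proposition~\ref{cor:tail}. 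The one place you improve on the paper is making explicit the final smooth-function-to-indicator upgrade: the paper asserts that controlling $\arb\in\mathbb{C}^3(\RR,\RR)$ suffices for the coverage probability without spelling out the sandwiching of $\boldone_{[\alpha/2,\,1-\alpha/2]}$ between smooth functions using the atomlessness of the uniform limit (Corollary~\ref{cor:pivot_cdf}), whereas you state this step; conversely, you correctly flag as a side requirement that $\EE_\cN[W_0(\cZ)\boldone_{\widetilde D_n^c}]=o(\EE_\cN[W_0(\cZ)])$, which the paper's $\cB_3$ bound implicitly invokes but relegates to Proposition~\ref{cor:tail}.
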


In the next step of our proof, we focus on proving the sufficient conditions in Theorem \ref{thm:relative_diff} that ensure the validity of our main result.

\begin{thm}
\label{thm:rd_limit}
Under the conditions stated for Theorem \ref{thm:main_weak_convergence}, we have 
\[
	\lim_{n} \sup_{\FF_n \in \cF_n}  \RD ^ {(1)} =0, \quad
	 \lim_{n} \sup_{\FF_n \in \cF_n} \RD ^ {(2)}  =0.
\]
\end{thm}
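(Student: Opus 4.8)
The plan is to reduce both $\RD^{(1)}$ and $\RD^{(2)}$ to a controlled comparison between the law of $\cZ_n$ and the Gaussian law of $\cZ$, exploiting that $\cZ_n$ is asymptotically standard normal by Corollary \ref{cor:asy_distribution} (after the linear change of variables that defines $\cZ_n$) and that the integrand $W_0(\cdot)$, and also $\arb\circ\pivot$, are smooth bounded functionals of $\cZ_n$. First I would record that $W_0(z)$, written in the compact form $W_0(\Zn)=\int_{I_1^j}^{I_2^j}\phi(Q^jt+\mn^j\Zn+\tail;0_p,\varrand)\,dt$, is a bounded, Lipschitz, in fact $C^\infty$ function of $\Zn$ with derivatives controlled uniformly in $\FF_n$ — this uses Assumption \ref{aspt:moment_bound} to keep the matrices $J,H$ (hence $M^j,N^j,T,\Psi,\Ej,Q^j,\mn^j$) in a compact range and to keep $H_{E,E}$, $T^\top\varrand^{-1}T$ uniformly nonsingular, so $\varmlej$ is bounded away from $0$ and $\infty$. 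Likewise $\pivot$ is a smooth bounded functional of $\Zn$ (a ratio of a truncated Gaussian integral against $W_0$ over its normalizing constant, with denominator bounded below), so $\arb\circ\pivot$ inherits bounded derivatives up to order three. The denominator $\EE_\cN[W_0(\cZ)]$ is a fixed positive constant bounded away from zero under the same assumptions, so it suffices to show the two numerators tend to $0$ uniformly.

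Next I would split each numerator using the truncation set $\widetilde D_n$. Choosing $\widetilde D_n$ to be a ball of radius $r_n^{1+\delta}$ (or any radius growing slower than the rate at which large deviations of $\Zn$ are killed), the condition $\sup_{\FF_n}\PP_{\FF_n}[\Zn\in\widetilde D_n^c]\to 0$ holds by a sub-Gaussian tail bound for $\Zn$ — this is where the moment and tail bounds for the SQR score function mentioned in the contributions come in, together with Assumption \ref{aspt:error} absorbing the remainder $\Delta_1=o_p(1)$ from Proposition \ref{prop:beta_gamma_expression}. On $\widetilde D_n$ the functionals $W_0(\Zn)\boldone_{\widetilde D_n}(\Zn)$ and $\arb\circ\pivot\times W_0(\Zn)\boldone_{\widetilde D_n}(\Zn)$ are bounded and (after a standard smoothing of the indicator, or by a Portmanteau argument since $\partial\widetilde D_n$ has vanishing Gaussian mass) Lipschitz with constants uniform in $\FF_n$; hence the difference of expectations against $\mathrm{Law}(\cZ_n)$ versus $\mathrm{Law}(\cZ)=\cN(0,I_p)$ is bounded by a Wasserstein-type distance between these laws times the (uniform) Lipschitz constant, plus the boundary terms. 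That distance goes to $0$ uniformly over $\cF_n$ because $\Zn\Rightarrow\cZ$ with a rate: Proposition \ref{prop:beta_gamma_expression} gives $\sqrt n(\mlej-\truej,\;\mlenj-\truenj)^\top=(\Lambda_1,\Lambda_2)^\top\Up_n+\Delta_1$, the leading term $\Up_n$ is a normalized sum of i.i.d. score vectors so a Berry–Esseen / Lindeberg argument under Assumption \ref{aspt:moment_bound} controls its distance to normality, and $\Delta_1$ is handled by Assumption \ref{aspt:error}. The $r_n=o(n^{1/6})$ scaling is exactly what is needed for the third-moment Berry–Esseen term, evaluated on test functions supported essentially on a ball of radius $\sim r_n$, to vanish.

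Finally, for $\RD^{(2)}$ there is the additional subtlety that the functional $\pivot$ itself depends on $n$ through $\truej$, $\truenj$ (i.e. through $\tail$) and through $W_0$ versus the possibly-different $W_{0,n}$ of Assumption \ref{aspt:randomization}; so I would first replace $\pivot$ and the inner $W_0$ inside it by their counterparts and use Assumption \ref{aspt:randomization}, namely $\sup_z|W_{0,n}(z)/W_0(z)-1|\to 0$, to show this substitution costs only $o(1)$ uniformly — this is legitimate precisely because $\arb$ has bounded first derivative so $|\arb\circ\pivot-\arb\circ\widetilde{\mathrm{pivot}}|\lesssim|\pivot-\widetilde{\mathrm{pivot}}|$, and the pivot is a ratio with numerator and denominator both $O(1)$ and denominator bounded below. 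After that reduction, $\RD^{(2)}$ has the same structure as $\RD^{(1)}$ and is dispatched by the same Lindeberg-plus-truncation argument. The main obstacle I anticipate is making the ``$\Zn$ close to $\cZ$'' comparison genuinely \emph{uniform} over the whole family $\cF_n$ while the test functions and the set $\widetilde D_n$ move with $n$: one has to choose $\widetilde D_n$ large enough that $\PP[\Zn\notin\widetilde D_n]\to0$ uniformly, yet small enough (radius $=o(n^{1/6})$, matching $r_n$) that the Berry–Esseen remainder — which grows with the size of the support of the test function — still vanishes, and simultaneously show the Gaussian boundary mass $\PP[\cZ\in\partial\widetilde D_n]\to0$; threading this needle, and verifying the uniform bounds on all the problem matrices from Assumption \ref{aspt:moment_bound}, is the technical heart of the argument.
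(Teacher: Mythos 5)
Your reduction rests on the claim that ``$\EE_{\cN}[W_0(\cZ)]$ is a fixed positive constant bounded away from zero,'' from which you conclude that ``it suffices to show the two numerators tend to $0$ uniformly.'' This is exactly where the argument fails, and the failure is the whole technical content of the theorem. The assumed parameterization $\sqrt{n}(\truej, (\truenj)^\top)^\top = r_n\beta$ allows $r_n\to\infty$ (subject to $r_n = o(n^{1/6})$), and when $r_n\to\infty$ the quantity $\tail = P^j + M^j\sqrt{n}\truej + N^j\sqrt{n}\truenj$ grows like $r_n$, so $\EE_{\cN}[W_0(\cZ)]$ is \emph{not} bounded below by a constant: Propositions \ref{prop:denominator_expatation}--\ref{prop:denominator} show it decays like $r_n^{-1}\operatorname{Exp}(-r_n\bar b, \cdot)$, i.e. exponentially in $r_n^2$. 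Simultaneously, the numerator differences also decay at that same exponential rate. What must be shown is that the \emph{ratio} vanishes — equivalently that the numerator error is $\frac{r_n^2}{\sqrt{n}}\operatorname{Exp}(-r_n\bar b,\cdot)$ so that after dividing by the lower bound $r_n^{-1}\operatorname{Exp}(-r_n\bar b,\cdot)$ one gets $r_n^3/\sqrt{n}\to 0$. A generic Wasserstein or Lipschitz comparison, applied to the numerator alone as you propose, cannot see this cancellation of exponential factors, because the Lipschitz constants of $W_0$ and $\pivot$ are themselves not uniformly bounded: by Propositions \ref{prop:pivot_bound} and \ref{prop:W0} the third derivatives of $\cP^{j\cdot E}$ and of $W_0$ grow like $\sum_l C_l\|\mn^j z + \tail\|^l$ with $\|\tail\|\asymp r_n$, so your ``uniformly controlled derivatives'' premise is also false in the rare-event regime.

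The paper's actual route splits into Case I ($r_n$ bounded, Theorem \ref{thm:bound_case}) and Case II ($r_n\to\infty$). In Case I your picture is essentially right and the constant lower bound on the denominator holds. In Case II the paper instead (a) rewrites $W_0(z) = \operatorname{Exp}(\mn^j z - r_n\bar b,\cdot)\times\cA_n(z)$ with $\cA_n$ uniformly controlled on a box $\cD_n$ of side $\asymp r_n$ (Proposition \ref{prop:An_upper_bound}), (b) uses a large-deviation (Varadhan-type) lemma together with the sub-Gaussian tail of $\cZ_n$ (Proposition \ref{lem:Zn_moment_bound}, Lemma \ref{lem:LDP}, Proposition \ref{lem:use_bound}) to keep the Stein bound's expectation comparable to $\operatorname{Exp}(-r_n\bar b,\cdot)$, and (c) pairs that with the matching exponential lower bound on the denominator. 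The Stein/Lindeberg replacement bound (Lemma \ref{prop:stein_bound}) is essential here because it tracks the polynomial-times-exponential structure of the error exactly; replacing it with a black-box Berry--Esseen estimate discards the exponential factor you need. Finally, a minor misattribution: Assumption \ref{aspt:randomization} (comparing $W_{0,n}$ to $W_0$) is consumed in Theorem \ref{thm:relative_diff} via Proposition \ref{prop:conditional_expect}, not here; the present theorem uses only Assumptions \ref{aspt:moment_bound} and \ref{aspt:error}, since both $\RD^{(1)}$ and $\RD^{(2)}$ are already written purely in terms of $W_0$.
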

Details for the proof of Theorem \ref{thm:rd_limit} are provided in Appendix \ref{appdx:B} of the Supplementary Material.

\section{Simulation study}
\label{sec:simulation}

In this section, we investigate the quality of selective inference generated by our pivot on simulated data. 
We generate datasets with $n = 800$ i.i.d. observations and a set of $p = 200$ potential covariates from the following models.
\begin{enumerate}[label={Model \arabic*:}, leftmargin=*]
    \item $y_i = \beta_0 + x_i ^ \top \beta + \varepsilon_i$, where $\beta_0 = 0.2$, $\beta = (c, c, c, c, c, 0 \ldots, 0) ^ \top \in \RR ^ p$, $\varepsilon_i \sim \cN(0, 4)$ and $x_i \sim \cN(0, \Sigma)$ for an autoregressive design matrix with $\Sigma_{j, k} = 0.5 ^ {|j - k|}$, where $\varepsilon_i $ and $x_i$ are independent. 
	\item $y_i = \beta_0 + x_{i, 1} ^ \top \beta + 1.5 x_{i, 2} \varepsilon_i$, where $\beta_0 = 0.2$, $\beta = (c, c, c, c, c, 0 \ldots, 0) ^ \top \in \RR ^ p$ and $\varepsilon_i \sim \cN(0, 4)$. In this model, $x = (x_{i, 2}, x_{i, 1} ^ \top) ^ \top \in \RR ^ p$, where the variable $x_{i, 2} \in \RR$ is drawn from $U(0 ,2)$ and the variables $x_{i, 1} \in \RR ^ {p - 1}$ are drawn independently from $x_{i, 2}$ and from $\cN(0, \Sigma)$ for an autoregressive design matrix with $\Sigma_{j, k} = 0.5 ^ {|j - k|}$.
	\item $y_i = \beta_0(u_i) + x_i ^ \top \beta(u_i)$ for $i=1, \ldots, n$, where $\beta_0(u_i) = 2 c u_i$ and $\beta(u_i) = (c u_i, c u_i, c, c, c, 0 \ldots, 0) ^ \top \in \RR ^ p$, $u_i \sim U(0, 1)$. 
	In this model, $x_{i, 1}, x_{i, 2} \in \RR$, the first and second elements of $x_i$, are drawn from $U(0, 2)$ and $x_{i, 3} \in \RR ^ {p - 2}$, the subvector of $x_i$ after removing $x_{i, 1}, x_{i, 2}$, is drawn from $\cN(0, \Sigma)$ for an autoregressive design matrix $\Sigma_{j, k} = 0.5 ^ {|j - k|}$.
\end{enumerate}

We draw selective inference in each of our models after estimating $F_{y|x} ^ {-1} (\tau)$, which is the $\tau$-th population conditional quantile of $y$ given $x$. 
Note that $F_{y|x} ^ {-1} (\tau) = \beta_0 (\tau) + x ^ \top \beta (\tau)$, where in 
\begin{equation}
\begin{gathered}
\text{ Model 1: } \beta_0 (\tau) = 0.2 + \Phi ^ {-1} (\tau; 0, 4) \text{ and } \beta (\tau) = (c, c, c, c, c, 0 \ldots, 0) ^ \top, \\
\text{ Model 2: } \beta_0 (\tau) = 0.2 \text{ and } \beta (\tau) = (1.5 \Phi ^ {-1} (\tau; 0, 4), c, c, c, c, c, 0 \ldots, 0) ^ \top, \\ 
\text{ Model 3: } \beta_0 (\tau) = 2 c \tau \text{ and } \beta (\tau) = (c \tau, c \tau, c, c, c, 0 \ldots, 0) ^ \top. 
\end{gathered}
\end{equation}

In our simulation, we set the quantile level at $\tau = 0.7$. 
We apply the randomized $\ell_1$-penalized SQR problem with tuning parameter $\lambda = 0.6 \sqrt{\log p / n}$.
To construct the smoothed quantile loss functions, both during variable selection and for obtaining the refitted SQR estimators for inference, we use Gaussian kernels with the bandwidth parameter
$$h = \max \big\{0.05, \sqrt{\tau(1-\tau)} (\log (p) / n)^{1 / 4}\big\}.$$
Using these formulae, $\lambda=0.049$ and $h=0.131$ in our simulations.
Our randomized method is implemented with white noise $\omega_n$
drawn from $\cN(0_p, \Omega/n)$ with $\varrand = I_{p, p}$.

In our simulation, the signal strength settings are categorized as ``Low'', ``Medium'',  and ``High'', depending on the value of $c$ from the set $\{0.1, 0.5, 1\}$.
The reported findings are based on $500$ independent Monte Carlo datasets for each pair of model and signal setting.

Additional simulations on a high dimensional setting with $p>n$ and a misspecified model are included in Appendix \ref{appdx:simulation} of the Supplementary Material.

%----------------------------
\subsection{Coverage rates}
\label{subsec:coverage}

We start by assessing the coverage properties of intervals produced by our method, which is labeled as ``Proposed'' in the plots.

We present comparisons between the proposed method and two common baselines: (1) ``Naive'': which utilizes all the data for model selection and reuses the same data for inference without accounting for the double usage of data; (2) ``Splitting'': which divides the data into two independent parts, using  two-thirds of our data for selection and reserving the other third exclusively for inference.

For a prespecified significance level $\alpha = 0.1$ and an interval % $\mathcal{I}^{\alpha}_n$ 
$\left[\mathrm{LCB}^\alpha_{n},\, \mathrm{UCB}^\alpha_{n}\right]$ produced by each method, we compute the coverage rate for the selected population parameters defined as:
% \[
% 	\frac{\left|\left\{j \in E: b_{n,\star}^{j\cdot E} \in  \mathcal{I}^{\alpha}_n\right\} \right|}{\max (|E|, 1)},
% \]
\[
	\frac{\left|\left\{j \in E: \truestarj \in  \left[\mathrm{LCB}^\alpha_{n},\, \mathrm{UCB}^\alpha_{n}\right]\right\} \right|}{\max (|E|, 1)}.
\]

The coverage rates of different methods in three models and signal settings are depicted in Figure \ref{fig:coverage}. 
The gray dashed line in the figure represents the prespecified target coverage rate, which is fixed at $0.9$, and the diamond marks highlight the mean coverage rates over all replications.

\begin{figure}[!t]
	\centering
 	\includegraphics[scale = 0.62]{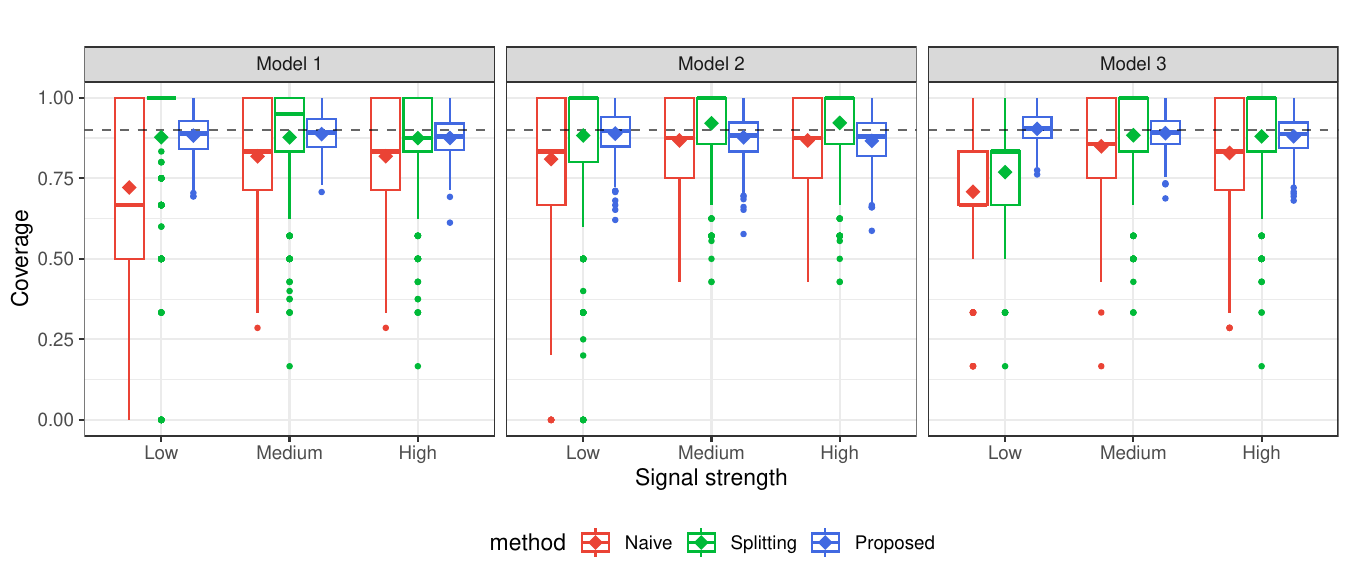}
	\caption{
    \small{Coverage rates of different methods across different models and signal settings. 
    The gray dashed line represents the prespecified target coverage rate at $0.9$, and the diamond marks highlight the averaged coverage rates over all replications. 
    We observed that the ``Proposed'' method consistently achieves the target coverage rate across all scenarios, whereas ``Naive'' and ``Splitting'' underperform.}}
	\label{fig:coverage}
\end{figure}

It is evident that the ``Naive'' method falls remarkably short of coverage, emphasizing the need to account for the impact of selection while constructing inference for the effects of the selected variables on the conditional quantile function. 
The ``Splitting'' method achieves the target coverage rate in all scenarios, except in the ``Low'' signal setting for Model 3. 
The poor performance of ``Splitting'' in this scenario can be attributed to the lack of sufficient data needed to attain the asymptotic coverage rate. 

Remarkably, the ``Proposed'' method achieves the desired rate of coverage across all scenarios, including the challenging setting where ``Splitting'' performs poorly.

Unlike ``Splitting'', our approach utilizes the entire dataset for making inference, achieving the asymptotic approximation even with moderately sized datasets.
Furthermore, it is apparent that the variability of coverage rates in the proposed intervals is lower than that of ``Splitting'', indicating a more stable performance across experiments.

%----------------------------
\subsection{Inferential power}

We will now compare inferential power between ``Proposed'' and ``Splitting'', leaving out the invalid ``Naive'' from our comparison.
Although not the exclusive metric, lengths of intervals are a frequently employed and practical way to gauge inferential power.

Box plots for the ratio of average interval lengths:
 \[
    \text{Ratio}= \frac{\text{Average intervals length of ``Proposed''}}{\text{Average intervals length of ``Splitting''}}
\]
are shown in Figure \ref{fig:length}.

\begin{figure}[!t]
	\centering
	\includegraphics[scale = 0.62]{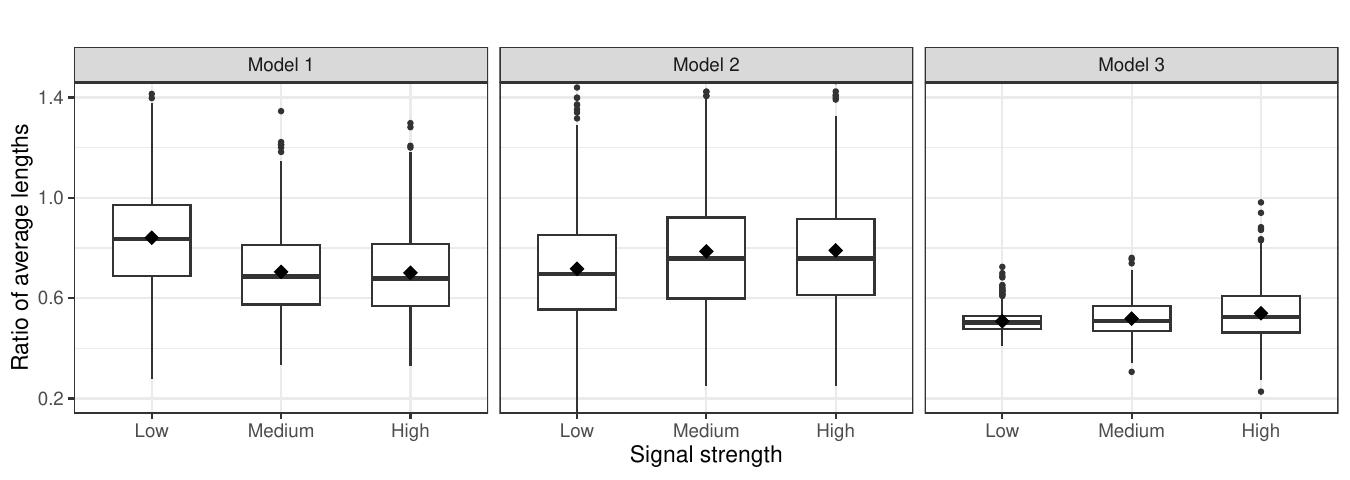}
	\caption{
    \small{The boxplots present the ratio of average interval lengths for the selected parameters between the ``Proposed'' method and the ``Splitting'' method across different models and signal strengths. ``Proposed'' yields significantly shorter intervals than ``Splitting'' in all settings.}}
	\label{fig:length}
\end{figure}

This plot shows that the ``Proposed'' method outperforms ``Splitting'' by consistently producing shorter intervals in all three signal settings and models.
Note that in the challenging setting of ``Model 3'' and the ``Low'' signal regime, our intervals not only give valid inference, but they are also almost half the length of the intervals produced by ``Splitting''.

\subsection{Estimation accuracy}
Finally, we evaluate whether the proposed selective inferential method improves the accuracy of estimating the true signal structure in our data. 
We consider two stages here, the selection stage (before applying ``Proposed'') and the inference stage (after applying ``Proposed'').

To measure the overall accuracy, we compute the F1 score which is defined as 
\[
    \text{{F1 score}} = \dfrac{\text{{True Positives}}}{\text{True Positives}+\dfrac{1}{2}(\text{False Positives}+ \text{False Negatives})},
\]
which can be evaluated for the selection stage before inference is made, or  evaluated after inference is made.  Both F1 scores for the proposed method are reported in Table \ref{table:F1scores}.

To be more specific, at the selection stage, True Positives refer to the active variables (with non-zero coefficients in the model) that are correctly selected by the randomized SQR method; 
False Positives refer to inactive (with zero-coefficients in the model) variables that are incorrectly selected by the same method. 
Similarly, False Negatives refer to the active variables that are missed by the selection method. 
After the inference stage, True Positives refer to variables where the post-selection confidence intervals of active variables do not include zero,   
False Positives refer to cases when the intervals of inactive variables do not include zero, and 
False Negatives refer to cases when the intervals of active variables include zero.

\begin{table}[t]
\caption{\small{Accuracy based on F1 scores before and after applying ``Proposed'' in the three models and signal settings. The improvement in accuracy of estimating the true signal structure from the inference stage is shown in the third row of the Table.}}
\vspace{10pt}
\footnotesize
\begin{tabular*}{\textwidth}{@{\extracolsep{\fill}}*{10}{c}}
\hline
\hline
    & & Model 1 & & & Model 2 & & & Model 3 & \\
     \cline{2 - 4} \cline{5 - 7} \cline{8 - 10} 
Signal Strength & Low & Medium & High & Low & Medium & High & Low & Medium & High \\ 
\hline
\hline
F1 score before inference
    & 0.10 & 0.19 & 0.22 & 0.12 & 0.19 & 0.23 & 0.22 & 0.21 & 0.21 \\
F1 score after inference
    & 0.26 & 0.63 & 0.70 & 0.29 & 0.58 & 0.66 & 0.73 & 0.70 & 0.68 \\
\hline
\hline
\end{tabular*}
\label{table:F1scores}
\end{table}

Consistent with our expectation, the ``Proposed'' method improves the accuracy of identifying the true signal structure from the data by conducting inference post selection. 
As evident from this table, we note a significant improvement in accuracy after conducting inference, which can be as high as $50\%$.
It is crucial to bear in mind that during the estimation process, the selection stage can sometimes mistakenly identify noise variables and include them in the model. 
However, after the selection stage, the ``Proposed'' approach enables the analyst to remove these noise variables through inference.

\section{Analysis of a birth weight dataset}
\label{sec:real}

\subsection{Risk factors for low birth weight}
In this section, we use the proposed method to investigate the association between low birth weight in twins and various risk factors. 
The dataset we use is derived from the $2022$ U.S. birth records collected by the Centers for Disease Control and Prevention (CDC). 
It consists of $114,763$ observations across $79$ potential risk factors that include maternal age, order of live-birth, race, marital status, tobacco use, prenatal care, method of delivery, and gestational age, among others.
We exclude from this data observations with missing data. 
Additionally, we removed three variables to avoid multicollinearity in our design matrix and a few more categorical variables that did not have sufficient observations for each of their categories.
After applying one hot encoding to our categorical variables, we obtain a total of $33,798$ observations of birth weight in twins across a total of $83$ factors. 
The entire list of variables in our data and their description are included in Appendix \ref{appdx:data} of the Supplementary Material.

Note that our method ``Proposed'' is implemented  as described in Section \ref{sec:simulation}. 
We use the tuning parameter of $\lambda=0.4 \sqrt{\log p / n}$ and bandwidth $h = \max \big\{0.05, \sqrt{\tau(1-\tau)} (\log (p) / n)^{1 / 4}\big\} $ in this example, and draw Gaussian white noise $\omega_n$ from $\cN\left(0_p, \frac{1}{2n}I_{p, p}\right)$. 
For data splitting, we use two-thirds of the samples for model selection and the remaining one-third of the samples for constructing confidence intervals for the effects of the selected variables on the $10\%$ conditional quantile. 
We focus on $\tau=0.1$ in our analysis, because studying risk factors for low birth weight is typically more relevant in public health.
See, for example, investigations by \cite{low_birthweight1, low_birthweight2}.

First, we apply our method to model the $10\%$ quantile in the birth weight data on the full dataset by selecting the pertinent risk factors and then constructing interval estimators for their effects. 
Given the large sample size, we take the results obtained from the entire dataset as a benchmark for assessing the power of statistical analysis based on a smaller dataset of size $500$. 
We refer to this analysis performed on the full dataset as ``Baseline'', which will aid us to evaluate if the conclusions drawn from our method on the much smaller dataset align with those drawn from the larger dataset, and determine how they compare with data splitting on the subsets of the data.

In Figure \ref{fig:real1}, we show the $90\%$ confidence intervals for the variables selected by ``Baseline''. 
For this same set of variables, we show the results from ``Proposed'' and ``Splitting'' when implemented on a randomly drawn subsample of size $n=500$. In this case, we use $\lambda =0.0377$ and $h=0.092$.
Note that we do not show an interval for the variables that were not selected by the method in the first place.
\begin{figure}[t]
	\centering
	\includegraphics[scale = 0.68]{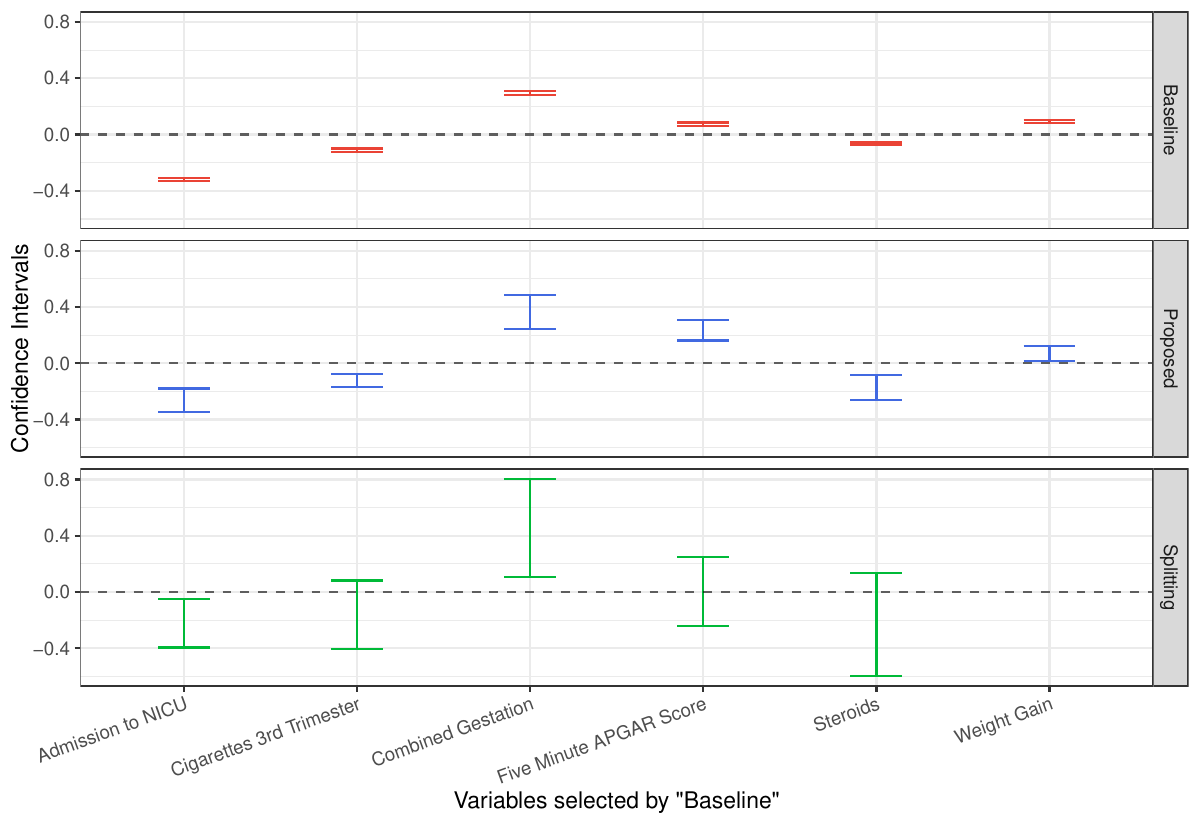}
	\caption{
   \small{$90\%$ confidence intervals for the variables chosen as significant by the full data analysis. 
     ``Splitting'' method does not select the ``Weight Gain'' factor and fails to identify the significance of ``Cigarettes 3rd Trimester'', ``Five Minute APGAR Score'' and ``Steroids'' based on subsamples.
     In contrast, ``Proposed'' identifies the association between these factors and the low birth weight in twins even when a ``Baseline'' on the full data. 
     The average length of the confidence intervals produced by ``Proposed'' is $0.125$, while ``Splitting'' results in an average interval length of $0.471$}.}
	\label{fig:real1}
\end{figure}

We observe that ``Combined Gestation'' is positively associated with the low birth weight of twins and  ``Admission to NICU'' are negatively associated with the low birth weight of twins, as identified by both ``Proposed'' and ``Splitting''.
These findings align with those from ``Baseline''.
However, ``Splitting'' failed to select the ``Weight Gain'' factor during the model selection step and did not identify the associations of low birth weight with ``Cigarettes 3rd Trimester'', ``Five Minute APGAR Score'', and ``Steroids'', possibly due to lack of sufficient data. 
On the other hand, ``Proposed'', which uses all samples for both selection and inference, not only identifies these factors but also yields shorter intervals on average than ``Splitting''. 
Overall, the proposed intervals are substantially shorter than the intervals produced by ``Splitting'' on the same  subsamples.

\subsection{Replicating our analysis}

In this section, we repeat the above-described analysis on $100$ random subsamples of  size of $500$. We report in Table \ref{table:real1}, the fraction of times that a selected variable was reported as significant with the post-selection interval estimators.
Note that we considered the association significant if the corresponding interval did not include $0$. 
Additionally, in Figure \ref{fig:real2}, we display the interval lengths for each selected variable as well as the average interval lengths for all selected variables.

\begin{table}[b]
\caption{\small{The fraction of times the post-selection interval for a variable did not include zero out of the total number of times the same variable was selected.
This fraction is in general higher for ``Proposed'' compared to ``Splitting'', and is especially true for the variables ``Five Minute APGAR Score'' and ``Weight Gain'', indicating the higher efficacy of ``Proposed'' in detecting significance.
}}
\vspace{10pt}
\small
\centering
\begin{tabular}{ccccccc}
\hline 
\hline
Variables & \makecell{Admission to\\ NICU} & \makecell{Cigarettes 3rd\\ Trimester} & \makecell{Combined\\ Gestation} & \makecell{Five Minute\\ APGAR Score} & \makecell{Steroids} & \makecell{Weight\\ Gain} \\ 
\hline 
\hline
Proposed  & 0.979 & 0.688 & 0.990 & 0.366 & 0.431 & 0.609   \\ 
Splitting & 0.880 & 0.674 & 0.830 & 0.277 & 0.327 & 0.467  \\
\hline 
\hline
\end{tabular}
\label{table:real1}
\end{table}

\begin{figure}[t]
	\centering
	\includegraphics[scale = 0.68]{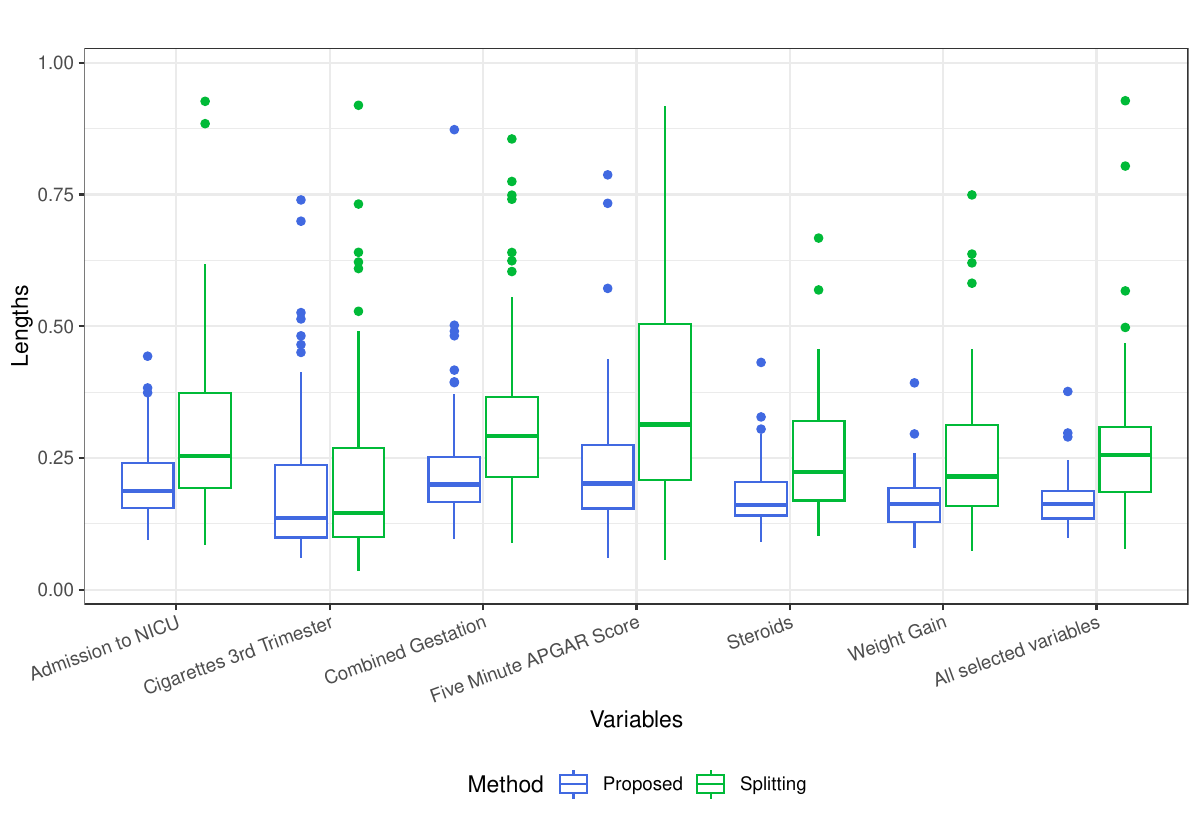}
	\caption{
    \small{
    Box plots for the lengths of $90\%$ confidence intervals of each selected variable and the average lengths of all selected variables. 
    ``Proposed'' results in shorter intervals compared to ``Splitting'' for each variable and overall on average for all variables.}}
	\label{fig:real2}
\end{figure}

These results support our findings on simulated data, indicating that ``Proposed'' is far more effective at detecting significance than ``Splitting''. 
The box plots for the lengths of these intervals confirm that ``Proposed'' consistently produces shorter intervals than ``Splitting'', for each selected variable and overall for all of them.

In Appendix \ref{appdx:data} of the Supplementary Material , we apply our approach on a subsample of size $500$ to analyze the risk factors associated with the $1\%$ conditional quantile. 
This analysis was repeated across a $100$ random subsamples.
The interval lengths for the selected variables are displayed in Figure \ref{fig:birth_length_1_old}. 
The findings of our experiment remain consistent at a lower quantile level, as the ``Proposed'' approach produced shorter intervals than ``Splitting''.

\section{Discussion}
\label{sec:discussion}

In this paper, we have addressed the challenge of conducting selective inference on quantile effects by introducing an asymptotic pivot. 
Our method, which incorporates estimators from smoothed quantile regression and an external randomization variable, ensures accurate inference and is computationally straightforward. 

Our experiments with simulated and real data show that our approach outperforms existing techniques, especially when dealing with small sample sizes or weak signals. 
For example, modifying the existing polyhedral toolbox for quantile regression results in infinitely long intervals and fails to achieve the desired coverage in low signal-to-noise ratio scenarios. 
Similarly, the common practice of splitting samples into two independent subsets can sometimes lead to undercoverage due to insufficient sample sizes for inference. 
In contrast, our method allows us to use the full dataset for both selection and inference in these instances, leading to improved statistical accuracy, efficiency, and numerical stability.

For future work, we hope that  asymptotic pivots of similar nature can be extended to encompass a wider range of nonparametric or semi-parametric models. 
Exploring the potential of our asymptotic pivot for honest inference in nonlinear models is a promising direction for further investigation.
Additionally, we believe that extensions of our approach to other penalties such as Smoothly Clipped Absolute Deviation (SCAD) and Minimax Concave Penalty (MCP) will yield a versatile toolbox for making selective inference. 

\section*{Acknowledgements}
The Python package \texttt{conquer} from \url{https://github.com/WenxinZhou/conquer} was used by the authors to build their code for selective inference.

Y. Wang is supported by NSF DMS grant 1951980.
S. Panigrahi's research is supported in part by NSF DMS grants 1951980 and 2113342, and NSF Career Award DMS 2337882.
X. He's research is supported in part by NSF DMS grants 1951980 and 2345035.

\bibliographystyle{ims}
\bibliography{ref.bib}

\begin{thebibliography}{44}
\expandafter\ifx\csname natexlab\endcsname\relax\def\natexlab#1{#1}\fi
\expandafter\ifx\csname url\endcsname\relax
  \def\url#1{\texttt{#1}}\fi
\expandafter\ifx\csname urlprefix\endcsname\relax\def\urlprefix{URL }\fi

\bibitem[{Angrist et~al.(2006)Angrist, Chernozhukov and Fern{\'a}ndez-Val}]{angrist2006quantile}
\textsc{Angrist, J.}, \textsc{Chernozhukov, V.} and \textsc{Fern{\'a}ndez-Val, I.} (2006).
\newblock Quantile regression under misspecification, with an application to the us wage structure.
\newblock \textit{Econometrica} \textbf{74} 539--563.

\bibitem[{Bachoc et~al.(2020)Bachoc, Preinerstorfer and Steinberger}]{BachocPreinerstorferSteinberger2016}
\textsc{Bachoc, F.}, \textsc{Preinerstorfer, D.} and \textsc{Steinberger, L.} (2020).
\newblock Uniformly valid confidence intervals post-model-selection.
\newblock \textit{The Annals of Statistics} \textbf{48} 440--463.

\bibitem[{Behl et~al.(2014)Behl, Claeskens and Dette}]{behl2014focussed}
\textsc{Behl, P.}, \textsc{Claeskens, G.} and \textsc{Dette, H.} (2014).
\newblock Focussed model selection in quantile regression.
\newblock \textit{Statistica Sinica} \textbf{24} 601--624.

\bibitem[{Belloni et~al.(2019)Belloni, Chernozhukov and Kato}]{belloni2019valid}
\textsc{Belloni, A.}, \textsc{Chernozhukov, V.} and \textsc{Kato, K.} (2019).
\newblock Valid post-selection inference in high-dimensional approximately sparse quantile regression models.
\newblock \textit{Journal of the American Statistical Association} \textbf{114} 749--758.

\bibitem[{Berk et~al.(2013)Berk, Brown, Buja, Zhang and Zhao}]{BerkBrownBujaZhangZhao2013}
\textsc{Berk, R.}, \textsc{Brown, L.}, \textsc{Buja, A.}, \textsc{Zhang, K.} and \textsc{Zhao, L.} (2013).
\newblock Valid post-selection inference.
\newblock \textit{The Annals of Statistics} \textbf{41} 802--837.

\bibitem[{Blencowe et~al.(2019)Blencowe, Krasevec, De~Onis, Black, An, Stevens, Borghi, Hayashi, Estevez, Cegolon et~al.}]{low_birthweight2}
\textsc{Blencowe, H.}, \textsc{Krasevec, J.}, \textsc{De~Onis, M.}, \textsc{Black, R.~E.}, \textsc{An, X.}, \textsc{Stevens, G.~A.}, \textsc{Borghi, E.}, \textsc{Hayashi, C.}, \textsc{Estevez, D.}, \textsc{Cegolon, L.} \textsc{et~al.} (2019).
\newblock National, regional, and worldwide estimates of low birthweight in 2015, with trends from 2000: a systematic analysis.
\newblock \textit{The Lancet global health} \textbf{7} e849--e860.

\bibitem[{Cade and Noon(2003)}]{ecology}
\textsc{Cade, B.~S.} and \textsc{Noon, B.~R.} (2003).
\newblock A gentle introduction to quantile regression for ecologists.
\newblock \textit{Frontiers in Ecology and the Environment} \textbf{1} 412--420.

\bibitem[{Canay(2011)}]{quantile_panel}
\textsc{Canay, I.~A.} (2011).
\newblock A simple approach to quantile regression for panel data.
\newblock \textit{The Econometrics Journal} \textbf{14} 368--386.

\bibitem[{Dai et~al.(2023)Dai, Lin, Xing and Liu}]{DaiLinXingLiu2023}
\textsc{Dai, C.}, \textsc{Lin, B.}, \textsc{Xing, X.} and \textsc{Liu, J.~S.} (2023).
\newblock A scale-free approach for false discovery rate control in generalized linear models.
\newblock \textit{Journal of the American Statistical Association} \textbf{118} 1551--1565.

\bibitem[{Dharamshi et~al.(2024)Dharamshi, Neufeld, Motwani, Gao, Witten and Bien}]{dharamshi2023generalized}
\textsc{Dharamshi, A.}, \textsc{Neufeld, A.}, \textsc{Motwani, K.}, \textsc{Gao, L.~L.}, \textsc{Witten, D.} and \textsc{Bien, J.} (2024).
\newblock Generalized data thinning using sufficient statistics.
\newblock \textit{Journal of the American Statistical Association} \textbf{0} 1--13.

\bibitem[{Fernandes et~al.(2021)Fernandes, Guerre and Horta}]{smoothing_fernandes}
\textsc{Fernandes, M.}, \textsc{Guerre, E.} and \textsc{Horta, E.} (2021).
\newblock Smoothing quantile regressions.
\newblock \textit{Journal of Business \& Economic Statistics} \textbf{39} 338--357.

\bibitem[{Galvao and Kato(2016)}]{galvao2016smoothed}
\textsc{Galvao, A.~F.} and \textsc{Kato, K.} (2016).
\newblock Smoothed quantile regression for panel data.
\newblock \textit{Journal of econometrics} \textbf{193} 92--112.

\bibitem[{Goldenberg and Culhane(2007)}]{low_birthweight1}
\textsc{Goldenberg, R.~L.} and \textsc{Culhane, J.~F.} (2007).
\newblock Low birth weight in the united states.
\newblock \textit{The American Journal of Clinical Nutrition} \textbf{85} 584S--590S.

\bibitem[{Gu et~al.(2018)Gu, Fan, Kong, Ma and Zou}]{gu2018admm}
\textsc{Gu, Y.}, \textsc{Fan, J.}, \textsc{Kong, L.}, \textsc{Ma, S.} and \textsc{Zou, H.} (2018).
\newblock Admm for high-dimensional sparse penalized quantile regression.
\newblock \textit{Technometrics} \textbf{60} 319--331.

\bibitem[{He et~al.(2022)He, Pan, Tan and Zhou}]{Censored_SQR}
\textsc{He, X.}, \textsc{Pan, X.}, \textsc{Tan, K.~M.} and \textsc{Zhou, W.-X.} (2022).
\newblock Scalable estimation and inference for censored quantile regression process.
\newblock \textit{The Annals of Statistics} \textbf{50} 2899--2924.

\bibitem[{He et~al.(2023)He, Pan, Tan and Zhou}]{smooth_quantile}
\textsc{He, X.}, \textsc{Pan, X.}, \textsc{Tan, K.~M.} and \textsc{Zhou, W.-X.} (2023).
\newblock Smoothed quantile regression with large-scale inference.
\newblock \textit{Journal of Econometrics} \textbf{232} 367--388.

\bibitem[{Horowitz(1998)}]{horowitz1998bootstrap}
\textsc{Horowitz, J.~L.} (1998).
\newblock Bootstrap methods for median regression models.
\newblock \textit{Econometrica} \textbf{66} 1327--1351.

\bibitem[{Huang et~al.(2023)Huang, Pirenne, Panigrahi and Claeskens}]{huang2023selective}
\textsc{Huang, Y.}, \textsc{Pirenne, S.}, \textsc{Panigrahi, S.} and \textsc{Claeskens, G.} (2023).
\newblock Selective inference using randomized group lasso estimators for general models.
\newblock \textit{arXiv preprint arXiv:2306.13829} .

\bibitem[{Kivaranovic and Leeb(2024)}]{kivaranovic2020tight}
\textsc{Kivaranovic, D.} and \textsc{Leeb, H.} (2024).
\newblock A (tight) upper bound for the length of confidence intervals with conditional coverage.
\newblock \textit{Electronic Journal of Statistics} \textbf{18} 1677--1701.

\bibitem[{Koenker and Bassett(1978)}]{Koenker_and_Bassett}
\textsc{Koenker, R.} and \textsc{Bassett, G.} (1978).
\newblock Regression quantiles.
\newblock \textit{Econometrica} \textbf{46} 33--50.

\bibitem[{Lee et~al.(2016)Lee, Sun, Sun and Taylor}]{lee_seletive}
\textsc{Lee, J.~D.}, \textsc{Sun, D.~L.}, \textsc{Sun, Y.} and \textsc{Taylor, J.~E.} (2016).
\newblock {Exact post-selection inference, with application to the lasso}.
\newblock \textit{The Annals of Statistics} \textbf{44} 907--927.

\bibitem[{Leeb and P{\"o}tscher(2003)}]{leeb2003finite}
\textsc{Leeb, H.} and \textsc{P{\"o}tscher, B.~M.} (2003).
\newblock The finite-sample distribution of post-model-selection estimators and uniform versus nonuniform approximations.
\newblock \textit{Econometric Theory} \textbf{19} 100--142.

\bibitem[{Leeb and P{\"o}tscher(2005)}]{leeb2005model}
\textsc{Leeb, H.} and \textsc{P{\"o}tscher, B.~M.} (2005).
\newblock Model selection and inference: Facts and fiction.
\newblock \textit{Econometric Theory} \textbf{21} 21--59.

\bibitem[{Leiner et~al.(2023)Leiner, Duan, Wasserman and Ramdas}]{LeinerDuanWassermanRamdas2021}
\textsc{Leiner, J.}, \textsc{Duan, B.}, \textsc{Wasserman, L.} and \textsc{Ramdas, A.} (2023).
\newblock Data fission: splitting a single data point.
\newblock \textit{Journal of the American Statistical Association} \textbf{0} 1--12.

\bibitem[{Man et~al.(2023)Man, Pan, Tan and Zhou}]{Unified_algorithm_SQR}
\textsc{Man, R.}, \textsc{Pan, X.}, \textsc{Tan, K.~M.} and \textsc{Zhou, W.-X.} (2023).
\newblock A unified algorithm for penalized convolution smoothed quantile regression.
\newblock \textit{Journal of Computational and Graphical Statistics} \textbf{0} 1--13.

\bibitem[{Panigrahi(2023)}]{panigrahi2018carving}
\textsc{Panigrahi, S.} (2023).
\newblock Carving model-free inference.
\newblock \textit{The Annals of Statistics} \textbf{51} 2318--2341.

\bibitem[{Panigrahi et~al.(2024)Panigrahi, Fry and Taylor}]{panigrahi2023exact}
\textsc{Panigrahi, S.}, \textsc{Fry, K.} and \textsc{Taylor, J.} (2024).
\newblock Exact selective inference with randomization.
\newblock \textit{Biometrika} \textbf{111} 1109--1127.

\bibitem[{Panigrahi et~al.(2023{\natexlab{a}})Panigrahi, MacDonald and Kessler}]{panigrahi2023approximate}
\textsc{Panigrahi, S.}, \textsc{MacDonald, P.~W.} and \textsc{Kessler, D.} (2023{\natexlab{a}}).
\newblock Approximate post-selective inference for regression with the group lasso.
\newblock \textit{Journal of Machine Learning Research} \textbf{24} 1--49.

\bibitem[{Panigrahi et~al.(2017)Panigrahi, Markovic and Taylor}]{panigrahi2017mcmc}
\textsc{Panigrahi, S.}, \textsc{Markovic, J.} and \textsc{Taylor, J.} (2017).
\newblock An mcmc-free approach to post-selective inference.
\newblock \textit{arXiv preprint arXiv:1703.06154} .

\bibitem[{Panigrahi et~al.(2023{\natexlab{b}})Panigrahi, Mohammed, Rao and Baladandayuthapani}]{panigrahi2023integrative}
\textsc{Panigrahi, S.}, \textsc{Mohammed, S.}, \textsc{Rao, A.} and \textsc{Baladandayuthapani, V.} (2023{\natexlab{b}}).
\newblock Integrative bayesian models using post-selective inference: A case study in radiogenomics.
\newblock \textit{Biometrics} \textbf{79} 1801--1813.

\bibitem[{Panigrahi and Taylor(2022)}]{approximate_snigdha}
\textsc{Panigrahi, S.} and \textsc{Taylor, J.} (2022).
\newblock Approximate selective inference via maximum likelihood.
\newblock \textit{Journal of the American Statistical Association} \textbf{0} 1--11.

\bibitem[{Panigrahi et~al.(2021)Panigrahi, Taylor and Weinstein}]{panigrahi2021integrative}
\textsc{Panigrahi, S.}, \textsc{Taylor, J.} and \textsc{Weinstein, A.} (2021).
\newblock Integrative methods for post-selection inference under convex constraints.
\newblock \textit{The Annals of Statistics} \textbf{49} 2803--2824.

\bibitem[{Rasines and Young(2022)}]{rasines2021splitting}
\textsc{Rasines, D.~G.} and \textsc{Young, G.~A.} (2022).
\newblock Splitting strategies for post-selection inference.
\newblock \textit{Biometrika} \textbf{110} 597--614.

\bibitem[{Song et~al.(2017)Song, Li, Zhou, Wang, Ionita-Laza and Wei}]{QRank}
\textsc{Song, X.}, \textsc{Li, G.}, \textsc{Zhou, Z.}, \textsc{Wang, X.}, \textsc{Ionita-Laza, I.} and \textsc{Wei, Y.} (2017).
\newblock Qrank: a novel quantile regression tool for eqtl discovery.
\newblock \textit{Bioinformatics} \textbf{33} 2123--2130.

\bibitem[{Tan et~al.(2021)Tan, Wang and Zhou}]{high_smooth_quantile}
\textsc{Tan, K.~M.}, \textsc{Wang, L.} and \textsc{Zhou, W.-X.} (2021).
\newblock {High-Dimensional Quantile Regression: Convolution Smoothing and Concave Regularization}.
\newblock \textit{Journal of the Royal Statistical Society Series B: Statistical Methodology} \textbf{84} 205--233.

\bibitem[{Tang et~al.(2022)Tang, Wang, Wang and Pan}]{tang2022conditional}
\textsc{Tang, Y.}, \textsc{Wang, Y.}, \textsc{Wang, H.~J.} and \textsc{Pan, Q.} (2022).
\newblock Conditional marginal test for high dimensional quantile regression.
\newblock \textit{Statistica Sinica} \textbf{32} 869--892.

\bibitem[{Tian and Taylor(2018)}]{tian_randomized}
\textsc{Tian, X.} and \textsc{Taylor, J.} (2018).
\newblock {Selective inference with a randomized response}.
\newblock \textit{The Annals of Statistics} \textbf{46} 679--710.

\bibitem[{Tibshirani et~al.(2018)Tibshirani, Rinaldo, Tibshirani and Wasserman}]{tibshirani_bootstrap}
\textsc{Tibshirani, R.~J.}, \textsc{Rinaldo, A.}, \textsc{Tibshirani, R.} and \textsc{Wasserman, L.} (2018).
\newblock Uniform asymptotic inference and the bootstrap after model selection.
\newblock \textit{The Annals of Statistics} \textbf{46} 1255--1287.

\bibitem[{Vershynin(2018)}]{Ver18}
\textsc{Vershynin, R.} (2018).
\newblock \textit{High-dimensional probability: An introduction with applications in data science}, vol.~47.
\newblock Cambridge university press.

\bibitem[{Wang et~al.(2018)Wang, McKeague and Qian}]{wang2018testing}
\textsc{Wang, H.~J.}, \textsc{McKeague, I.~W.} and \textsc{Qian, M.} (2018).
\newblock Testing for marginal linear effects in quantile regression.
\newblock \textit{Journal of the Royal Statistical Society Series B: Statistical Methodology} \textbf{80} 433--452.

\bibitem[{Wei et~al.(2006)Wei, Ignatius, Koenker and He}]{reference_growth}
\textsc{Wei, Y.}, \textsc{Ignatius, A.}, \textsc{Koenker, R.} and \textsc{He, X.} (2006).
\newblock Quantile regression methods for reference growth charts.
\newblock \textit{Statistics in medicine} \textbf{25} 1369--1382.

\bibitem[{Yan et~al.(2023)Yan, Wang and Zhang}]{debiased_quantile}
\textsc{Yan, Y.}, \textsc{Wang, X.} and \textsc{Zhang, R.} (2023).
\newblock Confidence intervals and hypothesis testing for high-dimensional quantile regression: Convolution smoothing and debiasing.
\newblock \textit{Journal of Machine Learning Research} \textbf{24} 1--49.

\bibitem[{Zhang et~al.(2022)Zhang, Khalili and Asgharian}]{zhang2022post}
\textsc{Zhang, D.}, \textsc{Khalili, A.} and \textsc{Asgharian, M.} (2022).
\newblock {Post-model-selection inference in linear regression models: An integrated review}.
\newblock \textit{Statistics Surveys} \textbf{16} 86--136.

\bibitem[{Zrnic and Jordan(2023)}]{zrnic2023post}
\textsc{Zrnic, T.} and \textsc{Jordan, M.~I.} (2023).
\newblock Post-selection inference via algorithmic stability.
\newblock \textit{The Annals of Statistics} \textbf{51} 1666--1691.

\end{thebibliography}

\appendix
\titleformat{\section}[block]{\normalfont\Large\bfseries}{Appendix \Alph{section}:}{1em}{}
\renewcommand{\thesection}{\Alph{section}}

\begin{spacing}{1.3}

\section{Asymptotic representation of SQR estimators in Section \ref{subsec:sqr_estimators}}
\label{suppl:general}
Denote the bandwidth parameters $h$ and $h'$ as the bandwidths used for selection and inference, respectively.
Recall the definition of the pseudo-parameter $\trueprime$ in \eqref{conv:smooth:target} and let $\truejprime$ denote the $j^{\text{th}}$ component of $\trueprime$.
Firstly, define the matrices
\[
\begin{aligned}
	&\quad\quad\quad J = \EE_{\FF_n} \left[X ^ \top \nabla ^ 2 \widehat Q_{n;h'} (X_E \trueprime; Y) X\right] 
    = \begin{bmatrix} J_{E, E} & J_{E, E'} \\ J_{E', E} & J_{E', E'} \end{bmatrix} 
    = \begin{bmatrix} J_{\cdot, E} \ & J_{\cdot, E'} \end{bmatrix} \in \RR ^ {p\times p} \\
    &\quad\quad\quad \widetilde J = \EE_{\FF_n} \left[X ^ \top \nabla ^ 2 \widehat Q_{n;h} (X_E \trueprime; Y) X\right] 
    = \begin{bmatrix} \widetilde J_{E, E} & \widetilde J_{E, E'} \\ \widetilde J_{E', E} & \widetilde J_{E', E'} \end{bmatrix}
    = \begin{bmatrix} \widetilde J_{\cdot, E} \ & \widetilde J_{\cdot, E'} \end{bmatrix} \in \RR ^ {p\times p} \\
	H &= \Cov_{\FF_n} \Big(\sqrt{n} X ^ \top \nabla \widehat Q_{n;h'} (X_E \trueprime; Y), \sqrt{n} X ^ \top \nabla \widehat Q_{n;h'} (X_E \trueprime; Y)\Big)
    = \begin{bmatrix} H_{E, E} & H_{E, E'} \\ H_{E', E} & H_{E', E'}\end{bmatrix} \in \RR ^ {p\times p} \\
    \widetilde H &= \Cov_{\FF_n} \Big(\sqrt{n} X ^ \top \nabla \widehat Q_{n;h} (X_E \trueprime; Y), \sqrt{n} X ^ \top \nabla \widehat Q_{n;h} (X_E \trueprime; Y)\Big) 
    = \begin{bmatrix} \widetilde H_{E, E} & \widetilde H_{E, E'} \\ \widetilde H_{E', E} & \widetilde H_{E', E'}\end{bmatrix} \in \RR ^ {p\times p} \\
    &\quad\quad\quad\quad K = \Cov_{\FF_n} \Big(\sqrt{n} X ^ \top \nabla \widehat Q_{n;h} (X_E \trueprime; Y), \sqrt{n} X ^ \top \nabla \widehat Q_{n;h'} (X_E \trueprime; Y)\Big) \\
    &\quad\quad\quad\quad\quad = \begin{bmatrix} K_{E, E} & K_{E, E'} \\ K_{E', E} & K_{E', E'}\end{bmatrix} 
    = \begin{bmatrix} K_{\cdot, E} \ & K_{\cdot, E'} \end{bmatrix} \in \RR ^ {p\times p}
\end{aligned}
\]
and let $\varmle = J_{E, E} ^ {-1} H_{E, E} J_{E, E} ^ {-1}$ and let $\varmlej$ be the $j^{\text{th}}$ diagonal entry of $\Sigma_{E, E}$.
Then, define the refitted SQR estimators
$$
    \mleprime = \argmin_{b \in \RR ^ {q}} \sqrt{n} \widehat Q_{n;h'} (X_E b; Y)
$$
and let $\mlejprime = \mathbb{S}_j \mleprime$ denote the $j^{\text{th}}$ component of this estimator.
Define the statistic
$$
    \mlenjprime =\begin{pmatrix} \mathbb{S}_{[E]\setminus j}\left(\mleprime - \frac{1}{\varmlej}\covmlej\mlejprime\right) \\ X ^ \top \nabla \widehat Q_{n;h}(X_E \mleprime; Y) + (K_{\cdot, E} H_{E, E} ^ {-1} J_{E, E} - \widetilde J_{\cdot, E}) \mleprime \end{pmatrix}
    \in \RR ^ {p + q - 1},
$$
and the nuisance parameters 
$$
    \truenjprime =\begin{pmatrix} \mathbb{S}_{[E]\setminus j}\left(\trueprime - \frac{1}{\varmlej}\covmlej\truejprime\right) \\ \EE \left[X ^ \top \nabla \widehat Q_{n;h}(X_E \trueprime; Y)\right] + (K_{\cdot, E} H_{E, E} ^ {-1} J_{E, E} -\widetilde J_{\cdot, E}) \trueprime \end{pmatrix}
    \in \RR ^ {p + q - 1}.
$$
% Then, define the statistic
% $$
%     \mlenjprime =\begin{pmatrix} \mathbb{S}_{[E]\setminus j}\left(\mle - \frac{1}{\varmlej}\covmlej\mlej\right) \\ X ^ \top \nabla \widehat Q_{n;h}(X_E \mle; Y) + (K_{\cdot, E} H_{E, E} ^ {-1} J_{E, E} - \widetilde J_{\cdot, E}) \mle \end{pmatrix}
%     \in \RR ^ {p + q - 1},
% $$
% and the nuisance parameters 
% $$
%     \truenjprime =\begin{pmatrix} \mathbb{S}_{[E]\setminus j}\left(\true - \frac{1}{\varmlej}\covmlej\truej\right) \\ \EE \left[X ^ \top \nabla \widehat Q_{n;h}(X_E \true; Y)\right] + (K_{\cdot, E} H_{E, E} ^ {-1} J_{E, E} -\widetilde J_{\cdot, E}) \true \end{pmatrix}
%     \in \RR ^ {p + q - 1}.
% $$
Finally, define the matrices
\begin{equation}
\label{eq:matrix_prime}
\begin{gathered}
    M ^ {\prime j} = - \frac{1}{\varmlej} K_{\cdot, E} J_{E, E} ^ {-1} \mathbb{S}_j ^ \top \in \RR ^ {p \times 1}, \\ 
    N ^ {\prime j} = \begin{bmatrix} - K_{\cdot, E} H_{E, E} ^ {-1} J_{E, E} \mathbb{S}_{[E]\setminus j} ^ \top & I_{p, p}\end{bmatrix} \in \RR ^ {p \times(p+q-1)}, \ T ^ {\prime} = \widetilde{J}_{\cdot, E} \text{\normalfont Diag}(\sgnr) \in \RR ^ {p \times q}.
\end{gathered}	
\end{equation}

In this section, we present the general versions of the results from Section \ref{subsec:sqr_estimators} where $h$ and $h'$ take different values.
The results in Section \ref{subsec:sqr_estimators} correspond to the special case where $h = h'$, where $\widetilde J = J$, $H = \widetilde H = K$, $\mleprime = \mle$, and $\mlenjprime$ equivalent to $\mlenj$, ignoring the zero entries.

%------------------------------------------------------------------------------
%------------------------------------------------------------------------------

\subsection{General statement for  Proposition \ref{prop:beta_gamma_expression}}
\begin{prop}
\label{prop:beta_gamma_expression_general}
Let $E$ be a fixed subset of $[p]$.
Define 
\begin{align*}
\begin{gathered}
\Lambda_1' = \begin{bmatrix} - \mathbb{S}_j J_{E,E} ^ {-1}~ & 0 ^ \top_{p} \end{bmatrix} \begin{bmatrix}
    	H_{E, E} & K_{E, \cdot} \\
    	K_{\cdot, E} & \widetilde H
    \end{bmatrix} ^ {1/2} \in \RR^{1\times (p+q)},\\
\Lambda_2' =\begin{bmatrix} \Lambda_{2,1}' \\ \Lambda_{2,2}'\end{bmatrix}= \begin{bmatrix}  
		 \mathbb{S}_{[E] \setminus j}\left(\frac{1}{\varmlej} \covmlej \mathbb{S}_j J_{E, E} ^ {-1}  -  J_{E, E} ^ {-1}\right) & 0_{q-1, p} \\
		- K_{\cdot, E} H_{E, E} ^ {-1} & I_{p, p} \end{bmatrix} 
		\begin{bmatrix}
    	H_{E, E} & K_{E, \cdot} \\
    	K_{\cdot, E} & \widetilde H
    \end{bmatrix} ^ {1/2} \in \RR^{(p+q-1)\times (p+q)}.
\end{gathered}
\end{align*}
For $j \in E$, it holds that
\begin{equation*}
	\sqrt{n} \begin{pmatrix}
		\mlejprime -  \truejprime \\ 
		\mlenjprime - \truenjprime
	\end{pmatrix}
	= \begin{pmatrix} \Lambda_1' \\ \Lambda_2' \end{pmatrix} \Up_n' + \bar \Delta_1',
\end{equation*}
where 
\begin{align*}
\begin{gathered}
    \Up_n' = \sqrt{n} \begin{bmatrix}
    	H_{E, E} & K_{E, \cdot} \\
    	K_{\cdot, E} & \widetilde H
    \end{bmatrix} ^ {- 1/2} \left( \begin{pmatrix}  X_E^\top \nabla \widehat Q_{n;h'} (X_E \trueprime; Y)\\ X^\top \nabla \widehat Q_{n;h} (X_E \trueprime; Y) \end{pmatrix}- \begin{pmatrix} 0_q \\ \EE_{\FF_n} [X ^ \top \nabla \widehat Q_{n;h} (X_E \trueprime; Y)] \end{pmatrix}\right), \\ 
\bar \Delta_1' = o_p(1).
\end{gathered}
\end{align*}
\end{prop}

\begin{proof}
Define
\begin{align*}
\begin{gathered}
    \Rnj = \mathbb{S}_{[E]\setminus j}\left(\mleprime - \frac{1}{\varmlej}\covmlej\mlejprime\right) \in \RR^{q-1}, \\
    \Tn =  X ^ \top \nabla \widehat Q_{n;h}(X_E \mleprime; Y) + (K_{\cdot, E} H_{E, E} ^ {-1} J_{E, E} - \widetilde J_{\cdot, E}) \mleprime \in \RR^{p}.
\end{gathered}	
\end{align*}
Applying the Taylor expansion to 
\[  
    \sqrt{n} X_{E} ^ \top \nabla \widehat Q_{n;h'} (X_E \mleprime; Y)
\]
at $\trueprime$, we have
\[
\begin{aligned}
	\sqrt{n} &X_E ^ \top \nabla \widehat Q_{n;h'} (X_E \mleprime; Y) \\
	&= \sqrt{n} X_E ^ \top \nabla \widehat Q_{n;h'} (X_E \trueprime; Y)
	+ X_E ^ \top \nabla ^ 2 \widehat Q_{n;h'} (X_E \trueprime; Y) X_E \sqrt{n} (\mleprime - \trueprime) + o_p(1) \\
	&= \sqrt{n} X_E ^ \top \nabla \widehat Q_{n;h'} (X_E \trueprime; Y)
    + J_{E, E} \sqrt{n}  (\mleprime - \trueprime) + o_p(1).
\end{aligned}
\]

Observe that
$$X_E ^ \top \nabla \widehat Q_{n;h'} (X_E \mleprime; Y) = 0.$$
Therefore, the Taylor expansion lead us to note that
\begin{equation}
\label{eq:taylor_beta}
	\sqrt{n} \mleprime 
	= \sqrt{n} \trueprime - J_{E, E} ^ {-1} \sqrt{n}  X_E ^ \top \nabla \widehat Q_{n,h'}(X_E \trueprime; Y) + o_p(1).
\end{equation}
Based on the previous display, we can write
\begin{equation}
\label{eq:taylor:mlej}
\begin{aligned}
    \sqrt{n} \mlejprime &= \mathbb{S}_j \sqrt{n}\mleprime \\
    &= \sqrt{n} \truejprime - \mathbb{S}_j J_{E, E} ^ {-1} \sqrt{n} X_E ^ \top \nabla \widehat Q_{n,h'}(X_E \trueprime; Y) + o_p(1)\\
    &= \sqrt{n} \truejprime + \Lambda_1' \sqrt{n} \Up_n' + o_p(1),
\end{aligned}
\end{equation}
and 
\begin{equation}
\label{eq:taylor:mlenj1}
\begin{aligned}
    \sqrt{n} \Rnj 
%    &= \sqrt{n} \mathbb{S}_{[E]\setminus j} \left(\mle - \frac{1}{\varmlej}\covmlej \mlej\right)\\
    &= \mathbb{S}_{[E]\setminus j}  \sqrt{n}\left(\trueprime -\frac{1}{ \varmlej }  \covmlej  \truejprime\right) \\
    &\;\; + \mathbb{S}_{[E]\setminus j} \left(\frac{1}{\varmlej} \covmlej \mathbb{S}_j J_{E, E} ^ {-1} - J_{E, E} ^ {-1}\right) \sqrt{n}X_E ^ \top \nabla \widehat Q_{n;h'} (X_E \trueprime; Y) + o_p(1)\\
    &=\sqrt{n} \mathbb{S}_{[E]\setminus j}  \left(\trueprime - \frac{1}{\varmlej}\covmlej \truejprime\right) + \Lambda_{2,1}' \sqrt{n} \Up_n' + o_p(1).
\end{aligned}
\end{equation}

Similarly, the Taylor expansion lead us to:
\[
\begin{aligned}
	\sqrt{n} X ^ \top \nabla \widehat Q_{n;h} (X_E \mleprime; Y) &= \sqrt{n}  X ^ \top \nabla \widehat Q_{n;h} (X_E \trueprime; Y) + \widetilde{J}_{\cdot, E} \sqrt{n} (\mleprime - \trueprime) + o_p(1).
\end{aligned}
\]
Adding $(K_{\cdot, E} H_{E, E} ^ {-1} J_{E, E} - \widetilde{J}_{\cdot, E}) \sqrt{n} \mleprime$ to both sides of the equation and using \eqref{eq:taylor_beta}, we obtain
\begin{equation}
\label{eq:express_gamma2}
\begin{aligned}
	\sqrt{n}\Tn 
	&= \sqrt{n} X ^ \top \nabla \widehat Q_{n;h} (X_E \trueprime; Y) + K_{\cdot, E} H_{E, E} ^ {-1} J_{E, E}\sqrt{n} \mleprime - \widetilde{J}_{\cdot, E} \sqrt{n} \trueprime + o_p(1) \\
	&= \sqrt{n} \EE_{\FF_n} \left[X ^ \top \nabla \widehat Q_{n;h} (X_E \trueprime; Y) \right] + \big(K_{\cdot, E} H_{E, E} ^ {-1} J_{E, E} - \widetilde{J}_{\cdot, E}\big) \sqrt{n} \trueprime \\
	&\;\; - K_{\cdot, E} H_{E, E} ^ {-1} \sqrt{n} X_E ^ \top \nabla \widehat Q_{n;h'} (X_E \trueprime; Y) + \sqrt{n} \big(X ^ \top \nabla \widehat Q_{n;h} (X_E \trueprime; Y) \\
	&\;\; - \EE_{\FF_n} \left[X ^ \top \nabla \widehat Q_{n;h} (X_E \trueprime; Y) \right]\big) + o_p(1)\\
	&= \sqrt{n} \EE_{\FF_n} \left[X ^ \top \nabla \widehat Q_{n;h} (X_E \trueprime; Y) \right] + \big(K_{\cdot, E} H_{E, E} ^ {-1} J_{E, E} - \widetilde{J}_{\cdot, E}\big) \sqrt{n} \trueprime + \Lambda_{2,2}' \sqrt{n} \Up_n' + o_p(1).
\end{aligned}	
\end{equation}
By combining the equations in \eqref{eq:taylor:mlej},  \eqref{eq:taylor:mlenj1} and \eqref{eq:express_gamma2}, we obtain the asymptotic linear representation for our estimators in the Proposition.
\end{proof}

%------------------------------------------------------------------------------
%------------------------------------------------------------------------------
\subsection{General statement for Corollary \ref{cor:beta_gamma_expression}}

\begin{cor}
\label{cor:beta_gamma_expression_general}
Suppose that the conditions in Assumptions \ref{aspt:moment_bound} and \ref{aspt:Lip} are met. 
Then, we have
\begin{equation*}
	\sqrt{n} \begin{pmatrix}
		\mlejprime -  \truestarj \\ 
		\mlenjprime - \truenjprime
	\end{pmatrix}
	= \begin{pmatrix} \Lambda_1' \\ \Lambda_2' \end{pmatrix} \Up_n' + \Delta_1',
\end{equation*}
where $\Delta_1' = o_p(1)$.
\end{cor}

\begin{proof}
The result follows directly from the application of Proposition \ref{prop:diff_target} and Proposition \ref{prop:beta_gamma_expression_general}.
\end{proof}

%------------------------------------------------------------------------------
%------------------------------------------------------------------------------
\subsection{General statement for Corollary \ref{cor:asy_distribution}}

\begin{cor}
\label{cor:asy_distribution_general}
For a fixed set $E$, we have
$$
    \sqrt{n} \begin{pmatrix}
		\mlejprime -  \truestarj \\ 
		\mlenjprime - \truenjprime
	\end{pmatrix} 
    \Rightarrow
    \cN \left(0_{p+q},
    \begin{bmatrix}
        \varmlej & 0_{1, p+q-1} \\
        0_{p+q-1, 1} & \Lambda_2' \Lambda_2 ^ {\prime\top} 
    \end{bmatrix}
    \right).
$$ 
\end{cor}

\begin{proof}
The asymptotic distribution is obtained immediately by noting that
\[
\begin{aligned} 
	\begin{bmatrix} \Lambda_1' \\ \Lambda_2' \end{bmatrix} \begin{bmatrix} \Lambda_1 ^ {\prime\top} & \Lambda_2 ^ {\prime \top} \end{bmatrix}
	&= \begin{bmatrix}
        \varmlej & 0_{1, p+q-1} \\
        0_{p+q-1, 1} & \Lambda_2' \Lambda_2 ^ {\prime \top} 
    \end{bmatrix}.
\end{aligned}
\]
\end{proof}

%------------------------------------------------------------------------------
%------------------------------------------------------------------------------
\subsection{General statement for Proposition \ref{prop:taylor_expansion}}

\begin{prop}
\label{prop:taylor_expansion_general}
% \textcolor{red}{Please define $T'$, $M ^ {\prime j}$ and $N ^ {\prime j}$ before the you make the below statement.}
Recall the definition of $T'$, $M ^ {\prime j}$ and $N ^ {\prime j}$ in \eqref{eq:matrix_prime}.
We have that
\begin{equation*}
	T' \sqrt{n} |\lasso| + \D  = \sqrt{n} \rand - M ^ {\prime j} \sqrt{n} \mlejprime - N ^ {\prime j} \sqrt{n} \mlenjprime + \Delta_2',
\end{equation*}
where $\Delta_2' = o_p(1)$.
\end{prop}

\begin{proof}
Define
\begin{align*}
\begin{gathered}
    \Rnj = \mathbb{S}_{[E]\setminus j}\left(\mleprime - \frac{1}{\varmlej}\covmlej\mlejprime\right) \in \RR^{q-1}, \\
    \Tn = X ^ \top \nabla \widehat Q_{n;h}(X_E \mleprime; Y) + (K_{\cdot, E} H_{E, E} ^ {-1} J_{E, E} - \widetilde{J}_{\cdot, E}) \mleprime \in \RR^{p}.
\end{gathered}	
\end{align*}
Applying a Taylor expansion of the gradient of the SQR loss around $\mle$, we write
\begin{equation}
\begin{aligned}
    \sqrt{n} X ^ \top \nabla \widehat Q_{n;h} (X \lasso; Y)
    &= \sqrt{n} X ^ \top \nabla \widehat Q_{n;h} (X_E \mleprime; Y) 
	+  X ^ \top \nabla ^ 2 \widehat Q_{n;h} (X_E \mleprime; Y) X_E \sqrt{n} (\lasso - \mleprime) \\
    &\;\;\;\;\;\;+ o_p (1) \\
    &= \sqrt{n} X ^ \top \nabla \widehat Q_{n;h} (X_E \mleprime; Y) 
	+ \widetilde{J}_{\cdot, E} \sqrt{n} (\lasso - \mleprime) +  o_p (1) \\
    &= \sqrt{n} X ^ \top \nabla \widehat Q_{n;h}(X_E \mleprime; Y) + (K_{\cdot, E} H_{E, E} ^ {-1} J_{E, E} - \widetilde{J}_{\cdot, E}) \sqrt{n} \mleprime \\
    &\;\; + \widetilde{J}_{\cdot, E} \sqrt{n} \lasso - K_{\cdot, E} H_{E, E} ^ {-1} J_{E, E} \sqrt{n} \mleprime
    + o_p (1).
\end{aligned}
\label{eq:KKTtaylor1}
\end{equation}
Note that the second term on the right-hand side can be expressed as
\begin{equation*}
\label{eq:KKT_taylor2}
\begin{aligned}
    K_{\cdot, E} H_{E, E} ^ {-1} J_{E, E} \sqrt{n} \mleprime
    &= K_{\cdot, E} H_{E, E} ^ {-1} J_{E, E} \bigg(\sqrt{n} \mleprime - \frac{1}{\varmlej} \covmlej \sqrt{n} \mlejprime + \frac{1}{\varmlej} \covmlej \sqrt{n} \mlejprime\bigg) \\
    &= K_{\cdot, E} H_{E, E} ^ {-1} J_{E, E} \mathbb{S}_{[E]\setminus j} ^ \top \mathbb{S}_{[E]\setminus j} \bigg(\sqrt{n} \mleprime - \frac{1}{\varmlej} \covmlej \sqrt{n} \mlejprime\bigg) \\
    &\;\;+ K_{\cdot, E} H_{E, E} ^ {-1} J_{E, E} \frac{1}{\varmlej} \covmlej \sqrt{n} \mlejprime \\
    &= K_{\cdot, E} H_{E, E} ^ {-1} J_{E, E}  \mathbb{S}_{[E]\setminus j} ^ \top \sqrt{n} \Rnj
    + \frac{1}{\varmlej} K_{\cdot, E} J_{E, E} ^ {-1}\mathbb{S}_j ^ \top \sqrt{n} \mlejprime.
\end{aligned}
\end{equation*}
Plugging the above-stated representation into \eqref{eq:KKTtaylor1}, we obtain
\[
\begin{aligned}
    &\sqrt{n} X ^ \top \nabla \widehat Q_{n;h} (X \lasso; Y)\\
    &= - \frac{1}{\varmlej} K_{\cdot, E} J_{E, E} ^ {-1} \mathbb{S}_j ^ \top \sqrt{n} \mlejprime
    + \begin{bmatrix} - K_{\cdot, E} H_{E, E} ^ {-1} J_{E, E} \mathbb{S}_{[E]\setminus j} ^ \top & I_{p, p}\end{bmatrix}
   \sqrt{n} \begin{pmatrix} \Rnj \\ \Tn \end{pmatrix}
    + \widetilde{J}_{\cdot, E} \sqrt{n} \lasso 
    + o_p (1) \\
    &= M ^ {\prime j} \sqrt{n} \mlejprime + N ^ {\prime j} \sqrt{n} \mlenjprime + T'\sqrt{n} |\lasso| + o_p (1).
\end{aligned}
\]
Based on the Karush-Kuhn-Tucker (KKT) conditions of stationary that
\[
	\sqrt{n} X ^ \top \nabla \widehat Q_{n;h} (X_E \lasso; Y) + \D - \sqrt{n}\rand = 0_p,
\]
we conclude that
\[
    T' \sqrt{n} |\lasso| + \D  = \sqrt{n} \rand - M ^ {\prime j} \sqrt{n} \mlejprime - N ^ {\prime j} \sqrt{n} \mlenjprime + \Delta_2'.
\]
\end{proof}

%------------------------------------------------------------------------------
%------------------------------------------------------------------------------
\section{Proofs for results in Section \ref{sec:pivot}}
\label{appdx:A}

\subsection{Supporting results}
\label{subsec:auxiliary_results_pivot}

\begin{lem}
\label{lem:joint_density}
For a fixed set $E$ and a fixed set of signs $\sgnr$, the joint density of the variables $\sqrt{n} \mlej$, $\sqrt{n} \mlenj$, $\sqrt{n} \lasso$ and $\Z$ is equal to:
\[
\begin{aligned}
	&\phi \left(\sqrt{n} \mlej; \sqrt{n} \truestarj, \varmlej\right)
	\phi \left(\sqrt{n} \mlenj; \sqrt{n} \truenj, \Lambda_2 \Lambda_2 ^ \top\right) \\
	&\quad\quad \times  \phi \left(T \sqrt{n} |\lasso| + \D + M ^ j \sqrt{n} \mlej + N ^ j \sqrt{n} \mlenj; 0_p, \varrand\right)
	\times |\cJ|,
\end{aligned}
\]
where 
\[
    \cJ = 
    \begin{pmatrix}
         \mathbb{S}_E T & 0_{q, q'} \\
         \mathbb{S}_{E'} T & \lambda I_{q', q'}
    \end{pmatrix}.
\]
\end{lem}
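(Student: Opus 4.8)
\textbf{Proof plan for Lemma \ref{lem:joint_density}.}

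The plan is to obtain the joint density by a change of variables from a collection of variables whose joint law we already understand, namely the Gaussian vector appearing in the exact representations of Proposition \ref{prop:ls_exact} together with the randomization $\sqrt{n}\rand$. First I would recall from Proposition \ref{prop:ls_exact} (or, in the general setting, from the asymptotic expansions, working with the exact Gaussian counterpart as in the least squares case) that $\sqrt{n}(\mlej - \truej)$ and $\sqrt{n}(\mlenj - \truenj)$ are jointly Gaussian with the block-diagonal covariance given by Corollary \ref{cor:asy_distribution}; explicitly $\sqrt{n}\mlej \sim \phi(\cdot; \sqrt{n}\truej, \varmlej)$ and, independently, $\sqrt{n}\mlenj \sim \phi(\cdot; \sqrt{n}\truenj, \Lambda_2\Lambda_2^\top)$. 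Since $\Lambda_1 H^{1/2}(H^{1/2})^\top \Lambda_2^\top$ vanishes by the off-diagonal zeros in $H$ combined with the structure of $\Lambda_1, \Lambda_2$, these two blocks are indeed independent, which accounts for the first two factors in the claimed density. The randomization $\sqrt{n}\rand \sim \phi(\cdot; 0_p, \varrand)$ is independent of the data and hence of $(\mlej, \mlenj)$.

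Next I would use Proposition \ref{prop:taylor_expansion} (in its exact form from Proposition \ref{prop:ls_exact}, with $\Delta_2 = 0$), which states
\[
    \sqrt{n}\rand = T\sqrt{n}|\lasso| + \sqrt{n}\D + M^j \sqrt{n}\mlej + N^j \sqrt{n}\mlenj.
\]
This is the key structural identity: it expresses the randomization as an affine function of the estimators $(\mlej, \mlenj)$ and of the optimization variables $(\lasso, \D)$. The idea is to change variables from $(\sqrt{n}\mlej, \sqrt{n}\mlenj, \sqrt{n}\rand)$ to $(\sqrt{n}\mlej, \sqrt{n}\mlenj, \sqrt{n}|\lasso|, \Z)$ — recalling that on the event $\{\Sgn \text{ fixed}\}$ the subgradient $\D = \lambda(\Sgn^\top, \Z^\top)^\top$ is determined by $\Z$ and the sign vector, and $\lasso = \text{Diag}(\Sgn)|\lasso|$. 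Substituting into the identity above, the argument of the third Gaussian factor becomes exactly $T\sqrt{n}|\lasso| + \D + M^j\sqrt{n}\mlej + N^j\sqrt{n}\mlenj$, matching the claim. Because $(\mlej, \mlenj)$ are held fixed under this transformation, the Jacobian is entirely that of the map $\sqrt{n}\rand \mapsto (\sqrt{n}|\lasso|, \Z)$; differentiating the affine relation gives $\partial(\sqrt{n}\rand)/\partial(\sqrt{n}|\lasso|) = T$ restricted appropriately and $\partial(\sqrt{n}\rand)/\partial \Z = \lambda \mathbb{S}_{E'}^\top$ on the inactive block, producing the block-triangular Jacobian matrix
\[
    \cJ = \begin{pmatrix} \mathbb{S}_E T & 0_{q, q'} \\ \mathbb{S}_{E'} T & \lambda I_{q', q'} \end{pmatrix},
\]
whose absolute determinant enters the density.

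I expect the main obstacle to be bookkeeping rather than conceptual: namely, verifying carefully that the change of variables is a bijection with the stated Jacobian, being precise about which coordinates are free (the $q$ coordinates of $|\lasso|$, which are the active magnitudes, and the $q'$ coordinates of $\Z$) versus determined (the signs, fixed by conditioning), and confirming that the $p \times p$ matrix formed by stacking the derivative blocks really is $\cJ$ — in particular that the active rows contribute $\mathbb{S}_E T$ and the inactive rows contribute $(\mathbb{S}_{E'}T \;\; \lambda I)$ in exactly this arrangement. A secondary point requiring care is the independence claim between the $\mlej$-block and the $\mlenj$-block: one must check that the chosen definitions of $\mlenj$, $\Lambda_1$, $\Lambda_2$, and the conditioning inherent in the construction genuinely diagonalize the covariance, so that the joint density factors as a product of two Gaussians rather than a single correlated Gaussian. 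Once these are in place, the result follows from the standard density-transformation formula.
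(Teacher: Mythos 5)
Your plan is correct and matches the paper's proof: both use the exact (zero-remainder) version of the KKT/Taylor relation from Proposition \ref{prop:taylor_expansion} (as made exact in Proposition \ref{prop:ls_exact}) to express $\sqrt{n}\rand$ as an affine function of $(\sqrt{n}\mlej, \sqrt{n}\mlenj, \sqrt{n}|\lasso|, \Z)$, factor the joint density of $(\sqrt{n}\mlej, \sqrt{n}\mlenj, \sqrt{n}\rand)$ into a product of independent Gaussians using the block-diagonal covariance from Corollary \ref{cor:asy_distribution} and the independence of the randomization, and then apply the change-of-variables formula with the block-triangular Jacobian whose determinant reduces to $|\cJ|$. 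One small notational slip: the cross-covariance that must vanish is $\Lambda_1\Lambda_2^\top$, not $\Lambda_1 H^{1/2}(H^{1/2})^\top\Lambda_2^\top$ (the factors of $H^{1/2}$ are already absorbed into the definitions of $\Lambda_1$ and $\Lambda_2$), but this does not affect the argument.
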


\begin{proof}
Because $\Delta_2=0$, note that
\[
	\Pi \begin{pmatrix} \sqrt{n} |\lasso| \\ \Z \end{pmatrix}
	= \sqrt{n} \rand - M ^ j \sqrt{n} \mlej - N ^ j \sqrt{n} \mlenj -  \lambda \begin{pmatrix} \sgnr \\ 0_{q'} \end{pmatrix},
\]
where 
\[
	\Pi = \begin{bmatrix}
		\mathbb{S}_E T & 0_{q, q'} \\
		\mathbb{S}_{E'} T & \lambda I_{q', q'}
	\end{bmatrix} \in \RR ^ {p \times p}.
\]
For $(x_1, x_2, x_3, x_4) \in \RR \times \RR ^ {p-1} \times \RR ^ q \times \RR ^ {q'}$, define
\begin{equation}
\label{eq:defn_g}
	g(x_1, x_2, x_3, x_4)
	= \begin{pmatrix}
		x_1 \\ 
		x_2 \\ 
		\mathbb{S}_E T x_3 + \mathbb{S}_E M ^ j x_1 + \mathbb{S}_E N ^ j x_2 + \lambda \sgnr \\
		\mathbb{S}_{E'} T x_3 + \lambda x_4 + \mathbb{S}_{E'} x_1 + \mathbb{S}_{E'} x_2 
	\end{pmatrix},
\end{equation}
and observe that
$$
g\left(\begin{pmatrix}
		\sqrt{n} \mlej \\
		\sqrt{n} \mlenj \\
		\sqrt{n} |\lasso| \\
		\Z
	\end{pmatrix}\right)
	=
	\begin{pmatrix}
		\sqrt{n} \mlej \\
		\sqrt{n} \mlenj \\
		\sqrt{n} \rand 
	\end{pmatrix}.
$$ 
for a fixed set $E$ and a fixed set of signs $\sgnr$.
Suppose that $p_n(\cdot)$ is the joint density of the variables 
\[
	\left(\sqrt{n} \mlej, \sqrt{n} \mlenj, \sqrt{n} \omega_n\right) \in \RR \times \RR ^ {p-1} \times \RR ^ p.
\]
By applying the change of variables
\[ \begin{pmatrix}
		\sqrt{n} \mlej \\
		\sqrt{n} \mlenj \\
		\sqrt{n} \rand 
	\end{pmatrix}	
	\stackrel{\left(g\right)^{-1}}{\longrightarrow}
	\begin{pmatrix}
		\sqrt{n} \mlej \\
		\sqrt{n} \mlenj \\
		\sqrt{n} |\lasso| \\
		\Z
	\end{pmatrix},
\]
we obtain the density for the variables on the right-hand side. 
It follows that this density is equal to
\begin{equation}
\label{eq:jocobian}
	|\cJ_{g}| \, p_n \left(g \left(\sqrt{n} \mlej, \sqrt{n} \mlenj, \sqrt{n} |\lasso|, \Z\right)\right),
\end{equation}
where $\cJ_{g}$ is the Jacobian matrix of the map $g$, which is equal to
\[
\begin{pmatrix}
    1 & 0_{1, p-1} & 0_{1, q} & 0_{1, q'} \\
    0_{p-1, 1} & I_{p-1,p-1} & 0_{p-1,q} & 0_{p-1,q'}\\
    \mathbb{S}_E M ^ j & \mathbb{S}_E N ^ j & \mathbb{S}_E T & 0_{q, q'} \\
    \mathbb{S}_{E'} & \mathbb{S}_{E'} & \mathbb{S}_{E'} T & \lambda I_{q', q'}
\end{pmatrix}.
\]
Due to the lower triangular structure of this matrix, note that $|\cJ_{g}|=|\cJ|$ for $\cJ$ defined in the Lemma.

Since $\sqrt{n} \cZ_n \stackrel{d}{=} \cN (0_p, I_{p, p})$ and $\sqrt{n} \omega_n$ is independent of both estimators $\mlej$ and $\mlenj$, we have that
\[
\begin{aligned}
	p_n \left(\sqrt{n} \mlej, \sqrt{n} \mlenj, \sqrt{n} \rand\right)
	=\,\, &\phi \left(\sqrt{n} \mlej; \sqrt{n} \truestarj, \varmlej\right)
	   \phi \left(\sqrt{n} \mlenj; \sqrt{n} \truenj, \Lambda_2 \Lambda_2 ^ \top\right) \\
	  &\times \phi \left(\sqrt{n} \rand; 0_p, \varrand\right).
\end{aligned}
\]
Using the definition of $g$ in \eqref{eq:defn_g}, the joint density in \eqref{eq:jocobian} simplifies as
\[
\begin{aligned}
	&\phi \left(\sqrt{n} \mlej; \sqrt{n} \truestarj, \varmlej\right)
	\phi \left(\sqrt{n} \mlenj; \sqrt{n} \truenj, \Lambda_2 \Lambda_2 ^ \top\right) \\
	&\quad\quad \times \phi \left(\Pi \begin{pmatrix} \sqrt{n} \lasso \\ \Z \end{pmatrix} + \begin{pmatrix} \lambda \sgnr \\ 0_{q'} \end{pmatrix} + M ^ j \sqrt{n} \mlej + N ^ j \sqrt{n} \mlenj; 0_p, \varrand\right)
	\times|\cJ| \\
	&= \phi \left(\sqrt{n} \mlej; \sqrt{n} \truestarj, \varmlej\right)
	\phi \left(\sqrt{n} \mlenj; \sqrt{n} \truenj, \Lambda_2 \Lambda_2 ^ \top\right) \\
	&\quad\quad \times  \phi \left(T \sqrt{n} |\lasso| + \D + M ^ j \sqrt{n} \mlej + N ^ j \sqrt{n} \mlenj; 0_p, \varrand\right)
	\times |\cJ|,
\end{aligned}
\]
completing our proof. 
\end{proof}

%------------------------------------------------------------------------------
%------------------------------------------------------------------------------

\begin{lem}
\label{lem:pivot_equal}
The pivot in Proposition \ref{prop:pivot_exact_distr} matches the pivot provided by Theorem 1 in \cite{panigrahi2023exact} for the $\ell_1$-penalized least squares regression.
\end{lem}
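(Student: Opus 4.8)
The plan is to verify Lemma \ref{lem:pivot_equal} by a direct term-by-term matching of the two pivot expressions, reducing everything to a change of parametrization. Recall that the pivot in \cite{panigrahi2023exact} is built from the selective likelihood of a randomized $\ell_1$-penalized least squares problem, where the target is a linear functional $\eta^\top \mu$ of the mean vector and the nuisance component is taken orthogonal to $\eta$ in the appropriate inner product. The pivot here, from Proposition \ref{prop:pivot_exact_distr} and Corollary \ref{cor:pivot_cdf}, has exactly the same outer structure: a truncated-Gaussian numerator and denominator in which the Gaussian density of $\sqrt{n}\mlej$ with mean $\sqrt{n}\truej$ and variance $\varmlej$ is reweighted by the factor $W_0(\cdot)$, itself an integral over $[I_1^j, I_2^j]$ of $\phi(Q^j t + M^j\sqrt{n}\mlej + N^j\sqrt{n}\mlenj + P^j; 0_p, \varrand)$. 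So the task is to show that, under the identifications available in the exact least squares setting (Proposition \ref{prop:ls_exact}, together with the formulas $H=\sigma^2 J$, $J = X^\top X$, $\Sigma_{E,E} = \sigma^2(X_E^\top X_E)^{-1}$), the five ingredients $M^j$, $N^j$, $P^j$, $Q^j$, and the truncation limits $I_1^j, I_2^j$ coincide with the corresponding quantities in \cite{panigrahi2023exact}.

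First I would fix the dictionary between the two notations: identify $\eta$ in \cite{panigrahi2023exact} with the contrast that extracts $\mlej$ from the refitted least squares estimator, check that their ``nuisance'' coordinate agrees with the $\mlenj$ defined here (the block $\mathbb{S}_{[E]\setminus j}(\mle - \varmlej^{-1}\covmlej\mlej)$ is precisely the residual of the remaining active coordinates after projecting out the $j$th one in the $\Sigma_{E,E}$-geometry, and $X_{E'}^\top(X_E\mle - Y)$ is the usual inactive-score statistic). Then I would substitute $H = \sigma^2 J$ into the definitions of $M^j$, $N^j$, $T$, $\Ej$, $\Psi$, $Q^j$, $P^j$ and simplify: many $\sigma^2$ factors cancel, $J_{E,E}^{-1}$ becomes $(X_E^\top X_E)^{-1}$, and the matrices reduce to the explicit least-squares forms appearing in \cite{panigrahi2023exact}. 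The truncation endpoints $I_1^j, I_2^j$ from Proposition \ref{prop:conditional_event}, written in terms of $\Ej$, $\Psi$ and $\vjr$, should then match their interval bounds for the sign-constrained optimization variables $|\lasso|$, since both arise from imposing $\Psi\Ej\cdot(\text{scalar}) + \text{(affine in }\vjr) \ge 0$ componentwise.

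The last step is to confirm that the ``outer'' univariate Gaussian factor is the same, i.e.\ that their target is marginally $\cN(\sqrt{n}\truej, \varmlej)$ with $\varmlej$ equal to their asymptotic (here exact) variance of $\eta^\top\widehat\mu$, and that the reweighting integrals integrate the same Gaussian density over the same region; this is essentially bookkeeping once the dictionary is in place, using that $\sqrt{n}\rand \sim \cN(0_p, \varrand)$ is the same randomization law in both papers. I expect the main obstacle to be purely organizational rather than conceptual: carefully lining up the orthogonalization conventions (the decomposition of the active block into the target direction plus a $\Sigma_{E,E}$-orthogonal complement, versus whatever basis \cite{panigrahi2023exact} uses) and tracking the $\sqrt{n}$ and $\sigma^2$ normalizations so that the Jacobian-type constants (the $|\cJ|$ from Lemma \ref{lem:joint_density}) cancel identically between numerator and denominator. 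Once those normalizations are matched, the equality of the two pivots is immediate because both are the conditional CDF of the same truncated, reweighted Gaussian law.
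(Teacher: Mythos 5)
Your proposal takes a "dictionary-matching" route: identify the target contrast, nuisance coordinate, and the matrices $M^j$, $N^j$, $P^j$, $Q^j$, $T$, $\Ej$, $\Psi$ with their analogues in \cite{panigrahi2023exact}, specialize to $H=\sigma^2 J$, and conclude that both pivots are "the conditional CDF of the same truncated, reweighted Gaussian law." This is a sensible overview, but it skips the step that actually constitutes the proof. The paper's own argument does not proceed by matching building blocks and invoking an abstract uniqueness of the conditional CDF; it performs a concrete completion-of-square in the Gaussian exponents to show that
\[
\phi\bigl(x;\sqrt{n}\truej,\varmlej\bigr)\,W_0\bigl(x,\sqrt{n}\mlenj\bigr)
\;\propto\;
\int_{I_1^j}^{I_2^j} \phi\bigl(\vartheta_j^{-1}(x-\nu_j\sqrt{n}\truej-\varphi_j)\bigr)\,
\phi\bigl(\kappa^{-1}(t-\kappa^2 x-\delta_j)\bigr)\,dt,
\]
i.e.\ that the unnormalized conditional density is literally a bivariate Gaussian in $(x,t)$ factored as a marginal in $x$ times a conditional in $t$. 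Integrating out $t$ then yields the $\Phi$-difference form of the pivot, which is the form in which \cite{panigrahi2023exact} state it. The lemma is asserting an equality of two closed-form expressions, so the only way to establish it is to carry out this algebra and read off that the resulting parameters $\vartheta_j, \nu_j, \varphi_j, \kappa, \delta_j$ coincide with theirs. Labeling this step "essentially bookkeeping" is where the gap lies: without the completion-of-square you have not connected the $W_0$-integral representation on one side to the $\Phi$-difference representation on the other, and an appeal to "same conditional law" presupposes the very thing that must be shown, namely that the expression in \cite{panigrahi2023exact} is the conditional CDF of this particular truncated, reweighted Gaussian.

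A secondary, smaller point: the substitution $H=\sigma^2 J$ that you emphasize as the main simplifying move is not actually used in the paper's proof of this lemma; the completion of the square works identically for arbitrary $\varrand$, $H$, $M^j$, $N^j$, $Q^j$, etc., and the match with \cite{panigrahi2023exact} is at the level of these abstract quantities, not the specialized least-squares forms. Your dictionary is still useful as a sanity check that the estimators, the conditioning event, and $\varmlej$ really are the least-squares ones in the special case of Propositions \ref{prop:ls_exact} and \ref{prop:pivot_exact_distr}, but it does not shorten the algebraic reduction that the lemma requires.
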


\begin{proof}
Observe that 
\[
    \phi \big(x; \sqrt{n} \truestarj, \varmlej\big)
	W_0 \big(x, \sqrt{n} \mlenj\big)
\]
is proportional to 
\[
\begin{aligned}
    &\int_{I_1 ^ j} ^ {I_2 ^ j}\exp \left\{- \frac{\big(x - \sqrt{n} \truestarj\big) ^ 2}{2\varmlej} 
        - \frac{\big(Q ^ j t + M ^ j x + N ^ j \sqrt{n} \mlenj + P ^ j\big) ^ \top \!\varrand ^ {-1} \big(Q ^ j t + M ^ j x + N ^ j \sqrt{n} \mlenj + P ^ j\big)}{2}\right\} dt \\
    &\propto \int_{I_1 ^ j} ^ {I_2 ^ j} \exp \Bigg\{- \frac{x ^ 2 - 2 \sqrt{n} \truestarj x}{2\varmlej} - \frac{M ^ {j \top} \varrand ^ {-1} M ^ j x ^ 2 + 2 M ^ {j\top} \varrand ^ {-1} \big(N ^ j \sqrt{n} \mlenj + P ^ j\big) x}{2} \\
    &\quad\quad\quad\quad 
    - \frac{Q ^ {j\top} \varrand ^ {-1} Q ^ j t ^ 2 + 2 Q ^ j \varrand ^ {-1} \big(M ^ j x + N ^ j \sqrt{n} \mlenj + P ^ j\big) t}{2}\Bigg\} dt \\
    &\propto \int_{I_1 ^ j} ^ {I_2 ^ j} \exp \Bigg\{- \frac{1}{2} \left(\frac{1}{\varmlej} + M ^ {j \top} \varrand ^ {-1} M ^ j\right) x ^ 2 - \left[\frac{\sqrt{n} \truestarj}{\varmlej} - M ^ {j\top} \varrand ^ {-1} \big(N ^ j \sqrt{n} \mlenj + P ^ j\big)\right] x \\
    &\quad\quad\quad\quad
    - \frac{1}{2} Q ^ {j\top} \varrand ^ {-1} Q ^ j t ^ 2 - Q ^ j \varrand ^ {-1} \big(M ^ j x + N ^ j \sqrt{n} \mlenj + P ^ j\big) t \Bigg\} dt\\
    & \propto \int_{I_1 ^ j} ^ {I_2 ^ j} \phi \big(\vartheta_j ^ {-1} \big(x - \nu_j \sqrt{n} \truestarj  - \varphi_j\big)\big) \phi \big(\kappa ^ {-1}\big(t - \kappa_j ^ 2 x - \delta_j\big)\big) dt
\end{aligned}
\]
where
\begin{align*}
\begin{gathered}
	\vartheta_j ^ 2 = \bigg(\frac{1}{\varmlej} + M ^ {j\top} \varrand ^ {-1} M ^ j \bigg) ^ {-1},\ \nu_j = \frac{\vartheta_j ^ 2}{\varmlej},\ \varphi_j = - \vartheta_j ^ 2  M ^ {j\top} \varrand ^ {-1} \big(N ^ j \sqrt{n} \mlenj + \D\big), \\
	\kappa ^ 2 = \Ej ^ \top \Psi \Ej,
    \quad\delta_j = - \Ej ^ \top \Psi R ^ \top \varrand ^ {-1} \big(N ^ j \sqrt{n} \mlenj + \D\big).
\end{gathered}	
\end{align*}
Therefore, the pivot in Proposition \ref{prop:pivot_exact_distr} can be rewritten as  
\[
	\dfrac{\bigintss_{-\infty} ^ {\sqrt{n} \mlej} \phi \big(\vartheta_j ^ {-1} \big(x - \nu_j \sqrt{n} \truestarj  - \varphi_j\big)\big) \big\{\Phi\big(\kappa ^ {-1}\big(I^j_1 - \kappa ^ 2 x - \delta_j\big)\big) - \Phi\big(\kappa ^ {-1}\big(I^j_2 - \kappa ^ 2 x - \delta_j\big)\big)\big\} dx}
	{\bigintss_{-\infty} ^ {\infty} \phi \big(\vartheta_j ^ {-1} \big(x - \nu_j \sqrt{n} \truestarj - \varphi_j\big)\big) \big\{\Phi\big(\kappa ^ {-1}\big(I^j_1 - \kappa ^ 2 x - \delta_j\big)\big) - \Phi\big(\kappa ^ {-1}\big(I^j_2 - \kappa ^ 2 x - \delta_j\big)\big)\big\} dx},
\]
which matches the expression of the exact pivot provided in \cite{panigrahi2023exact}.
\end{proof}

%------------------------------------------------------------------------------
%------------------------------------------------------------------------------
\begin{prop}
\label{prop:pivot_exact_distr}
Consider $W_0(\cdot)$ as defined in \eqref{eq:defn_W0}.
The conditional density 
$$\sqrt{n} \mlej\bigg\lvert \left\{\D = \dr, \Vj= \vjr, \sqrt{n}\mlenj = \sqrt{n}\mlenjr\right\}$$
is equal to
\[
	\dfrac{
	\phi \big(\sqrt{n} \mlej; \sqrt{n} \truestarj, \varmlej\big)
	W_0 \big(\sqrt{n} \mlej, \sqrt{n} \mlenjr\big)}
	{
	\bigintsss_{-\infty} ^ \infty \phi \big(x; \sqrt{n} \truestarj, \varmlej\big)
	W_0 \big(x, \sqrt{n} \mlenjr\big) d x
	}.
\]
\end{prop}

\begin{proof}
Based on the result in Lemma \ref{lem:joint_density}, it is easy to see that the density for
\[
	\left(\sqrt{n} \mlej, \sqrt{n} \mlenj, \Uj, \Vj, \Z \right)
\]
is proportional to
\begin{equation}
\label{eq:joint_density_UV}
\begin{aligned}
	&\phi \left(\sqrt{n} \mlej; \sqrt{n} \truej, \varmlej\right)
	 \phi \left(\sqrt{n} \mlenj; \sqrt{n} \truenj, \Lambda_2 \Lambda_2 ^ \top\right) \\
	&\quad\quad \times  \phi \left(\frac{T \Psi \Ej}{\Ej ^ \top \Psi \Ej} \Uj + T \Vj + \D + M ^ j \sqrt{n} \mlej + N ^ j \sqrt{n} \mlenj; 0_p, \varrand\right)
\end{aligned}
\end{equation}
for fixed $E$ and $\sgnr$.
This follows by decomposing $\lasso$ as 
\[
	\sqrt{n} |\lasso|
	= \frac{\Psi \Ej}{\Ej ^ \top \Psi \Ej} \Uj + \Vj.
\]
The density in \eqref{eq:joint_density_UV} when conditioned on the event in Proposition \ref{prop:conditional_event} is therefore proportional to
\[
\begin{aligned}
	&\phi \left(\sqrt{n} \mlej; \sqrt{n} \truej, \varmlej\right)
	\phi \left(\sqrt{n} \mlenj; \sqrt{n} \truenj, \Lambda_2 \Lambda_2 ^ \top\right) \\
	&\times \phi \left(\frac{T \Psi \Ej}{\Ej ^ \top \Psi \Ej} \Uj + T \vjr + \dr + M ^ j \sqrt{n} \mlej + N ^ j \sqrt{n} \mlenj; 0_p, \varrand\right) 
	\boldone_{\left\{I_1 \leq \Uj \leq I_2\right\}}.
\end{aligned}
\]
% Conditioning further on $\big\{\mlenj = \mlenjr\big\}$ yields
% \[
% \begin{aligned}
% 	\phi &\left(\sqrt{n} \mlej; \sqrt{n} \truej, \varmlej\right) \\
% 	&\times \exp \bigg\{
% 		- \frac{\big(\sqrt{n} \Uj\big) ^ 2}{2 \Ej ^ \top \Psi \Ej} 
% 		- \sqrt{n} \Uj \sqrt{n} \mlej 
% 		- \sqrt{n} \Uj \frac{\Ej ^ \top \Psi P ^ \top \varrand ^ {-1}}{\Ej ^ \top \Psi \Ej} \left(N ^ j \sqrt{n} \mlenjr + \dr\right) \\
% 	&\quad\quad\quad\,\,\,\, - \sqrt{n} \mlej M ^ {j \top} \varrand ^ {-1} \left(\frac{1}{2} M ^ j \sqrt{n} \mlej + N ^ j \sqrt{n} \mlenjr+ \dr\right)
% 	\bigg\}
%     \boldone_{\left\{I_1 \leq \sqrt{n} \Uj \leq I_2\right\}}.
% \end{aligned}
% \]
By integrating out or marginalizing over $\Uj$, we prove our claim.
\end{proof}

%------------------------------------------------------------------------------
%------------------------------------------------------------------------------

\subsection{Main results}
\subsubsection{Proof of Proposition \ref{prop:diff_target}}
\label{suppl:diff_target}
\begin{proof}
Define
\[
    Q_{h'}(b) = \EE_{\FF_n} \left[\widehat Q_{n; h'}(X_E b; Y)\right].
\]

Using the definition of $\trueprime$, we have that
\[
\begin{aligned}
    Q_{h'}(\trueprime) - Q_{h'}(\truestar) - \nabla Q_{h'}(\truestar) ^ \top (\trueprime - \truestar)
    &\leq - \nabla Q_{h'} (\truestar) ^ \top (\trueprime - \truestar) \\
    &\leq \big\|\nabla Q_{h'}(\truestar)\big\| \big\|\trueprime - \truestar\big\|,
\end{aligned}
\]
where the last step uses the H\"older’s inequality. 
Let 
$\mathcal{K}_h(u)=\mathcal{K}(u / h)$, where $$\mathcal{K}(u)=\int_{-\infty}^u K(v) d v,\ \text{ for } u \in \mathbb{R}.$$
Also, define $\varepsilon = y - x_E ^ \top \truestar$.
We note that $\nabla Q_{h'}(\truestar) = \EE \big[(\cK_{h'} (- \varepsilon)-\tau) x_E\big]$.

By applying an integration by parts, we have that
\[
\begin{aligned}
    \EE [\cK_{h'} (- \varepsilon) | x_E]
    &= \int_{-\infty}^{\infty} \mathcal{K} (-t/{h'}) d F_{\varepsilon | x_E}(t)
     = -\frac{1}{h'} \int_{-\infty}^{\infty} K(-t/{h'}) F_{\varepsilon | x_E}(t) d t \\
    &= \int_{-\infty}^{\infty} K(u) F_{\varepsilon | x_E}(-h' u) d u
     = \tau + \int_{-\infty}^{\infty} K(u) \int_0^{-h' u}\left\{f_{\varepsilon | x_E}(t)-f_{\varepsilon | x_E}(0)\right\} dt du.
\end{aligned}
\]
From the above display, it follows that 
\[
    |\EE [\cK_{h'} (- \varepsilon) | x_E] - \tau| 
    \leq 0.5 \kappa l_{0, E} {h'} ^ 2.
\]
Consequently,
\begin{equation}
\label{eq:upper1}
\begin{aligned}
    Q_{h'}(\trueprime) - Q_{h'}(\truestar) - \nabla Q_{h'}(\truestar) ^ \top (\trueprime - \truestar)
    &\leq 0.5 \kappa l_{0, E} {h'} ^ 2 \times \EE \big[\|x_E\|\big] \times \big\|\trueprime - \truestar\big\| \\
    &\leq 0.5 \kappa \sigma l_{0, E} {h'} ^ 2 \times \big\|\trueprime - \truestar\big\|.
\end{aligned}
\end{equation}
This gives us an upper bound on the expression in the left-hand side of \eqref{eq:upper1}.

By applying a Taylor series expansion, we have that
\begin{equation}
\label{eq:lower1}
\begin{aligned}
    Q_{h'}(\trueprime) - Q_{h'}(\truestar) &- \nabla Q_{h'}(\truestar) ^ \top (\trueprime - \truestar) \\
    &= (\trueprime - \truestar) ^ \top \int_0 ^ 1 \nabla ^ 2 Q_{h'}(t\trueprime + (1 - t) \truestar)) dt (\trueprime - \truestar),
\end{aligned}
\end{equation}
where $\nabla ^ 2 Q_{h'}(b) = \EE [K_{h'}(y-x_E ^ \top b) x_E x_E ^ \top]$.
Note that
\[
\begin{aligned}
    \EE [K_{h'}(y-x_E ^ \top b) | x_E]
    &= \int_{-\infty}^{\infty} K_{h'} (u-x_E ^ \top b+x_E ^ \top \truestar) f_{\varepsilon | x_E}(u) du \\
    &= \int_{-\infty}^{\infty} K_{h'}(v) f_{\varepsilon | x_E}(x_E ^ \top b-x_E ^ \top \truestar+ v) dv \\
    &= \int_{-\infty}^{\infty} K_{h'}(v) \left\{f_{\varepsilon | x_E}(v) + R_{h'}(b-\truestar)\right\} dv,
\end{aligned}
\]
where the last step uses the Lipschitz continuity of $f_{\varepsilon | x_E}(\cdot)$, with $R_{h'}(b-\truestar)$ satisfying $|R_{h'}(b-\truestar)| \leq l_{0, E} |x_E ^ \top b-x_E ^ \top \truestar|$. 
Then, we have that
\[
    \EE [K_{h'}(y-x_E ^ \top b) | x_E] = \EE [K_{h'}(\varepsilon ) | x_E] + \bar R_{h'}(b-\truestar)
\]
where
$$|\bar R_{h'}(b-\truestar)| \leq l_{0, E} |x_E ^ \top b-x_E ^ \top \truestar|.$$
Letting $b=t\trueprime + (1 - t) \truestar$, we have
\[
\begin{aligned}
    \int_0 ^ 1 \nabla ^ 2 Q_{h'}(t\trueprime + (1 - t) \truestar)) dt 
    &= \int_0 ^ 1 \EE \left[K_{h'}(y- t x_E ^ \top \trueprime + (1 - t) x_E ^ \top \truestar) x_E x_E ^ \top\right] dt \\
    &= \EE \left[K_{h'}(\varepsilon ) x_E x_E ^ \top\right] + \int_0 ^ 1 \EE \left[\bar R_{h'}(t\trueprime - t\truestar) x_E x_E ^ \top\right] dt.
\end{aligned}
\]
Combining with display \eqref{eq:lower1}, we note that
\begin{equation}
\label{eq:lower2}
\begin{aligned}
    (\trueprime - \truestar) ^ \top &\int_0 ^ 1 \nabla ^ 2 Q_{h'}(t\trueprime + (1 - t) \truestar)) dt (\trueprime - \truestar) \\
    &\geq m_{0, E} \big\|\trueprime - \truestar\big\| ^ 2 - 0.5 l_{0, E} \EE \big[\big|x_E ^ \top b-x_E ^ \top \truestar\big| ^ 3\big] \\
    &\geq m_{0, E} \big\|\trueprime - \truestar\big\| ^ 2 - 3 \sigma ^ 3 l_{0, E} \big\|\trueprime - \truestar\big\| ^ 3.
\end{aligned}
\end{equation}

Using the bounds in \eqref{eq:upper1} and \eqref{eq:lower2}, we find that 
\[
    m_{0, E} \big\|\trueprime - \truestar\big\| - 3 \sigma ^ 3 l_{0, E} \big\|\trueprime - \truestar\big\| ^ 2
    \leq 0.5 \kappa \sigma l_{0, E} {h'} ^ 2.
\]
Solving this quadratic function yields
\[
    \big\|\trueprime - \truestar\big\|
    \leq \frac{6 \kappa \sigma^4 l_{0, E}^2 {h'} ^ 2}{m_{0, E} + \Delta ^ {1/2}}
    \quad\text{or}\quad
    \big\|\trueprime - \truestar\big\|
    \geq \frac{m_{0, E} + \Delta ^ {1/2}}{6 \sigma ^ 3 l_{0, E}},
\]
where $\Delta = m_{0, E}^2 - 6 \kappa \sigma^4 l_{0, E}^2 {h'} ^ 2 > 0$. It remains to rule out the second bound.

To this end, suppose that 
\[
    \big\|\trueprime - \truestar\big\|
    \geq \frac{m_{0, E} + \Delta ^ {1/2}}{6 \sigma ^ 3 l_{0, E}} 
    > \left(\frac{\kappa}{6 \sigma ^ 2}\right) ^ {1/2} {h'} := r_0.
\]
Then there exists some $\eta \in (0, 1)$ such that $\widetilde{b}_n ^ E = \eta \trueprime + (1 - \eta) \truestar$ such that $ \|\widetilde{b}_n ^ E - \truestar\| = \eta \|\trueprime - \truestar\| = r_0$. Similar to displays \eqref{eq:upper1} and \eqref{eq:lower2}, we can show that
\[
    m_{0, E} \big\|\widetilde{b}_n ^ E - \truestar\big\| ^ 2 - 3 \sigma ^ 3 l_{0, E} \big\|\widetilde{b}_n ^ E - \truestar\big\| ^ 3
    \leq 0.5 \kappa \sigma l_{0, E} {h'} ^ 2 \big\|\widetilde{b}_n ^ E - \truestar\big\|.
\]
Canceling out the common factor $\|\widetilde{b}_n ^ E - \truestar\|$ from both sides, we obtain
\[
    \big\|\widetilde{b}_n ^ E - \truestar\big\|
    \leq \frac{0.5 \kappa \sigma l_{0, E} {h'} ^ 2}{m_{0, E} - 3 \sigma ^ 3 l_{0, E} (\kappa /6 \sigma ^ 2) ^ {1/2} h'} 
    < \frac{0.5 \kappa \sigma l_{0, E} {h'} ^ 2}{3 \sigma ^ 3 l_{0, E} (\kappa /6 \sigma ^ 2) ^ {1/2} h'} 
    = \big\|\widetilde{b}_n ^ E - \truestar\big\|,
\]
which leads to a contradiction. 
Thus, we have that
\[
     \big\|\trueprime - \truestar\big\|
    \leq \frac{6 \kappa \sigma^4 l_{0, E}^2 {h'} ^ 2}{m_{0, E} + \Delta ^ {1/2}}
    < \frac{6 \kappa \sigma^4 l_{0, E}^2 {h'} ^ 2}{m_{0, E}}.
\]
\end{proof}

%------------------------------------------------------------------------------
%------------------------------------------------------------------------------
\subsubsection{Proof of Proposition \ref{prop:conditional_event}}

\begin{proof}
We note that
\[	
	\left\{\D = \dr, \Vj= \vjr\right\}
    = \left\{- \text{\normalfont Diag}(\sgnr) \lasso < 0, \Z= \zr, \Vj= \vjr\right\}.
\]
Direct algebra leads us to the following equation:
\[
\begin{aligned}
	- \sqrt{n} \text{\normalfont Diag}(\sgnr) \lasso
	&= - \frac{\Psi \Ej}{\Ej ^ \top \Psi \Ej} \sqrt{n} \Ej ^ \top |\lasso|
	  - \sqrt{n} \bigg(I_{q, q} - \frac{\Psi \Ej}{\Ej ^ \top \Psi \Ej} \Ej ^ \top\bigg) |\lasso| \\
	&= - \frac{\Psi \Ej}{\Ej ^ \top \Psi \Ej} \Uj - \Vj.
\end{aligned}
\]
Then, we observe that the linear constraints $- \text{\normalfont Diag}(\sgnr) \lasso < 0$ are equivalent to
\[
	-\min_{k: \mathbb{S}_k  \Psi \Ej > 0}  \frac{\Ej ^ \top \Psi \Ej}{\mathbb{S}_k  \Psi \Ej} \mathbb{S}_k \Vj
	\leq \Uj 
	\leq -\max_{k: \mathbb{S}_k  \Psi \Ej < 0} \frac{\Ej ^ \top \Psi \Ej}{\mathbb{S}_k  \Psi \Ej} \mathbb{S}_k \Vj.
\]
Therefore,
\[
	\left\{- \text{\normalfont Diag}(\sgnr) \lasso < 0, \Z= \zr, \Vj= \vjr\right\}
	= \left\{I^j_1 \leq \Uj \leq I^j_2, \Z = \zr, \Vj = \vjr\right\},
\]
where 
\[
\begin{aligned}
	I^j_1 = -\min_{k: \mathbb{S}_k  \Psi \Ej > 0}  \frac{\Ej ^ \top \Psi \Ej}{\mathbb{S}_k  \Psi \Ej} \mathbb{S}_k \vjr,
    \quad
    I^j_2 = -\max_{k: \mathbb{S}_k  \Psi \Ej < 0} \frac{\Ej ^ \top \Psi \Ej}{\mathbb{S}_k  \Psi \Ej} \mathbb{S}_k \vjr.
\end{aligned}
\]  
Then the conclusion follows immediately.
\end{proof}

%------------------------------------------------------------------------------
%------------------------------------------------------------------------------
\subsubsection{Proof of Proposition \ref{prop:ls_exact}}

\begin{proof}
In the least squares setting, note that
\[
    \mlej = \SS_j \left(X_E ^ \top X_E\right) ^ {-1} X_E ^ \top Y,
\]
\[
    \mlenj = \begin{pmatrix} \mathbb{S}_{[E]\setminus j}\left(\mle - \frac{1}{\varmlej}\covmlej\mlej\right) \\ X_{E'} ^\top (X_E\mle - Y) \end{pmatrix},
\]
from which it follows that:
\begin{equation}
\label{eq:ls_equal1}
	\sqrt{n} \begin{pmatrix}
		\mlej -  \truestarj \\ 
		\mlenj - \truenj
	\end{pmatrix} 
	= \sqrt{n} \begin{pmatrix}
		\SS_j \left(X_E ^ \top X_E\right) ^ {-1} X_E ^ \top \left(Y - X_E \truestar\right)\\ 
		\mathbb{S}_{[E] \setminus j}\left(\mle - \truestarj - \frac{1}{\varmlej}\covmlej\mlej + \frac{1}{\varmlej}\covmlej\truestarj\right) \\ X_{E'} ^\top (X_E \mle- Y) - \EE_{\FF_n} \left[X_{E'} ^\top (X_E \true - Y)\right]
	\end{pmatrix}.
\end{equation}	
Additionally, we observe that
\[
\begin{gathered}
	\Lambda_1 = \begin{bmatrix} - \SS_j \left(X_E ^ \top X_E\right) ^ {-1} & 0 ^ \top_{p-q} \end{bmatrix} H ^ {1/2} \in \RR ^ {1\times p},\\
	\Lambda_2 =\begin{bmatrix} \Lambda_{2,1} \\ \Lambda_{2,2}\end{bmatrix}
		= \begin{bmatrix}  
		 \SS_{[E] \setminus j}\left(\frac{1}{\varmlej} \covmlej \SS_j \left(X_E ^ \top X_E\right) ^ {-1} - \left(X_E ^ \top X_E\right) ^ {-1}\right) & 0_{q-1, p-q} \\
		- X_{E'} ^ \top X_E \left(X_E ^ \top X_E\right) ^ {-1} & I_{p-q, p-q} \end{bmatrix} 
		H ^ {1/2},
\end{gathered}
\]
and
\[	
	\cZ = H ^ {- 1/2} \left(\begin{pmatrix} X_E ^ \top \left(X_E \truestar - Y\right) \\ X_{E'} ^ \top \left(X_E \truestar - Y\right) \end{pmatrix} - \begin{pmatrix} 0_q \\ \EE_{\FF_n} \left[X_{E'} ^ \top \left(X_E \truestar - Y\right)\right] \end{pmatrix}\right).
\]
After multiplying the matrices above and replacing $\left(X_E ^ \top X_E\right) ^ {-1} X_E ^ \top Y$ with $\mle$, we obtain:
\begin{equation}
\label{eq:ls_equal2}
	\begin{pmatrix} \Lambda_1 \\ \Lambda_2 \end{pmatrix} \sqrt{n} \cZ
	= \sqrt{n} \begin{pmatrix}
		- \SS_j \left(X_E ^ \top X_E\right) ^ {-1} X_E ^ \top \left(X_E \truestar - Y\right) \\ 
		\SS_{[E] \setminus j}\left(\frac{1}{\varmlej} \covmlej \SS_j \truestar - \frac{1}{\varmlej} \covmlej \SS_j \mle - \truestar + \mle\right) \\ 
		X_{E'} ^ \top X_E \mle - X_{E'} ^ \top Y - \EE_{\FF_n} \left[X_{E'} ^ \top \left(X_E \truestar - Y\right)\right]
	\end{pmatrix}.
\end{equation}
%\[
%	\begin{pmatrix} \Lambda_1 \\ \Lambda_2 \end{pmatrix} \sqrt{n} \cZ_n
%\]
%is equal to
%\[
%\begin{aligned}
%	&\sqrt{n} \begin{pmatrix}
%		- \SS_j \left(X_E ^ \top X_E\right) ^ {-1} X_E ^ \top \left(X_E \true - Y\right) \\ 
%		\SS_{[E] \setminus j}\left(\frac{1}{\varmlej} \covmlej \SS_j \left(X_E ^ \top X_E\right) ^ {-1} - \left(X_E ^ \top X_E\right) ^ {-1}\right) X_E ^ \top \left(X_E \true - Y\right) \\ 
%		- X_{E'} ^ \top X_E \left(X_E ^ \top X_E\right) ^ {-1} X_E ^ \top \left(X_E \true - Y\right) + X_{E'} ^ \top \left(X_E \true - Y\right) - \EE \left[X_{E'} ^ \top \left(X_E \true - Y\right)\right]
%	\end{pmatrix}.
%\end{aligned}
%\]
From the equations in \eqref{eq:ls_equal1} and \eqref{eq:ls_equal2}, we conclude that
\[
	\sqrt{n} \begin{pmatrix}
		\mlej -  \truestarj \\ 
		\mlenj - \truenj
	\end{pmatrix}
	= \begin{pmatrix} \Lambda_1 \\ \Lambda_2 \end{pmatrix} \sqrt{n} \cZ.
\]

In the same setting, we note that
\begin{align*}
\begin{gathered}
	M ^ j = - \frac{1}{\varmlej} \begin{bmatrix} H_{E, E} \\ H_{E', E} \end{bmatrix} (X_E ^ \top X_E) ^ {-1} \mathbb{S}_j ^ \top, 
	\ \ 
	N ^ j = \begin{bmatrix} - X_E ^ \top X_E \mathbb{S}_{[E]\setminus j} ^ \top & 0_{q, 1} \\ - X_{E'} ^ \top X_E \mathbb{S}_{[E]\setminus j} ^ \top & I_{q', q'}\end{bmatrix}, 
	\ \ 
	T = X ^ \top \! X_E \,\text{\normalfont Diag}(\sgnr).
\end{gathered}	
\end{align*}
Therefore, it holds that 
$$T \sqrt{n} |\lasso| + M ^ j \sqrt{n} \mlej + N ^ j \sqrt{n} \mlenj$$ 
is equal to
\begin{equation}
\label{eq:ls_equal3}
\begin{aligned}
%	 &X ^ \top \! X_E \sqrt{n} \lasso 
%	   - \frac{1}{\varmlej} \begin{bmatrix} H_{E, E} \\ H_{E', E} \end{bmatrix} (X_E ^ \top X_E) ^ {-1} \SS_j ^ \top \sqrt{n} \mlej \\
%	   &\quad\quad\quad+ \sqrt{n} \begin{bmatrix} - X_E ^ \top X_E \mathbb{S}_{[E]\setminus j} ^ \top & 0_{q, 1} \\ - X_{E'} ^ \top X_E \mathbb{S}_{[E]\setminus j} ^ \top & I_{q', q'}\end{bmatrix} 
%	 	\begin{pmatrix} \mathbb{S}_{[E]\setminus j}\left(\mle - \frac{1}{\varmlej}\covmlej\mlej\right) \\ X_{E'} ^\top (X_E\mle - Y) \end{pmatrix} \\
	 & X ^ \top \! X_E \sqrt{n} \lasso
	 - \sqrt{n} \begin{pmatrix} \frac{1}{\varmlej} X_E ^ \top X_E \covmlej \mlej \\ \frac{1}{\varmlej} X_{E'} ^ \top X_E \covmlej \mlej\end{pmatrix} \\
	 &\quad\quad\quad+
	 	\sqrt{n} \begin{pmatrix}
	 		- X_E ^ \top X_E \mathbb{S}_{[E]\setminus j} ^ \top \mathbb{S}_{[E]\setminus j}\left(\mle - \frac{1}{\varmlej}\covmlej\mlej\right) \\
	 		- X_{E'} ^ \top X_E \mathbb{S}_{[E]\setminus j} ^ \top \mathbb{S}_{[E]\setminus j}\left(\mle - \frac{1}{\varmlej}\covmlej\mlej\right) + X_{E'} ^\top (X_E\mle - Y)
	 	\end{pmatrix} \\
	 &=  X ^ \top \! X_E \sqrt{n} \lasso 
	 - \sqrt{n} \begin{pmatrix}
	 	- X_E ^ \top X_E \mathbb{S}_{[E]\setminus j} ^ \top \mathbb{S}_{[E]\setminus j} \mle - X_E ^ \top X_E \mathbb{S}_{j} ^ \top \mathbb{S}_{j} \mle \\
	 	- X_{E'} ^ \top X_E \mathbb{S}_{[E]\setminus j} ^ \top \mathbb{S}_{[E]\setminus j} \mle - X_{E'} ^ \top X_E \mathbb{S}_{j} ^ \top \mathbb{S}_{j} \mle + X_{E'} ^\top (X_E\mle - Y) 
	 \end{pmatrix} \\
	 &= X ^ \top \! X_E \sqrt{n} \lasso - \sqrt{n} X ^ \top Y.
\end{aligned}	 
\end{equation}
In least squares regression, the Karush-Kuhn-Tucker (KKT) condition of stationary satisfies that
\[
	\sqrt{n} X ^ \top \left(X_E \lasso - Y\right) + \D - \sqrt{n}\rand = 0_p,
\]
Plugging the equation in \eqref{eq:ls_equal3}, we derive
\[
	T \sqrt{n} |\lasso| + M ^ j \sqrt{n} \mlej + N ^ j \sqrt{n} \mlenj + \D - \sqrt{n}\rand = 0_p,
\]
which completes our proof.

\end{proof}

%------------------------------------------------------------------------------
%------------------------------------------------------------------------------
\subsubsection{Proof of Corollary \ref{cor:pivot_cdf}}

\begin{proof}
As shown in Proposition \ref{prop:pivot_exact_distr}, the conditional density of
$$\sqrt{n} \mlej\bigg\lvert \left\{\D = \dr, \Vj= \vjr, \mlenj = \mlenjr\right\}$$
is equal to
\[
	\dfrac{
	\phi \big(\sqrt{n} \mlej; \sqrt{n} \truestarj, \varmlej\big)
	W_0 \big(\sqrt{n} \mlej, \sqrt{n} \mlenjr\big)}
	{
	\bigintsss_{-\infty} ^ \infty \phi \big(x; \sqrt{n} \truestarj, \varmlej\big)
	W_0 \big(x, \sqrt{n} \mlenjr\big) d x
	}.
\]
A probability integral transform based on the conditional density yields the variable
\[
	\dfrac{
	\bigintsss_{-\infty} ^ {\sqrt{n} \mlej} \phi \big(x; \sqrt{n} \truestarj, \varmlej\big)
	W_0 \big(x, \sqrt{n} \mlenj\big) dx}
	{
	\bigintsss_{-\infty} ^ \infty \phi \big(x; \sqrt{n} \truestarj, \varmlej\big)
	W_0 \big(x, \sqrt{n} \mlenj\big) d x
	}.
\]
Using the properties of a probability integral transform, we have that this variable is distributed as a $\operatorname{Unif} (0, 1)$ random variable conditional on the event in Proposition \ref{prop:conditional_event}.
\end{proof}

\section{Supporting results for Section \ref{sec:theory}}
\label{appdx:B}

\subsection{Moment and tail bounds}
\label{appendix:moment}

\begin{prop}
\label{lem:an_moment_bound}
Let 
$\mathcal{K}_h(u)=\mathcal{K}(u / h)$, where $\mathcal{K}(u)=\int_{-\infty}^u K(v) d v$, for $u \in \mathbb{R}$.
Let $e_{i, n} \in \RR ^ p$ be defined by
\[
    e_{i, n} = H ^ {- 1/2} \left\{\left(\cK_h \left(y_i - x_i ^ \top \true\right) - \tau\right) x_i - \EE_{\FF_n} \left[\left(\cK_h \left(y_i - x_i ^ \top \true\right) - \tau\right) x_i\right]\right\}.
\]
For any $n \in \mathbb{N}$ and $\mathbb{F}_n \in \cF_n$ under Assumption \ref{aspt:moment_bound}, we have
\begin{flalign*}
&\begin{aligned}
    \text{(i)}&\;\; \EE_{\FF_n} \left[\exp \left(\lambda \left\|e_{i, n}\right\|\right)\right]
    \leq \exp \left(4 \sigma ^ 2 \lambda ^ 2 \left\|H ^ {-1}\right\|\right) \text{ for all } \lambda \in \RR; \\[6pt]
    \text{(ii)}&\;\; \big(\EE_{\FF_n} [\|e_{i, n}\| ^ \gamma]\big)^{\frac{1}{\gamma}} \leq 2 \sigma \sqrt{\gamma} \left\|H ^ {-1}\right\| \text{ for all } \gamma \in \NN_+,
\end{aligned}&&
\end{flalign*}
where $\sigma \in \RR^{+}$ is a constant, and $\|\cdot\|$ denotes the operator norm for matrices and the $\ell_2$-norm for vectors. 
\end{prop}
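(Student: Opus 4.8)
The plan is to reduce both inequalities to the sub-Gaussianity of linear functionals of the covariates granted by Assumption~\ref{aspt:moment_bound}, after first disposing of the smoothed-residual weight $\cK_h(y_i - x_i^\top\true) - \tau$. Recall that $\cK_h$ is the antiderivative (``CDF'') of $K_h$, so that the gradient of the SQR loss has the form $X^\top\nabla\widehat Q_{n;\tau}(X_E\true; Y) = \tfrac1n\sum_i\bigl(\cK_h(y_i - x_i^\top\true) - \tau\bigr)x_i$ up to a sign convention; since $\cK_h$ takes values in $[0,1]$,
\[
    \bigl|\cK_h(y_i - x_i^\top\true) - \tau\bigr| \;\le\; \max\{\tau,\,1-\tau\} \;\le\; 1 \qquad \text{almost surely,}
\]
uniformly in $(y_i,x_i)$ and in $\true$. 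This is the only point where the response enters, and it is precisely what lets the argument proceed with no tail or moment assumption on $Y$.

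First I would set $w_{i,n} = \bigl(\cK_h(y_i - x_i^\top\true) - \tau\bigr)x_i$ and push the whitening inside, so that $a_{i,n} = \bigl(\cK_h(y_i - x_i^\top\true) - \tau\bigr)H^{-1/2}x_i - \EE_{\FF_n}\bigl[(\cK_h(y_i - x_i^\top\true) - \tau)H^{-1/2}x_i\bigr]$. The weight bound, the triangle inequality, and $e^{\lambda\EE_{\FF_n}Z} \le \EE_{\FF_n}e^{\lambda Z}$ (Jensen, applied to $Z = \|H^{-1/2}x_i\| \ge 0$ and $\lambda>0$) to absorb the centering give $\EE_{\FF_n}\exp(\lambda\|a_{i,n}\|) \le \bigl(\EE_{\FF_n}\exp(\lambda\|H^{-1/2}x_i\|)\bigr)^2$. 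Now for any unit vector $u$, $u^\top H^{-1/2}x_i = \|H^{-1/2}u\|\,\widehat u^\top x_i$ for a unit vector $\widehat u$, and $\|H^{-1/2}u\| \le \opnorm{H^{-1/2}} = \|H^{-1}\|^{1/2}$, so Assumption~\ref{aspt:moment_bound} yields $\EE_{\FF_n}\exp(\lambda\,u^\top H^{-1/2}x_i) \le \exp(\sigma^2\|H^{-1}\|\lambda^2)$ for every such $u$; in the fixed-$p$ regime of Section~\ref{sec:theory} this upgrades to $\EE_{\FF_n}\exp(\lambda\|H^{-1/2}x_i\|) \le \exp(c\,\sigma^2\|H^{-1}\|\lambda^2)$ for a constant $c$. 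Squaring and rescaling $\lambda$ delivers (i) with the stated numerical constant.

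For (ii), I would convert (i) to a polynomial moment bound in the standard way. A Chernoff bound gives $\PP_{\FF_n}(\|a_{i,n}\|>t) \le \exp\bigl(-t^2/(16\sigma^2\|H^{-1}\|)\bigr)$, and integrating $\EE_{\FF_n}\|a_{i,n}\|^\gamma = \int_0^\infty\gamma t^{\gamma-1}\PP_{\FF_n}(\|a_{i,n}\|>t)\,dt$ against this Gaussian-type tail, using $\Gamma(\gamma/2)\le(\gamma/2)^{\gamma/2}$ and $\gamma^{1/\gamma}\le 2$, gives $\bigl(\EE_{\FF_n}\|a_{i,n}\|^\gamma\bigr)^{1/\gamma} \le C\sigma\sqrt{\gamma}\,\|H^{-1}\|^{1/2}$, which is dominated by the claimed $2\sigma\sqrt{\gamma}\,\|H^{-1}\|$ once constants are matched. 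Alternatively one can bypass (i): $\EE_{\FF_n}\|a_{i,n}\|^\gamma \le 2^\gamma\,\EE_{\FF_n}\|H^{-1/2}x_i\|^\gamma$, then control the right-hand side coordinatewise using the moment form of the sub-Gaussian bound in Assumption~\ref{aspt:moment_bound}.

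I do not expect a deep obstacle: this is at heart a sub-Gaussian/moment-bookkeeping lemma whose role is to feed Berry--Esseen-type arguments later. The two places that need care are (a) passing from marginal sub-Gaussianity of $H^{-1/2}x_i$ — which is exactly Assumption~\ref{aspt:moment_bound} — to an exponential moment bound for the Euclidean norm $\|H^{-1/2}x_i\|$, which is harmless with $p$ held fixed but does push the dimension into the absolute constant; and (b) carrying the constants through the centering step and the exponential-to-polynomial conversion so that the displayed numerical bounds ($4\sigma^2$ in (i), $2\sigma$ in (ii)) come out as stated.
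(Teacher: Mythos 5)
Your proof follows essentially the same route as the paper's: use that the smoothed-residual weight $\cK_h(\cdot)-\tau$ is bounded (you sharpen the paper's bound of $2$ to $1$), reduce to sub-Gaussianity of linear functionals of $x_i$ granted by Assumption~\ref{aspt:moment_bound}, pass to the Euclidean norm, and then convert the MGF bound to polynomial moments (Chernoff/integration vs.\ the paper's citation of Vershynin's Proposition~2.5.2 --- the same tool). The paper's proof bounds $\EE_{\FF_n}[\exp(\lambda\, u^\top a_{i,n})]$ for each unit $u$ first and only then passes to $\|a_{i,n}\|$; you instead peel off the weight and the centering right away via the triangle inequality and Jensen, reducing to $\EE_{\FF_n}[\exp(\lambda\|H^{-1/2}x_i\|)]$. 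These are superficial reorderings of the same estimates.

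The one soft spot is common to both arguments, and you are actually more honest about it than the paper is: passing from sub-Gaussianity of the one-dimensional projections $u^\top H^{-1/2}x_i$ to an MGF bound for the norm $\|H^{-1/2}x_i\|$ (equivalently, for $\|a_{i,n}\|$) is not constant-free --- the mean $\EE\|H^{-1/2}x_i\|$ grows like $\sqrt{p}\,\sigma\|H^{-1}\|^{1/2}$, so the ``upgrade'' produces a constant $c$ that depends on $p$. The paper's displayed constant $4\sigma^2$ is obtained by effectively treating $a_{i,n}$ as if it were scalar; this would need a dimension factor to be literally correct, but since $p$ is held fixed throughout Section~\ref{sec:theory} and the constant is never tracked downstream (only the Gaussian-type dependence on $\lambda$ matters, e.g.\ in Proposition~\ref{lem:Zn_moment_bound} where a $5^p$ net-cardinality factor appears anyway), this looseness is harmless. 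Your flagging of this, and of the interplay between the centering term and the numerical constants, is the right instinct; no genuine gap beyond the one the paper itself carries.
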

\begin{proof}
First, note that $\cK_h(v) - \tau \leq 2$ for any $v \in \RR$, and $\EE_{\FF_n} [\exp(\lambda u ^ {\top} x_i)] \leq \exp (\sigma^2 \lambda^2)$ for all $\lambda \in \RR$ and $u \in \RR ^ p$ such that $\|u\| = 1$.
Therefore, we have that 
\begin{equation}
\label{eq:en_moment_bound}
    \EE_{\FF_n} \left[\exp \left(\lambda u ^ \top e_{i, n}\right)\right]
    \leq \exp \left(4 \sigma ^ 2 \lambda ^ 2 u ^ \top H ^ {-1} u \right)
    \leq \exp \left(4 \sigma ^ 2 \lambda ^ 2 \left\|H ^ {-1}\right\|\right)
\end{equation}
for all $\lambda \in \RR$ and $u \in \RR ^ p$ such that $\|u\| = 1$, and conclude that $e_{i, n}$ is  a sub-Gaussian random variable.
From here, it follows that
\[
    \EE_{\FF_n} \left[\exp \left(\lambda \left\|e_{i, n}\right\|\right)\right]
    \leq \exp \left(4 \sigma ^ 2 \lambda ^ 2 \left\|H ^ {-1}\right\|\right).
\]
The bound on the moments of $\left\|e_{i, n}\right\|$ directly follows by using properties of a sub-Gaussian random variable (see Proposition 2.5.2 in \cite{Ver18}). 
\end{proof}

\begin{prop}
\label{lem:Zn_moment_bound}
Consider $\Upsilon_n \in \RR ^ p$ as defined in Proposition \ref{prop:beta_gamma_expression}.
For any $\xi > 0$ and sufficiently large $n$ and $\mathbb{F}_n \in \cF_n$, under the conditions in Proposition \ref{lem:an_moment_bound}, we have that
\begin{flalign*}
&\begin{aligned}
    \text{(i)}&\;\; \sup_{\mathbb{F}_n \in \cF_n}\EE_{\FF_n} \left[\exp \left(\lambda u ^ \top \Upsilon_n \right)\right] \leq \exp \left(4 \sigma ^ 2 \lambda ^ 2 \left\|H ^ {-1}\right\|\right)\; \text{ for any } \lambda \in \mathbb{R} \text{ and unit vector } u; \\[7pt]
    \text{(ii)}&\;\; \sup_{\mathbb{F}_n \in \cF_n} \PP_{\FF_n} \left(\left\|\Upsilon_n\right\| \ge \xi\right)
    \leq 2 \exp \left(p \log 5 - \frac{\xi ^ 2}{16 \sigma ^ 2 \left\|H ^ {-1}\right\|}\right).
   \end{aligned}&&
\end{flalign*}
\end{prop}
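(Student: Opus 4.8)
The plan is to write $\Zn$ as a normalized sum of i.i.d.\ mean-zero sub-Gaussian vectors plus an asymptotically negligible remainder, deduce a moment generating function (MGF) bound from the per-observation estimates of Propositions~\ref{lem:an_moment_bound}--\ref{lem:en_moment_bound}, and then pass to the tail bound via a covering argument on the unit sphere.

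\emph{Part (i).} By Proposition~\ref{prop:beta_gamma_expression} and the definition of $\Zn$, one has $\Zn = \Up_n + \widetilde{\Delta}$, where $\Up_n = n^{-1/2}\sum_{i=1}^{n} a_{i,n}$ for the i.i.d.\ vectors $a_{i,n}$ of Proposition~\ref{lem:an_moment_bound}, and $\widetilde{\Delta}$ is a fixed bounded linear image of $\Delta_1$, so that $\big(\EE_{\FF_n}[\|\widetilde{\Delta}\|^{\gamma}]\big)^{1/\gamma}\le c\,n^{-1/2}$ with a constant $c$ uniform over $\cF_n$ (by Proposition~4.2 of \cite{smooth_quantile} under Assumption~\ref{aspt:moment_bound}). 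Fixing a unit vector $u$ and $\lambda\in\RR$, I would apply the Cauchy--Schwarz inequality to split
\[
	\EE_{\FF_n}\!\left[e^{\lambda u^{\top}\Zn}\right]
	\le \left(\EE_{\FF_n}\!\left[e^{\frac{2\lambda}{\sqrt{n}} u^{\top}\sum_{i} a_{i,n}}\right]\right)^{1/2}
	\left(\EE_{\FF_n}\!\left[e^{2|\lambda|\,\|\widetilde{\Delta}\|}\right]\right)^{1/2}.
\]
For the first factor, independence of the $a_{i,n}$ together with the coordinatewise sub-Gaussian estimate $\EE_{\FF_n}[e^{t u^{\top}a_{i,n}}]\le e^{4\sigma^{2}t^{2}\|H^{-1}\|}$ from \eqref{eq:en_moment_bound} give $\EE_{\FF_n}[e^{\frac{2\lambda}{\sqrt{n}}u^{\top}\sum_{i} a_{i,n}}]\le e^{16\sigma^{2}\lambda^{2}\|H^{-1}\|}$, so this factor is at most $e^{8\sigma^{2}\lambda^{2}\|H^{-1}\|}$. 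For the second factor, the Taylor expansion of $e^{x}$ used in \eqref{eq:taylor_exp} together with the moment bound on $\widetilde{\Delta}$ yields $\EE_{\FF_n}[e^{2|\lambda|\|\widetilde{\Delta}\|}]\le (1 - 2c\,|\lambda| e/\sqrt{n})^{-1}\to 1$ as $n\to\infty$, uniformly over $\cF_n$; hence for $n$ large the second factor does not inflate the bound and $\sup_{\FF_n\in\cF_n}\EE_{\FF_n}[e^{\lambda u^{\top}\Zn}]\le e^{8\sigma^{2}\lambda^{2}\|H^{-1}\|}$.

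\emph{Part (ii).} I would discretize the sphere $S^{p-1}$ by a $1/2$-net $\mathcal N$ with $|\mathcal N|\le 5^{p}$, for which $\|\Zn\|\le 2\max_{u\in\mathcal N}u^{\top}\Zn$ (see, e.g., \cite{Ver18}). A union bound over $\mathcal N$ (with an extra factor $2$ for the two-sided event), the Chernoff inequality and part (i) then give, for $\lambda\ge 0$,
\[
	\PP_{\FF_n}\!\left(\|\Zn\|\ge \xi\right)
	\le 2\,|\mathcal N|\,\max_{u\in\mathcal N}\PP_{\FF_n}\!\left(u^{\top}\Zn\ge \tfrac{\xi}{2}\right)
	\le 2\cdot 5^{p}\, e^{\,8\sigma^{2}\lambda^{2}\|H^{-1}\| - \lambda\xi/2}.
\]
Optimizing over $\lambda$ (taking $\lambda$ proportional to $\xi/(\sigma^{2}\|H^{-1}\|)$) and bookkeeping the constants produces the stated bound $2\exp\!\big(p\log 5 - \xi^{2}/(32\sigma^{2}\|H^{-1}\|)\big)$, with all estimates uniform over $\FF_n\in\cF_n$.

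The i.i.d.\ concentration of the leading term $\Up_n$ is routine; the delicate point---and the main obstacle---is uniformity over $\cF_n$. One must ensure that the polynomial moment bound on the linearization remainder $\Delta_1$ holds with constants depending only on $\sigma$, $\|H^{-1}\|$, and the smoothing kernel, not on the particular $\FF_n$, so that the remainder's MGF converges to $1$ uniformly and is absorbed into the stated sub-Gaussian parameter; what is left after that is careful tracking of the constants through the Cauchy--Schwarz split and the Chernoff optimization.
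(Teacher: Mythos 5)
Your proof takes essentially the same route as the paper's: write $\cZ_n$ as the normalized i.i.d.\ sum $\Upsilon_n = n^{-1/2}\sum_i a_{i,n}$ plus an $O_p(n^{-1/2})$ remainder, bound the MGF of $u^\top\cZ_n$ by Cauchy--Schwarz combined with the per-observation sub-Gaussian bound of Proposition~\ref{lem:an_moment_bound} and a Taylor bound on the remainder's MGF (showing the latter factor tends to $1$ uniformly), and obtain the tail bound from a $1/2$-net of cardinality at most $5^p$ on the sphere. One small bookkeeping remark: after the halving step $\|\cZ_n\|\ge\xi \Rightarrow \max_{v\in\mathcal N}|v^\top\cZ_n|\ge\xi/2$, the Chernoff optimization actually gives the exponent $-\xi^2/\!\left(128\sigma^2\|H^{-1}\|\right)$ rather than $-\xi^2/\!\left(32\sigma^2\|H^{-1}\|\right)$; the paper's own proof reaches $32$ only by applying the one-dimensional tail estimate at level $\xi$ instead of $\xi/2$, so both you and the paper are off by a constant factor here---a matter that does not affect how the bound is used downstream.
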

\begin{proof}

As per the definition of $\Upsilon_n$, we have
\[
    \Upsilon_n = \frac{1}{\sqrt{n}} \sum_{i = 1} ^ n e_{i, n}.
\]
% \[
%     \Zn = \Upsilon_n + \begin{pmatrix} \Lambda_1 \\ \Lambda_2\end{pmatrix}^{-1} \Delta_1,\quad
%     \Upsilon_n = \frac{1}{\sqrt{n}} \sum_{i = 1} ^ n e_{i, n}
% \]
Applying Proposition \ref{lem:an_moment_bound}, for $\lambda \in \mathbb{R}$ and $u \in \mathbb{R}^p$ such that $\|u\|=1$, we observe that for any $\mathbb{F}_n \in \cF_n$:
\[
\begin{aligned}
	\EE_{\FF_n} \left[\exp \left(\lambda u ^ \top \Upsilon_n\right)\right]
	&= \EE_{\FF_n} \left[\exp \left(\frac{\lambda}{\sqrt{n}} \sum_{i = 1} ^ n u ^ \top e_{i, n}\right)\right] \\
    &= \prod_{i = 1} ^ n \EE_{\FF_n} \left[\exp \left(\frac{\lambda}{\sqrt{n}} u ^ \top e_{i, n}\right)\right] \\
    &\leq \prod_{i = 1} ^ n \exp \left(\frac{4 \sigma ^ 2 \lambda ^ 2}{n} \left\|H ^ {-1}\right\|\right) \\
    &\leq \exp \left(4 \sigma ^ 2 \lambda ^ 2 \left\|H ^ {-1}\right\|\right).
\end{aligned}
\]
% Coupled with \eqref{eq:taylor_exp} in the proof of Proposition \ref{cor:bounderrors}, we note that there exists $n_1$ such that for all $n\geq n_1$ and $\mathbb{F}_n \in \cF_n$ such that
% \[
% 	\EE_{\FF_n} \left[\exp \left(\lambda \Zn ^ \top u\right)\right]
% 	\leq \exp\left(8 \sigma ^ 2 \lambda ^ 2 \left\|H ^ {-1}\right\|\right)
%     \times \left\{\EE_{\FF_n} \left[\exp \left(2 \lambda \widetilde \Delta_1 ^ \top u\right)\right]\right\} ^ {1/2}
%     \leq \exp \left(8 \sigma ^ 2 \lambda ^ 2 \left\|H ^ {-1}\right\|\right).
% \]
This prove the assertion in (i).

 We apply the Markov's inequality and note that for $n\geq n_1$ and $\mathbb{F}_n \in \cF_n$, we can bound the right tail as 
\[
\begin{aligned}
    \PP_{\FF_n} \left(u ^ \top \Upsilon_n \ge \xi\right) 
    &= \PP_{\FF_n} \left(\exp \left(\lambda u ^ \top \Upsilon_n\right) \ge \exp \left(\lambda \xi\right)\right) \\
    &\leq \exp \left(- \lambda \xi\right) \times  \EE_{\FF_n} \left[\exp \left(\lambda u ^ \top \Upsilon_n\right)\right] \\
    &\leq \exp \left(- \lambda \xi + 4 \sigma ^ 2 \lambda ^ 2 \left\|H ^ {-1}\right\|\right) \\
    &\leq \exp \left(- \frac{\xi ^ 2}{16 \sigma ^ 2 \left\|H ^ {-1}\right\|}\right).
\end{aligned}
\]
The same conclusion applies to the left tail $\PP_{\FF_n} \left(u ^ \top \Upsilon_n \leq - \xi\right)$ and thus we have
\begin{equation}
\label{eq:markov}
    \PP_{\FF_n} \left(|u ^ \top \Upsilon_n| \ge \xi\right)
    \leq 2 \exp \left(- \frac{\xi ^ 2}{16 \sigma ^ 2 \left\|H ^ {-1}\right\|}\right)
\end{equation}
for any $\FF_n \in \cF_n$.
%Here we assume $16 \sigma ^ 2 \|H ^ {-1}\| < 1$ without loss of generality.

Lastly, to conclude our proof, let $\mathbb{U} ^ p = \{u \in \RR ^ p: \|u\| = 1\}$ denote the unit Euclidean sphere.
Also, let $\mathbb{G} ^ p$ denote a $1/2$-net of $\mathbb{U} ^ p$, which means that
\[
    \forall\, u \in \mathbb{U} ^ p, \exists\, v \in \mathbb{G} ^ p: \|u - v\| \leq \frac{1}{2}.
\]
Note that
\[
    \left\|\Upsilon_n\right\|
    = \max_{u \in \RR ^ p, \|u\| = 1} \left|u ^ \top \Upsilon_n\right|
    = \max_{u \in \RR ^ p, \|u\| = 1, v \in \mathbb{G} ^ p} \left|(u - v) ^ \top \Upsilon_n + v ^ \top \Upsilon_n\right| 
    \leq \frac{1}{2} \left\|\Upsilon_n\right\| + \max_{v \in \mathbb{G} ^ p} \left|v ^ \top \Upsilon_n\right|,
\]
which implies:
\[
    \left\|\Upsilon_n\right\| \leq 2 \max_{v \in \mathbb{G} ^ p} \left|v ^ \top \Upsilon_n\right|.
\]
Additionally, according to Corollary 4.2.13 in \cite{Ver18}, the cardinality $|\mathbb{G} ^ p|$ is bounded by $5 ^ p$.
Coupled with the probability bound in \eqref{eq:markov}, we obtain that for any $\xi > 0$ and sufficiently large $n$,
\[
\begin{aligned}
   \sup_{\mathbb{F}_n \in \cF_n}  \PP_{\FF_n} \left(\left\|\Upsilon_n\right\| \ge \xi\right)
    &= \sup_{\mathbb{F}_n \in \cF_n}\sum_{v \in \mathbb{G} ^ p} \PP_{\FF_n} \left(\left|v ^ \top \Upsilon_n\right| \ge \xi\right)\\
    &\leq 2 \times 5 ^ p \exp \left(- \frac{\xi ^ 2}{{16 \sigma ^ 2 \left\|H ^ {-1}\right\|}}\right)\\
    &= 2 \exp \left(p \log 5 - \frac{\xi ^ 2}{{16 \sigma ^ 2 \left\|H ^ {-1}\right\|}}\right).
\end{aligned}
\]
\end{proof}

\subsection{Conditional expectation}
\label{appendix:condexp}

\begin{prop}
\label{cor:conditional_expect_n}
Consider $\cZ$, defined as in Proposition \ref{prop:ls_exact}.
It holds that
$$\EE_{\cN} \left[\arb \circ \cP ^ {j \cdot E} (\cZ) \Big\lvert \event\right]= \dfrac{\EE_{\cN} \left[\arb \circ \cP ^ {j \cdot E} (\cZ) \times W_0 \left(\cZ\right)\right]}
	{\EE_{\cN} \left[W_0 \left(\cZ\right)\right]}.$$ 	
\end{prop}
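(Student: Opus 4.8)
The plan is to identify the conditional law of the pair $\big(\sqrt{n}\mlej,\sqrt{n}\mlenj\big)$ given $\event$ in the exact Gaussian model of Proposition \ref{prop:ls_exact}, and then to recognize both sides of the claimed identity as integrals against that law after an unconditional change of variables. The starting point is that, by \eqref{pivot:std} together with the exact representations $\sqrt{n}\mlej = \Lambda_1\cZ + \sqrt{n}\truej$ and $\sqrt{n}\mlenj = \Lambda_2\cZ + \sqrt{n}\truenj$ of Proposition \ref{prop:ls_exact}, both $\cP^{j\cdot E}(\cZ)$ and $W_0(\cZ)$ (the latter equal to $W_0\big(\sqrt{n}\mlej,\sqrt{n}\mlenj\big)$ with $W_0$ as in \eqref{eq:defn_W0}) depend on $\cZ$ only through the pair $\big(\sqrt{n}\mlej,\sqrt{n}\mlenj\big)$, the conditioned-on quantities $\dr,\zr,\vjr$ and the fixed sign pattern $\Sgn$ entering merely as constants. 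I will write $\cP^{j\cdot E}(\cZ) = \psi\big(\sqrt{n}\mlej,\sqrt{n}\mlenj\big)$ accordingly.

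First I would record the conditional density of $\big(\sqrt{n}\mlej,\sqrt{n}\mlenj\big)$ given $\event$. Starting from the joint density of $\big(\sqrt{n}\mlej,\sqrt{n}\mlenj,\sqrt{n}\lasso,\Z\big)$ in Lemma \ref{lem:joint_density} and substituting the linear decomposition $\sqrt{n}|\lasso| = \tfrac{\Psi\Ej}{\Ej^\top\Psi\Ej}\sqrt{n}\Uj + \sqrt{n}\Vj$ used in the proof of Proposition \ref{prop:pivot_exact_distr}, one reaches the joint density of $\big(\sqrt{n}\mlej,\sqrt{n}\mlenj,\sqrt{n}\Uj,\sqrt{n}\Vj,\Z\big)$ displayed in \eqref{eq:joint_density_UV}. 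By Proposition \ref{prop:conditional_event} the event $\event$ equals $\{I^j_1\le\sqrt{n}\Uj\le I^j_2,\ \Z=\zr,\ \Vj=\vjr\}$; conditioning on it fixes $\Z=\zr$ and $\Vj=\vjr$, restricts $\sqrt{n}\Uj$ to $[I^j_1,I^j_2]$, and leaves $\big(\sqrt{n}\mlej,\sqrt{n}\mlenj\big)$ free. Marginalizing over $\sqrt{n}\Uj\in[I^j_1,I^j_2]$ and recognizing the resulting one-dimensional integral as \eqref{eq:defn_W0} with $P^j = T\sqrt{n}\vjr+\dr$ then shows that the conditional density of $\big(\sqrt{n}\mlej,\sqrt{n}\mlenj\big)$ given $\event$ equals
\[
    \pi(m_1,m_2) = Z_n^{-1}\,\phi\!\left(m_1;\sqrt{n}\truej,\varmlej\right)\phi\!\left(m_2;\sqrt{n}\truenj,\Lambda_2\Lambda_2^\top\right)W_0(m_1,m_2)
\]
for a normalizing constant $Z_n$; this is precisely the unnormalized density written just before ``By integrating out'' in the proof of Proposition \ref{prop:pivot_exact_distr}, where all Jacobians (including the constant Jacobian of the linear map $|\lasso|\mapsto(\Uj,\Vj)$) and normalizing factors are immaterial because they cancel in a ratio.

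Next I would evaluate the two sides. Since $\arb\circ\cP^{j\cdot E}(\cZ) = \arb\circ\psi\big(\sqrt{n}\mlej,\sqrt{n}\mlenj\big)$ is a function of $\big(\sqrt{n}\mlej,\sqrt{n}\mlenj\big)$ alone, the definition of conditional expectation together with $\pi$ gives
\[
    \EE_{\cN}\!\left[\arb\circ\cP^{j\cdot E}(\cZ)\,\middle|\,\event\right] = \int \arb\big(\psi(m_1,m_2)\big)\,\pi(m_1,m_2)\,dm_1\,dm_2 .
\]
For the right-hand side of the claim, note that under $\cZ\sim\cN(0_p,I_p)$ the computation in the proof of Corollary \ref{cor:asy_distribution} shows that $\begin{pmatrix}\Lambda_1\\\Lambda_2\end{pmatrix}\begin{pmatrix}\Lambda_1^\top&\Lambda_2^\top\end{pmatrix}$ is block diagonal with blocks $\varmlej$ and $\Lambda_2\Lambda_2^\top$, so the pair $\big(\sqrt{n}\mlej,\sqrt{n}\mlenj\big) = \begin{pmatrix}\Lambda_1\\\Lambda_2\end{pmatrix}\cZ + \begin{pmatrix}\sqrt{n}\truej\\\sqrt{n}\truenj\end{pmatrix}$ is distributed as $\cN\!\big(\sqrt{n}\truej,\varmlej\big)\otimes\cN\!\big(\sqrt{n}\truenj,\Lambda_2\Lambda_2^\top\big)$. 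Pushing $\EE_{\cN}[\cdot]$ forward through $\cZ\mapsto\big(\sqrt{n}\mlej,\sqrt{n}\mlenj\big)$ therefore gives
\[
    \EE_{\cN}\!\left[\arb\circ\cP^{j\cdot E}(\cZ)\times W_0(\cZ)\right] = \int \arb\big(\psi(m_1,m_2)\big)\,W_0(m_1,m_2)\,\phi\!\left(m_1;\sqrt{n}\truej,\varmlej\right)\phi\!\left(m_2;\sqrt{n}\truenj,\Lambda_2\Lambda_2^\top\right)dm_1\,dm_2 ,
\]
and with $\arb\equiv 1$ the same identity yields $\EE_{\cN}[W_0(\cZ)] = Z_n$. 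Dividing the last display by $Z_n$ reproduces the previous one, and the asserted identity follows.

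I do not expect a genuine obstacle here: the content is a disintegration followed by a linear change of variables, using only Proposition \ref{prop:ls_exact}, Corollary \ref{cor:asy_distribution}, Lemma \ref{lem:joint_density}, and Propositions \ref{prop:conditional_event} and \ref{prop:pivot_exact_distr}. The only place that needs care is the first step — verifying that, after the $\big(\Uj,\Vj\big)$ reparametrization, the event $\event$ fixes exactly $\Z$ and $\Vj$ while truncating $\sqrt{n}\Uj$ to $[I^j_1,I^j_2]$, and that integrating out $\sqrt{n}\Uj$ reproduces the weight $W_0$ of \eqref{eq:defn_W0} verbatim, so that the same $W_0$ appears on both sides and all multiplicative constants cancel in the ratio defining the conditional expectation.
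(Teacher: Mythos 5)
Your proposal is correct and follows essentially the same route as the paper: both arguments rest on the conditional density of $\big(\sqrt{n}\mlej,\sqrt{n}\mlenj\big)$ given $\event$ being proportional to the Gaussian density times $W_0$ (the content of Proposition \ref{prop:pivot_exact_distr}, via Lemma \ref{lem:joint_density} and the $(\Uj,\Vj)$ reparametrization), combined with the affine bijection $\cZ\mapsto\big(\sqrt{n}\mlej,\sqrt{n}\mlenj\big)$ from Proposition \ref{prop:ls_exact}. The only cosmetic difference is that you compute both sides as integrals in $(m_1,m_2)$ coordinates and match, whereas the paper pushes the conditional density forward to $\cZ$ coordinates, forms the Radon--Nikodym ratio $R(z)=\big(\EE_{\cN}[W_0(\cZ)]\big)^{-1}W_0(z)$, and expands $\EE_\cN[\,\cdot\mid\event\,]=\EE_\cN[\,\cdot\;R(\cZ)\,]$.
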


\begin{proof}
As shown in the proof of Proposition \ref{prop:pivot_exact_distr}, the conditional density of
\[
    \sqrt{n} \mlej, \sqrt{n} \mlenj \Big\lvert \event
\]
is equal to
\begin{equation}
\label{eq:conditional_ratio}
    \dfrac{
	\phi \big(\sqrt{n} \mlej; \sqrt{n} \truestarj, \varmlej\big)
    \phi \big(\sqrt{n} \mlenj; \sqrt{n} \truenj, \Lambda_2 \Lambda_2 ^ \top\big)
	W_0 \big(\sqrt{n} \mlej, \sqrt{n} \mlenj\big)}
	{
	\bigintsss_{-\infty} ^ \infty \bigintsss_{-\infty} ^ \infty \phi \big(x_1; \sqrt{n} \truestarj, \varmlej\big)
    \phi \big(x_2; \sqrt{n} \truenj, \Lambda_2 \Lambda_2 ^ \top\big)
	W_0 \big(x_1, x_2\big) d x_1 d x_2
	}.    
\end{equation}
Using the definition of $\cZ$:
\[
    \sqrt{n} \begin{pmatrix}
		\mlej -  \truestarj \\ 
		\mlenj - \truenj
	\end{pmatrix}
	= \begin{pmatrix} \Lambda_1 \\ \Lambda_2 \end{pmatrix} \cZ,
\]
it follows that the conditional density of $\cZ$ at the point $z$ is equal to
$$
\dfrac{\phi(z; 0_p, I_{p,p}) W_0(z)}{\int \phi(z'; 0_p, I_{p,p}) W_0(z') dz'}= \left(\EE_{\cN} \left[W_0 \left(\cZ\right)\right]\right)^{-1} \phi(z; 0_p, I_{p,p}) W_0(z).
$$
This means that the ratio between the conditional and unconditional densities of $\cZ$, at the point $z$, is equal to
$$
R(z) = \left(\EE_{\cN} \left[W_0 \left(\cZ\right)\right]\right)^{-1} W_0(z),
$$
and therefore 
\begin{equation*}
    \begin{aligned}
\EE_{\cN} \left[\arb \circ \cP ^ {j \cdot E} (\cZ) \Big\lvert \event\right] 
&= \EE_{\cN} \left[\arb \circ \cP ^ {j \cdot E} (\cZ) R(\cZ) \right]\\
&= \dfrac{\EE_{\cN} \left[\arb \circ \cP ^ {j \cdot E} (\cZ) \times W_0 \left(\cZ\right)\right]}
	{\EE_{\cN} \left[W_0 \left(\cZ\right)\right]}.
 \end{aligned}
\end{equation*}
\end{proof}

%------------------------------------------------------------------------------
%------------------------------------------------------------------------------

\begin{prop}
\label{prop:conditional_expect}
Under the condition in Assumption \ref{aspt:randomization}, we have that
\begin{equation*}
    \lim_n \sup_{\FF_n} \left|\EE_{\FF_n} \left[\arb \circ \cP ^ {j \cdot E} (\Zn) \Big\lvert \event\right] - \dfrac{\EE_{\FF_n} \left[\arb \circ \cP ^ {j \cdot E} (\Zn) \times W_0 \left(\Zn\right)\right]}
	{\EE_{\FF_n} \left[W_0 \left(\Zn\right)\right]}\right| = 0.
\end{equation*}
\end{prop}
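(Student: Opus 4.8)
The plan is to establish an \emph{exact} reweighting identity in which the density-$q_n$ weight $W_{0,n}$ of Assumption \ref{aspt:randomization} plays the role that $W_0$ plays in the idealized Gaussian computation of Proposition \ref{cor:conditional_expect_n}, and then to pass from $W_{0,n}$ to $W_0$ using Assumption \ref{aspt:randomization} together with the boundedness of $\arb\circ\cP^{j\cdot E}$.

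\emph{Step 1: an exact $W_{0,n}$-reweighting of the conditional law.} First I would absorb the remainder $\Delta_2$ of Proposition \ref{prop:taylor_expansion} into the randomization by setting $\sqrt{n}\widetilde{\omega}_n=\sqrt{n}\rand+\Delta_2$ as in \eqref{tilde:rand}; that proposition then reads, \emph{exactly},
\[
T\sqrt{n}|\lasso|+\sqrt{n}\D=\sqrt{n}\widetilde{\omega}_n-M^j\sqrt{n}\mlej-N^j\sqrt{n}\mlenj,
\]
which is the precise SQR counterpart of the exact least-squares identity in Proposition \ref{prop:ls_exact}, with $\sqrt{n}\widetilde{\omega}_n$ (Lebesgue density $q_n$ for $n\ge n_0$) in place of $\sqrt{n}\rand$ (density $\phi(\cdot;0_p,\varrand)$). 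I would then retrace the derivation leading to Proposition \ref{cor:conditional_expect_n}: the change of variables of Lemma \ref{lem:joint_density}, the marginalization over $\sqrt{n}\Uj$ along the conditioning interval of Proposition \ref{prop:conditional_event}, and the Radon--Nikodym step of Proposition \ref{cor:conditional_expect_n}, now carrying the (non-Gaussian) law of $\Zn$ and the density $q_n$ through the computation. The triangular Jacobian $|\cJ|$ is unchanged and the Gaussian randomization factor is replaced by $q_n$, so the conditional law of $\Zn$ given $\event$ acquires a density with respect to the unconditional law of $\Zn$ proportional to $W_{0,n}(\Zn)$. Hence, for all $n\ge n_0$ and every $\FF_n\in\cF_n$,
\[
\EE_{\FF_n}\!\left[\arb\circ\cP^{j\cdot E}(\Zn)\,\Big\lvert\,\event\right]
=\dfrac{\EE_{\FF_n}\!\left[\arb\circ\cP^{j\cdot E}(\Zn)\,W_{0,n}(\Zn)\right]}{\EE_{\FF_n}\!\left[W_{0,n}(\Zn)\right]}.
\]

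\emph{Step 2: replacing $W_{0,n}$ by $W_0$.} Since $\cP^{j\cdot E}(\cdot)$ is, by \eqref{pivot:std}, a ratio of integrals of the nonnegative integrand $\phi(\cdot)\,W_0(\cdot)$, it takes values in $[0,1]$; as $\arb$ is continuous, $C_{\arb}:=\sup_{t\in[0,1]}|\arb(t)|<\infty$ and $|\arb\circ\cP^{j\cdot E}|\le C_{\arb}$. Writing $\delta_n:=\sup_{\FF_n\in\cF_n}\sup_{z}\left|W_{0,n}(z)/W_0(z)-1\right|$, which tends to $0$ by Assumption \ref{aspt:randomization} (read uniformly over $\cF_n$, as throughout Section \ref{sec:theory}), and abbreviating $a=\EE_{\FF_n}[\arb\circ\cP^{j\cdot E}(\Zn)W_0(\Zn)]$, $b=\EE_{\FF_n}[W_0(\Zn)]>0$, with $\tilde a,\tilde b$ the analogues for $W_{0,n}$, the elementary bounds $|\tilde a-a|\le C_{\arb}\delta_n b$, $|\tilde b-b|\le\delta_n b$, $|a|\le C_{\arb}b$ and $\tilde b\ge(1-\delta_n)b$ give
\[
\left|\dfrac{\tilde a}{\tilde b}-\dfrac{a}{b}\right|
\le\dfrac{|\tilde a-a|}{\tilde b}+\dfrac{|a|\,|\tilde b-b|}{\tilde b\,b}
\le\dfrac{2C_{\arb}\,\delta_n}{1-\delta_n}.
\]
Taking $\sup_{\FF_n\in\cF_n}$ and combining with the identity of Step 1 yields the claim, since the right-hand side is free of $\FF_n$ and tends to $0$.

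\emph{The main obstacle} is Step 1. Unlike $\sqrt{n}\rand$ in the least-squares setting, $\sqrt{n}\widetilde{\omega}_n$ is neither exactly Gaussian nor exactly independent of the data, and $(\sqrt{n}\mlej,\sqrt{n}\mlenj)$ is only asymptotically Gaussian, so Lemma \ref{lem:joint_density} cannot be invoked as a black box. Making Step 1 rigorous requires checking that, once $\Delta_2$ has been folded into the randomization, the map $(\sqrt{n}\mlej,\sqrt{n}\mlenj,\sqrt{n}\widetilde{\omega}_n)\mapsto(\sqrt{n}\mlej,\sqrt{n}\mlenj,\sqrt{n}|\lasso|,\Z)$ remains a bijection with the triangular Jacobian $|\cJ|$, that the relevant joint law factors appropriately, and that $q_n$---whose existence is exactly what Assumption \ref{aspt:randomization} supplies---is the density that then appears in the transformed weight. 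Once this bookkeeping is settled, the conditional-density computation, the marginalization and the algebra of Step 2 are all routine.
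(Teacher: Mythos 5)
Your proof takes essentially the same route as the paper's: establish the exact $W_{0,n}$-reweighting identity by retracing Lemma \ref{lem:joint_density} and Proposition \ref{cor:conditional_expect_n} with $\widetilde\omega_n$ and its density $q_n$ in place of $\omega_n$ and $\phi(\cdot;0_p,\varrand)$, and then pass from $W_{0,n}$ to $W_0$ via Assumption \ref{aspt:randomization} together with boundedness of $\arb\circ\cP^{j\cdot E}$ and a two-term triangle-inequality split of the ratio. Your $\delta_n$-parametrized ratio bound in Step 2 is a cosmetic variant of the paper's $\cB_1+\cB_2$ decomposition, and the obstacle you flag in Step 1 --- that $\widetilde\omega_n$ is not exactly independent of the data, so the factorization underlying Lemma \ref{lem:joint_density} is not automatic --- is a genuine subtlety that the paper itself glosses over with the phrase ``following the same proof strategy as Proposition \ref{cor:conditional_expect_n}.''
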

\begin{proof}
Following the same proof strategy as Proposition \ref{cor:conditional_expect_n}, we have that
\[
    \EE_{\FF_n} \left[\arb \circ \cP ^ {j \cdot E} (\Zn) \Big\lvert \event\right]
    = \dfrac{\EE_{\FF_n} \left[\arb \circ \cP ^ {j \cdot E} (\Zn) \times W_{0, n} \left(\Zn\right)\right]}
	{\EE_{\FF_n} \left[W_{0, n} \left(\Zn\right)\right]}.
\]
Using this display, we note that our assertion is equivalent to
\begin{equation}
\label{eq:prop_assertion2}
    \lim_n \sup_{\FF_n} \left|\dfrac{\EE_{\FF_n} \left[\arb \circ \cP ^ {j \cdot E} (\Zn) \times W_{0, n} \left(\Zn\right)\right]}
	{\EE_{\FF_n} \left[W_{0, n} \left(\Zn\right)\right]}
    - \dfrac{\EE_{\FF_n} \left[\arb \circ \cP ^ {j \cdot E} (\Zn) \times W_0 \left(\Zn\right)\right]}
	{\EE_{\FF_n} \left[W_0 \left(\Zn\right)\right]}\right| = 0,
\end{equation}
and hereafter we prove the claim in the above display.

Now, we can use the triangle inequality to bound the difference on the left-hand side of \eqref{eq:prop_assertion2} by
\[
    \cB_1 + \cB_2,
\]
where
\begin{gather*}
    \cB_1 = \lim_n \sup_{\FF_n} \left|\dfrac{\EE_{\FF_n} \left[\arb \circ \cP ^ {j \cdot E} (\Zn) \times W_{0, n} \left(\Zn\right)\right]}
	{\EE_{\FF_n} \left[W_{0, n} \left(\Zn\right)\right]}
    - \dfrac{\EE_{\FF_n} \left[\arb \circ \cP ^ {j \cdot E} (\Zn) \times W_{0, n} \left(\Zn\right)\right]}
	{\EE_{\FF_n} \left[W_0 \left(\Zn\right)\right]}\right|, \\
    \cB_2 = \lim_n \sup_{\FF_n} \left|\dfrac{\EE_{\FF_n} \left[\arb \circ \cP ^ {j \cdot E} (\Zn) \times W_{0, n} \left(\Zn\right)\right]}
	{\EE_{\FF_n} \left[W_0 \left(\Zn\right)\right]}
    - \dfrac{\EE_{\FF_n} \left[\arb \circ \cP ^ {j \cdot E} (\Zn) \times W_0 \left(\Zn\right)\right]}
	{\EE_{\FF_n} \left[W_0 \left(\Zn\right)\right]}\right|.
\end{gather*}
Observe that
\[
\begin{aligned}
	\cB_1 
	&\leq \lim_n \sup_{\FF_n} \int |\arb \circ \cP ^ {j \cdot E} (\Zn)| \left|\frac{W_{0, n} \left(\Zn\right)}{\EE_{\FF_n} \left[W_{0, n} \left(\Zn\right)\right]} - \frac{W_{0, n} \left(\Zn\right)}{\EE_{\FF_n} \left[W_0 \left(\Zn\right)\right]}\right| d\,\FF_n (\Zn) \\
	&\leq \lim_n \sup_{\FF_n} \underset{\arb \in \mathbb{C} ^ 3(\RR, \RR)}{\sup} |\arb| \times \left|\frac{\EE_{\FF_n} \left[W_{0, n} \left(\Zn\right) - W_0 \left(\Zn\right)\right]}{\EE_{\FF_n} \left[W_0 \left(\Zn\right)\right]}\right|,
\end{aligned}
\]
and that
\[
\begin{aligned}
	\cB_2 
    &= \lim_n \sup_{\FF_n} \left|\dfrac{\EE_{\FF_n} \left[\arb \circ \cP ^ {j \cdot E} (\Zn) \times W_{0, n} \left(\Zn\right)\right] - \EE_{\FF_n} \left[\arb \circ \cP ^ {j \cdot E} (\Zn) \times W_0 \left(\Zn\right)\right]}
	{\EE_{\FF_n} \left[W_0 \left(\Zn\right)\right]}\right| \\
    &\leq \lim_n \sup_{\FF_n} \underset{\arb \in \mathbb{C} ^ 3(\RR, \RR)}{\sup} |\arb| \times \left|\frac{\EE_{\FF_n} \left[W_{0, n} \left(\Zn\right) - W_0 \left(\Zn\right)\right]}{\EE_{\FF_n} \left[W_0 \left(\Zn\right)\right]}\right|.
\end{aligned}
\]
The condition stated in Assumption \ref{aspt:randomization} implies that
\[
\begin{aligned}
    \lim_n \sup_{\FF_n} \left|\frac{\EE_{\FF_n} \left[W_{0, n} \left(\Zn\right) - W_0 \left(\Zn\right)\right]}{\EE_{\FF_n} \left[W_0 \left(\Zn\right)\right]}\right| =0,
\end{aligned}
\]
which proves \eqref{eq:prop_assertion2}. 
\end{proof}

\subsection{Behavior of pivot and weight function}
\label{appendix:smoothness}

In Propositions \ref{prop:pivot_bound}, \ref{prop:W0} and \ref{prop:An_upper_bound},  we denote the partial derivative of a multivariate function $f: \RR ^ p \rightarrow \RR$, evaluated at $x = (x_1, x_2, \ldots, x_p) ^ \top \in \RR ^ p$, by
\[ 
    \partial_{i_1, \ldots, i_m} ^ m f(x) = \frac{\partial ^ m f(x)}{\partial x_{i_1} \ldots \partial x_{i_m}}.
\]

\begin{prop}
\label{prop:pivot_bound}
The third order partial derivatives of our pivot satisfy
\[
	\left|\partial_{i_1, i_2, i_3} ^ 3 \cP ^ {j\cdot E}(z)\right| 
    \leq \sum_{l = 0} ^ 3 C_l \big\|\mn ^ j z + \tail\big\| ^ l,
\]
where $C_0, \ldots, C_3$ are constants that do not depend on $n$.
\end{prop}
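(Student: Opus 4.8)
The plan is to write the pivot $\cP^{j\cdot E}(z)$ explicitly as a ratio of two integrals, and then bound each of its three partial derivatives by carefully differentiating this ratio. Recall from \eqref{pivot:std} that, after absorbing the dependence on $z$ into the upper limit of integration and the shift inside $W_0$, the pivot has the form
\[
    \cP^{j\cdot E}(z) = \frac{F\big(\Lambda_1 z + \sqrt{n}\truej;\, \Lambda_2 z + \sqrt{n}\truenj\big)}{G\big(\Lambda_2 z + \sqrt{n}\truenj\big)},
\]
where $F(a; b) = \int_{-\infty}^{a} \phi(x; \sqrt{n}\truej, \varmlej)\, W_0(x, b)\, dx$ and $G(b) = \int_{-\infty}^{\infty} \phi(x; \sqrt{n}\truej, \varmlej)\, W_0(x, b)\, dx$. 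The key structural point is that $W_0(x, b) = \int_{I_1^j}^{I_2^j} \phi(Q^j t + M^j x + N^j \sqrt{n} b + P^j; 0_p, \varrand)\, dt$ is, after the change of variables making it depend on $z$ through $\mn^j z + \tail$, a Gaussian integral over a fixed bounded interval $[I_1^j, I_2^j]$; its derivatives in $z$ are therefore controlled by polynomials in $\|\mn^j z + \tail\|$ times bounded Gaussian factors.

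First I would establish a lemma (this is essentially the content of the forthcoming Proposition~\ref{prop:W0}) giving bounds of the form $|\partial^m_{i_1,\dots,i_m} W_0(z)| \le c_m \big(1 + \|\mn^j z + \tail\|^m\big)$ for $m = 0, 1, 2, 3$, with constants independent of $n$; this follows because each derivative pulls down factors of $\varrand^{-1}(Q^j t + \mn^j z + \tail)$ from the Gaussian integrand, and over the compact $t$-range $[I_1^j, I_2^j]$ these are bounded by affine functions of $\|\mn^j z + \tail\|$, while the Gaussian density itself is uniformly bounded. A parallel, simpler bound holds for the one-dimensional integrand $\phi(x; \sqrt{n}\truej, \varmlej)\, W_0(x, b)$ and hence for $F$ and $G$ and their partials: derivatives of $F$ in its upper limit just evaluate the integrand (bounded), derivatives in $b$ differentiate $W_0$ under the integral sign, and $G \ge$ some strictly positive quantity since the integrand is a genuine (sub-)probability-weighted Gaussian that does not vanish identically.

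Then I would apply the quotient rule three times to $\cP^{j\cdot E} = F/G$. Each third-order partial derivative expands into a finite sum of terms of the schematic form $(\partial^{a} F)(\partial^{b_1} G)\cdots(\partial^{b_k} G) / G^{k+1}$ with $a + \sum b_i = 3$. Using the lower bound on $G$ to control the denominators and the polynomial-growth bounds on the numerator factors, each such term is dominated by $\sum_{l=0}^{3} c_l \|\mn^j z + \tail\|^l$; collecting the (finitely many) terms and relabeling constants gives the claimed bound $\big|\partial^3_{i_1,i_2,i_3}\cP^{j\cdot E}(z)\big| \le \sum_{l=0}^{3} C_l \|\mn^j z + \tail\|^l$ with $C_0,\dots,C_3$ independent of $n$. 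The $n$-independence is the one point requiring care: the matrices $\mn^j, M^j, N^j, \Lambda_1, \Lambda_2, \varrand, \varmlej$ and the interval endpoints $I_1^j, I_2^j$ all depend on $n$ only through population quantities $J, H, \Omega$ that are assumed to stabilize (bounded and bounded away from singular under Assumption~\ref{aspt:moment_bound}), and the factor $\sqrt{n}$ has been packaged into $\tail$, so constants arising from norms of these objects are uniformly bounded.

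The main obstacle is the bookkeeping needed to verify the uniform lower bound on $G$ and, relatedly, to confirm that every constant emerging from the differentiation is genuinely $n$-free rather than hiding a latent dependence on $\sqrt{n}$ through $\truej$ or $\truenj$; this is why the bound is phrased in terms of $\|\mn^j z + \tail\|$ rather than $\|z\|$ — it is exactly the combination in which the $n$-dependent shifts have been absorbed, so that the residual constants depend only on the stabilized population matrices. Once that packaging is respected, the rest is the routine quotient-rule expansion sketched above.
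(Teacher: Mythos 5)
There is a genuine gap in your argument, and it is the step you flag as ``bookkeeping'': the claim that the denominator
\[
G(z) \;=\; \int_{-\infty}^{\infty} \phi\big(x;\sqrt{n}\truej,\varmlej\big)\,W_0\big(x,\Lambda_2 z + \sqrt{n}\truenj\big)\,dx
\]
admits a strictly positive lower bound independent of $n$ and $z$ is simply false. For fixed $n$, $W_0(x,\cdot)$ is an integral of a Gaussian density over a $t$-interval, so $G(z)\to 0$ as $\|z\|\to\infty$; and across $n$, the paper's own Proposition~\ref{prop:denominator} shows $\EE_{\cN}[W_0(\cZ)]$ decays like $r_n^{-1}\operatorname{Exp}(-r_n\bar b,\cdot)$, which vanishes in the regime $r_n\to\infty$ that the theory must handle. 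Once $G$ can be arbitrarily small, your quotient-rule expansion---schematically $(\partial^a F)(\partial^{b_1}G)\cdots/G^{k+1}$ with the numerator factors bounded \emph{absolutely} by polynomials in $\|\mn^j z+\tail\|$---produces constants that blow up; they depend on $n$ and $z$ through $1/G$, which is exactly what the proposition forbids.

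The mechanism the paper actually uses is different in a way that dissolves this problem: instead of an absolute bound on the $W_0$-derivative, it establishes the \emph{relative} (multiplicative) bound
\[
\big|\partial_z W_0\big(x,\Lambda_2 z + \sqrt{n}\truenj\big)\big| \;\lesssim\; \big\|\mn^j z + \tail\big\| \times W_0\big(x,\Lambda_2 z + \sqrt{n}\truenj\big),
\]
since each differentiation only brings down a linear factor inside the Gaussian integrand. Substituting this into the Leibniz-rule expansion of $\partial \cP^{j\cdot E}$, every term becomes a polynomial in $\|\mn^j z+\tail\|$ multiplied by a ratio of integrals in which the same weight $W_0$ appears in numerator and denominator---ratios such as $\cP^{j\cdot E}(z)\in[0,1]$ or $(\int\phi\,\partial W_0)/(\int\phi\,W_0)$---and these ratios are uniformly bounded without any lower bound on $G$. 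Iterating to third order keeps this structure. Your side lemma $|\partial^m W_0|\le c_m(1+\|\mn^j z+\tail\|^m)$ is also not what Proposition~\ref{prop:W0} actually proves (that result carries a Gaussian factor $\operatorname{Exp}(\mn^j z+\tail,\Theta)$ and is used for a different purpose later); the proof of the present proposition does not need it. In short, you have the right object to differentiate and the right Leibniz/quotient calculus, but you need the bound on $\partial W_0$ to be \emph{relative to} $W_0$ so the normalization cancels, rather than absolute plus a lower bound on the denominator that does not exist.
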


\begin{proof}
% \textcolor{red}{State both representations of $W_0$ clearly. Could we put the other representation here? so that we can sort out the notations}
We begin by computing the first derivative of our pivot. 
Recall that we have
\[
    \cP ^ {j\cdot E}(z)
    = \dfrac{
	\bigintsss_{-\infty} ^ {\Lambda_1 z + \sqrt{n} \truestarj} \phi \left(x; \sqrt{n} \truestarj, \varmlej\right)
	W_0 \left(x, \Lambda_2 z + \sqrt{n} \truenj\right) dx}
	{
	\bigintsss_{-\infty} ^ \infty \phi \left(x; \sqrt{n} \truestarj, \varmlej\right)
	W_0 \left(x, \Lambda_2 z + \sqrt{n} \truenj\right) dx
	},
\]
and 
\[
    W_0 (x_1, x_2)
    = \int_{I^j_1} ^ {I^j_2} \phi \left(Q ^ j t + M ^ j x_1 + N ^ j x_2 + P ^ j; 0_p, \varrand\right) dt.
\]
%Let 
%\[
%    C_{\cP} = \bigintsss_{-\infty} ^ \infty \phi \left(x; \sqrt{n} \truej, \varmlej\right)
%	W_0 \left(x, \Lambda_2 z + \sqrt{n} \truenj\right) dx.
%\]
Using the Leibniz integral rule, we note that
%\begin{equation}
%\label{eq:Leibniz}
%\begin{aligned}
%    \frac{\partial \pivot}{\partial z}
%    = &\  C_{\cP} ^ {-1} \phi \Big(\Lambda_1 z + \sqrt{n} \truej; \sqrt{n} \truej, \varmlej\Big)
%	W_0 \big(z\big) \Lambda_1 \\
%    &+ C_{\cP} ^ {-1} \bigintsss_{-\infty} ^ {\Lambda_1 z + \sqrt{n} \truej} 
%    \frac{\partial \phi \left(x; \sqrt{n} \truej, \varmlej\right) W_0 \left(x, \Lambda_2 z + \sqrt{n} \truenj\right)}{\partial z} dx.
%\end{aligned}
%\end{equation}
\begin{equation}
\label{eq:Leibniz}
\begin{aligned}
    \partial \cP ^ {j\cdot E}(z)
    = &\dfrac{
	\bigintsss_{-\infty} ^ {\Lambda_1 z + \sqrt{n} \truestarj} \phi \left(x; \sqrt{n} \truestarj, \varmlej\right)
	\partial W_0 \left(x, \Lambda_2 z + \sqrt{n} \truenj\right) dx}
	{
	\bigintsss_{-\infty} ^ \infty \phi \left(x; \sqrt{n} \truestarj, \varmlej\right)
	W_0 \left(x, \Lambda_2 z + \sqrt{n} \truenj\right) dx
	} \\
	&+ \dfrac{
	\phi \left(\Lambda_1 z + \sqrt{n} \truestarj; \sqrt{n} \truestarj, \varmlej\right)
	W_0 \left(\Lambda_1 z + \sqrt{n} \truestarj, \Lambda_2 z + \sqrt{n} \truenj\right) \Lambda_1}
	{
	\bigintsss_{-\infty} ^ \infty \phi \left(x; \sqrt{n} \truestarj, \varmlej\right)
	W_0 \left(x, \Lambda_2 z + \sqrt{n} \truenj\right) dx
	}\\
    &- \cP ^ {j\cdot E}(z) \times \dfrac{
	\bigintsss_{-\infty} ^ \infty \phi \left(x; \sqrt{n} \truestarj, \varmlej\right)
	\partial W_0 \left(x, \Lambda_2 z + \sqrt{n} \truenj\right) dx
	}
	{
	\bigintsss_{-\infty} ^ \infty \phi \left(x; \sqrt{n} \truestarj, \varmlej\right)
	W_0 \left(x, \Lambda_2 z + \sqrt{n} \truenj\right) dx
	}.
\end{aligned}
\end{equation}
%By the Chain rule, we have
%\[
%\begin{aligned}
%    \frac{\partial W_0 \left(x, \Lambda_2 z + \sqrt{n} \truenj\right)}{\partial z}
%    &= \frac{\partial W_0 \left(x, \Lambda_2 z + \sqrt{n} \truenj\right)}{\partial \left(x, \Lambda_2 z + \sqrt{n} \truenj\right) ^ \top} 
%    \times \frac{\partial \left(x, \Lambda_2 z + \sqrt{n} \truenj\right) ^ \top}{\partial z} \\
%    &= \frac{\partial W_0 \left(x, \Lambda_2 z + \sqrt{n} \truenj\right)}{\partial \left(x, \Lambda_2 z + \sqrt{n} \truenj\right) ^ \top}
%    \times 
%    \begin{bmatrix}
%        0_{1, p} \\
%        \Lambda_2
%    \end{bmatrix}
%\end{aligned}
%\]
It follows directly from the definition of $W_0$ that
\[
\begin{aligned}
	&\frac{\partial W_0 \left(x, \Lambda_2 z + \sqrt{n} \truenj\right)}{\partial z} 
	= \frac{\partial W_0 \left(x, \Lambda_2 z + \sqrt{n} \truenj\right)}{\partial \left(x, \Lambda_2 z + \sqrt{n} \truenj \right)}  \times 
    \frac{\partial \left(x, \Lambda_2 z + \sqrt{n} \truenj \right)}{\partial z} \\
    &\quad\quad = \int_{I^j_1} ^ {I^j_2} -2 \left(Q ^ j t + P ^ j + M ^ j x + N ^ j \Lambda_2 z + N ^ j \sqrt{n} \truenj\right) ^ \top \varrand ^ {-1}
    \begin{bmatrix}
        M ^ j & N ^ j
    \end{bmatrix}
    \begin{bmatrix}
        0_{1, p} \\
        \Lambda_2
    \end{bmatrix} \\
    &\quad\quad\quad\quad \times \phi \left(Q ^ j t + P ^ j + M ^ j x + N ^ j \Lambda_2 z + N ^ j \sqrt{n} \truenj; 0_p, \varrand\right) dt \\
    &\quad\quad \leq \big\|\mn z + \tail\big\| \times W_0 \left(x, \Lambda_2 z + \sqrt{n} \truenj\right).
\end{aligned}
\]
% \textcolor{red}{directly use the values of $\cL_2$ and $\cL_3$ in the above expression}. 
Thus, the first term on the right-hand side of \eqref{eq:Leibniz} is bounded as
\[
\begin{aligned}
	\dfrac{
	\bigintsss_{-\infty} ^ {\Lambda_1 z + \sqrt{n} \truestarj} \phi \left(x; \sqrt{n} \truestarj, \varmlej\right)
	\partial W_0 \left(x, \Lambda_2 z + \sqrt{n} \truenj\right) dx}
	{
	\bigintsss_{-\infty} ^ \infty \phi \left(x; \sqrt{n} \truestarj, \varmlej\right)
	W_0 \left(x, \Lambda_2 z + \sqrt{n} \truenj\right) dx
	}
	\leq \big\|\mn z + \tail\big\| \times \cP ^ {j\cdot E}(z).
\end{aligned}
\]
The second and third terms in the right hand side of \eqref{eq:Leibniz} can be bounded similarly.
This leads us to claim that
\[
\begin{aligned}
   \left|\partial_{i_1} \cP ^ {j\cdot E}(z)\right|
    &\leq C_0 \big\|\mn z + \tail\big\|,
\end{aligned}
\]
where $C_0$ is a constant.
A similar strategy can be applied to the higher order derivatives of our pivot, which we omit.
\end{proof}

%------------------------------------------------------------------------------
\begin{prop}
\label{prop:W0}
Define
\[
    \Theta = \left(\varrand ^ {-1} - \frac{\varrand ^ {-1} Q ^ j Q ^ {j\top} \varrand ^ {-1}}{Q ^ {j\top} \varrand ^ {-1} Q ^ j}\right).
\]
We note that $\Theta \succeq 0$ and that
\[
    \left|\partial_{i_1, i_2, i_3} ^ 3 W_0 \left(z\right)\right|	
	\leq \sum_{l = 0} ^ 3 C_l' \big\|\mn ^ j z + \tail\big\| ^ l \times \operatorname{Exp} \left(\mn ^ j z + \tail, \Theta\right),
\]
where $C_0', \ldots, C_3'$ are constants that do not depend on $n$.
\end{prop}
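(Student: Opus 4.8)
The plan is to make $W_0$ completely explicit by integrating out the one-dimensional variable $t$, after which the third-order bound reduces to a routine product-rule estimate. I would proceed in three steps. \emph{Step 1 (positivity of $\Theta$).} Because $\varrand\succ 0$, the map $(a,b)\mapsto a^\top\varrand^{-1}b$ is an inner product on $\RR^p$, and the Cauchy--Schwarz inequality for it reads $\big(Q^{j\top}\varrand^{-1}w\big)^2\le\big(Q^{j\top}\varrand^{-1}Q^j\big)\big(w^\top\varrand^{-1}w\big)$ for every $w\in\RR^p$. Since $w^\top\Theta w = w^\top\varrand^{-1}w - \big(Q^{j\top}\varrand^{-1}w\big)^2/\big(Q^{j\top}\varrand^{-1}Q^j\big)$, this gives $w^\top\Theta w\ge 0$, with equality exactly when $w\in\mathrm{span}(Q^j)$; hence $\Theta\succeq 0$ and $\mathrm{rank}\,\Theta=p-1$.

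\emph{Step 2 (separating a Gaussian kernel).} Put $v=\mn^j z+\tail$, so that $W_0(z)=\int_{I^j_1}^{I^j_2}\phi\big(Q^j t+v;\,0_p,\varrand\big)\,dt$. I would expand the exponent and complete the square in $t$: with $a:=Q^{j\top}\varrand^{-1}Q^j>0$ and $b(v):=Q^{j\top}\varrand^{-1}v$, one has $(Q^j t+v)^\top\varrand^{-1}(Q^j t+v)=a\big(t+b(v)/a\big)^2+v^\top\Theta v$, the identity $v^\top\varrand^{-1}v-b(v)^2/a=v^\top\Theta v$ being exactly the definition of $\Theta$. Carrying out the Gaussian integral over $t\in[I^j_1,I^j_2]$ then yields
\[
	W_0(z)=c\,\exp\!\big(-\tfrac12\,v^\top\Theta v\big)\,\big[\Phi\big(\sqrt{a}(I^j_2+b(v)/a)\big)-\Phi\big(\sqrt{a}(I^j_1+b(v)/a)\big)\big]
\]
for a fixed constant $c$. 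Writing $g(v)$ for the bracketed factor, note that $g$ is bounded by $1$, depends on $v$ only through the affine functions $v\mapsto\sqrt{a}I^j_k+b(v)/\sqrt{a}$, and has all partial derivatives bounded by constants free of $n$, because $\Phi^{(m)}$ equals a Hermite polynomial times the standard normal density and is therefore bounded for every $m$.

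\emph{Step 3 (differentiation and chain rule).} Viewing $W_0$ as the function $v\mapsto c\exp(-\tfrac12 v^\top\Theta v)g(v)$, I would observe that each $v$-derivative of $\exp(-\tfrac12 v^\top\Theta v)$ produces a factor from $-\Theta v$, so up to order three the result is a polynomial of degree at most three in the entries of $\Theta v$ --- bounded by $\opnorm{\Theta}\|v\|$ in absolute value --- times the same exponential; applying the Leibniz rule together with the bounded derivatives of $g$ gives $\big|\partial^3_{i_1 i_2 i_3}W_0\big|\le\sum_{l=0}^3\widetilde C_l\,\|v\|^l\exp(-\tfrac12 v^\top\Theta v)$ with $n$-independent $\widetilde C_l$. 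Finally, since $v=\mn^j z+\tail$ with $\mn^j$ a fixed matrix, one last chain-rule step converts the $v$-derivatives into $z$-derivatives at the price of a factor $\opnorm{\mn^j}^3$ and restores $v=\mn^j z+\tail$ inside the bound, yielding the claim with $C_l'=\opnorm{\mn^j}^3\widetilde C_l$ and $\operatorname{Exp}(\mn^j z+\tail,\Theta)=\exp\big(-\tfrac12(\mn^j z+\tail)^\top\Theta(\mn^j z+\tail)\big)$ up to a normalizing constant. The main obstacle I anticipate is purely organizational: tracking the product/chain-rule expansion so that each term is assigned the matching power $\|\mn^j z+\tail\|^l$, and confirming that $\opnorm{\mn^j}$, $\opnorm{\Theta}$, $a$, and $\|\varrand^{-1}Q^j\|$ are bounded uniformly (which follows from the fixed randomization covariance and the moment structure already in force) so that the constants $C_l'$ genuinely do not depend on $n$.
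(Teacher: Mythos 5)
Your proposal is correct and follows essentially the same route as the paper's proof: prove $\Theta\succeq 0$ via Cauchy--Schwarz in the $\varrand^{-1}$-inner product, complete the square in $t$ so that $W_0(z)$ factors as $\operatorname{Exp}(\mn^j z+\tail,\Theta)$ times a bounded Gaussian-tail factor depending on $z$ only through the affine argument of $\Phi$, and then differentiate this product to third order. The only cosmetic differences are that you carry out the $t$-integral explicitly in terms of $\Phi$ rather than leaving it as a shifted Gaussian integral, and you spell out the product/chain-rule bookkeeping that the paper compresses into a single appeal to the Leibniz integral rule.
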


\begin{proof}
For any vector $\xi \in \RR ^ p$, we have:
\[
    \xi ^ \top \Theta  \xi 
    = \xi ^ \top \varrand ^ {-1} \xi - \xi ^ \top \frac{\varrand ^ {-1} Q ^ j Q ^ {j\top} \varrand ^ {-1}}{Q ^ {j\top} \varrand ^ {-1} Q ^ j} \xi
    = \frac{1}{Q ^ {j\top} \varrand ^ {-1} Q ^ j} \left[\big(\xi ^ \top \varrand ^ {-1} \xi\big) \big(Q ^ {j\top} \varrand ^ {-1} Q ^ j\big) - \big(\xi ^ \top \varrand ^ {-1} Q ^ j\big) ^ 2\right].
\]
Let $\bar \xi = \varrand ^ {-1/2} \xi$ and $\bar Q ^ j = \varrand ^ {-1/2} Q ^ j$. 
By the Cauchy-Schwarz inequality, it follows that
\[
    \big(\xi ^ \top \varrand ^ {-1} \xi\big) \big(Q ^ {j\top} \varrand ^ {-1} Q ^ j\big) - \big(\xi ^ \top \varrand ^ {-1} Q ^ j\big) ^ 2
    = \big(\bar \xi ^ \top \bar \xi\big) \big(\bar Q ^ {j\top} \bar Q ^ j\big) -  \big(\bar \xi ^ \top \bar Q ^ j\big) ^ 2
    \ge 0.
\]
This implies that
\[
    \xi ^ \top \Theta  \xi \ge 0
\]
for an arbitrary vector $\xi$. 
Therefore, we have proved that $\Theta$ is a positive semidefinite matrix.

To prove the second part of our claim, we observe that
\[
\begin{aligned}
	W_0 (z) \!
	&\propto \!\! \int_{I_1 ^ j} ^ {\infty} \!
		\exp \left\{- \frac{1}{2} (Q ^ j t + \mn ^ j z + \tail) ^ \top \varrand ^ {-1} (Q ^ j t + \mn ^ j z + \tail)\right\} dt \\
	% &\propto \int_{I_1 ^ j} ^ {\infty}
	% 	\exp \left\{- \frac{1}{2} (\mn ^ j z + \tail) ^ \top \Theta (\mn ^ j z + \tail)\right\} \\
	%  & \;\;\;\;\;\;\;\;\;\;\;\;\;\;\;\;\;\;\;\;\;\;\;\;\;\;\;\times \exp \left\{- \frac{1}{2} Q ^ {j\top} \varrand ^ {-1} Q ^ j \bigg(t + \frac{Q ^ {j\top} \varrand ^ {-1} (\mn ^ j z + \tail)}{Q ^ {j\top} \varrand ^ {-1} Q ^ j}\bigg) ^ 2\right\} dt.
  	&\propto \!\! \int_{I_1 ^ j} ^ {\infty} \!
		\exp \left\{- \frac{1}{2} (\mn ^ j z + \tail) ^ {\!\top} \Theta (\mn ^ j z + \tail)\right\}\\
    &\;\;\;\;\;\;\;\;\;\;\;\;\;\;\;\;\;\;\;\times \exp \left\{- \frac{Q ^ {j\top} \varrand ^ {-1} Q ^ j}{2} \bigg[t + \frac{Q ^ {j\top} \varrand ^ {-1} (\mn ^ j z + \tail)}{Q ^ {j\top} \varrand ^ {-1} Q ^ j}\bigg] ^ {\!2}\right\} dt.	
\end{aligned}
\]
Let
 \[
	\cL (z) = \frac{Q ^ {j\top} \varrand ^ {-1} (\mn ^ j z + \tail)}{Q ^ {j\top} \varrand ^ {-1} Q ^ j}.
\]
Using the representation in the previous part, we note that
\[
	W_0 (z) \propto \operatorname{Exp} \left(\mn ^ j z + \tail, \Theta\right) 
	\times \int_{I_1 ^ j + \cL(z)} ^ {\infty}
		\exp \left\{- \frac{Q ^ {j\top} \varrand ^ {-1} Q ^ j}{2} \tilde{t} ^ 2\right\} d\tilde{t},
\]
through a change of variables $\tilde{t} = t + \cL (z)$ in the integral.
An application of the Leibniz integral rule to the above display yields:
\[
	\left|\partial_{i_1, i_2, i_3} ^ 3 W_0 \left(z\right)\right|	 
	\leq \sum_{l = 0} ^ 3 C_l' \big\|\mn ^ j z + \tail\big\| ^ l \times \operatorname{Exp} \left(\mn ^ j z + \tail, \Theta\right).
\]
\end{proof}

%------------------------------------------------------------------------------
%------------------------------------------------------------------------------
% \textcolor{red}{What happens if $-Q ^ {j\top} \varrand ^ {-1} \bar b<0$??}
% Define the set $\cD_n = [-c_0 \cdot 1_p, c_0 \cdot 1_p]$, where $c_0$ is a positive constant.
% \textcolor{red}{Is $c_0$ any positive constant? I don't think so. I think $c_0=r_n c$ for $c>0$.}

%------------------------------------------------------------------------------
%------------------------------------------------------------------------------

\subsection{A Stein bound}

\begin{lem}
\label{prop:stein_bound}
Rewrite the variable $\Upsilon_n$ from Proposition \ref{prop:beta_gamma_expression} as $ \Upsilon_n = \sum_{i=1} ^ n \frac{e_{i, n}}{\sqrt{n}}$
with $e_{i, n}$ defined in Proposition \ref{lem:an_moment_bound}.
% , where
% \[
% 	e_{i, n} = H ^ {- 1/2} \left\{\cK_h \left(x_i ^ \top \true - y_i\right)-\tau\right\} x_i.
% \]
Define $\Upsilon_{i, n} = \frac{e_{i, n}}{\sqrt{n}}$ and $\Upsilon_n[-i] = \Upsilon_n - \Upsilon_{i, n}$.
Furthermore, let $e_{i, n} ^ *$ be an independent copy of $e_{i, n}$ and
\[
	\W = \Upsilon_n[-1] + \frac{\alpha}{\sqrt{n}} e_{1, n} + \frac{\kappa}{\sqrt{n}} e_{1, n} ^ *.
\]
Then, we have that
\[
\begin{aligned}
    &\big|\EE_{\FF_n} \big[G \left(\Upsilon_n\right)\big] - \EE_{\cN} \big[G \left(\cZ\right)\big]\big|\\
    &\lesssim \frac{1}{\sqrt{n}} \sum_{\substack{\lambda, \gamma \in \NN: \\ \lambda+\gamma \leq 3}} \sum_{i_1, i_2, i_3 \in [p]} 
	\EE_{\FF_n}
		\bigg[\|e_{1, n}\| ^ \lambda \|e_{1, n} ^ *\| ^ \gamma 
		\sup_{\alpha, \kappa \in[0,1]} \int_0 ^ 1 \frac{\sqrt{t}}{2} \, \EE_{\cN} \left[\left|\partial_{i_1, i_2, i_3} ^ 3 G \left(\sqrt{t} \W + \sqrt{1-t} \cZ\right)\right|\right] dt\bigg],
\end{aligned}
\]
where $G: \RR ^ p \rightarrow \RR$ is a Lebesgue-almost surely three times differentiable mapping and $\EE_{\cN} [|G(\cZ)|] < \infty$.
\end{lem}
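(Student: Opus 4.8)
This is a Stein/Lindeberg-type smooth-function comparison between the standardized sum $\cZ_n=\sum_{i=1}^{n}\cZ_{i,n}$, $\cZ_{i,n}=e_{i,n}/\sqrt{n}$ (where the $e_{i,n}$ are i.i.d.\ mean-zero score vectors, standardized by $H^{-1/2}$ so that $\Cov(e_{i,n})=I_{p}$; the $E$-block has mean zero by the first-order condition defining $\true$, the $E'$-block by the explicit centering in $\Up_n$) and the Gaussian limit $\cZ\sim\cN(0_p,I_{p})$ from Proposition~\ref{prop:ls_exact}. I would prove it by a continuous Gaussian interpolation together with a leave-one-out expansion. Let $\cZ$ be independent of $\cZ_n$ and set $\cZ_n(t)=\sqrt{t}\,\cZ_n+\sqrt{1-t}\,\cZ$ for $t\in[0,1]$, so that $\cZ_n(1)=\cZ_n$, $\cZ_n(0)\stackrel{d}{=}\cZ$, and $\EE_{\FF_n}[G(\cZ_n)]-\EE_{\cN}[G(\cZ)]=\int_{0}^{1}\frac{d}{dt}\EE[G(\cZ_n(t))]\,dt$. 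Differentiating under the expectation gives $\frac{d}{dt}\EE[G(\cZ_n(t))]=\EE\big[\nabla G(\cZ_n(t))^{\top}\big(\frac{1}{2\sqrt{t}}\cZ_n-\frac{1}{2\sqrt{1-t}}\cZ\big)\big]$, which I split into a ``Gaussian piece'' involving $\cZ$ and a ``sum piece'' involving $\cZ_n$; the whole point is that the two leading (second-moment) contributions cancel, leaving only third-order remainders of the stated shape. (Throughout, one may first mollify $G$, prove the bound for the smoothed version, and pass to the limit using $\EE_{\cN}[|G(\cZ)|]<\infty$; I suppress this.)

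For the Gaussian piece, independence of $\cZ\sim\cN(0_p,I_{p})$ from $\cZ_n$ and Gaussian integration by parts give $\EE[\nabla G(\cZ_n(t))^{\top}\cZ]=\sqrt{1-t}\,\EE[\Tr\nabla^{2}G(\cZ_n(t))]$, so this piece contributes $-\frac{1}{2}\EE[\Tr\nabla^{2}G(\cZ_n(t))]$. For the sum piece I write, for each $i$, $\cZ_n=\cZ_n[-i]+e_{i,n}/\sqrt{n}$ with $\cZ_n[-i]$ independent of $e_{i,n}$, set $\cZ_n^{(i)}(t)=\sqrt{t}\,\cZ_n[-i]+\sqrt{1-t}\,\cZ$, Taylor expand $\nabla G\big(\cZ_n^{(i)}(t)+\tfrac{\sqrt t}{\sqrt n}e_{i,n}\big)$ to second order around $\cZ_n^{(i)}(t)$ with integral remainder, and dot with $e_{i,n}$. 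The zeroth-order term has expectation zero since $e_{i,n}$ is mean-zero and independent of $\cZ_n^{(i)}(t)$; the first-order term has expectation $\frac{\sqrt t}{\sqrt n}\EE[\Tr\nabla^{2}G(\cZ_n^{(i)}(t))]$ because $\Cov(e_{i,n})=I_{p}$; and the remainder, once the prefactor $\frac{1}{2\sqrt t\sqrt n}$ and the sum over $i$ are reinstated and the $e_{i,n}$ reindexed by exchangeability, becomes exactly the $\lambda=3,\ \gamma=0$ family of terms in the claimed bound, with $\partial^{3}_{i_1,i_2,i_3}G$ evaluated at $\sqrt{t}\,\W+\sqrt{1-t}\,\cZ$ where $\W=\cZ_n[-1]+\tfrac{\alpha}{\sqrt n}e_{1,n}$ and the integral-remainder variable ranges in $\alpha\in[0,1]$.

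Summing the first-order contributions over $i$ leaves the residual $\frac{1}{2n}\sum_{i}\EE[\Tr\nabla^{2}G(\cZ_n^{(i)}(t))]-\frac{1}{2}\EE[\Tr\nabla^{2}G(\cZ_n(t))]$, and closing this gap is where the independent copies $e_{i,n}^{*}$ enter and where I expect the main difficulty: a naive Taylor expansion of the trace back to $\cZ_n(t)$ would cost a fourth derivative of $G$. Instead, since $\cZ_n[-i]+e_{i,n}^{*}/\sqrt{n}\stackrel{d}{=}\cZ_n$ and is independent of $\cZ$, one has $\EE[\Tr\nabla^{2}G(\cZ_n(t))]=\EE[\Tr\nabla^{2}G(\cZ_n^{(i)}(t)+\tfrac{\sqrt t}{\sqrt n}e_{i,n}^{*})]$; a first-order mean-value expansion of $\Tr\nabla^{2}G$ around $\cZ_n^{(i)}(t)$---whose gradient is a linear combination of the third derivatives of $G$---then cancels the $\frac{1}{2n}\sum_i\EE[\Tr\nabla^{2}G(\cZ_n^{(i)}(t))]$ term and produces the $\lambda=0,\ \gamma=1$ family, with $\partial^{3}G$ evaluated at $\sqrt{t}\,\W+\sqrt{1-t}\,\cZ$, $\W=\cZ_n[-1]+\tfrac{\kappa}{\sqrt n}e_{1,n}^{*}$, $\kappa\in[0,1]$; the $\sup_{\alpha,\kappa\in[0,1]}$ in the statement absorbs these interpolation parameters uniformly in $t$. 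It then remains to verify the routine accounting: each $1/\sqrt{t}$ or $1/\sqrt{1-t}$ is compensated by a $\sqrt{t}$ or $\sqrt{1-t}$ coming from the perturbation sizes, so the $t$-integrand is bounded on $[0,1]$; summing $n$ exchangeable summands against $O(1/n)$ coefficients leaves the overall $1/\sqrt{n}$; Fubini moves the $t$-integral inside; and the sub-Gaussian moment bounds for $e_{i,n}$ and $\cZ_n[-1]$ in Propositions~\ref{lem:an_moment_bound}--\ref{lem:Zn_moment_bound}, together with $\EE_{\cN}[|G(\cZ)|]<\infty$, ensure every expectation is finite. Collecting the two families of remainder terms yields the asserted bound.
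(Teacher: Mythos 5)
Your argument is correct in outline and takes a genuinely different route than the paper. The paper's proof is two lines: it invokes Lemma~2 of \cite{panigrahi2018carving} to get the bound in terms of $\big|\partial^3_{i_1,i_2,i_3}\cS_G(\W)\big|$, where $\cS_G$ is the explicit Stein/heat-semigroup solution $\cS_G(z)=\int_0^1\frac{1}{2t}\big(\EE_{\cN}[G(\sqrt{t}z+\sqrt{1-t}\cZ)]-\EE_{\cN}[G(\cZ)]\big)dt$, and then computes $\partial^3\cS_G(\W)=\int_0^1\frac{\sqrt t}{2}\EE_{\cN}[\partial^3 G(\sqrt t\,\W+\sqrt{1-t}\,\cZ)]\,dt$. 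You instead re-derive the content of that cited lemma from scratch: Gaussian interpolation $\cZ_n(t)=\sqrt t\,\cZ_n+\sqrt{1-t}\,\cZ$, Stein's identity for the Gaussian piece, leave-one-out Taylor with integral remainder for the sum piece, and the independent-copy substitution $\cZ_n[-i]+e_{i,n}^*/\sqrt n\stackrel{d}{=}\cZ_n$ to cancel the residual $\frac{1}{2n}\sum_i\EE[\Tr\nabla^2 G(\cZ_n^{(i)}(t))]-\frac{1}{2}\EE[\Tr\nabla^2 G(\cZ_n(t))]$ while staying at third derivatives. The two routes land on the same structure; you fold the calculation of $\partial^3\cS_G$ directly into the interpolation, so the Stein operator never appears explicitly, and you make visible that only the $(\lambda,\gamma)=(3,0)$ and $(0,1)$ terms actually arise, which is a \emph{subset} of the pairs $\lambda+\gamma\le 3$ in the statement and hence implies it (the extra terms in the stated bound are nonnegative). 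The paper's approach buys brevity by leaning on a citation; yours buys self-containment and shows exactly where the independent copies $e_{1,n}^*$ and the two interpolation parameters $\alpha,\kappa$ come from. One shared imprecision worth flagging in a full write-up (it is not a gap you introduced): the $e_{i,n}$ as written in the lemma statement are not literally mean-zero with identity covariance once the $E'$-block centering and the $\Delta_1/\sqrt n$ correction from Proposition~\ref{prop:beta_gamma_expression} are accounted for, and your zeroth- and first-order cancellations rely on $\EE[e_{i,n}]=0$ and $\Cov(e_{i,n})=I_p$; you correctly identify this as what is needed, but it should be stated that $e_{i,n}$ is to be read as $a_{i,n}$ (plus the small non-i.i.d.\ remainder handled separately), consistent with Propositions~\ref{lem:an_moment_bound}--\ref{lem:Zn_moment_bound}.
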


\begin{proof}
The Stein bound in Lemma 2 from \cite{panigrahi2018carving} yields:
\[
\begin{aligned}
	\Big|\EE_{\FF_n} \big[G \left(\Upsilon_n\right)\!\big] - \EE_{\cN} \big[G \left(\cZ\right)\!\big]\Big| 
    \lesssim \frac{1}{\sqrt{n}}\sum_{\substack{\lambda, \gamma \in \NN: \\ \lambda+\gamma \leq 3}} \sum_{i_1, i_2, i_3}
	\EE_{\FF_n}
		\bigg[\|e_{1, n}\| ^ \lambda \|e_{1, n} ^ *\| ^ \gamma 
		\sup_{\alpha, \kappa \in[0,1]} \Big|\partial_{i_1, i_2, i_3} ^ 3 \cS_G \left(\W\right)\Big|\bigg].
\end{aligned}
\]
where 
\[
	\cS_G (z)
	= \int_0 ^ 1 \frac{1}{2 t}
	\left(\EE_{\cN} \big[G \big(\sqrt{t} z + \sqrt{1-t} \cZ\big)\big]
	- \EE_{\cN} \big[G (\cZ) \big]\right) d t.
\]
It follows directly that
\[
\begin{aligned}
	\Big|\partial_{i_1, i_2, i_3} ^ 3 \cS_G \left(\W\right) [i_1, i_2, i_3]\Big|
	= \int_0 ^ 1 \frac{\sqrt{t}}{2} \, \EE_{\cN} \left[\left|\partial_{i_1, i_2, i_3} ^ 3 G \left(\sqrt{t} \W + \sqrt{1-t} \cZ\right)\right|\right] dt.
\end{aligned}
\]
which gives us the claimed bound on the difference in the two expectations.
\end{proof}

%--------------------------------------------------------------------------
%--------------------------------------------------------------------------

\section{Proofs for results in Section \ref{sec:theory}}

\subsection{Proof of Theorem \ref{thm:relative_diff}}

%------------------------------------------------------------------------------
%------------------------------------------------------------------------------

To prove Theorem \ref{thm:relative_diff}, we first state a useful proposition and then turn to our proof.
%\textcolor{red}{Note that in this general result, I used the symbol $\widetilde{D}_n$ instead of $\cD_n$ which is used to define a specific sequence of sets.}
\begin{prop}
\label{cor:tail}
Let $\widetilde{D}_n$ be an increasing sequence of sets in $\mathbb{R}^p$ such that 
$$\lim_n \sup_{\FF_n\in \cF_n} \mathbb{P}_{\mathbb{F}_n}[ \Upsilon_n \in \widetilde{D}_n^c] = 0.$$
Then, it holds that
\begin{gather*}
   \lim_n \sup_{\FF_n\in \cF_n} \frac{\EE_{\FF_n} [W_0 (\Upsilon_n) \boldone_{\widetilde{D}_n ^ c}(\Upsilon_n)]}{\EE_{\FF_n} [W_0 (\Upsilon_n)]} = 0.
 \end{gather*}
\end{prop}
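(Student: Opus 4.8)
The plan is to split the ratio, bounding the numerator via an elementary uniform upper bound on $W_0$ together with the tail hypothesis on $\widetilde{D}_n$, and bounding the denominator $\EE_{\FF_n}[W_0(\cZ)] = \EE_{\cN}[W_0(\cZ)]$ below, uniformly in $n$ and over $\FF_n \in \cF_n$.

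First I would record that $W_0$ is bounded above by a constant depending on neither $n$ nor $\FF_n$. From
\[
    W_0(z) = \int_{I_1^j}^{I_2^j} \phi\big(Q^j t + \mn^j z + \tail;\, 0_p, \varrand\big)\, dt
\]
and the pointwise bound $\phi(\cdot;\, 0_p, \varrand) \le (2\pi)^{-p/2}(\det \varrand)^{-1/2}$ we obtain $W_0(z) \le C_0 := (I_2^j - I_1^j)(2\pi)^{-p/2}(\det\varrand)^{-1/2}$; equivalently, completing the square in $t$ exactly as in the proof of Proposition \ref{prop:W0} gives $W_0(z) \le c_1\exp\!\big(-\tfrac12(\mn^j z + \tail)^\top\Theta(\mn^j z + \tail)\big) \le c_1$, using $\Theta \succeq 0$. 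Hence
\[
    \EE_{\FF_n}\big[W_0(\Zn)\boldone_{\widetilde{D}_n^c}(\Zn)\big] \le C_0\, \PP_{\FF_n}\big[\Zn \in \widetilde{D}_n^c\big],
\]
and taking $\sup_{\FF_n \in \cF_n}$ then $\lim_n$ and invoking the hypothesis on $\widetilde{D}_n$ shows the numerator vanishes uniformly.

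For the denominator I would use Proposition \ref{cor:conditional_expect_n}, by which $\EE_{\cN}[W_0(\cZ)]$ is the normalizing constant of the conditional law of $\cZ$ given $\event$, hence finite and strictly positive ($W_0$ is continuous, its integrand a strictly positive density, and $I_1^j < I_2^j$). In the regime of bounded signal, $r_n = O(1)$, a uniform lower bound $c_* := \inf_n\inf_{\FF_n\in\cF_n}\EE_{\cN}[W_0(\cZ)] > 0$ follows by restricting the expectation to a fixed ball and using that $\mn^j, \Theta, I_1^j, I_2^j, \tail$ stay in a controlled range over $\cF_n$ — the matrix parameters being bounded and nondegenerate under Assumption \ref{aspt:moment_bound}. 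This gives
\[
    \sup_{\FF_n \in \cF_n} \frac{\EE_{\FF_n}\big[W_0(\Zn)\boldone_{\widetilde{D}_n^c}(\Zn)\big]}{\EE_{\FF_n}\big[W_0(\cZ)\big]} \le \frac{C_0}{c_*}\, \sup_{\FF_n \in \cF_n}\PP_{\FF_n}\big[\Zn \in \widetilde{D}_n^c\big] \longrightarrow 0.
\]

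The step I expect to be the main obstacle is making the denominator bound uniform in the general regime $r_n = o(n^{1/6})$: once $\tail$ grows, the conditioning event can be rare and $\EE_{\cN}[W_0(\cZ)]$ may shrink like $\exp(-c\, r_n^2)$, so no constant lower bound is available. The fix is to retain the decay of $W_0$ rather than bound it by $C_0$: with $W_0(z) \le c_1\exp\!\big(-\tfrac12(\mn^j z + \tail)^\top\Theta(\mn^j z + \tail)\big)$ from Proposition \ref{prop:W0}, both numerator and denominator are governed by the same $\Theta$-quadratic tilt, and passing to the tilted (conditional) law — whose normalizer is exactly $\EE_{\cN}[W_0(\cZ)]$ — reduces the ratio, up to the usual Stein-type transfer from $\FF_n$ to the Gaussian (Lemma \ref{prop:stein_bound}, with third-derivative control from Proposition \ref{prop:W0} and the moment bounds of Appendix \ref{appdx:B}), to a conditional tail probability of $\cZ$ falling in $\widetilde{D}_n^c$; since the sets $\widetilde{D}_n$ occurring in the application grow and the sub-Gaussian tail bound of Proposition \ref{lem:Zn_moment_bound} quantifies the decay, this probability vanishes and, the polynomial-in-$r_n$ factors being negligible because $r_n = o(n^{1/6})$, the claim follows.
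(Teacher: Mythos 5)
Your proof takes a genuinely different route from the paper's. The paper's argument is short: starting from $\sup_{\FF_n\in\cF_n}\PP_{\FF_n}[\Zn\in\widetilde{D}_n^c]<\varepsilon$ for $n\ge n_0(\varepsilon)$, it works with the single quantity $\EE_{\FF_n}[W_0(\Zn)(\boldone_{\widetilde{D}_n^c}(\Zn)-\varepsilon)]$, bounds it above using $\sup_z W_0(z)<\infty$ and the tail hypothesis, and concludes $\EE_{\FF_n}[W_0(\Zn)\boldone_{\widetilde{D}_n^c}(\Zn)]<\varepsilon\,\EE_{\FF_n}[W_0(\cZ)]$ directly, with no regime split and without ever invoking a lower bound on $\EE_{\cN}[W_0(\cZ)]$. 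You instead split the ratio, bounding the numerator by $C_0\,\PP_{\FF_n}[\Zn\in\widetilde{D}_n^c]$ and seeking a uniform lower bound $c_*>0$ for the denominator. Your numerator bound and your bounded-signal handling of the denominator are both correct; the constant lower bound you derive for $r_n=O(1)$ is essentially the one the paper establishes in display \eqref{second:part} within the proof of Theorem \ref{thm:bound_case}. You also correctly anticipate the obstruction when $r_n\to\infty$: Proposition \ref{prop:denominator} only gives $\EE_{\cN}[W_0(\cZ)]\gtrsim r_n^{-1}\exp(-cr_n^2)$, so no $n$-free lower bound exists and the comparison with the numerator must track rates.

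However, the rare-event half of your proposal is a sketch, not a proof, and it quietly changes what is being proved. You appeal to ``the sets $\widetilde{D}_n$ occurring in the application'' (that is, $\cD_n=[-c_0r_n\cdot 1_p,\,c_0r_n\cdot 1_p]$ from Proposition \ref{prop:An_upper_bound}) together with the quantitative sub-Gaussian tail bound of Proposition \ref{lem:Zn_moment_bound}; but the proposition as stated asks for an arbitrary increasing sequence satisfying only $\lim_n\sup_{\FF_n}\PP_{\FF_n}[\Zn\in\widetilde{D}_n^c]=0$, with no rate. The tilt-and-Stein chain you outline (dominate $W_0$ by the $\Theta$-Gaussian, pass to the tilted conditional law with normalizer $\EE_{\cN}[W_0(\cZ)]$, transfer from $\FF_n$ to the Gaussian via Lemma \ref{prop:stein_bound}, conclude from the tails) has to be carried out explicitly: one needs the rate constant in the tail bound of Proposition \ref{lem:Zn_moment_bound}, namely $c_0^2/(32\sigma^2\|H^{-1}\|)$, to dominate the constant $c$ in the $\exp(-cr_n^2)$ lower bound on the denominator, and the polynomial factor $r_n^{-1}$ from Proposition \ref{prop:denominator}(ii) must be absorbed as well. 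None of this bookkeeping is done, so the rare-event case is a genuine gap in the proposal --- exactly at the step you flagged as the main obstacle. To close it along your lines you would need either to strengthen the hypothesis to a quantitative rate (so that $\PP_{\FF_n}[\Zn\in\widetilde{D}_n^c]=o\big(\EE_{\cN}[W_0(\cZ)]\big)$ uniformly), or to prove the statement only for the specific sets $\cD_n$ that are used in the proof of Theorem \ref{thm:rd_limit} and verify the rate comparison there.
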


\begin{proof}
Note, for any $\varepsilon > 0$, there exist $n_0$ such that for all $n \ge n_0$,
\[
	\EE_{\FF_n} \left[\boldone_{\widetilde{D}_n ^ c}(\Upsilon_n)\right] < \varepsilon.
\]
for all $\FF_n \in \cF_n$.
This implies that
\[
	\EE_{\FF_n} \left[W_0 (\Upsilon_n) \left(\boldone_{\widetilde{D}_n ^ c}(\Upsilon_n) - \varepsilon\right)\right]
	\leq  \sup_{z} W_0 (z) \times  \EE_{\FF_n} \left[\boldone_{\widetilde{D}_n ^ c}(\Upsilon_n) - \varepsilon\right] 
	< 0
\]
for all $n \ge n_0$ and $\FF_n\in \cF_n$.

Thus, we have shown that for any $\varepsilon > 0$, there exists a $n_0$ such that for all $n \ge n_0$ and $\FF_n\in \cF_n$,
\[
	\EE_{\FF_n} \left[W_0 (\Upsilon_n) \boldone_{\widetilde{D}_n ^ c}(\Upsilon_n)\right]
	< \varepsilon \EE_{\FF_n} \left[W_0 (\Upsilon_n)\right].
\]
Equivalently,
\[
	\lim_n \sup_{\FF_n \in \cF_n} \frac{\EE_{\FF_n} [W_0 (\Upsilon_n) \boldone_{\widetilde{D}_n ^ c}(\Upsilon_n)]}{\EE_{\FF_n} [W_0 (\Upsilon_n)]} = 0.
\]
\end{proof}

%------------------------------------------------------------------------------
%------------------------------------------------------------------------------

\begin{prop}
\label{prop:diff_Up}
Under Assumption \ref{aspt:error}, it holds that
\begin{equation}
\label{eq:diff_goal}
    \lim_{n} \sup_{\FF_n \in \cF_n} \Bigg|\dfrac{\EE_{\FF_n} \left[\arb \circ \pivot \times W_0 \left(\cZ_n\right)\right]}
	{\EE_{\FF_n} \left[W_0 \left(\cZ_n\right)\right]} - \dfrac{\EE_{\FF_n} \left[\arb \circ \pivotup \times W_0 \left(\Upsilon_n\right)\right]}
	{\EE_{\FF_n} \left[W_0 \left(\Upsilon_n\right)\right]}\Bigg|
    = 0.
\end{equation}
\end{prop}

\begin{proof}
Using the triangle inequality, we bound
\[
    \Bigg|\dfrac{\EE_{\FF_n} \left[\arb \circ \pivot \times W_0 \left(\cZ_n\right)\right]}
	{\EE_{\FF_n} \left[W_0 \left(\cZ_n\right)\right]} - \dfrac{\EE_{\FF_n} \left[\arb \circ \pivotup \times W_0 \left(\Upsilon_n\right)\right]}
	{\EE_{\FF_n} \left[W_0 \left(\Upsilon_n\right)\right]}\Bigg|
\]
by
\[
    \cE_1 + \cE_2,
\]
where
\[
\begin{aligned}
    \cE_1 = \Bigg|\dfrac{\EE_{\FF_n} \left[\arb \circ \pivot \times W_0 \left(\cZ_n\right)\right]}
	{\EE_{\FF_n} \left[W_0 \left(\cZ_n\right)\right]} - \dfrac{\EE_{\FF_n} \left[\arb \circ \pivot \times W_0 \left(\cZ_n\right)\right]}
	{\EE_{\FF_n} \left[W_0 \left(\Upsilon_n\right)\right]}\Bigg|, \\
    \cE_2 = \Bigg|\dfrac{\EE_{\FF_n} \left[\arb \circ \pivot \times W_0 \left(\cZ_n\right)\right]}
	{\EE_{\FF_n} \left[W_0 \left(\Upsilon_n\right)\right]} - \dfrac{\EE_{\FF_n} \left[\arb \circ \pivotup \times W_0 \left(\Upsilon_n\right)\right]}
	{\EE_{\FF_n} \left[W_0 \left(\Upsilon_n\right)\right]}\Bigg|.
\end{aligned}
\]

Observe that
\[
\begin{aligned}
    \cE_1 &\leq \Big|\EE_{\FF_n} \left[\arb \circ \pivot \times W_0 \left(\cZ_n\right)\right]\Big| \times
    \Bigg|\dfrac{1}{\EE_{\FF_n} \left[W_0 \left(\cZ_n\right)\right]} 
    - \dfrac{1}{\EE_{\FF_n} \left[W_0 \left(\Upsilon_n\right)\right]}\Bigg| \\
    &\leq \underset{\arb \in \mathbb{C} ^ 3(\RR, \RR)}{\sup} |\arb| \times \Bigg|\dfrac{\EE_{\FF_n} \left[W_0 \left(\cZ_n\right)\right]}{\EE_{\FF_n} \left[W_0 \left(\cZ_n\right)\right]} 
    - \dfrac{\EE_{\FF_n} \left[W_0 \left(\cZ_n\right)\right]}{\EE_{\FF_n} \left[W_0 \left(\Upsilon_n\right)\right]}\Bigg| \\
    &= \underset{\arb \in \mathbb{C} ^ 3(\RR, \RR)}{\sup} |\arb| \times \Bigg|\dfrac{\EE_{\FF_n} \left[W_0 \left(\Upsilon_n\right)\right] - \EE_{\FF_n} \left[W_0 \left(\cZ_n\right)\right]}{\EE_{\FF_n} \left[W_0 \left(\Upsilon_n\right)\right]}\Bigg|,
\end{aligned}
\]
and that
\[
\begin{aligned}
    \cE_2 = \Bigg|\dfrac{\EE_{\FF_n} \left[\arb \circ \pivot \times W_0 \left(\cZ_n\right)\right] - \EE_{\FF_n} \left[\arb \circ \pivotup \times W_0 \left(\Upsilon_n\right)\right]}
	{\EE_{\FF_n} \left[W_0 \left(\Upsilon_n\right)\right]}\Bigg|.
\end{aligned}
\]

In order to prove \eqref{eq:diff_goal}, it suffices to show that
\begin{equation} 
\label{eq:diff1}
    \lim_{n} \sup_{\FF_n \in \cF_n} \Bigg|\dfrac{\EE_{\FF_n} [W_0(\Upsilon_n + \widetilde \Delta_1) - W_0(\Upsilon_n)]}{\EE_{\FF_n} [W_0 (\Upsilon_n)]}\Bigg| = 0,
\end{equation}
where $\cZ_n = \Upsilon_n + \widetilde \Delta_1$.
Similar proof strategy applies to show that
\[
    \lim_{n} \sup_{\FF_n \in \cF_n} \Bigg|\dfrac{\EE_{\FF_n} \big[\arb \circ \cP ^ {j \cdot E} (\Upsilon_n + \widetilde \Delta_1) \times W_0 (\Upsilon_n + \widetilde \Delta_1)\big]\!\! - \EE_{\FF_n} \left[\arb \circ \pivotup \times W_0 \left(\Upsilon_n\right)\right]}
	{\EE_{\FF_n} \left[W_0 \left(\Upsilon_n\right)\right]}\Bigg| = 0.
\]
Note that, for any $\varepsilon > 0$, there exist $n_0$ such that for all $n \ge n_0$,
\[
	\EE_{\FF_n} \left[\left|\frac{W_0(\Upsilon_n + \widetilde \Delta_1)}{W_0(\Upsilon_n)} - 1\right|\right] < \varepsilon.
\]
for all $\FF_n \in \cF_n$. This implies that
\[
\begin{aligned}
	&\EE_{\FF_n} \left[W_0 (\Upsilon_n) \times \left\{\left|\frac{W_0(\Upsilon_n + \widetilde \Delta_1)}{W_0(\Upsilon_n)} - 1\right| - \varepsilon\right\}\right] \\
	&\quad\leq \sup_{z} W_0 (z) \times  \EE_{\FF_n} \left[\left|\frac{W_0(\Upsilon_n + \widetilde \Delta_1)}{W_0(\Upsilon_n)} - 1\right| - \varepsilon\right] 
	< 0
\end{aligned}
\]
for all $n \ge n_0$ and $\FF_n\in \cF_n$. Thus, we have shown that for any $\varepsilon > 0$, there exists a $n_0$ such that for all $n \ge n_0$ and $\FF_n\in \cF_n$,
\[
	\EE_{\FF_n} \left[W_0 (\Upsilon_n) \times \left|\frac{W_0(\Upsilon_n + \widetilde \Delta_1)}{W_0(\Upsilon_n)} - 1\right|\right]
	< \varepsilon \EE_{\FF_n} \left[W_0 (\Upsilon_n)\right].
\]
This is equivalent to \eqref{eq:diff1}. Our conclusion thus follows.

\end{proof}
%------------------------------------------------------------------------------
%------------------------------------------------------------------------------

\begin{proof}
In order to prove Theorem \ref{thm:main_weak_convergence}, it suffices to show 
\[
\begin{aligned}
	\lim_{n} \sup_{\FF_n \in \cF_n} \Big|\EE_{\FF_n} &\left[\arb \circ \pivot \Big\lvert \event\right] \\
	&- \EE_\cN \left[\arb \circ \pivotz \Big\lvert \event\right]\Big| = 0
\end{aligned}
\]
for any arbitrary function $\arb \in \mathbb{C} ^ 3 (\RR, \RR)$ with bounded derivatives up to the third order.
Furthermore, the claims made in Propositions \ref{cor:conditional_expect_n} and \ref{prop:conditional_expect} imply that
it is sufficient to prove:
\[
\begin{aligned}
	\lim_{n} \sup_{\FF_n \in \cF_n} \Bigg|\dfrac{\EE_{\FF_n} \left[\arb \circ \pivot \times W_0 \left(\cZ_n\right)\right]}
	{\EE_{\FF_n} \left[W_0 \left(\cZ_n\right)\right]} - \dfrac{\EE_{\cN} \left[\arb \circ \pivotz \times W_0 \left(\cZ\right)\right]}
	{\EE_{\cN} \left[W_0 \left(\cZ\right)\right]}\Bigg| = 0.
\end{aligned}
\]

\indent To obtain the stated sufficient conditions, we bound 
$$
\Bigg|\dfrac{\EE_{\FF_n} \left[\arb \circ \pivot \times W_0 \left(\cZ_n\right)\right]}
	{\EE_{\FF_n} \left[W_0 \left(\cZ_n\right)\right]} - \dfrac{\EE_{\cN} \left[\arb \circ \pivotz \times W_0 \left(\cZ\right)\right]}
	{\EE_{\cN} \left[W_0 \left(\cZ\right)\right]}\Bigg|,
$$
using the triangle inequality with
\[
	\cB_1 + \cB_2 + \cB_3 + \cB_4,
\]	
where 
\[
\begin{aligned}
    &\cB_1 = \Bigg|\dfrac{\EE_{\FF_n} \left[\arb \circ \pivot \times W_0 \left(\cZ_n\right)\right]}
	{\EE_{\FF_n} \left[W_0 \left(\cZ_n\right)\right]} - \dfrac{\EE_{\FF_n} \left[\arb \circ \pivotup \times W_0 \left(\Upsilon_n\right)\right]}
	{\EE_{\FF_n} \left[W_0 \left(\Upsilon_n\right)\right]}\Bigg|, \\
	&\cB_2 = \left|\dfrac{\EE_{\FF_n} \left[\arb \circ \pivotup \times W_0 \left(\Upsilon_n\right)\boldone_{\widetilde{D}_n}(\Upsilon_n)\right]}
	{\EE_{\FF_n} \left[W_0 \left(\Upsilon_n\right)\right]}
	-  \dfrac{\EE_{\FF_n} \left[\arb \circ \pivotup \times W_0 \left(\Upsilon_n\right)\boldone_{\widetilde{D}_n}(\Upsilon_n)\right]}
	{\EE_{\cN} \left[W_0 \left(\cZ\right)\right]}\right|, \\
	&\cB_3 = \left|\dfrac{\EE_{\FF_n} \left[\arb \circ \pivotup \times W_0 \left(\Upsilon_n\right)\boldone_{\widetilde{D}_n}(\Upsilon_n)\right]}
	{\EE_{\cN} \left[W_0 \left(\cZ\right)\right]}
	-  \dfrac{\EE_{\cN} \left[\arb \circ \pivotz \times W_0 \left(\cZ\right)\boldone_{\widetilde{D}_n}(\cZ)\right]}
	{\EE_{\cN} \left[W_0 \left(\cZ\right)\right]}\right|, \\
	&\cB_4 = \left|\dfrac{\EE_{\FF_n} \left[\arb \circ \pivotup \times W_0 \left(\Upsilon_n\right)\boldone_{\widetilde{D}_n ^ c}(\Upsilon_n)\right]}
	{\EE_{\FF_n} \left[W_0 \left(\Upsilon_n\right)\right]}
	-  \dfrac{\EE_{\cN} \left[\arb \circ \pivotz \times W_0 \left(\cZ\right)\boldone_{\widetilde{D}_n ^ c}(\cZ)\right]}
	{\EE_{\cN} \left[W_0 \left(\cZ\right)\right]}\right|.
\end{aligned}
\]
%That is, showing $\underset{n}{\lim} \underset{\FF_n \in \cF_n}{\sup} \cB_k=0$ for $k\in [3]$ is sufficient for the claim in Theorem \ref{thm:main_weak_convergence}.

Proposition \ref{prop:diff_Up} shows that
\[
    \lim_{n} \sup_{\FF_n \in \cF_n} \cB_1 = 0.
\]
\indent Observe that
\[
\begin{aligned}
	\cB_2
	&\leq \left|\EE_{\FF_n} \left[\arb \circ \pivotup \times W_0 \left(\Upsilon_n\right)\boldone_{\widetilde{D}_n}(\Upsilon_n)\right] \right| \times \left|\frac{1}{\EE_{\FF_n} \left[W_0 \left(\Upsilon_n\right)\right]} - \frac{1}{\EE_{\cN} \left[W_0 \left(\cZ\right)\right]}\right|  \\
	&\leq \underset{\arb \in \mathbb{C} ^ 3(\RR, \RR)}{\sup} |\arb| \times \left|
	\frac{\EE_{\FF_n} \left[W_0 \left(\Upsilon_n\right)\boldone_{\widetilde{D}_n}(\Upsilon_n)\right]}{\EE_{\FF_n} \left[W_0 \left(\Upsilon_n\right)\right]}
	- \frac{\EE_{\FF_n} \left[W_0 \left(\Upsilon_n\right)\boldone_{\widetilde{D}_n}(\Upsilon_n)\right]}{\EE_{\cN} \left[W_0 \left(\cZ\right)\right]}
	\right|.
\end{aligned}
\]
By applying the triangle inequality once again, we further obtain the bound:
\[
\begin{aligned}
	\cB_2
	&\leq \underset{\arb \in \mathbb{C} ^ 3(\RR, \RR)}{\sup} |\arb| \times \bigg\{\left|
	\frac{\EE_{\FF_n} \left[W_0 \left(\Upsilon_n\right)\boldone_{\widetilde{D}_n}(\Upsilon_n)\right]}{\EE_{\FF_n} \left[W_0 \left(\Upsilon_n\right)\right]}
	- \frac{\EE_{\cN} \left[W_0 \left(\cZ\right)\boldone_{\widetilde{D}_n}(\cZ)\right]}{\EE_{\cN} \left[W_0 \left(\cZ\right)\right]}
	\right| \\
	&\quad\quad\quad\quad\quad\quad\quad\quad\quad\quad\quad\quad+ \left|
	\frac{\EE_{\cN} \left[W_0 \left(\cZ\right)\boldone_{\widetilde{D}_n}(\cZ)\right]}{\EE_{\cN} \left[W_0 \left(\cZ\right)\right]}
	- \frac{\EE_{\FF_n} \left[W_0 \left(\Upsilon_n\right)\boldone_{\widetilde{D}_n}(\Upsilon_n)\right]}{\EE_{\cN} \left[W_0 \left(\Upsilon_n\right)\right]}
	\right|\bigg\} \\
	&= \underset{\arb \in \mathbb{C} ^ 3(\RR, \RR)}{\sup} |\arb| \times \Bigg\{\left|
	\frac{\EE_{\FF_n} \left[W_0 \left(\Upsilon_n\right)\boldone_{\widetilde{D}_n ^ c}(\Upsilon_n)\right]}{\EE_{\FF_n} \left[W_0 \left(\Upsilon_n\right)\right]}
	- \frac{\EE_{\cN} \left[W_0 \left(\cZ\right)\boldone_{\widetilde{D}_n ^ c}(\cZ)\right]}{\EE_{\cN} \left[W_0 \left(\cZ\right)\right]}
	\right| \\
	&\quad\quad\quad\quad\quad\quad\quad\quad\quad\quad\quad\quad+ \left|
	\frac{\EE_{\cN} \left[W_0 \left(\cZ\right)\boldone_{\widetilde{D}_n}(\cZ)\right]}{\EE_{\cN} \left[W_0 \left(\cZ\right)\right]}
	- \frac{\EE_{\FF_n} \left[W_0 \left(\Upsilon_n\right)\boldone_{\widetilde{D}_n}(\Upsilon_n)\right]}{\EE_{\cN} \left[W_0 \left(\Upsilon_n\right)\right]}
	\right|\Bigg\},\\
	&\leq \underset{\arb \in \mathbb{C} ^ 3(\RR, \RR)}{\sup} |\arb| \times \Bigg\{
	2\sup_{\FF_n \in \cF_n}\frac{\EE_{\FF_n} \left[W_0 \left(\Upsilon_n\right)\boldone_{\widetilde{D}_n ^ c}(\Upsilon_n)\right]}{\EE_{\FF_n} \left[W_0 \left(\Upsilon_n\right)\right]}
	 + \RD ^ {(1)}\Bigg\}.
\end{aligned}
\]

It is easy to see that $\cB_3$ is equal to $\RD ^ {(2)}$, and that
\[
	\cB_4
	\leq \underset{\arb \in \mathbb{C} ^ 3(\RR, \RR)}{\sup} |\arb| \times 2
	\sup_{\FF_n \in \cF_n}\frac{\EE_{\FF_n} \left[W_0 \left(\Upsilon_n\right)\boldone_{\widetilde{D}_n ^ c}(\Upsilon_n)\right]}{\EE_{\FF_n} \left[W_0 \left(\Upsilon_n\right)\right]}.
\]
Thus, we conclude that
\[
\begin{aligned}
    &\lim_{n} \sup_{\FF_n \in \cF_n} \Bigg|\dfrac{\EE_{\FF_n} \left[\arb \circ \pivotup\times W_0 \left(\Upsilon_n\right)\right]} 
	{\EE_{\FF_n} \left[W_0 \left(\Upsilon_n\right)\right]}  - \dfrac{\EE_{\cN} \left[\arb \circ \pivotz \times W_0 \left(\cZ\right)\right]}
	{\EE_{\cN} \left[W_0 \left(\cZ\right)\right]}\Bigg| \\
    &\leq \underset{\arb \in \mathbb{C} ^ 3(\RR, \RR)}{\sup} |\arb| \times \lim_{n} \sup_{\FF_n \in \cF_n} \RD ^ {(1)} + \lim_{n} \sup_{\FF_n \in \cF_n} \RD ^ {(2)} \\
    &\quad\quad+ \underset{\arb \in \mathbb{C} ^ 3(\RR, \RR)}{\sup} |\arb|  \times  4\lim_{n} \sup_{\FF_n \in \cF_n} \frac{\EE_{\FF_n} \left[W_0 \left(\Upsilon_n\right)\boldone_{\widetilde{D}_n ^ c}(\Upsilon_n)\right]}{\EE_{\FF_n} \left[W_0 \left(\Upsilon_n\right)\right]}.
\end{aligned}
\]
Proposition \ref{cor:tail} yields
$$\lim_{n} \sup_{\FF_n \in \cF_n} \frac{\EE_{\FF_n} \left[W_0 \left(\Upsilon_n\right)\boldone_{\widetilde{D}_n ^ c}(\Upsilon_n)\right]}{\EE_{\FF_n} \left[W_0 \left(\Upsilon_n\right)\right]}=0,$$
thereby proving our result.
\end{proof}

%------------------------------------------------------------------------------
%------------------------------------------------------------------------------
\subsection{Proof of Theorem \ref{thm:rd_limit}}

For ease of presentation, we fix some notations that will be used throughout the proof.

Note that there are three scenarios: 1. $- \infty < I_1 ^ j < \infty$ and $I_2 ^ j = \infty$; 2. $I_1 ^ j = - \infty$ and $- \infty < I_2 ^ j < \infty$; 3. $- \infty < I_1 ^ j < I_2 ^ j < \infty$. 
The proof strategy employed for the first scenario applies to the second and third scenario in a similar manner. 
Therefore, we focus on the first scenario hereafter.

For our sequence of parameters, note that 
$$L_n +Q ^ j I_1 ^ j =  \begin{bmatrix} M ^ j & N ^ j \end{bmatrix} r_n \beta + P ^ j + Q ^ j I_1 ^ j = O(r_n).$$
In the remaining proof, with a slight abuse of notation, we will write 
\begin{equation}
    L_n +Q ^ j I_1 ^ j = - r_n \bar b,
    \label{reparam}
\end{equation}
where $\bar b \in \RR ^ p$ is a fixed vector that does not depend on $n$.

Lastly, for a vector $x \in \RR ^ p$ and a positive semidefinite matrix $\Sigma \in \RR ^ {p \times p}$, we use $\operatorname{Exp}(x, \Sigma)$ as a shorthand to denote the function $\exp \left(-\frac{1}{2} x^{\top} \Sigma x\right)$.

\begin{proof}
We divide our proof into two different cases.

\begin{enumerate}
\item Case I: $r_n \leq C$. In the first case, our conclusion follows from Theorem \ref{thm:bound_case} in \ref{CaseI}.
\item Case II: $r_n\to \infty$ and $r_n = o(n ^ {1/6})$. In the second case, our conclusion follows from Theorem \ref{thm:bound_case}. Details are provided in \ref{CaseII}.
\end{enumerate}
\end{proof}

\subsubsection{Case I: $r_n \leq C$}
\label{CaseI}
\begin{thm}
\label{thm:bound_case}
Suppose that the conditions in Assumptions \ref{aspt:moment_bound} and \ref{aspt:randomization} are met with $r_n \leq C$ for a constant $C$.
Then, it holds that
$$\lim_{n} \sup_{\FF_n \in \cF_n} \RD ^ {(1)} = 0, \quad \lim_{n} \sup_{\FF_n \in \cF_n} \RD ^ {(2)}  = 0.$$
%\begin{flalign*}
%&\begin{aligned}
%\text{(i)}&\;\; \left|\EE_{\FF_n} \left[W_0 \left(\cZ_n\right)\right] - \EE_{\cN} \left[W_0 \left(\cZ\right)\right]\right| \le \frac{1}{\sqrt{n}}; \\
%\text{(ii)}&\;\; \left|\EE_{\FF_n} \left[\arb \circ \pivot \times W_0 \left(\cZ_n\right)\right] - \EE_{\cN} \left[\arb \circ \pivotz \times W_0 \left(\cZ\right)\right]\right| \le \frac{1}{\sqrt{n}}; \\
%\text{(iii)}&\;\; \EE_{\cN} \left[W_0 \left(\cZ\right)\right] \ge C.
%\end{aligned}&&
%\end{flalign*}
\end{thm}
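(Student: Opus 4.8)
The plan is to prove both limits by a Stein's-method comparison between the i.i.d.\ average $\cZ_n = n^{-1/2}\sum_{i=1}^n e_{i,n}$ and the standard Gaussian $\cZ$ of Proposition~\ref{prop:ls_exact}, combined with a truncation of $\cZ_n$ to a slowly growing ball. The truncation is essential: by Proposition~\ref{prop:W0} the weight function $W_0$ decays only along the directions on which $\Theta$ is positive definite, while along the kernel of $\Theta$ it is merely bounded, so a crude control of its derivatives forces a restriction to a compact region.

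\emph{Step 1 (truncation set).} I would take $\widetilde D_n = \{z\in\RR^p:\|z\|\le \rho_n\}$ with $\rho_n = \log n$. The sub-Gaussian tail bound of Proposition~\ref{lem:Zn_moment_bound}(ii) gives
\[
    \sup_{\FF_n\in\cF_n}\PP_{\FF_n}\big(\|\cZ_n\|>\rho_n\big)
    \le 2\exp\!\Big(p\log 5 - \tfrac{\rho_n^{2}}{32\sigma^{2}\|H^{-1}\|}\Big)\longrightarrow 0
\]
(using the standing uniform bound on $\|H^{-1}\|$ over $\cF_n$), so $\widetilde D_n$ is admissible in the sense of Theorem~\ref{thm:relative_diff}; since $\cZ$ is a fixed Gaussian, $\PP(\|\cZ\|>\rho_n)\to0$ as well.

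\emph{Step 2 (smoothing and the Stein bound).} Since $\boldone_{\widetilde D_n}$ is not smooth, I would replace it by a $C^3$ cutoff $\chi_n:\RR^p\to[0,1]$ with $\chi_n\equiv1$ on $\widetilde D_n$, $\operatorname{supp}\chi_n\subseteq\{\|z\|\le 2\rho_n\}$, and $\|\partial^{m}_{i_1\cdots i_m}\chi_n\|_\infty\lesssim\rho_n^{-m}$ for $m\le3$. Because $W_0(z)\lesssim\operatorname{Exp}(\mn^j z+\tail,\Theta)\le1$ by Proposition~\ref{prop:W0}, $\cP^{j\cdot E}$ takes values in $[0,1]$, and $\arb$ is bounded, swapping $\boldone_{\widetilde D_n}$ for $\chi_n$ in the numerators of $\RD^{(1)}$ and $\RD^{(2)}$ changes them by at most a constant multiple of $\PP_{\FF_n}(\|\cZ_n\|>\rho_n)+\PP(\|\cZ\|>\rho_n)$, which vanishes uniformly. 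It then suffices to bound $\sup_{\FF_n}\big|\EE_{\FF_n}[G(\cZ_n)]-\EE_{\cN}[G(\cZ)]\big|$ for $G=W_0\chi_n$ and for $G=(\arb\circ\cP^{j\cdot E})\,W_0\,\chi_n$, both bounded and $C^3$, so Lemma~\ref{prop:stein_bound} applies and bounds this difference by
\[
    \frac{1}{\sqrt n}\sum_{\lambda+\gamma\le3}\ \sum_{i_1,i_2,i_3\in[p]}\EE_{\FF_n}\!\Big[\|e_{1,n}\|^{\lambda}\|e_{1,n}^{*}\|^{\gamma}\sup_{\alpha,\kappa\in[0,1]}\int_0^1\tfrac{\sqrt t}{2}\,\EE_{\cN}\big[\,|\partial^{3}_{i_1,i_2,i_3}G(\sqrt t\,\W+\sqrt{1-t}\,\cZ)|\,\big]\,dt\Big].
\]
Expanding $\partial^3 G$ by the product and chain rules and invoking Propositions~\ref{prop:pivot_bound} and~\ref{prop:W0} (together with the analogous first- and second-order bounds recorded there) and the derivative bounds on $\chi_n$, every third-order partial of $G$ is bounded by a fixed power of $\|\mn^j z+\tail\|$ on $\{\|z\|\le2\rho_n\}$ and vanishes outside it. Since $r_n\le C$, the reparametrisation~\eqref{reparam} gives $\|\mn^j z+\tail\|\lesssim\rho_n$ on that set, so $\|\partial^3 G\|_\infty\lesssim\rho_n^{C'}$ for a constant $C'$ independent of $n$. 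As $p$ is fixed and, by Cauchy--Schwarz and Proposition~\ref{lem:en_moment_bound}(ii), $\EE_{\FF_n}[\|e_{1,n}\|^{\lambda}\|e_{1,n}^{*}\|^{\gamma}]\lesssim1$ uniformly over $\cF_n$, the displayed bound is $\lesssim\rho_n^{C'}/\sqrt n=(\log n)^{C'}/\sqrt n\to0$, uniformly in $\FF_n$.

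\emph{Step 3 (conclusion and main obstacle).} Steps~1--2 give $\sup_{\FF_n}\big|\EE_{\FF_n}[W_0(\cZ_n)\boldone_{\widetilde D_n}(\cZ_n)]-\EE_{\cN}[W_0(\cZ)\boldone_{\widetilde D_n}(\cZ)]\big|\to0$ and the analogue with the extra factor $\arb\circ\cP^{j\cdot E}$; dividing by $\EE_{\cN}[W_0(\cZ)]$, which is bounded below uniformly (for $\|z\|\le1$ with $r_n=O(1)$ one has $W_0(z)\ge c_0>0$, so $\EE_{\cN}[W_0(\cZ)]\ge c_0\,\PP(\|\cZ\|\le1)>0$), yields $\lim_n\sup_{\FF_n\in\cF_n}\RD^{(1)}=0$ and $\lim_n\sup_{\FF_n\in\cF_n}\RD^{(2)}=0$. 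I would carry out scenario~1 for $(I_1^j,I_2^j)$ in detail and note the remaining two are identical. The step I expect to be the real obstacle is the calibration that ties everything together: $\rho_n$ must grow fast enough that the sub-Gaussian tail $\exp(-c\rho_n^{2})$ still vanishes yet slowly enough that $\rho_n^{C'}/\sqrt n\to0$, and, hand in hand with this, one must secure uniform-in-$(n,\FF_n)$ two-sided control of the constants $C_l,C_l'$ of Propositions~\ref{prop:pivot_bound}--\ref{prop:W0}, of $Q^{j\top}\Omega^{-1}Q^j$, and of $\|H^{-1}\|$, all of which ultimately reduce to uniform non-degeneracy of $H$, $J_{E,E}$, and $\Omega$.
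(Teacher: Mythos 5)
Your proof is correct, but it takes a genuinely different route from the paper's. In the paper's Case I argument the truncation is not used at all: the proof takes $\widetilde D_n=\RR^p$, applies the Stein bound (Lemma~\ref{prop:stein_bound}) directly to $G^{(1)}=W_0$ and $G^{(2)}=\arb\circ\cP^{j\cdot E}\times W_0$, and controls the polynomially growing third derivatives of Propositions~\ref{prop:pivot_bound}--\ref{prop:W0} by integrating them against the Gaussian law of $\cZ$ (the inner expectation $\EE_{\cN}$) and against finite moments of $e_{1,n}$, $e_{1,n}^*$ and $\cZ_n[-1]$ (Proposition~\ref{lem:en_moment_bound}), yielding the sharp bound $O(n^{-1/2})$. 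This means your stated motivation for the truncation --- that $W_0$ is merely bounded (not decaying) on $\ker\Theta$ and hence ``a crude control of its derivatives forces a restriction to a compact region'' --- is not quite right: the polynomial growth is already tamed by the Gaussian weight inside the Stein integral, and in the paper the restriction to a compact set is reserved for Case~II, where the large parameter $r_n\to\infty$ (not the unbounded directions of $W_0$) creates the real difficulty. That said, your alternative is still valid: the truncation to $\|z\|\le\log n$ with a $C^3$ cutoff $\chi_n$, the resulting bound $\|\partial^3(W_0\chi_n)\|_\infty\lesssim(\log n)^{C'}$, and the trade-off $(\log n)^{C'}/\sqrt n\to 0$ against the sub-Gaussian tail $\exp(-c(\log n)^2)$ from Proposition~\ref{lem:Zn_moment_bound} all go through, as does your lower bound on $\EE_{\cN}[W_0(\cZ)]$, which matches the paper's argument. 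The cost is a logarithmically weaker rate and the extra smoothing step; the paper's route is more economical and makes it clearer that the truncation is a device peculiar to the rare-event regime rather than to $W_0$'s lack of decay. (Two small points to tighten: Propositions~\ref{prop:pivot_bound} and~\ref{prop:W0} as stated give only the third-order bounds, so the ``analogous first- and second-order bounds'' you invoke when expanding the product rule should be spelled out, and the uniform lower bound on $\EE_{\cN}[W_0(\cZ)]$ requires fixing a ball $\cS_{C_0}$ on which $W_0\ge c_0$ uniformly over $r_n\le C$ --- this is exactly the argument the paper uses in the display ending~\eqref{second:part}.)
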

\begin{proof}
First, we prove that
\begin{flalign}
&\begin{aligned}
\text{(i)}&\;\; \sup_{\FF_n \in \cF_n}\left|\EE_{\FF_n} \left[W_0 \left(\Upsilon_n\right)\right] - \EE_{\cN} \left[W_0 \left(\cZ\right)\right]\right| \le \frac{1}{\sqrt{n}}; \\
\text{(ii)}&\;\; \sup_{\FF_n \in \cF_n}\left|\EE_{\FF_n} \left[\arb \circ \pivotup \times W_0 \left(\Upsilon_n\right)\right] - \EE_{\cN} \left[\arb \circ \pivotz \times W_0 \left(\cZ\right)\right]\right| \le \frac{1}{\sqrt{n}}.
\end{aligned}&&
\label{first:part}
\end{flalign}

For the sake of brevity, we define the real-valued functions
\begin{equation*}
\begin{aligned}
    & G^{(1)}\left(z\right) = W_0\left(z\right), \ \  G^{(2)}\left(z\right) = H \circ \cP ^ {j \cdot E}\left(z\right) \times W_0\left(z\right).
\end{aligned}
\end{equation*}
We then apply the Stein bound in Proposition \ref{prop:stein_bound} with $G= G^{(l)}$ for $l\in \{1,2\}$, and use the definitions of $\W$, $\Upsilon_n[-1]$, $e_{1, n}$, and $e_{1, n}^*$ provided in this result.
Simplifying the Stein bound, we obtain 
\begin{equation}
\begin{aligned}
& \left|\EE_{\FF_n} \left[G ^ {(l)} (\Upsilon_n)\right] - \EE_{\cN} \left[G ^ {(l)}(\cZ)\right]\right|\\
&\lesssim \frac{1}{\sqrt{n}} \sum_{\substack{\lambda, \gamma \in \NN: \\ \lambda+\gamma \leq 3}} \sum_{i_1, i_2, i_3} 
	\EE_{\FF_n}
		\bigg[\|e_{1, n}\| ^ \lambda \|e_{1, n} ^ *\| ^ \gamma 
		\sup_{\alpha, \kappa \in[0,1]} \int_0 ^ 1 \frac{\sqrt{t}}{2} \, \EE_{\cN} \bigg[\sum_{l = 0} ^ 3 \sqrt{t} \big\|\W\| ^ l + \sqrt{1-t} \|\cZ\| ^ l\bigg] dt\bigg],
\end{aligned}
\label{bdd:1}
\end{equation} 
by using the behavior of the pivot and weight function in Propositions \ref{prop:pivot_bound} and \ref{prop:W0}, which yields:
\[
\begin{aligned}
 	\left|\partial_{i_1, i_2, i_3} ^ 3 G ^ {(l)} \left(\sqrt{t} \W + \sqrt{1-t} z\right)\right| 
    &\lesssim \sum_{l = 0} ^ 3 \big\|\mn ^ j \sqrt{t} \W + \mn ^ j \sqrt{1-t} z + \tail\big\| ^ l \\
    &\lesssim \sum_{l = 0} ^ 3 \sqrt{t} \big\|\W\| ^ l + \sqrt{1-t} \|z\| ^ l.
\end{aligned}
\]

We conclude the proof by noting that the bound on the right-hand side of \eqref{bdd:1} satisfies:
\begin{equation*}
\begin{aligned}
& \frac{1}{\sqrt{n}} \sum_{\substack{\lambda, \gamma \in \NN: \\ \lambda+\gamma \leq 3}} \sum_{i_1, i_2, i_3} 
	\EE_{\FF_n}
		\bigg[\|e_{1, n}\| ^ \lambda \|e_{1, n} ^ *\| ^ \gamma 
		\sup_{\alpha, \kappa \in[0,1]} \int_0 ^ 1 \frac{\sqrt{t}}{2} \, \EE_{\cN} \bigg[\sum_{l = 0} ^ 3 \sqrt{t} \big\|\W\| ^ l + \sqrt{1-t} \|\cZ\| ^ l\bigg] dt\bigg]\\
& \lesssim	\frac{1}{\sqrt{n}} \sum_{\lambda, \gamma \in \NN: \lambda+\gamma \leq 3} 
	\EE_{\FF_n}
		\bigg[\|e_{1, n}\| ^ \lambda \|e_{1, n} ^ *\| ^ \gamma 
		\sup_{\alpha, \kappa \in[0,1]} \big\|\W\big\| ^ 3\bigg]\\
&\lesssim \frac{1}{\sqrt{n}} \sup_n \sup_{\FF_n \in \cF_n} \EE_{\FF_n} \left[\left\|e_{1, n}\right\| ^ 6\right],		
\end{aligned}
\end{equation*}
where the final display uses the independence of the variables $\Upsilon_n[-1]$, $e_{1, n}$, and $e_{1, n}^*$.
At last, using Proposition \ref{lem:an_moment_bound}, we note that  
$$\sup_n \sup_{\FF_n \in \cF_n} \EE_{\FF_n} \left[\left\|e_{1, n}\right\| ^ 6\right]<\infty,$$
which proves the bounds in \eqref{first:part}.

In the next step, we prove that
\begin{equation}
\EE_{\cN} \left[W_0 \left(\cZ\right)\right] \ge C
\label{second:part}
\end{equation}
To do so, observe that
\[
    \EE_{\cN} \left[W_0\left(\cZ\right)\right]
    \propto \EE_{\cN} \left[\int_{I_1 ^ j} ^ {\infty} 
		\exp \left\{- \frac{1}{2} (Q ^ j t + \mn ^ j z + \tail) ^ \top \varrand ^ {-1} (Q ^ j t + \mn ^ j z + \tail)\right\} dt\right].
\]
Choose $C_0$, a positive constant, such that $\mathbb{P}_{\cN}[\cZ \in  \cS_{C_0}]\geq \frac{1}{2}$ for $\cS_{C_0}= [-C_0 \cdot 1_p, C_0 \cdot 1_p]$.
Using the parameterization in \eqref{reparam},
%Since $Q ^ j I_1 ^ j + \tail = - r_n \bar b$, 
we have that
\[
    \EE_{\cN} \left[W_0\left(\cZ\right)\right]
    \geq  \EE_{\cN} \left[\int_{I_1 ^ j} ^ {\infty} \!
		\exp \left\{- \frac{1}{2} (Q ^ j t +  \mn ^ j z - r_n \bar b - Q ^ j I_1 ^ j) ^ \top \varrand ^ {-1} (Q ^ j t + \mn ^ j z - r_n \bar b- Q ^ j I_1 ^ j)\right\} dt \times \boldone_{\cZ \in \cS} \right].
\]
For $z\in \cS_{C_0}$ and $r_n \leq C$, we have that
$$
\exp \left\{- \frac{1}{2} (Q ^ j t +  \mn ^ j z - r_n \bar b - Q ^ j I_1 ^ j) ^ \top \varrand ^ {-1} (Q ^ j t + \mn ^ j z - r_n \bar b- Q ^ j I_1 ^ j)\right\} \geq C_1,
$$ 
for a positive constant $C_1$.

Based on the above-stated observation, we have 
\[
    \EE_{\cN} \left[W_0\left(\cZ\right)\right]
    \ge C_1 \times \mathbb{P}_{\cN}[\cZ \in  \cS_{C_0}]
    \ge C_1 \times \frac{1}{2},
\]
which proves \eqref{second:part}.

In the definition of $\RD ^ {(1)}$ and $\RD ^ {(2)}$, we fix  $\widetilde{D}_n=\mathbb{R}^p$.
Our proof is complete by combining the results in \eqref{first:part} and \eqref{second:part}.
\end{proof}

%--------------------------------------------------------------------------
%--------------------------------------------------------------------------
\subsubsection{Case II: $r_n \rightarrow \infty$ and $r_n = o_p (n ^ {1/6})$}
\label{CaseII}

To analyze this case, we start by presenting some useful results.
 
 \begin{prop}
\label{prop:An_upper_bound}
There exist Lebesgue-almost everywhere differentiable functions $\cA_n, \bar \cA_n: \RR ^ p \rightarrow \RR$ such that the following assertions hold.
\begin{enumerate}
\item For $Q ^ {j\top} \varrand ^ {-1} \bar b < 0$ and $m \in \{ 0, 1, 2, 3\}$, it holds that 
$$W_0 (z) = \operatorname{Exp} \big(\mn ^ j z - r_n \bar b, \varrand ^ {-1}\big) \!\times\! \cA_n(z), \text{ and  } \underset{\FF_n \in \cF_n} {\sup}\underset{z \in \cD_n}{\sup} |r_n| \left|\partial_{i_1, \ldots i_m} ^ m \cA_n (z) \right| < \infty,$$ 
where $\cD_n = [-c_0 r_n \cdot 1_p, c_0 r_n \cdot 1_p]$ and $c_0$ is a positive constant such that
\begin{equation*}
\label{eq:c_range}
     c_0 < \frac{1}{2} \frac{|Q ^ {j\top} \varrand ^ {-1} \bar b|}{\sqrt{p} \|R^{j \top}\varrand ^ {-1}Q ^ {j}  \|}.
\end{equation*}
\item For $Q ^ {j\top} \varrand ^ {-1} \bar b > 0$ and $m \in \{ 0, 1, 2, 3\}$, it holds that 
$$W_0 (z) = \operatorname{Exp} \big(\mn ^ j z  - r_n \bar b, \Theta\big) \!\times\! \bar \cA_n (z) \text{ and } \sup_{\FF_n \in \cF_n} \sup_{z\in \mathbb{R}^p}\left|\partial_{i_1, \ldots i_m} ^ m \bar \cA_n (z) \right| < \infty.$$
\end{enumerate}
\end{prop}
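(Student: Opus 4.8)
The plan is to build everything on the closed form for $W_0$ derived in the proof of Proposition~\ref{prop:W0}. There, completing the square in the integration variable $t$ and using $\Theta Q^j = 0$ shows (in the scenario $I_2^j=\infty$; the other two scenarios are identical) that
\[
    W_0(z) = \mathrm{const}\cdot\operatorname{Exp}\!\big(\mn^j z + \tail,\ \Theta\big)\int_{I_1^j+\cL(z)}^{\infty}\exp\!\Big\{-\tfrac12 Q^{j\top}\varrand^{-1}Q^j\,\tilde t^{\,2}\Big\}\,d\tilde t ,
    \quad \cL(z)=\frac{Q^{j\top}\varrand^{-1}(\mn^j z+\tail)}{Q^{j\top}\varrand^{-1}Q^j},
\]
with an $n$-free constant. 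Substituting the reparametrisation $\tail = -r_n\bar b - Q^j I_1^j$ from \eqref{reparam} and using $\Theta Q^j=0$ once more, the exponent becomes $\operatorname{Exp}(\mn^j z - r_n\bar b,\Theta)$ and the lower limit collapses to $I_1^j+\cL(z)=\cM(z)$, where
\[
    \cM(z) := \frac{Q^{j\top}\varrand^{-1}\big(\mn^j z - r_n\bar b\big)}{Q^{j\top}\varrand^{-1}Q^j}
\]
is affine in $z$ with an $n$-free linear coefficient vector. Writing $c := \tfrac12 Q^{j\top}\varrand^{-1}Q^j>0$ and $\Psi(s):=\int_s^\infty e^{-ct^2}\,dt$, this gives the master identity $W_0(z) = \mathrm{const}\cdot\operatorname{Exp}(\mn^j z - r_n\bar b,\Theta)\,\Psi(\cM(z))$. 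I would also record first that, under Assumption~\ref{aspt:moment_bound}, the population objects $\mn^j$, $\varrand$, $Q^j$ are bounded and $Q^{j\top}\varrand^{-1}Q^j$ is bounded away from $0$, uniformly over $\cF_n$ — this is what all the uniformity claims below reduce to.

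Part~2 ($Q^{j\top}\varrand^{-1}\bar b>0$) then needs essentially no work: set $\bar\cA_n(z):=\mathrm{const}\cdot\Psi(\cM(z))$, so $W_0(z)=\operatorname{Exp}(\mn^j z-r_n\bar b,\Theta)\,\bar\cA_n(z)$. Since $\Psi$ is bounded by $\sqrt{\pi/c}$ and every derivative $\Psi^{(m)}(s)$ is a polynomial in $s$ times $e^{-cs^2}$, hence globally bounded, the chain rule applied to the affine map $\cM$ (whose slope is $n$-free and, by the uniform moment control, uniformly bounded over $\cF_n$) yields $\sup_{\FF_n\in\cF_n}\sup_{z\in\RR^p}|\partial_{i_1,\ldots,i_m}^m\bar\cA_n(z)|<\infty$ for $m\in\{0,1,2,3\}$. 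Note that the sign condition $Q^{j\top}\varrand^{-1}\bar b>0$ is not used for this bound; what it signals is that the natural Gaussian partner of $\Psi(\cM(z))$ is the degenerate form $\operatorname{Exp}(\cdot,\Theta)$, which is exactly the factor claimed.

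Part~1 ($Q^{j\top}\varrand^{-1}\bar b<0$) is where the restriction to $\cD_n$ enters. Here I would first trade the degenerate Gaussian for the nondegenerate $\varrand^{-1}$-Gaussian via $\operatorname{Exp}(x,\Theta)=\operatorname{Exp}(x,\varrand^{-1})\exp\{c\,\cM_x^2\}$ with $\cM_x = Q^{j\top}\varrand^{-1}x/(Q^{j\top}\varrand^{-1}Q^j)$; taking $x=\mn^j z-r_n\bar b$ gives $\cM_x=\cM(z)$, so $W_0(z)=\operatorname{Exp}(\mn^j z-r_n\bar b,\varrand^{-1})\,\cA_n(z)$ with $\cA_n(z):=\mathrm{const}\cdot M_c(\cM(z))$ and $M_c(s):=e^{cs^2}\int_s^\infty e^{-ct^2}\,dt$ the rescaled Mills ratio. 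Because $M_c(s)$ grows like $e^{cs^2}$ as $s\to-\infty$, $\cA_n$ cannot be controlled globally, but on $\cD_n=[-c_0 r_n\cdot 1_p,\ c_0 r_n\cdot 1_p]$ the argument $\cM(z)$ stays far out in the favourable right tail: the $z$-independent part of $\cM$ equals $\tfrac{|Q^{j\top}\varrand^{-1}\bar b|}{Q^{j\top}\varrand^{-1}Q^j}r_n>0$, while over $\cD_n$ the $z$-dependent part has absolute value at most $\tfrac{\sqrt p\,\|(\mn^j)^{\top}\varrand^{-1}Q^j\|}{Q^{j\top}\varrand^{-1}Q^j}c_0 r_n$, and the choice of $c_0$ in \eqref{eq:c_range} makes the latter at most half the former. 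Hence $\cM(z)\ge c_1 r_n$ on $\cD_n$ with $c_1:=\tfrac12|Q^{j\top}\varrand^{-1}\bar b|/(Q^{j\top}\varrand^{-1}Q^j)>0$, uniformly over $\cF_n$.

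To finish Part~1 I would invoke the standard estimates $0<M_c(s)\le\tfrac{1}{2cs}$ for $s>0$ and $|M_c^{(m)}(s)|\le C_m/s$ for $s\ge1$, $m\in\{1,2,3\}$, and combine them with the chain rule to obtain, for $n$ large enough that $c_1 r_n\ge1$,
\[
    \sup_{\FF_n\in\cF_n}\ \sup_{z\in\cD_n}\ |r_n|\,\big|\partial_{i_1,\ldots,i_m}^m\cA_n(z)\big|\ \le\ |r_n|\cdot\frac{C}{\cM(z)}\ \le\ \frac{C}{c_1}\ <\ \infty,\qquad m\in\{0,1,2,3\},
\]
the case $m=0$ using $M_c(s)\le 1/(2cs)$ directly. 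The main obstacle is precisely these Mills-ratio derivative bounds: the crucial feature is the decay $|M_c^{(m)}(s)|=O(1/s)$ (in fact $O(1/s^2)$ for $m\ge1$) rather than just $O(1)$, which is what lets the $|r_n|$ prefactor be absorbed, and it appears only if one exploits the cancellation in the recursion $M_c'(s)=2cs\,M_c(s)-1$ and its iterates $M_c^{(m+1)}(s)=2c\big(s\,M_c^{(m)}(s)+m\,M_c^{(m-1)}(s)\big)$, rather than bounding term by term. A secondary point is the uniformity of every constant over $\cF_n$, which — as noted at the outset — I would reduce to uniform bounds on the population moment matrices $J$, $H$ (hence on $\mn^j$, $\varrand$ and $Q^j$) valid throughout $\cF_n$; this is also where the hypotheses of Theorem~\ref{thm:main_weak_convergence} get used in this step.
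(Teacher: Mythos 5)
Your decomposition of $W_0$, the use of $\Theta Q^j=0$, the reparametrization $L_n + Q^j I_1^j = -r_n\bar b$, and the lower bound $\cM(z)\ge c_1 r_n$ on $\cD_n$ flowing from the choice of $c_0$ all coincide with the paper's argument, and your treatment of Part~2 is the paper's Leibniz-rule calculation under another name. The one place you depart is in bounding $\cA_n$ and its derivatives in Part~1. The paper differentiates
\[
\cA_n(z)\ \propto\ \int_0^\infty \exp\Big\{-\tfrac12\,Q^{j\top}\varrand^{-1}Q^j\,\tilde t^{\,2}\Big\}\exp\Big\{-\tilde t\,Q^{j\top}\varrand^{-1}\big(\mn^j z-r_n\bar b\big)\Big\}\,d\tilde t
\]
under the integral sign and then simply discards the nonpositive quadratic exponent, so that the remaining Gamma-type integral gives $\big|\partial^m_{i_1,\ldots,i_m}\cA_n(z)\big|\lesssim \big[Q^{j\top}\varrand^{-1}(\mn^j z-r_n\bar b)\big]^{-(m+1)}\lesssim r_n^{-(m+1)}$ on $\cD_n$; the $|r_n|$ prefactor is absorbed trivially. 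You instead repackage $\cA_n$ as the rescaled Mills ratio $M_c(\cM(z))$ and invoke $|M_c^{(m)}(s)|\le C_m/s$ for $s\ge 1$, $m\le 3$. As you yourself flag, these estimates are correct but not free: bounding the recursion $M_c^{(m+1)}=2c\big(sM_c^{(m)}+mM_c^{(m-1)}\big)$ term by term gives only $O(1)$ at $m=3$, so one must genuinely exploit the cancellation in $M_c'(s)=2csM_c(s)-1$ and carry the asymptotic expansion of $M_c$ to high enough order to see it propagate through two more derivatives. This is routine, but it is an extra step left unproven in your write-up, and the paper's direct integral bound is both shorter and sharper, so you would do well to adopt it. One further caveat on your opening remark: Assumption~\ref{aspt:moment_bound} yields upper bounds on the population moment matrices but does not, by itself, bound $Q^{j\top}\varrand^{-1}Q^j$ away from zero uniformly over $\cF_n$; that nondegeneracy is an implicit regularity requirement on $\cF_n$ shared with the paper, not a consequence of the sub-Gaussian covariate condition.
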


\begin{proof}
We consider two cases depending on the sign of the scalar $Q ^ {j\top} \varrand ^ {-1} \bar b$.
First, let us consider the case when $Q ^ {j\top} \varrand ^ {-1} \bar b < 0$.
Observe that
\begin{equation}
\label{eq:decomp_W0}
\begin{aligned}
    W_0 \left(z\right)
    &\propto \operatorname{Exp} \left(\mn ^ j z + \tail, \Theta\right) \times \int_{I_1 ^ j} ^ {\infty} \exp \left\{- \frac{Q ^ {j\top} \varrand ^ {-1} Q ^ j}{2} \bigg[t + \frac{Q ^ {j\top} \varrand ^ {-1} (\mn ^ j z + \tail)}{Q ^ {j\top} \varrand ^ {-1} Q ^ j}\bigg] ^ 2\right\} dt \\
    &\propto \operatorname{Exp} \left(\mn ^ j z + \tail + Q ^ j I_1 ^ j, \Theta\right) \times \int_{I_1 ^ j} ^ {\infty} \exp \left\{- \frac{Q ^ {j\top} \varrand ^ {-1} Q ^ j}{2} \bigg[t + \frac{Q ^ {j\top} \varrand ^ {-1} (\mn ^ j z + \tail)}{Q ^ {j\top} \varrand ^ {-1} Q ^ j}\bigg] ^ 2\right\} dt.
\end{aligned}
\end{equation}
We write the weight function $W_0 \left(z\right)$ as
\[
\begin{aligned}
    W_0 \left(z\right) 
    = \operatorname{Exp} \left(\mn ^ j z + \tail + Q ^ j I_1 ^ j, \varrand ^ {-1}\right) 
    \times \cA_n(z),
\end{aligned}
\]
where 
\[
\begin{aligned}
	\cA_n (z) = C &\exp \left\{\frac{1}{2} \left(\mn ^ j z + \tail + Q ^ j I_1 ^ j\right) ^ \top \frac{\varrand ^ {-1} Q ^ j Q ^ {j\top} \varrand ^ {-1}}{Q ^ {j\top} \varrand ^ {-1} Q ^ j}\left(\mn ^ j z + \tail + Q ^ j I_1 ^ j\right)\right\} \\
	&\quad\quad \times 
	\int_{I_1 ^ j} ^ {\infty}
		\exp \left\{- \frac{Q ^ {j\top} \varrand ^ {-1} Q ^ j}{2} \bigg[t + \frac{Q ^ {j\top} \varrand ^ {-1} (\mn ^ j z + \tail)}{Q ^ {j\top} \varrand ^ {-1} Q ^ j}\bigg] ^ 2\right\} dt,
\end{aligned}
\]
for a constant $C$.

By substituting $\tilde{t} = t - I_1 ^ j$ in the integral involved in the expression of $\cA_n(z)$, we obtain 
\[
\begin{aligned}
    \cA_n (z) &\propto \exp \left\{\frac{1}{2} \left(\mn ^ j z - r_n \bar b\right) ^ \top \frac{\varrand ^ {-1} Q ^ j Q ^ {j\top} \varrand ^ {-1}}{Q ^ {j\top} \varrand ^ {-1} Q ^ j}\left(\mn ^ j z - r_n \bar b\right)\right\} \\
	&\quad\quad \times 
	\int_{0} ^ {\infty} 
		\exp \left\{-\frac{1}{2} \left(\mn ^ j z - r_n \bar b + Q ^ j \tilde{t}\right) ^ \top \frac{\varrand ^ {-1} Q ^ j Q ^ {j\top} \varrand ^ {-1}}{Q ^ {j\top} \varrand ^ {-1} Q ^ j}\left(\mn ^ j z - r_n \bar b + Q ^ j \tilde{t}\right)\right\} d \tilde{t}.
\end{aligned}
\]
A direct algebraic simplification of the expression for $\cA_n(z)$ leads us to:
\begin{equation}
\label{eq:rewrite_An}
\begin{aligned}
	\cA_n (z) 
	&\propto \int_{0} ^ {\infty}
		\exp \left\{
		- \frac{1}{2} \tilde{t} Q ^ {j\top} \varrand ^ {-1} Q ^ j \tilde{t}\right\} \times
		\exp \left\{- \tilde{t} Q ^ {j\top} \varrand ^ {-1} \left(\mn ^ j z - r_n \bar b\right)\right\} d\tilde{t} \\
	&\leq \int_0 ^ {\infty} \exp \left\{- \tilde{t} Q ^ {j\top} \varrand ^ {-1} \left(\mn ^ j z - r_n \bar b\right) \right\} d\tilde{t}.
\end{aligned}
\end{equation}
For $z\in \cD_n$, we have
\[
\begin{aligned}
    \cA_n (z) \lesssim \frac{1}{Q ^ {j\top} \varrand ^ {-1} \left(\mn ^ j z - r_n \bar b\right)}
     \lesssim \frac{1}{r_n |Q ^ {j\top} \varrand ^ {-1} \bar b|}.
\end{aligned}
\]
% Choose $c_0$ satisfying that
% \[
%     c_0 < - \frac{Q ^ {j\top} \varrand ^ {-1} \bar b}{\sqrt{p}  \left\|Q ^ {j\top} \varrand ^ {-1}\mn ^ j\right\|}
% \]
% such that $Q ^ {j\top} \varrand ^ {-1} \left(\mn ^ j z - r_n \bar b\right) > 0$ for $z \in \cD_n$ and for sufficient large $n$.
% \textcolor{red}{use different notations for constants; $C_1$ and $C_2$ are already taken; Explain that you're using the exponential density function for integration.}
This proves the assertion for $m = 0$. \\
\indent From the display in \eqref{eq:rewrite_An}, we note that
\[
\begin{aligned}
	\partial ^ 1 \cA_n (z)
	\propto \int_{0} ^ {\infty}
		&- \tilde{t} Q ^ {j\top} \varrand ^ {-1} \mn ^ j
		\exp \left\{
		- \frac{1}{2} \tilde{t} Q ^ {j\top} \varrand ^ {-1} Q ^ j \tilde{t}\right\} 
        \times \exp \left\{- \tilde{t} Q ^ {j\top} \varrand ^ {-1} \left(\mn ^ j z - r_n \bar b\right) \right\} d\tilde{t}.
\end{aligned}
\]
This leads us to observe that
\[
\begin{aligned}
	\left|\partial_{i_1} ^ 1 \cA_n (z)\right|
	&\lesssim \int_{0} ^ {\infty} - \tilde{t} \exp \left\{- \tilde{t} Q ^ {j\top} \varrand ^ {-1} \left(\mn ^ j z - r_n \bar b\right) \right\} d\tilde{t} \\
    &=\frac{1}{\left[Q ^ {j\top} \varrand ^ {-1} \left(\mn ^ j z - r_n \bar b\right)\right] ^ 2}.
\end{aligned}
\]
As a result, our assertion follows for $m = 1$ whenever $z\in \cD_n$. A similar strategy is applied to obtain the conclusions for  $\big|\partial_{i_1, i_2} ^ 2 \cA_n (z)\big|$ and $\big|\partial_{i_1, i_2, i_3} ^ 3 \cA_n (z)\big|$.

\medskip
Now, we consider the case when $Q ^ {j\top} \varrand ^ {-1} \bar b > 0$. 
We return to the display in \eqref{eq:decomp_W0} and observe that
\[
\begin{aligned}
	W_0 (z) = \operatorname{Exp} \left(\mn ^ j z + \tail + Q ^ j I_1 ^ j, \Theta\right) \times \bar \cA_n(z),
\end{aligned}
\]
where
\[
    \bar \cA_n(z)
    = \bar{C} \int_{I_1 ^ j} ^ {\infty} \exp \left\{- \frac{Q ^ {j\top} \varrand ^ {-1} Q ^ j}{2} \bigg[t + \frac{Q ^ {j\top} \varrand ^ {-1} (\mn ^ j z + \tail)}{Q ^ {j\top} \varrand ^ {-1} Q ^ j}\bigg] ^ 2\right\} dt,
\]
for a constant $\bar{C}$.

\indent The conclusions for $\big|\partial_{i_1} ^ 1 \bar \cA_n (z)\big|$, $\big|\partial_{i_1, i_2} ^ 2 \bar \cA_n (z)\big|$ and $\big|\partial_{i_1, i_2, i_3} ^ 3 \bar \cA_n (z)\big|$ now follow directly after we apply the Leibniz integral rule.
% \color{red}{Move the proof for this case here and remove the following proposition. Define all quantities clearly in this proof. See comments below.}
\end{proof}

%------------------------------------------------------------------------------
%------------------------------------------------------------------------------

\begin{prop}
\label{prop:denominator_expatation}
Suppose that $Q ^ {j\top} \varrand ^ {-1} \bar b < 0$.
It holds that
\[
\begin{aligned}
\EE_{\cN} \left[W_0 \left(\cZ\right)\right] &\propto\operatorname{Exp} \left(Q ^ j I_1 ^ j + \tail, (\varrand + \mn ^ j \mn ^ {j\top}) ^ {-1} - \frac{(\varrand + \mn ^ j \mn ^ {j\top}) ^ {-1} Q ^ j Q ^ {j\top} (\varrand + \mn ^ j \mn ^ {j\top}) ^ {-1} }{  Q ^ {j\top} (\varrand + \mn ^ j \mn ^ {j\top}) ^ {-1} Q ^ j}\right) \\
&\;\;\;\;\;\;\;\;\;\;\;\;\;\;\;\;\;\;\;\;\;\;\;\;\;\;\;\;\;\;\;\;\;\;\;\;\;\;\;\;\;\;\;\;\;\;\;\;\;\;\;\;\;\;\;\;\;\;\;\;\;\;\;\;\;\;\;\;\;\;\;\;\;\;\;\;\;\;\;\;\;\;\;\;\;\;\;\;\times \PP(I_1 ^ j \leq \rv \leq \infty),
\end{aligned}
\]
where $\rv$ is distributed as
\[
	\cN \left(- \left[Q ^ {j\top} (\varrand + \mn ^ j \mn ^ {j\top}) ^ {-1} Q ^ j\right] ^ {-1} {Q ^ {j\top} (\varrand + \mn ^ j \mn ^ {j\top}) ^ {-1} } \tail, \left[Q ^ {j\top} (\varrand + \mn ^ j \mn ^ {j\top}) ^ {-1} Q ^ j\right]^{-1}\right).
\]
\end{prop}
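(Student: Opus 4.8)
The plan is to evaluate $\EE_{\cN}\!\left[W_0(\cZ)\right]$ directly as an iterated Gaussian integral. Since we are in the scenario $I_1^j>-\infty$, $I_2^j=\infty$, we have
$\EE_{\cN}\!\left[W_0(\cZ)\right]=\int_{\RR^p}\phi\!\left(z;0_p,I_{p,p}\right)\int_{I_1^j}^{\infty}\phi\!\left(Q^j t+\mn^j z+\tail;0_p,\varrand\right)dt\,dz$.
I would first interchange the order of integration (permissible by Tonelli, as the integrand is nonnegative) and carry out the inner integral over $z\in\RR^p$ for each fixed $t$. Writing $w=Q^j t+\tail$, the inner integrand is proportional to $\exp\!\left\{-\tfrac12\!\left[z^\top z+(\mn^j z+w)^\top\varrand^{-1}(\mn^j z+w)\right]\right\}$; completing the square in $z$ with $A=I_{p,p}+\mn^{j\top}\varrand^{-1}\mn^j$ makes the $z$-integral a constant independent of $t$ times $\exp\!\left\{-\tfrac12\,w^\top\!\left(\varrand^{-1}-\varrand^{-1}\mn^j A^{-1}\mn^{j\top}\varrand^{-1}\right)w\right\}$.

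The key algebraic step is to invoke the Sherman--Morrison--Woodbury identity, which gives $\varrand^{-1}-\varrand^{-1}\mn^j\!\left(I_{p,p}+\mn^{j\top}\varrand^{-1}\mn^j\right)^{-1}\mn^{j\top}\varrand^{-1}=\left(\varrand+\mn^j\mn^{j\top}\right)^{-1}$. Hence $\EE_{\cN}\!\left[W_0(\cZ)\right]\propto\int_{I_1^j}^{\infty}\exp\!\left\{-\tfrac12\,(Q^j t+\tail)^\top\!\left(\varrand+\mn^j\mn^{j\top}\right)^{-1}(Q^j t+\tail)\right\}dt$. I would then substitute $\tilde t=t-I_1^j$, so the integration range becomes $(0,\infty)$ and the argument of the quadratic form becomes $Q^j\tilde t+\left(Q^j I_1^j+\tail\right)$.

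Finally, completing the square in the scalar $\tilde t$ with $a=Q^{j\top}\!\left(\varrand+\mn^j\mn^{j\top}\right)^{-1}Q^j>0$ factors the expression into: (i) the $\tilde t$-free prefactor $\operatorname{Exp}\!\big(Q^j I_1^j+\tail,\ \left(\varrand+\mn^j\mn^{j\top}\right)^{-1}-\tfrac{\left(\varrand+\mn^j\mn^{j\top}\right)^{-1}Q^j Q^{j\top}\left(\varrand+\mn^j\mn^{j\top}\right)^{-1}}{Q^{j\top}\left(\varrand+\mn^j\mn^{j\top}\right)^{-1}Q^j}\big)$, and (ii) a one-dimensional Gaussian integral $\int_{0}^{\infty}\exp\!\left\{-\tfrac a2\!\left(\tilde t+b/a\right)^2\right\}d\tilde t$ with $b=Q^{j\top}\!\left(\varrand+\mn^j\mn^{j\top}\right)^{-1}\!\left(Q^j I_1^j+\tail\right)=aI_1^j+Q^{j\top}\!\left(\varrand+\mn^j\mn^{j\top}\right)^{-1}\tail$. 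Recognising (ii) as proportional to $\PP(0\le U\le\infty)$ for $U\sim\cN(-b/a,a^{-1})$ and shifting by $I_1^j$ (using $b/a-I_1^j=a^{-1}Q^{j\top}\!\left(\varrand+\mn^j\mn^{j\top}\right)^{-1}\tail$) rewrites this as $\PP(I_1^j\le\rv\le\infty)$ with $\rv$ the claimed Gaussian, which completes the derivation.

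The computation is elementary throughout; the only step demanding care is the matrix bookkeeping in the $z$-integral together with the correct application of the Woodbury identity that collapses $\varrand^{-1}-\varrand^{-1}\mn^j A^{-1}\mn^{j\top}\varrand^{-1}$ into $\left(\varrand+\mn^j\mn^{j\top}\right)^{-1}$. I would remark that the sign hypothesis $Q^{j\top}\varrand^{-1}\bar b<0$ is not actually used in deriving this identity (all the integrals converge since $a>0$ regardless of sign); it is stated here only because the two sign regimes are handled separately in the subsequent asymptotic analysis, exactly as in Proposition~\ref{prop:An_upper_bound}.
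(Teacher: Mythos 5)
Your proposal is correct and follows essentially the same path as the paper's proof: interchange the order of integration, complete the square in $z$ and collapse $\varrand^{-1}-\varrand^{-1}\mn^j(I_{p,p}+\mn^{j\top}\varrand^{-1}\mn^j)^{-1}\mn^{j\top}\varrand^{-1}$ via the Woodbury identity to $(\varrand+\mn^j\mn^{j\top})^{-1}$, then complete the square in $t$. The only cosmetic difference is that you shift $t\mapsto\tilde t=t-I_1^j$ before completing the square, whereas the paper completes the square directly in $t$ and then observes that the residual quadratic form annihilates $Q^j$ so that $\operatorname{Exp}(\tail,\cdot)=\operatorname{Exp}(Q^jI_1^j+\tail,\cdot)$; your remark that the hypothesis $Q^{j\top}\varrand^{-1}\bar b<0$ is not actually used in this computation is also accurate.
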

\begin{proof}
%We start from observing that
%\begin{equation}
%\label{eq:expectation_W0}
%    \EE_{\cN} \left[W_0 \left(\cZ\right)\right]
%    \propto \int_{I_1 ^ j} ^ {\infty} \phi \big(Q ^ j t + \tail; 0_p, \varrand + \mn ^ j \mn ^ {j\top}\big) dt.
%\end{equation}
We change the order of integration to write
\[
\begin{aligned}
    \EE_{\cN} \left[W_0 \left(\cZ\right)\right]
    &= \EE_{\cN} \left[\int_{I_1 ^ j} ^ {\infty} \phi \big(Q ^ j t + \mn ^ j \cZ + \tail; 0_p, \varrand\big) dt\right] \\
    &= \int_{I_1 ^ j} ^ {\infty} \EE_{\cN} \Big[\phi \big(Q ^ j t + \mn ^ j \cZ + \tail; 0_p, \varrand\big)\Big] dt.  
\end{aligned}
\]

We simplify the expectation in the integrand as:
\begin{equation}
\label{eq:integrand}
\begin{aligned}
	&\EE_{\cN} \Big[\phi \big(Q ^ j t + \mn ^ j \cZ + \tail; 0_p, \varrand\big)\Big] \\
	&= \int_{\RR ^ p} 
		\phi \left(Q ^ j t + \mn ^ j \cZ + \tail; 0_p, \varrand\right) \phi(\cZ; 0_p, I_{p, p}) d\cZ \\
	&\propto \int_{\RR ^ p} 
		\exp \left\{- \frac{1}{2} \big(Q ^ j t + \mn ^ j \cZ + \tail\big) ^ \top \varrand ^ {-1} \big(Q ^ j t + \mn ^ j \cZ + \tail\big) - \frac{1}{2} \cZ ^ \top \cZ\right\} d\cZ \\
	% &\propto \exp \left\{- \frac{1}{2} \big(Q ^ j t + \tail\big) ^ \top \varrand ^ {-1} \big(Q ^ j t + \tail\big) + \frac{1}{2} \big(Q ^ j t + \tail\big) ^ \top \varrand ^ {-1} \mn ^ j \big(I_{p, p} + \mn ^ {j\top} \varrand ^ {-1} \mn ^ j\big) ^ {-1} \mn ^ {j\top} \varrand ^ {-1} \big(Q ^ j t + \tail\big)\right\} \\
	% &\quad\quad\times 
	% 	\int_{\RR ^ p} \exp \bigg\{- \frac{1}{2} \big(\cZ - \xi_n\big) ^ \top \big(I_{p, p} + \mn ^ {j\top} \varrand ^ {-1} \mn ^ j\big) \big(\cZ - \xi_n\big)\bigg\} d\cZ
 	&\propto \exp \left\{- \frac{1}{2} \big(Q ^ j t + \tail\big) ^ \top \left[\varrand ^ {-1} - \varrand ^ {-1} \mn ^ j \big(I_{p, p} + \mn ^ {j\top} \varrand ^ {-1} \mn ^ j\big) ^ {-1} \mn ^ {j\top} \varrand ^ {-1}\right] \big(Q ^ j t + \tail\big)\right\} \\
	&\quad\quad\times 
		\int_{\RR ^ p} \exp \bigg\{- \frac{1}{2} \big(\cZ - \xi_n\big) ^ \top \big(I_{p, p} + \mn ^ {j\top} \varrand ^ {-1} \mn ^ j\big) \big(\cZ - \xi_n\big)\bigg\} d\cZ
\end{aligned}
\end{equation}
where 
\[
    \xi_n = - \big(I_{p, p} + \mn ^ {j\top} \varrand ^ {-1} \mn ^ j\big) ^ {-1} \mn ^ {j\top} \varrand ^ {-1} \big(Q ^ j t + \tail\big).
\]
Using the Woodbury matrix identity, we have
\[
    \varrand ^ {-1} - \varrand ^ {-1} \mn ^ j \big(I_{p, p} + \mn ^ {j\top} \varrand ^ {-1} \mn ^ j\big) ^ {-1} \mn ^ {j\top} \varrand ^ {-1}
    = \big(\varrand + \mn ^ j \mn ^ {j\top}\big) ^ {-1},
\]
which implies that
$$
    \EE_{\cN} \Big[\phi \big(Q ^ j t + \mn ^ j \cZ + \tail; 0_p, \varrand\big)\Big]
    \propto \exp \left\{- \frac{1}{2} \big(Q ^ j t + \tail\big) ^ \top \big(\varrand + \mn ^ j \mn ^ {j\top}\big) ^ {-1} \big(Q ^ j t + \tail\big)\right\}.
$$

Plugging the simplified integrand into our integral, we note that
\[
\begin{aligned}
	\EE_{\cN} \left[W_0\left(\cZ\right)\right] 
	&\propto  \int_{I_1 ^ j} ^ {\infty}
		\exp \Big\{- \frac{1}{2} t ^ 2 Q ^ {j\top} (\varrand + \mn ^ j \mn ^ {j\top}) ^ {-1} Q ^ j - t Q ^ {j\top} (\varrand + \mn ^ j \mn ^ {j\top}) ^ {-1} \tail\\
	&\;\;\;\;\;\;\;\;\;\;\;\;\;\; \;\;\;\;\;\;\; \;\;\;\;\;\;\; \;\;\;\;\;\;\; \;\;\;\;\;\;\; \;\;\;\;\;\;\; \;\;\;\;\;\;\; \;\;\;\;\;\;\; \;\;\;\;\;\;\; - \frac{1}{2} \tail ^ \top (\varrand + \mn ^ j \mn ^ {j\top}) ^ {-1} \tail\Big\} dt \\
	&\propto \int_{I_1 ^ j} ^ {\infty} 
		\exp \left\{- \frac{1}{2} Q ^ {j\top} (\varrand + \mn ^ j \mn ^ {j\top}) ^ {-1} Q ^ j \bigg(t + \frac{Q ^ {j\top} (\varrand + \mn ^ j \mn ^ {j\top}) ^ {-1}}{Q ^ {j\top} (\varrand + \mn ^ j \mn ^ {j\top}) ^ {-1} Q ^ j}\tail\bigg) ^ 2\right\} \\
	& \times \exp \left\{- \frac{1}{2} \tail ^ \top \bigg[(\varrand + \mn ^ j \mn ^ {j\top}) ^ {-1} - \frac{(\varrand + \mn ^ j \mn ^ {j\top}) ^ {-1} Q ^ j Q ^ {j\top} (\varrand + \mn ^ j \mn ^ {j\top}) ^ {-1} }{Q ^ {j\top} (\varrand + \mn ^ j \mn ^ {j\top}) ^ {-1} Q ^ j}\bigg] \tail \right\} dt.
\end{aligned}
\]

Noting that
\[
\begin{aligned}
    &\operatorname{Exp} \left(\tail, (\varrand + \mn ^ j \mn ^ {j\top}) ^ {-1} - \frac{(\varrand + \mn ^ j \mn ^ {j\top}) ^ {-1} Q ^ j Q ^ {j\top} (\varrand + \mn ^ j \mn ^ {j\top}) ^ {-1} }{Q ^ {j\top} (\varrand + \mn ^ j \mn ^ {j\top}) ^ {-1} Q ^ j}\right) \\
    &= \operatorname{Exp} \left(Q ^ j I_1 ^ j + \tail, (\varrand + \mn ^ j \mn ^ {j\top}) ^ {-1} - \frac{(\varrand + \mn ^ j \mn ^ {j\top}) ^ {-1} Q ^ j Q ^ {j\top} (\varrand + \mn ^ j \mn ^ {j\top}) ^ {-1} }{Q ^ {j\top} (\varrand + \mn ^ j \mn ^ {j\top}) ^ {-1} Q ^ j}\right),
\end{aligned}
\]
we conclude that $\EE_{\cN} \left[W_0 \left(\cZ\right)\right]$ is proportional to 
\[
    \operatorname{Exp} \left(Q ^ j I_1 ^ j + \tail, (\varrand + \mn ^ j \mn ^ {j\top}) ^ {-1} - \frac{(\varrand + \mn ^ j \mn ^ {j\top}) ^ {-1} Q ^ j Q ^ {j\top} (\varrand + \mn ^ j \mn ^ {j\top}) ^ {-1} }{Q ^ {j\top} (\varrand + \mn ^ j \mn ^ {j\top}) ^ {-1} Q ^ j}\right) 
    \times \PP(I_1 ^ j \leq \rv \leq \infty).
\]
\end{proof}

% \textcolor{red}{Re-organizing Propositions \ref{prop:denominator_lower_bound}, \ref{prop:denominator_lower_bound2}: I will do the following. Also do we need Proposition \ref{prop:denominator}. I don't think that change of integrals is correct in all scenarios. If needed, please mention the right case where it applies.}

\begin{prop}
\label{prop:denominator}
For sufficiently large $n$, it holds that:
\begin{flalign*}
&\begin{aligned}
    \text{(i)}&\;\; \EE_{\cN} \left[W_0 \left(\cZ\right)\right] 
	\gtrsim \operatorname{Exp} \left(- r_n \bar b, \left[\Theta ^ {-1} + \mn ^ j \mn ^ {j \top}\right] ^ {-1}\right) \text{ for } \; Q ^ {j\top} \varrand ^ {-1} \bar b < 0;\\
    \text{(ii)}&\;\; \EE_{\cN} \left[W_0 \left(\cZ\right)\right] 
	\gtrsim r_n ^ {-1}
	\operatorname{Exp} \left(- r_n \bar b, \left[\varrand + \mn ^ j \mn ^ {j\top}\right]^{-1}\right)) \text{ for } \; Q ^ {j\top} \varrand ^ {-1} \bar b > 0.
   \end{aligned}&&
\end{flalign*}
\end{prop}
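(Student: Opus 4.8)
The plan is to integrate out the Gaussian vector $\cZ$ so that $\EE_{\cN}[W_0(\cZ)]$ collapses to a one–dimensional Gaussian integral over a half–line, and then to lower bound that integral by completing the square. In the regime under consideration ($I_2^j=\infty$), interchanging the expectation with the $t$–integral in the definition of $W_0$ and using that $\mn^j\cZ\sim\cN(0_p,\mn^j\mn^{j\top})$ together with the Gaussian–convolves–with–Gaussian identity gives
\[
  \EE_{\cN}\bigl[W_0(\cZ)\bigr]=\int_{I_1^j}^{\infty}\phi\bigl(Q^j t+\tail;0_p,\varrand+\mn^j\mn^{j\top}\bigr)\,dt
  =\int_{0}^{\infty}\phi\bigl(Q^j u-r_n\bar b;0_p,A\bigr)\,du ,
\]
where $A:=\varrand+\mn^j\mn^{j\top}$ and the second equality uses the substitution $u=t-I_1^j$ together with the reparametrization \eqref{reparam}, $Q^j I_1^j+\tail=-r_n\bar b$. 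For $Q^{j\top}\varrand^{-1}\bar b<0$ this is exactly Proposition \ref{prop:denominator_expatation}; the identity itself is sign–agnostic.

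Next I would complete the square in $u$. With $a:=Q^{j\top}A^{-1}Q^j>0$, $u_n^\star:=r_n\,Q^{j\top}A^{-1}\bar b/a$ and $\Sigma^\star:=A^{-1}-A^{-1}Q^jQ^{j\top}A^{-1}/a$, the integral equals a constant depending only on $A$ and $Q^j$ times $\operatorname{Exp}(-r_n\bar b,\Sigma^\star)\,\Phi(\sqrt a\,u_n^\star)$. Two algebraic facts then drive the rest. First, combining the Laplace/marginalization representation $\xi^\top A^{-1}\xi=\min_w\{(\xi-\mn^j w)^\top\varrand^{-1}(\xi-\mn^j w)+w^\top w\}$ with the identity $\eta^\top\Theta\eta=\min_t(\eta+Q^j t)^\top\varrand^{-1}(\eta+Q^j t)$ from Proposition \ref{prop:W0}, one gets $\xi^\top\Sigma^\star\xi=\min_t(\xi+Q^jt)^\top A^{-1}(\xi+Q^jt)=\min_w\{(\xi-\mn^jw)^\top\Theta(\xi-\mn^jw)+w^\top w\}=\xi^\top[\Theta^{-1}+\mn^j\mn^{j\top}]^{-1}\xi$, so that $\Sigma^\star=[\Theta^{-1}+\mn^j\mn^{j\top}]^{-1}$, the right side being read through the Woodbury expansion $\Theta-\Theta\mn^j(I+\mn^{j\top}\Theta\mn^j)^{-1}\mn^{j\top}\Theta$, which is well defined even though $\Theta$ is singular. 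Second, $\bar b^\top\Sigma^\star\bar b+(Q^{j\top}A^{-1}\bar b)^2/a=\bar b^\top A^{-1}\bar b$.

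It then remains to control $\Phi(\sqrt a\,u_n^\star)$ in the two regimes. When the mode $u_n^\star$ is nonnegative up to an $O(1/r_n)$ slack — the regime singled out by $Q^{j\top}\varrand^{-1}\bar b<0$, once the sign of $u_n^\star$ is traced through the rank–one (Sherman--Morrison) relation between $A^{-1}$ and $\varrand^{-1}$ along $Q^j$ — the half–line $[0,\infty)$ captures the peak, $\Phi(\sqrt a\,u_n^\star)\gtrsim 1$, and the first fact yields $\EE_{\cN}[W_0(\cZ)]\gtrsim\operatorname{Exp}(-r_n\bar b,[\Theta^{-1}+\mn^j\mn^{j\top}]^{-1})$, which is assertion (i). When instead the mode lies outside $[0,\infty)$ at distance of order $r_n$ (the regime $Q^{j\top}\varrand^{-1}\bar b>0$), the Mills–ratio bound $\Phi(-x)\ge c(1+x)^{-1}e^{-x^2/2}$ with $x=\sqrt a\,|u_n^\star|\asymp r_n$ produces a factor of order $r_n^{-1}$ times $e^{-x^2/2}$; multiplying by $\operatorname{Exp}(-r_n\bar b,\Sigma^\star)$ and invoking the second fact collapses the exponent to $\bar b^\top A^{-1}\bar b$, giving $\EE_{\cN}[W_0(\cZ)]\gtrsim r_n^{-1}\operatorname{Exp}(-r_n\bar b,(\varrand+\mn^j\mn^{j\top})^{-1})$, which is assertion (ii). Uniformity over $\cF_n$ is routine, since every implicit constant is a continuous function of $\varrand^{\pm1}$, $\mn^j$, $Q^j$ and $\Theta$, which for a fixed selected set $E$ stay bounded and nondegenerate uniformly over $\cF_n$ under Assumption \ref{aspt:moment_bound}.

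The hard part will be the matrix bookkeeping: giving the pseudo-inverse expression $[\Theta^{-1}+\mn^j\mn^{j\top}]^{-1}$ its correct meaning for singular $\Theta$ and verifying $\Sigma^\star=[\Theta^{-1}+\mn^j\mn^{j\top}]^{-1}$, and, more delicately, pinning down exactly which sign of $Q^{j\top}\varrand^{-1}\bar b$ places the mode $u_n^\star$ inside $[0,\infty)$, since $u_n^\star$ is governed by $Q^{j\top}A^{-1}\bar b$ rather than directly by $Q^{j\top}\varrand^{-1}\bar b$. A parallel route, should the $t$–integral prove awkward, is to substitute the two decompositions $W_0(z)=\operatorname{Exp}(\mn^j z-r_n\bar b,\varrand^{-1})\,\cA_n(z)$ and $W_0(z)=\operatorname{Exp}(\mn^j z-r_n\bar b,\Theta)\,\bar\cA_n(z)$ from Proposition \ref{prop:An_upper_bound} into $\int\phi(z;0_p,I_p)\,dz$ over the box $\cD_n$, lower bounding $\cA_n$ and $\bar\cA_n$ there from their explicit integral representations and completing the square in $z$ (again via Woodbury); this is the same computation organized around the $z$–integral.
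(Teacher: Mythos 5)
Your route is genuinely different from the paper's, and in part cleaner. The paper proves (i) and (ii) with separate tools: for (i) it lower-bounds $W_0$ pointwise on the half-space $\cS_n = \{z : Q ^ {j\top} \varrand ^ {-1} (\mn ^ j z - r_n \bar b) < 0\}$, where the inner $t$-integral is bounded below by a constant, and then integrates against the Gaussian, never touching a normal CDF; for (ii) it goes through Proposition \ref{prop:denominator_expatation} and applies a Mills-ratio bound to $\PP(I_1 ^ j \leq \rv)$. You instead derive a single closed form, $\EE_{\cN}[W_0(\cZ)] = C\,\operatorname{Exp}(-r_n\bar b, \Sigma^\star)\,\Phi(\sqrt{a}\,u_n^\star)$ with $A = \varrand + \mn ^ j \mn ^ {j\top}$, $a = Q ^ {j\top} A^{-1} Q ^ j$, and $u_n^\star = r_n Q ^ {j\top} A^{-1}\bar b / a$, and read both cases off the behavior of $\Phi(\sqrt{a}\,u_n^\star)$. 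The supporting algebra is correct: the marginalization/min-convolution argument you give does show $\Sigma^\star = [\Theta^{-1}+\mn ^ j \mn ^ {j\top}]^{-1}$ (interpreted via the Woodbury form, so the singularity of $\Theta$ is harmless), the exponent identity $\bar b^\top \Sigma^\star \bar b + (Q ^ {j\top} A^{-1}\bar b)^2/a = \bar b^\top A^{-1}\bar b$ is right, and the Mills-ratio lower bound with the $(1+x)^{-1}$ prefactor is fine.

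The gap is precisely where you flag it, and the patch you propose does not close it. Your case split is governed by $\operatorname{sign}(Q ^ {j\top} A^{-1}\bar b)$, while the proposition's hypotheses are on $\operatorname{sign}(Q ^ {j\top} \varrand^{-1}\bar b)$, and you offer only a heuristic ``rank-one Sherman--Morrison relation between $A^{-1}$ and $\varrand^{-1}$ along $Q ^ j$'' to link them. There is no such rank-one relation: $A - \varrand = \mn ^ j \mn ^ {j\top}$ is generically full rank (since $\mn ^ j = M ^ j \Lambda_1 + N ^ j \Lambda_2$ is $p\times p$), so the Woodbury correction to $\varrand^{-1}$ is not aligned with $Q ^ j$, and $Q ^ {j\top} \varrand^{-1}\bar b$ and $Q ^ {j\top} A^{-1}\bar b$ are essentially unrelated linear functionals of $\bar b$. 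For assertion (ii) this turns out to be harmless, because $\Sigma^\star \preceq A^{-1}$, so whichever sign $u_n^\star$ takes you still get at least $r_n^{-1}\operatorname{Exp}(-r_n\bar b, A^{-1})$ (from $\Phi \gtrsim 1/2$ in one sub-case, from Mills plus the exponent identity in the other) — though you should say this explicitly instead of asserting the sign. For assertion (i), however, $\Phi(\sqrt{a}\,u_n^\star) \gtrsim 1$ genuinely requires $Q ^ {j\top} A^{-1}\bar b$ to be bounded below, and you have not derived that from $Q ^ {j\top} \varrand^{-1}\bar b < 0$; as written, (i) would fail in your own framework whenever the two signs disagree. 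To fix (i) you should abandon the CDF factor and follow the paper's strategy instead: bound $W_0(z) \gtrsim \operatorname{Exp}(\mn ^ j z - r_n\bar b, \Theta)\,\boldone_{\cS_n}(z)$, observe that $\EE_{\cN}[\operatorname{Exp}(\mn ^ j \cZ - r_n\bar b, \Theta)]$ is exactly $\operatorname{Exp}(-r_n\bar b, [\Theta^{-1}+\mn ^ j \mn ^ {j\top}]^{-1})$, and then argue that under $Q ^ {j\top} \varrand^{-1}\bar b < 0$ the restriction to $\cS_n$ only costs a constant factor — which is the step the hypothesis is actually there to serve.
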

\begin{proof}
We begin with the proof of the assertion in (i). 
We have that
\[
    W_0 \left(\cZ\right)
    \propto \operatorname{Exp} \left(\mn ^ j \cZ - r_n \bar b, \Theta\right) \times \int_{0} ^ {\infty} \exp \left\{- \frac{Q ^ {j\top} \varrand ^ {-1} Q ^ j}{2} \bigg[t + \frac{Q ^ {j\top} \varrand ^ {-1} (\mn ^ j \cZ - r_n \bar b)}{Q ^ {j\top} \varrand ^ {-1} Q ^ j}\bigg] ^ 2\right\} dt.
\]
Define the set $\cS_n = \left\{\cZ: Q ^ {j\top} \varrand ^ {-1} \left(\mn ^ j \cZ - r_n \bar b\right) < 0\right\}$.
Since
\[
     \int_{0} ^ {\infty} \exp \left\{- \frac{Q ^ {j\top} \varrand ^ {-1} Q ^ j}{2} \bigg[t + \frac{Q ^ {j\top} \varrand ^ {-1} (\mn ^ j \cZ - r_n \bar b)}{Q ^ {j\top} \varrand ^ {-1} Q ^ j}\bigg] ^ 2\right\} dt
    \gtrsim \frac{1}{2} 
    \quad\text{for } \cZ \in \cS_n,
\]
it holds that
\[
    W_0 \left(\cZ\right)
    \gtrsim \operatorname{Exp} \left(\mn ^ j \cZ - r_n \bar b, \Theta\right) \times \boldone_{\cZ \in \cS_n}.
\]
Then, we conclude that 
\[
\begin{aligned}
    \EE_\cN \left[W_0 \left(\cZ\right)\right] 
    &\gtrsim \EE_\cN \left[\operatorname{Exp} \left(\mn ^ j \cZ - r_n \bar b, \Theta\right) \boldone_{\cZ \in \cS_n}\right] \\
    &\gtrsim \EE_\cN \left[\operatorname{Exp} \left(\mn ^ j \cZ - r_n \bar b, \Theta\right)\right] 
    - \EE_\cN \left[\operatorname{Exp} \left(\mn ^ j \cZ - r_n \bar b, \Theta\right) \boldone_{\cZ \in \cS_n ^ c}\right] \\
    &\gtrsim \frac{1}{2} \EE_\cN \left[\operatorname{Exp} \left(\mn ^ j \cZ - r_n \bar b, \Theta\right)\right].
\end{aligned}
\]
Integrating with respect to $\cZ$ and applying the Woodbury matrix identity, we note that
\[
    \EE_\cN \left[\operatorname{Exp} \left(\mn ^ j \cZ - r_n \bar b, \Theta\right)\right]
    = \operatorname{Exp} \left(- r_n \bar b, \left[\Theta ^ {-1} + \mn ^ j \mn ^ {j \top}\right] ^ {-1}\right).
\]
Our conclusion thus follows.

For the assertion in (ii), note that
\begin{equation}
\label{eq:T_prob}
\begin{aligned}
    \PP(I_1 ^ j \leq \rv \leq \infty)
    &\propto \int_{I_1 ^ j} ^ {\infty} 
		\exp \left\{- \frac{1}{2} Q ^ {j\top} (\varrand + \mn ^ j \mn ^ {j\top}) ^ {-1} Q ^ j \bigg(t - I_1 ^ j + \frac{Q ^ {j\top} (\varrand + \mn ^ j \mn ^ {j\top}) ^ {-1} (Q ^ j I_1 ^ j + \tail)}{Q ^ {j\top} (\varrand + \mn ^ j \mn ^ {j\top}) ^ {-1} Q ^ j}\bigg) ^ 2\right\} dt \\
    &\propto \int_0 ^ {\infty} 
		\exp \left\{- \frac{1}{2} Q ^ {j\top} (\varrand + \mn ^ j \mn ^ {j\top}) ^ {-1} Q ^ j \bigg(\tilde t + \frac{Q ^ {j\top} (\varrand + \mn ^ j \mn ^ {j\top}) ^ {-1} (Q ^ j I_1 ^ j + \tail)}{Q ^ {j\top} (\varrand + \mn ^ j \mn ^ {j\top}) ^ {-1} Q ^ j}\bigg) ^ 2\right\} d \tilde t,
\end{aligned}
\end{equation}
through a change of variable $\tilde t = t - I_1 ^ j$.

When $Q ^ {j\top} (\varrand + \mn ^ j \mn ^ {j\top}) ^ {-1} (Q ^ j I_1 ^ j + \tail) > 0$, we apply the Mill's ratio bound to note that
\[
\begin{aligned}
	\PP(I_1 ^ j \leq \rv \leq \infty)
	&\ge\! \left(\frac{Q ^ {j\top} (\varrand + \mn ^ j \mn ^ {j\top}) ^ {-1} (Q ^ j I_1 ^ j + \tail)}{Q ^ {j\top} (\varrand + \mn ^ j \mn ^ {j\top}) ^ {-1} Q ^ j}\right) ^ {\!\!-1}
    \!\!\!\!\!\times\!\! \left[1 - \left(\frac{Q ^ {j\top} (\varrand + \mn ^ j \mn ^ {j\top}) ^ {-1} (Q ^ j I_1 ^ j + \tail)}{Q ^ {j\top} (\varrand + \mn ^ j \mn ^ {j\top}) ^ {-1} Q ^ j}\right) ^ {\!\!-2}\right] \\
    &\quad\quad \times \operatorname{Exp} \left(Q ^ j I_1 ^ j + \tail, \frac{(\varrand + \mn ^ j \mn ^ {j\top}) ^ {-1} Q ^ j Q ^ {j\top} (\varrand + \mn ^ j \mn ^ {j\top}) ^ {-1} }{Q ^ {j\top} (\varrand + \mn ^ j \mn ^ {j\top}) ^ {-1} Q ^ j}\right).
\end{aligned}
\]	
% \textcolor{red}{I think that the ``finite sample'' Mill's ratio bound is $\bar\Phi(x) \geq \frac{\phi(x)}{x}\left(1-\frac{1}{x^2} \right)$ for $x>0$; so the above rate would hold when $QI_1+ \tail>0$.}
This yields:
\[
\begin{aligned}
	\EE_\cN \left[W_0 \left(\cZ\right)\right] 
	&\ge \left(\frac{Q ^ {j\top} (\varrand + \mn ^ j \mn ^ {j\top}) ^ {-1} (Q ^ j I_1 ^ j + \tail)}{Q ^ {j\top} (\varrand + \mn ^ j \mn ^ {j\top}) ^ {-1} Q ^ j}\right) ^ {-1}
    \!\!\!\! \times\!\! \left[1 - \left(\frac{Q ^ {j\top} (\varrand + \mn ^ j \mn ^ {j\top}) ^ {-1} (Q ^ j I_1 ^ j + \tail)}{Q ^ {j\top} (\varrand + \mn ^ j \mn ^ {j\top}) ^ {-1} Q ^ j}\right) ^ {\!-2}\right] \\
	&\quad\quad\times \operatorname{Exp} \left(Q ^ j I_1 ^ j + \tail, \frac{(\varrand + \mn ^ j \mn ^ {j\top}) ^ {-1} Q ^ j Q ^ {j\top} (\varrand + \mn ^ j \mn ^ {j\top}) ^ {-1} }{Q ^ {j\top} (\varrand + \mn ^ j \mn ^ {j\top}) ^ {-1} Q ^ j}\right) \\
	&\quad\quad\times \operatorname{Exp} \left(Q ^ j I_1 ^ j + \tail, (\varrand + \mn ^ j \mn ^ {j\top}) ^ {-1} - \frac{(\varrand + \mn ^ j \mn ^ {j\top}) ^ {-1} Q ^ j Q ^ {j\top} (\varrand + \mn ^ j \mn ^ {j\top}) ^ {-1} }{Q ^ {j\top} (\varrand + \mn ^ j \mn ^ {j\top}) ^ {-1} Q ^ j}\right) \\
	&= \left(\frac{Q ^ {j\top} (\varrand + \mn ^ j \mn ^ {j\top}) ^ {-1} (Q ^ j I_1 ^ j + \tail)}{Q ^ {j\top} (\varrand + \mn ^ j \mn ^ {j\top}) ^ {-1} Q ^ j}\right) ^ {-1}
    \!\!\!\! \times\!\! \left[1 - \left(\frac{Q ^ {j\top} (\varrand + \mn ^ j \mn ^ {j\top}) ^ {-1} (Q ^ j I_1 ^ j + \tail)}{Q ^ {j\top} (\varrand + \mn ^ j \mn ^ {j\top}) ^ {-1} Q ^ j}\right) ^ {\!-2}\right] \\
    &\quad\quad\times \operatorname{Exp} \left(Q ^ j I_1 ^ j + \tail, \left[\varrand + \mn ^ j \mn ^ {j\top}\right]^{-1}\right).
\end{aligned}
\]
Using the parameterization $Q ^ j I_1 ^ j + \tail = - r_n \bar b$, we conclude that 
\[
	\EE_\cN \left[W_0 \left(\cZ\right)\right] 
	\gtrsim r_n ^ {-1} \operatorname{Exp} \left(- r_n \bar b, \left[\varrand + \mn ^ j \mn ^ {j\top}\right]^{-1}\right).
\]

When $Q ^ {j\top} (\varrand + \mn ^ j \mn ^ {j\top}) ^ {-1} (Q ^ j I_1 ^ j + \tail) < 0$, we note that
\[
    \PP(I_1 ^ j \leq \rv \leq \infty) \ge \frac{1}{2}
\]
in \eqref{eq:T_prob}.
Therefore, for sufficiently large $n$, we have
\[
\begin{aligned}
	\EE_\cN \left[W_0 \left(\cZ\right)\right] 
	&\gtrsim \frac{1}{2} \operatorname{Exp} \left(- r_n \bar b, (\varrand + \mn ^ j \mn ^ {j\top}) ^ {-1} - \frac{(\varrand + \mn ^ j \mn ^ {j\top}) ^ {-1} Q ^ j Q ^ {j\top} (\varrand + \mn ^ j \mn ^ {j\top}) ^ {-1} }{Q ^ {j\top} (\varrand + \mn ^ j \mn ^ {j\top}) ^ {-1} Q ^ j}\right) \\
	&\gtrsim \operatorname{Exp} \left(- r_n \bar b, \left(\varrand + \mn ^ j \mn ^ {j\top}\right)^{-1}\right).
\end{aligned}
\]
\end{proof}

%------------------------------------------------------------------------------
%------------------------------------------------------------------------------
\begin{lem}
\label{lem:LDP}
Suppose that
\[
    \Xi_t(z) = \begin{cases}
        \frac{1}{2} (\sqrt{t} \mn ^ j z - \bar b) ^ \top [\varrand + \mn ^ j \mn ^ {j\top} (1 - t)] ^ {-1} (\sqrt{t} \mn ^ j z - \bar b), & \;\text{ if } t \in (0, 1], \\
        0, &\; \text{ if }  t = 0.
    \end{cases}
\]
For sufficiently large $n$ and $\mathbb{F}_n \in \cF_n$ under Assumptions \ref{aspt:moment_bound} and \ref{aspt:error}, we have that
\[
	r_n ^ {-2} \log \EE_{\FF_n} \left[\exp \left(-r_n ^ 2 \Xi_t \left(\frac{\Upsilon_n}{r_n}\right)\right) \boldone_{\cR_t}(r_n^{-1} \Upsilon_n)\right]
	\leq - \inf_{z \in \cR_t} \left(\frac{1}{2} z ^ \top z + \Xi_t(z)\right),
\]
where $\cR_t = [-c_1 \cdot 1_p, c_1 \cdot 1_p]$ for $c_1 > 0$ and $t \in (0, 1]$, and $\cR_0$ is the complement of $[-c_1 \cdot 1_p, c_1 \cdot 1_p]$.
\end{lem}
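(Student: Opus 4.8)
The claim is a moderate-deviations upper bound for the triangular array $\Zn$, to which we then attach the deterministic exponential weight $\exp(-r_n^2\Xi_t(\cdot))$ through the upper-bound half of Varadhan's lemma. By Propositions~\ref{lem:an_moment_bound}--\ref{lem:Zn_moment_bound} we may write $\Zn=n^{-1/2}\sum_{i=1}^n a_{i,n}+\Delta_1$, where, for each $n$, the $a_{i,n}$ are i.i.d., centered, uniformly sub-Gaussian over $\FF_n\in\cF_n$ with uniformly bounded higher moments, and $\EE_{\FF_n}[a_{1,n}a_{1,n}^\top]=I_p$ by the very definition of the normalizing matrix $H$. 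Since $r_n\to\infty$ with $r_n=o(n^{1/6})$, the plan is: (i) show $(n^{-1/2}\sum_i a_{i,n})/r_n$ obeys a moderate deviation upper bound at speed $r_n^2$ with rate $I(z)=\tfrac12 z^\top z$, uniformly over $\FF_n\in\cF_n$; (ii) transfer this to $\Zn/r_n$, which is exponentially equivalent to it because Assumption~\ref{aspt:error} makes $\Delta_1/r_n$ super-exponentially negligible at speed $r_n^2$; and (iii) apply Varadhan's upper bound to the continuous function $\Xi_t$ over the closed set $\cR_t$ --- valid since $\Xi_t\ge 0$, so $\exp(-r_n^2\Xi_t(\cdot))\le 1$ and there is no upper-tail obstruction.

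\textbf{The sharp cumulant estimate (step (i)).} The only subtle ingredient is that the \emph{exact} rate is $\tfrac12\|z\|^2$; the crude sub-Gaussian bound of Proposition~\ref{lem:Zn_moment_bound}(i) gives the wrong constant $1/(32\sigma^2\|H^{-1}\|)$, so a second-order cumulant expansion is needed. For fixed $\lambda\in\RR^p$, uniform smoothness of the log-moment generating function (from uniform sub-Gaussianity) gives $\log\EE_{\FF_n}\big[\exp(r_n\lambda^\top\Zn)\big]=\sum_{i=1}^n\log\EE_{\FF_n}\big[\exp((r_n/\sqrt n)\lambda^\top a_{i,n})\big]=\tfrac{r_n^2}{2}\lambda^\top\lambda+n\cdot O\big((r_n/\sqrt n)^3\,\EE_{\FF_n}\|a_{1,n}\|^3\big)$, where the remainder is $o(r_n^2)$ uniformly in $\FF_n$ precisely because $\EE_{\FF_n}\|a_{1,n}\|^3$ is uniformly bounded (Proposition~\ref{lem:en_moment_bound}) and $r_n^3/\sqrt n\to 0$. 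A Chernoff bound with tilt $r_n\lambda$ then yields $\PP_{\FF_n}(\Zn/r_n\in B(z_0,\rho))\le\exp\big(-r_n^2(\tfrac12\|z_0\|^2-\eta(\rho)-o(1))\big)$, uniformly over $\FF_n$, with $\eta(\rho)\to 0$ as $\rho\to 0$.

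\textbf{Assembling the bound (step (iii)).} For $t\in(0,1]$ the matrix $[\varrand+\mn^j\mn^{j\top}(1-t)]$ is positive definite uniformly near $t$, so $\Xi_t$ is a quadratic and hence Lipschitz on the fixed compact box $\cR_t=[-c_1\cdot 1_p,c_1\cdot 1_p]$ with constant $L$ independent of $n$ and $\FF_n$. Partition $\cR_t$ into finitely many cubes $\{C_k\}_{k\le N}$ of side $\rho$ ($N$ constant in $n$), pick centers $z_k$, bound $\exp(-r_n^2\Xi_t(z))\le e^{\,r_n^2 L\rho\sqrt p}\exp(-r_n^2\Xi_t(z_k))$ on $C_k$, and combine with step (i) to obtain $\EE_{\FF_n}[\cdots]\le e^{\,r_n^2 L\rho\sqrt p}\sum_{k\le N}\exp\big(-r_n^2(\tfrac12\|z_k\|^2+\Xi_t(z_k)-\eta(\rho)-o(1))\big)$. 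Taking $r_n^{-2}\log$, using that $N$ does not depend on $n$, and letting $\rho\to 0$ gives $\limsup_n\sup_{\FF_n\in\cF_n}r_n^{-2}\log\EE_{\FF_n}[\cdots]\le-\inf_{z\in\cR_t}(\tfrac12 z^\top z+\Xi_t(z))$ --- the stated inequality, up to vanishing polynomial-in-$r_n$ prefactors. For $t=0$ we have $\Xi_0\equiv 0$ and $\cR_0=\{\|z\|_\infty>c_1\}$, with target $-\inf_{\|z\|_\infty>c_1}\tfrac12 z^\top z=-\tfrac12 c_1^2$; here exponential tightness of $\{\Zn/r_n\}$, uniform over $\FF_n$, follows from Proposition~\ref{lem:Zn_moment_bound}(ii) (since $r_n^{-2}\log\sup_{\FF_n}\PP_{\FF_n}(\|\Zn\|>M r_n)\le p\log 5/r_n^2-M^2/(32\sigma^2\|H^{-1}\|)\to-\infty$ as $n$, then $M$, grow), so the unbounded closed set $\cR_0$ reduces to the compact case by discarding $\{\|z\|>M\}$ at negligible exponential cost.

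\textbf{Main obstacle.} The hard part is the \emph{uniform} sharp cumulant estimate: getting the exact exponent $\tfrac12\|z\|^2$ requires both $\EE_{\FF_n}[a_{1,n}a_{1,n}^\top]=I_p$ and a cubic-remainder bound $r_n^3 n^{-1/2}\to 0$ with a moment constant uniform in $\FF_n$, all while keeping $\sup_{\FF_n\in\cF_n}$ inside the $\limsup$; the crude sub-Gaussian estimate is not enough. Secondary care is required in transferring the moderate deviation principle across the remainder $\Delta_1$ via Assumption~\ref{aspt:error}, and in handling the discontinuity of $t\mapsto\Xi_t$ at $t=0$, which is exactly why $\cR_0$ is defined as the complement of the box rather than as a neighbourhood of the origin.
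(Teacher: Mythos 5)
Your proposal is correct and reaches the same conclusion, but it does so by a substantially more explicit route than the paper, which makes your write-up genuinely informative in a way the paper's two-sentence proof is not. The paper simply asserts that Proposition~\ref{lem:Zn_moment_bound}(i) (finite MGF near the origin) plus Assumption~\ref{aspt:error} endow $\Zn/r_n$ with a large-deviation principle at speed $r_n^2$ with Gaussian rate $\tfrac12\|z\|^2$, and then invokes Varadhan's upper bound. You instead reconstruct each ingredient: (i) a second-order cumulant expansion $n\log\EE_{\FF_n}\!\big[\exp((r_n/\sqrt n)\lambda^\top a_{1,n})\big]=\tfrac{r_n^2}{2}\|\lambda\|^2+O(r_n^3 n^{-1/2})$, exploiting that $\EE_{\FF_n}[a_{1,n}a_{1,n}^\top]=I_p$ by construction of $H$ and that $r_n=o(n^{1/6})$ kills the cubic remainder, to get the moderate-deviation upper bound with the \emph{exact} constant uniformly over $\FF_n$; (ii) exponential equivalence of $\Zn/r_n$ and its leading term at speed $r_n^2$ via Assumption~\ref{aspt:error}; and (iii) a by-hand Varadhan upper bound through Lipschitz control of $\Xi_t$ on the compact box $\cR_t$ plus a finite cover, together with exponential tightness (from Proposition~\ref{lem:Zn_moment_bound}(ii)) to reduce the unbounded set $\cR_0$ to the compact case. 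Your proof buys transparency and correctly flags a point the paper glides over: a generic sub-Gaussian MGF bound alone would only deliver the wrong rate constant $1/(32\sigma^2\|H^{-1}\|)$, and obtaining $\tfrac12\|z\|^2$ really does require the normalization $\EE[a_{1,n}a_{1,n}^\top]=I_p$ plus a uniformly controlled third-moment remainder; the paper's terse appeal to ``finite MGF'' buries this. The paper's version buys brevity by delegating to the standard moderate-deviation theorem and Varadhan's lemma as black boxes. One small mismatch worth noting: the lemma is phrased as a finite-$n$ inequality ``for sufficiently large $n$,'' whereas both your argument and the paper's deliver a $\limsup$-type bound; this slight imprecision is inherited from the paper's own proof and does not affect the downstream use in Proposition~\ref{lem:use_bound}.
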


\begin{proof}
Due to the assertion (i) in Proposition \ref{lem:Zn_moment_bound}, we have that $\Upsilon_n$ has a finite moment generating function in the neighborhood of the origin. 
It ensures that $\Upsilon_n$ satisfies a large deviation principle with rate function $I(z) = \|z\|^2/2$.
%Using the probability bound of $\|\Zn\|$ in Proposition \ref{lem:Zn_moment_bound}, we observe that
%\[
%    \lim_{r_n \rightarrow \infty} \sup r_n ^ {-2} \log \PP_{\FF_n} \left(\left\|\frac{\Zn}{r_n}\right\| \ge s\right)
%    \leq - \frac{s ^ 2}{2}
%    = - \inf_{\|z\| \ge s} \frac{\|z\| ^ 2}{2}.
%\]
An application of Varadhan's large deviation lemma yields that
\[
\begin{aligned}
    r_n ^ {-2} \log \EE_{\FF_n} \left[\exp \left(-r_n ^ 2 \Xi_t \left(\frac{\Upsilon_n}{r_n}\right)\right) \boldone_{\cR_t}(r_n^{-1} \Upsilon_n)\right] 
    &\leq \sup_{z \in \cR_t} \left(- \frac{\|z\| ^ 2}{2} - \Xi_t (z)\right)\\
    &= - \inf_{z \in \cR_t} \left(\frac{1}{2} z ^ \top z + \Xi_t(z)\right)
\end{aligned}
\]
for sufficiently large $n$ and $\mathbb{F}_n \in \cF_n$.
\end{proof}
%------------------------------------------------------------------------------
%------------------------------------------------------------------------------

\begin{prop}
\label{lem:use_bound}
Let $\W$ be as defined in Proposition \ref{prop:stein_bound}.
Under Assumptions \ref{aspt:moment_bound} and \ref{aspt:error}, we have
\[
\begin{aligned} 
	\sup_n\sup_{\FF_n \in \cF_n} \dfrac{\EE_{\FF_n} \bigg[\|e_{1, n}\| ^ \lambda \|e_{1, n} ^ *\| ^ \gamma \sup_{\alpha, \kappa \in[0,1]} \bigg|\bigintsss_0 ^ 1 \sqrt{t} \operatorname{Exp} \left(\mn ^ j \sqrt{t} \W - r_n \bar b, \left[\varrand + \mn ^ j \mn ^ {j\top} (1 - t)\right] ^ {-1} \right)dt\bigg|\bigg]}
	{\operatorname{Exp} \left(- r_n \bar b, \left[\varrand + \mn ^ j \mn ^ {j\top}\right] ^ {-1}\right)}
	< \infty,
\end{aligned}
\]	
for $\lambda, \gamma \in \NN$ such that $\lambda+\gamma \leq 3$.  
\end{prop}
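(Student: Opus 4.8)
The plan is to transport the integrand into the scale of Lemma~\ref{lem:LDP} and reduce to that large--deviation estimate; the crucial algebraic input will be a Woodbury identity showing that the relevant exponential rate does not depend on $t$. Set $B_t = [\varrand + \mn^j\mn^{j\top}(1-t)]^{-1}$; factoring $r_n$ out of $\W$ gives
\[
  \operatorname{Exp}\!\bigl(\mn^j\sqrt t\,\W - r_n\bar b,\ B_t\bigr) = \exp\!\bigl(-r_n^2\,\Xi_t(\W/r_n)\bigr),
\]
with $\Xi_t$ exactly the function of Lemma~\ref{lem:LDP}, and likewise $\operatorname{Exp}(-r_n\bar b,\,[\varrand + \mn^j\mn^{j\top}]^{-1}) = \exp(-r_n^2\rho)$ with $\rho = \frac12\bar b^\top(\varrand + \mn^j\mn^{j\top})^{-1}\bar b$. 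So the proposition amounts to showing that $\exp(r_n^2\rho)$ times the expectation in its statement stays bounded, uniformly in $n$ and $\FF_n$.

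The key step is that completing the square in $z$ and applying the Woodbury identity yields, for \emph{every} $t\in[0,1]$,
\[
  \inf_{z\in\RR^p}\Bigl(\tfrac12 z^\top z + \Xi_t(z)\Bigr) = \tfrac12\,\bar b^\top\bigl(B_t^{-1} + t\mn^j\mn^{j\top}\bigr)^{-1}\bar b = \tfrac12\,\bar b^\top\bigl(\varrand + \mn^j\mn^{j\top}\bigr)^{-1}\bar b = \rho,
\]
since $B_t^{-1} + t\mn^j\mn^{j\top} = \varrand + \mn^j\mn^{j\top}$, and the minimizer $z_t^\star$ is bounded uniformly over $t\in[0,1]$. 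Hence the box in Lemma~\ref{lem:LDP} may be taken to be a single $\cR = [-c_1\cdot 1_p,\,c_1\cdot 1_p]$ with $c_1$ large enough both to contain every $z_t^\star$ and to make $\exp(-c_1^2 r_n^2/(32\sigma^2\|H^{-1}\|)) \lesssim \exp(-r_n^2\rho)$; then $\inf_{z\in\cR}(\frac12 z^\top z + \Xi_t(z)) = \rho$ for all $t$, while $\PP_{\FF_n}(\Zn/r_n\notin\cR) \lesssim \exp(-r_n^2\rho)$ uniformly, by Proposition~\ref{lem:Zn_moment_bound}(ii).

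Next I would detach the randomization perturbation. Writing $\W/r_n = \Zn/r_n + \delta_n$ with $\|\delta_n\| \le (\|e_{1,n}\| + \|e_{1,n}^*\|)/(r_n\sqrt n)$ (using $|\alpha-1|,|\kappa|\le 1$), an elementary AM--GM bound on the quadratic $\Xi_t$ with a slowly vanishing parameter $\epsilon_n$ gives
\[
  \exp\!\bigl(-r_n^2\Xi_t(\W/r_n)\bigr) \le \exp\!\Bigl(\tfrac{C}{\epsilon_n n}\bigl(\|e_{1,n}\| + \|e_{1,n}^*\|\bigr)^2\Bigr)\,\exp\!\bigl(-(1-\epsilon_n)\,r_n^2\,\Xi_t(\Zn/r_n)\bigr),
\]
uniformly in $t,\alpha,\kappa$; a second split of the same kind pulls the observation--$1$ contribution out of $\Zn$, replacing $\Zn$ by the leave--one--out sum $\cZ_n[-1] = \Zn - e_{1,n}/\sqrt n$ at the price of one more harmless factor. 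Since $\cZ_n[-1]$ is independent of $(e_{1,n},e_{1,n}^*)$, the outer expectation factorizes: the $(e_{1,n},e_{1,n}^*)$--factor is bounded uniformly by Propositions~\ref{lem:en_moment_bound} and~\ref{lem:Zn_moment_bound} (polynomial moments times the exponential of a sub--exponential quantity with coefficient $o(1)$, on choosing $n^{-1} \ll \epsilon_n \ll r_n^{-2}$, possible as $r_n = o(n^{1/6})$), and the residual factor $\int_0^1 \sqrt t\,\EE_{\FF_n}[\exp(-(1-\epsilon_n)^2 r_n^2\,\Xi_t(\cZ_n[-1]/r_n))]\,dt$ is controlled by Lemma~\ref{lem:LDP} (applied with $\Xi_t$ scaled by $(1-\epsilon_n)^2$ and with $\cZ_n[-1]$, which obeys the analogous bound) together with the off--box tail. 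On $\cR$ the Woodbury computation gives $\inf_{z\in\cR}(\frac12 z^\top z + (1-\epsilon_n)^2\Xi_t(z)) = \rho - O(\epsilon_n)$ uniformly in $t$, so that contribution is $\exp(O(\epsilon_n r_n^2))\exp(-r_n^2\rho)$; on $\cR^c$ the integrand is at most $1$ and the probability is $\lesssim\exp(-r_n^2\rho)$; with $\epsilon_n r_n^2 \to 0$ and $\int_0^1\sqrt t\,dt < \infty$ the claim follows.

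I expect the main obstacle to be the uniformity bookkeeping rather than any single estimate: Lemma~\ref{lem:LDP} must be applied with \emph{no} loss in the exponential rate simultaneously over $t \in [0,1]$, over all $\FF_n \in \cF_n$, and --- through the perturbation --- over the realized sizes of $e_{1,n},e_{1,n}^*$, which is exactly where the uniform sub--Gaussian bound (Proposition~\ref{lem:Zn_moment_bound}(i)) and the uniform moment bounds (Proposition~\ref{lem:en_moment_bound}) are indispensable; and the parameter $\epsilon_n$ must be threaded so that each $o(1)$ it introduces, once amplified by the possibly diverging $r_n^2 = o(n^{1/3})$, still vanishes, while the residual $\epsilon_n^{-1}n^{-1}(\|e_{1,n}\| + \|e_{1,n}^*\|)^2$ retains bounded exponential moments. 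The Woodbury cancellation, the AM--GM splits, and the $t$--integral are otherwise routine.
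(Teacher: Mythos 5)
Your proposal follows the same overall route as the paper's proof: restrict $\cZ_n[-1]$ to a fixed box $\cR_t$ whose side is large enough to contain the unconstrained minimizer of $\tfrac12 z^\top z + \Xi_t(z)$, factorize the expectation by independence of $\cZ_n[-1]$ and $(e_{1,n},e_{1,n}^*)$, apply Lemma~\ref{lem:LDP} to the $\cZ_n[-1]$-factor to match the rate $\rho=\tfrac12\bar b^\top(\varrand+\mn^j\mn^{j\top})^{-1}\bar b$, and control the off-box tail with Proposition~\ref{lem:Zn_moment_bound}. The one substantive difference is how you detach the observation-$1$ perturbation inside $\W$: you pay an AM--GM cost $\exp\big(\tfrac{C}{\epsilon_n n}(\|e_{1,n}\|+\|e_{1,n}^*\|)^2\big)$ and a rate loss $(1-\epsilon_n)$, then thread a slack $\epsilon_n$ satisfying $n^{-1}\ll\epsilon_n\ll r_n^{-2}$ (feasible since $r_n=o(n^{1/6})$). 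The paper instead expands the quadratic form directly and notes that on the box $\|\cZ_n[-1]\|\lesssim r_n$, so the cross-term with $(e_{1,n},e_{1,n}^*)/\sqrt n$ is $O(r_n/\sqrt n)\cdot(\|e_{1,n}\|+\|e_{1,n}^*\|)=o(1)\cdot(\|e_{1,n}\|+\|e_{1,n}^*\|)$; this is absorbed into $\exp(\chi\|e_{1,n}\|)$ with no sacrifice of rate and no auxiliary parameter, which is the cleaner step since it eliminates the $\epsilon_n$ bookkeeping you rightly flag as the main burden. Your version still works --- the squared-norm exponential has uniformly bounded moments once its coefficient $C/(\epsilon_n n)\to 0$ drops below the sub-Gaussian threshold implicit in Proposition~\ref{lem:en_moment_bound} --- but is more intricate than necessary. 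On the other hand, your explicit Woodbury computation $B_t^{-1}+t\,\mn^j\mn^{j\top}=\varrand+\mn^j\mn^{j\top}$, which makes transparent that the rate $\rho$ is independent of $t\in[0,1]$ and hence that a single box $\cR$ suffices for the whole integral, is a helpful clarification that the paper leaves implicit in its choice of $c_1$.
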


\begin{proof}
Denote by $\varrho_{\max}$ the largest eigenvalue of $(\varrand + \mn ^ j \mn ^ {j\top})$. 
Let
\[
    \Pi(t) = \Big\{t \mn ^ {j\top} \big[\varrand + \mn ^ j \mn ^ {j\top} (1 - t)\big] ^ {-1} \mn + I_{p, p} \Big\} ^ {-1} \mn ^ {j\top} \big[\varrand + \mn ^ j \mn ^ {j\top} (1 - t)\big] ^ {-1},
\]
and let $\Pi_k(t)$ be the $k ^ {\text{th}}$ row of $\Pi(t)$, and let $\|\Pi_k(t)\|_{\max} = \max_{k \in [p]} \left\|\Pi_k(t)\right\|_2$. 
Fix
\[
    c_1 > \max \bigg(\varrho_{\max} ^ {1 / 2} \times (\|\bar b\|+1), \sup _{t \in[0,1]}\|\Pi_k(t)\|_{\max } \times (\|\bar b\|+1)\bigg).
\]

Observe that
\[
\begin{aligned}
	&\EE_{\FF_n} \bigg[\|e_{1, n}\| ^ \lambda \|e_{1, n} ^ *\| ^ \gamma \sup_{\alpha, \kappa \in[0,1]} \bigg|\int_0 ^ 1 \sqrt{t} \operatorname{Exp} \Big(\mn ^ j \sqrt{t} \W - r_n \bar b, \left[\varrand + \mn ^ j \mn ^ {j\top} (1 - t)\right] ^ {-1}\Big) dt\bigg|\bigg] \\
	&\leq \EE_{\FF_n} \bigg[\|e_{1, n}\| ^ \lambda \|e_{1, n} ^ *\| ^ \gamma \sup_{\alpha, \kappa \in[0,1]} \bigg|\int_0 ^ 1 \sqrt{t} \operatorname{Exp} \Big(\mn ^ j \sqrt{t} \W - r_n \bar b, \left[\varrand + \mn ^ j \mn ^ {j\top} (1 - t)\right] ^ {-1}\Big) \boldone_{\cR_t}(r_n ^ {-1} \Upsilon_n[-1]) dt\bigg|\bigg] \\
	&\quad\quad + \EE_{\FF_n} \Big[\|e_{1, n}\| ^ \lambda \|e_{1, n} ^ *\| ^ \gamma \boldone_{\cR_0}(r_n ^ {-1} \Upsilon_n[-1]) \Big],
\end{aligned}
\]
where $\cR_t = \left[-c_1 \cdot 1_p, c_1 \cdot 1_p\right] \subseteq \RR ^ p$, for $t \in (0,1]$, and $\cR_0 = \cR_1 ^ c$.

Note that we can further write:
\[
\begin{aligned}
	&\EE_{\FF_n} \bigg[\|e_{1, n}\| ^ \lambda \|e_{1, n} ^ *\| ^ \gamma \sup_{\alpha, \kappa \in[0,1]} \bigg|\int_0 ^ 1 \sqrt{t} \operatorname{Exp} \Big(\mn ^ j \sqrt{t} \W - r_n \bar b, \left[\varrand + \mn ^ j \mn ^ {j\top} (1 - t)\right] ^ {-1}\Big) dt\bigg|\bigg] \\	
	&\leq \EE_{\FF_n} \Big[\|e_{1, n}\| ^ \lambda \|e_{1, n} ^ *\| ^ \gamma \exp(\chi \|e_{1, n}\|)\Big]\\
	&\quad\quad\quad\quad\quad\quad\quad \times \int_0 ^ 1 \sqrt{t} \EE_{\FF_n} \bigg[\operatorname{Exp} \left(\sqrt{t} \mn ^ j \Upsilon_n[-1] - r_n \bar b, \left[\varrand + \mn ^ j \mn ^ {j\top} (1 - t)\right] ^ {-1}\right) \boldone_{\cR_t}(r_n ^ {-1} \Upsilon_n[-1]) \bigg] dt  \\
	&\quad\quad + \EE_{\FF_n} \Big[\|e_{1, n}\| ^ \lambda \|e_{1, n} ^ *\| ^ \gamma\Big] \EE_{\FF_n} \Big[\boldone_{\cR_0}(r_n ^ {-1} \Upsilon_n[-1])\Big],
\end{aligned}
\]
for some positive constant $\chi$.

Define
\[
	\Xi_t(z) := \frac{1}{2} (\sqrt{t} \mn ^ j z - \bar b) ^ \top [\varrand + \mn ^ j \mn ^ {j\top} (1 - t)] ^ {-1} (\sqrt{t} \mn ^ j z - \bar b).
\]
Then, we have that
\begin{equation}
\begin{aligned}
	&\dfrac{\bigintsss_0 ^ 1  \sqrt{t} \EE_{\FF_n} \left[\operatorname{Exp} \left(\sqrt{t} \mn \Upsilon_n[-1] - r_n \bar \beta, \left[\varrand + \mn ^ j \mn ^ {j\top} (1 - t)\right] ^ {-1}\right) \boldone_{\cR_t}(r_n ^ {-1} \Upsilon_n[-1]) \right] dt}
    {{\operatorname{Exp} \left(- r_n \bar b, \left[\varrand + \mn ^ j \mn ^ {j\top}\right] ^ {-1}\right)}} \\
	&\quad\quad\quad\quad \leq \sup_n \sup_{\FF_n \in \cF_n} \sup_{t \in(0,1]}  
    \dfrac{\EE_{\FF_n} \Big[\exp \Big(- r_n^2 \Xi_t \big(r_n ^ {-1} \Upsilon_n[-1]\big)\Big) \boldone_{[-c_1 \cdot 1_p, c_1 \cdot 1_p]}(r_n ^ {-1} \Upsilon_n[-1])\Big] \bigintsss_0 ^ 1 \sqrt{t} dt}
    {{\operatorname{Exp} \left(- r_n \bar b, \left[\varrand + \mn ^ j \mn ^ {j\top}\right] ^ {-1}\right)}}
\end{aligned}
\label{bound:VarLDP}
\end{equation}
By applying Lemma \ref{lem:LDP}, we have 
\[
\begin{aligned}
	\sup_n \sup_{\FF_n \in \cF_n} \sup_{t \in(0,1]} 
	\dfrac{\EE_{\FF_n} \Big[\exp \Big\{- r_n ^ 2 \Xi_t \big(r_n ^ {-1} \Upsilon_n[-1]\big)\Big\} \boldone_{[-c_1 \cdot 1_p, c_1 \cdot 1_p]}(r_n ^ {-1} \Upsilon_n[-1])\Big]}{\operatorname{Exp} \left(- r_n \bar b, \left[\varrand + \mn ^ j \mn ^ {j\top}\right] ^ {-1}\right)} < \infty.
\end{aligned}
\]
Thus, we conclude that the display on the right-hand side of \eqref{bound:VarLDP} is bounded by a constant.

Using Lemma \ref{lem:LDP} once again, we have
\[
    \dfrac{\EE_{\FF_n} \left[\boldone_{\cR_0}(r_n ^ {-1} \Upsilon_n[-1])\right]}
    {\operatorname{Exp} \left(- r_n \bar b, \left[\varrand + \mn ^ j \mn ^ {j\top}\right] ^ {-1}\right)} 
    \leq \sup_n \sup_{\FF_n \in \cF_n} 
	\dfrac{\EE_{\FF_n} \left[\boldone_{\cR_0}(r_n ^ {-1} \Upsilon_n[-1])\right]}
    {\operatorname{Exp} \left(- r_n \bar b, \left[\varrand + \mn ^ j \mn ^ {j\top}\right] ^ {-1}\right)}
	< \infty,
\]
which follows from our choice of $c_1$.

Our proof is complete as
\[
	\sup_n \sup _{\FF_n \in \cF_n} \EE_{\FF_n} \Big[\|e_{1, n}\| ^ \lambda \exp \left(\chi\|e_{1, n}\|\right)\Big]<\infty, 
	\quad
	\sup _n \sup _{\FF_n \in \cF_n} \EE_{\FF_n} \Big[\left\|e_{1, n}^*\right\|^\gamma\Big]<\infty,
\]
based on the moment bounds in Proposition \ref{lem:an_moment_bound}.
\end{proof}

%--------------------------------------------------------------------------
%--------------------------------------------------------------------------

\begin{thm}
Suppose that the conditions in Assumptions  \ref{aspt:moment_bound}, \ref{aspt:randomization} and \ref{aspt:error} are met with $r_n\to \infty$ and $r_n= o(n ^ {1 / 6})$.
Then, it holds that
$$\lim_{n} \sup_{\FF_n \in \cF_n} \RD ^ {(1)} = 0, \quad \lim_{n} \sup_{\FF_n \in \cF_n} \RD ^ {(2)}  = 0.$$
\end{thm}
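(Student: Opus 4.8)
The plan is to run the same two-stage comparison used in the proof of Theorem~\ref{thm:bound_case} for Case~I, but to replace its crude uniform bounds on $W_0$, $\cP^{j\cdot E}$ and $\EE_{\cN}[W_0(\cZ)]$ by the sharper, $r_n$-aware estimates of Propositions~\ref{prop:An_upper_bound}, \ref{prop:denominator_expatation}, \ref{prop:denominator} and \ref{lem:use_bound}, so that the exponentially small factors generated by the divergence of $r_n$ cancel between numerator and denominator. First I would fix the truncation set $\widetilde D_n$ appearing in $\RD^{(1)}$ and $\RD^{(2)}$ to be a box whose side length is proportional to $r_n$ and which is contained in the set $\cD_n$ of Proposition~\ref{prop:An_upper_bound}; the tail bound in Proposition~\ref{lem:Zn_moment_bound}(ii) then gives $\lim_n\sup_{\FF_n\in\cF_n}\PP_{\FF_n}[\Zn\in\widetilde D_n^{c}]=0$, as required. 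As in the proof of Theorem~\ref{thm:rd_limit}, I would treat only the scenario $-\infty<I_1^j<\infty$, $I_2^j=\infty$ (the remaining two scenarios being dealt with by the same argument), use the reparametrization $\tail+Q^jI_1^j=-r_n\bar b$ of \eqref{reparam}, and split according to the sign of the scalar $Q^{j\top}\varrand^{-1}\bar b$.

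Second, for the numerators of $\RD^{(1)}$ and $\RD^{(2)}$ I would write $\boldone_{\widetilde D_n}=1-\boldone_{\widetilde D_n^{c}}$. The contributions of $\boldone_{\widetilde D_n^{c}}$ are handled by the tail argument of Proposition~\ref{cor:tail}, since $\sup_z W_0(z)$ is finite uniformly in $n$ and $\FF_n$. The remaining quantities $\EE_{\FF_n}[G^{(l)}(\cZ_n)]-\EE_{\cN}[G^{(l)}(\cZ)]$, for $G^{(1)}(z)=W_0(z)$ and $G^{(2)}(z)=\arb\circ\cP^{j\cdot E}(z)\,W_0(z)$, I would bound by the Stein inequality of Lemma~\ref{prop:stein_bound}, exactly as in \eqref{bdd:1}. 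The new ingredient is Proposition~\ref{prop:An_upper_bound}: on $\cD_n$ it lets me factor $W_0(z)$ --- and, together with Propositions~\ref{prop:pivot_bound} and \ref{prop:W0}, its third-order derivatives and those of $G^{(2)}$ --- as a Gaussian-type kernel $\operatorname{Exp}(\mn^j z-r_n\bar b,\varrand^{-1})$ when $Q^{j\top}\varrand^{-1}\bar b<0$, or $\operatorname{Exp}(\mn^j z-r_n\bar b,\Theta)$ when $Q^{j\top}\varrand^{-1}\bar b>0$, times an amplitude whose derivatives are $O(r_n^{-1})$ (respectively $O(1)$) uniformly over $\cF_n$. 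Evaluating at $\sqrt t\,\W+\sqrt{1-t}\,\cZ$ and integrating the Gaussian kernel against the law of $\cZ$ by completing the square produces precisely the integrand $\operatorname{Exp}\bigl(\mn^j\sqrt t\,\W-r_n\bar b,[\varrand+\mn^j\mn^{j\top}(1-t)]^{-1}\bigr)$ of Proposition~\ref{lem:use_bound}; on the event that $r_n^{-1}\cZ_n[-1]$ escapes a fixed box, the same proposition --- through the large-deviation bound of Lemma~\ref{lem:LDP} --- controls the contribution of arguments that leave $\cD_n$. Combining these with the uniform moment bounds of Proposition~\ref{lem:en_moment_bound} on $\|e_{1,n}\|^{\lambda}\|e_{1,n}^{*}\|^{\gamma}$ (which also tame the polynomial-in-$\|\mn^j z+\tail\|$ prefactors, once the Gaussian kernel localizes $\mn^j z$ near $r_n\bar b$), I would obtain that each numerator is $\lesssim n^{-1/2}\,\mathrm{poly}(r_n)\,\operatorname{Exp}\bigl(-r_n\bar b,[\varrand+\mn^j\mn^{j\top}]^{-1}\bigr)$, with all constants uniform over $\FF_n\in\cF_n$.

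Third, I would bound the common denominator $\EE_{\cN}[W_0(\cZ)]$ from below by Proposition~\ref{prop:denominator}: it is $\gtrsim\operatorname{Exp}\bigl(-r_n\bar b,[\Theta^{-1}+\mn^j\mn^{j\top}]^{-1}\bigr)$ when $Q^{j\top}\varrand^{-1}\bar b<0$, and $\gtrsim r_n^{-1}\operatorname{Exp}\bigl(-r_n\bar b,[\varrand+\mn^j\mn^{j\top}]^{-1}\bigr)$ when $Q^{j\top}\varrand^{-1}\bar b>0$, the borderline value being absorbed by a small perturbation of $\bar b$. Dividing the numerator estimate by this, the $\operatorname{Exp}$-factors are comparable: in the second case the exact identity of Proposition~\ref{prop:denominator_expatation} makes them equal, so they cancel, and in the first case $\Theta\preceq\varrand^{-1}$ from Proposition~\ref{prop:W0} makes the numerator's $\operatorname{Exp}$-factor no larger than the denominator's. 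What is left of the quotient is $n^{-1/2}\,\mathrm{poly}(r_n)$, which tends to $0$ because $r_n=o(n^{1/6})$ and the surviving polynomial in $r_n$ is of bounded degree (the order of the third-order Stein remainder). Since every estimate above is uniform in $\FF_n$, taking $\sup_{\FF_n\in\cF_n}$ gives $\lim_n\sup_{\FF_n\in\cF_n}\RD^{(1)}=\lim_n\sup_{\FF_n\in\cF_n}\RD^{(2)}=0$, which together with Case~I completes the proof of Theorem~\ref{thm:rd_limit}, and hence, via Theorem~\ref{thm:relative_diff}, of Theorem~\ref{thm:main_weak_convergence}.

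I expect the genuine obstacle to be this rate-matching in the third step: numerator and denominator each decay like $\exp(-\Theta(r_n^2))$, so any loss in the exponent is fatal, and one must track the precise quadratic form produced by each Gaussian integration and completing-the-square operation and verify that the net positive-semidefinite inequality runs the right way. This is exactly what forces the large-deviation machinery --- Varadhan's lemma in Lemma~\ref{lem:LDP} feeding Proposition~\ref{lem:use_bound} --- which shows that both the off-$\cD_n$ mass and the dependence on $\W$ are negligible against $\operatorname{Exp}(-r_n\bar b,[\varrand+\mn^j\mn^{j\top}]^{-1})$, and it is where Assumption~\ref{aspt:error} is genuinely used: it is precisely the hypothesis guaranteeing that $r_n^{-1}\Delta_1$ does not perturb the large-deviation rate function $\tfrac{1}{2}\|z\|^2$ of $r_n^{-1}\Zn$, and hence that the reparametrization \eqref{reparam} is legitimate at the scale $r_n^2$. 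The remaining items --- the borderline $Q^{j\top}\varrand^{-1}\bar b=0$ and the scenarios with $I_2^j<\infty$ --- require the obvious analogues of Propositions~\ref{prop:An_upper_bound}--\ref{lem:use_bound} but follow the same template.
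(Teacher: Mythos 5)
Your proposal matches the paper's own proof of Case II essentially step for step: it chooses $\widetilde D_n = \cD_n$, applies the Stein bound of Lemma~\ref{prop:stein_bound} to $W_0\boldone_{\cD_n}$ and $\arb\circ\cP^{j\cdot E}\times W_0\boldone_{\cD_n}$, factors $W_0$ and its derivatives on $\cD_n$ via Propositions~\ref{prop:pivot_bound}--\ref{prop:An_upper_bound}, invokes the Varadhan/LDP control of Lemma~\ref{lem:LDP} through Proposition~\ref{lem:use_bound} to get the numerator bound $\lesssim n^{-1/2}\mathrm{poly}(r_n)\operatorname{Exp}(-r_n\bar b,[\varrand+\mn^j\mn^{j\top}]^{-1})$, and divides by the lower bound of Proposition~\ref{prop:denominator}, leaving a ratio of order $r_n^3/\sqrt{n}\to 0$ under $r_n=o(n^{1/6})$. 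The one redundant step is decomposing $\boldone_{\widetilde D_n}=1-\boldone_{\widetilde D_n^c}$: the indicator $\boldone_{\widetilde D_n}$ is already baked into the definitions of $\RD^{(1)}$ and $\RD^{(2)}$ (the $\boldone_{\widetilde D_n^c}$ mass was already excised in the proof of Theorem~\ref{thm:relative_diff}), so no further splitting is needed before applying the Stein bound.
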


\begin{proof}
We focus on the case when $Q ^ {j\top} \varrand ^ {-1} \bar b > 0$. The proof strategy applies similarly to the other case when $Q ^ {j\top} \varrand ^ {-1} \bar b < 0$.

First, we prove that
\begin{equation}
\begin{aligned}
\text{(i)}&\;\; \sup_{\FF_n \in \cF_n} \left|\EE_{\FF_n} \left[W_0 \left(\Upsilon_n\right) \boldone_{\cD_n}(\Upsilon_n) \right] - \EE_{\cN} \left[W_0 \left(\cZ\right) \boldone_{\cD_n}(\cZ)\right]\right| \le \frac{r_n ^ 2}{\sqrt{n}} \operatorname{Exp} \Big(- r_n \bar b, \big[\varrand + \mn ^ j \mn ^ {j\top}\big] ^ {-1}\Big); \\[7pt]
\text{(ii)}&\;\; \sup_{\FF_n \in \cF_n} \left|\EE_{\FF_n} \left[\arb \circ \pivotup\times W_0 \left(\Upsilon_n\right) \boldone_{\cD_n}(\Upsilon_n)\right] - \EE_{\cN} \left[\arb \circ \pivotz \times W_0 \left(\cZ\right) \boldone_{\cD_n}(\cZ)\right]\right| \\
&\quad\quad\quad\quad\quad\quad\quad\quad\quad\quad\quad\quad\quad\quad\quad\quad\quad\quad\quad\quad \le \frac{r_n ^ 2}{\sqrt{n}} \operatorname{Exp} \Big(- r_n \bar b, \big[\varrand + \mn ^ j \mn ^ {j\top}\big] ^ {-1}\Big).
\end{aligned}
\label{first:part:rare}
\end{equation}

Consider the real-valued functions
\[
\begin{aligned}
& \bar G^{(1)} \left(z\right) 
    = \operatorname{Exp} \left(\mn ^ j z + \tail + Q ^ j I_1 ^ j, \varrand ^ {-1}\right) \times \cA_n (z) \boldone_{\cD_n}(z), \\
& \bar G^{(2)}\left(z\right) 
    % = H \circ \cP ^ {j \cdot E}\left(z\right) \times W_0\left(z\right) \boldone_{\cD_n}(z)
    = \arb \circ \cP ^ {j \cdot E} (z) \times \operatorname{Exp} \left(\mn ^ j z + \tail + Q ^ j I_1 ^ j, \varrand ^ {-1}\right) \times \cA_n (z) \boldone_{\cD_n}(z),
\end{aligned}
\]
where the function $\cA_n$ and the set $\cD_n$ are as defined in Proposition \ref{prop:An_upper_bound}.
On the set $\cD_n$, $\bar G^{(1)}(z)$ and $\bar G^{(2)}(z)$ are equal to $W_0(z)$ and $\arb \circ \cP ^ {j \cdot E} (z) \times W_0(z)$, respectively.

To establish \eqref{first:part:rare}, we use the Stein bound from Proposition \ref{prop:stein_bound} with $G= \bar G^{(l)}$ for $l\in \{1,2\}$, and use the definitions of $\W$, $\Upsilon_n[-1]$, $e_{1, n}$, and $e_{1, n}^*$ provided in this result.
Using the properties of our pivot, $W_0$ and $\cA_n$ as derived in Propositions \ref{prop:pivot_bound}, \ref{prop:W0}, and \ref{prop:An_upper_bound}, we note that
\[
\begin{aligned}
	\left|\partial_{i_1, i_2, i_3} ^ 3 \bar G ^ {(l)} \left(\sqrt{t} \W + \sqrt{1-t} z\right)\right| 
 	\lesssim &\sum_{l = 0} ^ 3 r_n ^ {-1} \big\|\mn ^ j \sqrt{t} \W + \mn ^ j \sqrt{1-t} z + \tail\big\| ^ l \\
    &\times \operatorname{Exp} \left(\mn ^ j \sqrt{t} \W + \mn ^ j \sqrt{1-t} z + \tail, \varrand ^ {-1}\right).
\end{aligned}
\]
After taking expectations, it holds that
\[
\begin{aligned}
    &\EE_{\cN} \left[\left|\partial_{i_1, i_2, i_3} ^ 3 \bar G ^ {(l)} \left(\sqrt{t} \W + \sqrt{1-t} \cZ\right)\right|\right] \\
    &\lesssim \sum_{\substack{\lambda, \gamma \in \NN: \\ \lambda+\gamma \leq 3}} r_n ^ {-1} \big\|\W\big\| ^ \lambda \big\|\tail\big\| ^ \gamma \operatorname{Exp} \left(\mn ^ j \sqrt{t} \W + \tail, \left[\varrand + \mn ^ j \mn ^ {j\top} (1 - t)\right] ^ {-1}\right) \\
    &\lesssim \sum_{\substack{\bar{\lambda}, \bar{\kappa}, \breve{\lambda}, \breve{\kappa} \in \NN: \\ \bar{\lambda} + \bar{\kappa}+\breve{\lambda}+\breve{\kappa} \leq 3}} \!\!\!\!
	r_n ^ {\breve{\lambda} - 1}
	\big\|\Upsilon_n[-1]\big\| ^ {\breve{\kappa}} 
	\left\|\frac{e_{1, n}}{\sqrt{n}}\right\| ^ {\bar{\lambda}}
	\left\|\frac{e_{1, n} ^ *}{\sqrt{n}}\right\| ^ {\bar{\kappa}}
	\operatorname{Exp} \left(\mn ^ j \sqrt{t} \W - r_n \bar b, \left[\varrand + \mn ^ j \mn ^ {j\top} (1 - t)\right] ^ {-1}\right). 
\end{aligned}
\]
By plugging this result into the bound derived in Proposition \ref{prop:stein_bound}, we conclude that
\begin{equation*}
\begin{aligned}
& \left|\EE_{\FF_n} \left[\bar G ^ {(l)} (\Upsilon_n)\right] - \EE_{\cN} \left[\bar G ^ {(l)}(\cZ)\right]\right|\\
&\lesssim \frac{1}{\sqrt{n}} \sum_{\substack{\lambda, \gamma \in \NN: \\ \lambda+\gamma \leq 3}} \sum_{i_1, i_2, i_3 \in [p]} 
	\EE_{\FF_n}
		\Bigg[\|e_{1, n}\| ^ \lambda \|e_{1, n} ^ *\| ^ \gamma 
		\sup_{\alpha, \kappa \in[0,1]} \int_0 ^ 1 \frac{\sqrt{t}}{2} \\
&	\times  \sum_{\substack{\bar{\lambda}, \bar{\kappa}, \breve{\lambda}, \breve{\kappa} \in \NN: \\ \bar{\lambda} + \bar{\kappa}+\breve{\lambda}+\breve{\kappa} \leq 3}} \!\!\!\!
	r_n ^ {\breve{\lambda} - 1}
	\big\|\Upsilon_n[-1]\big\| ^ {\breve{\kappa}} 
	\left\|\frac{e_{1, n}}{\sqrt{n}}\right\| ^ {\bar{\lambda}}
	\left\|\frac{e_{1, n} ^ *}{\sqrt{n}}\right\| ^ {\bar{\kappa}}
	\operatorname{Exp} \left(\mn ^ j \sqrt{t} \W - r_n \bar b, \left[\varrand + \mn ^ j \mn ^ {j\top} (1 - t)\right] ^ {-1}\right)	 dt\Bigg].
\end{aligned}
\end{equation*}

To complete our proof, we simplify the bound on the right-hand side when
$$\bar{\lambda}=\bar{\kappa}=\breve{\kappa}=0, \breve{\lambda}=3,$$
noting that the same approach can be used for different values of $\bar{\lambda}, \bar{\kappa}, \breve{\lambda}, \breve{\kappa}$.
Using Proposition \ref{lem:use_bound} in the last step gives us:
\[
   \sup_{\FF_n \in \cF_n}  \Big|\EE_{\FF_n} \big[\bar G ^ {(l)} \left(\Upsilon_n\right)\big] - \EE_{\cN} \big[\bar G ^ {(l)} \left(\cZ\right)\big]\Big| \\
	\lesssim \frac{r_n ^ 2}{\sqrt{n}}
    \operatorname{Exp} \left(- r_n \bar b, \left[\varrand + \mn ^ j \mn ^ {j\top}\right] ^ {-1}\right).
\]
This concludes our proof for \eqref{first:part:rare}.

It is immediate from Proposition \ref{prop:denominator} that 
\begin{equation}
\EE_{\cN} \left[W_0 \left(\cZ\right)\right] \ge r_n ^ {-1}
	\operatorname{Exp} \Big(- r_n \bar b, \big[\varrand + \mn ^ j \mn ^ {j\top}\big] ^ {-1}\Big).
	\label{second:part:rare}
\end{equation}	

In the definition of $\RD ^ {(1)}$ and $\RD ^ {(2)}$, we fix $\widetilde{D}_n=\cD_n$ as considered in Proposition \ref{prop:An_upper_bound}.
Using Proposition \ref{lem:Zn_moment_bound}, for this sequence of sets we have that
$$
\sup_{\FF_n\in \cF_n} \mathbb{P}_{\mathbb{F}_n}[ \Upsilon_n \in \widetilde{D}_n^c] \leq \sup_{\FF_n\in \cF_n} \mathbb{P}_{\mathbb{F}_n}\left[\left\|\frac{\Upsilon_n}{r_n}\right\| \geq c_0\right] \leq \exp \left(p \log 5 - \frac{r_n^2c_0^2}{16 \sigma ^ 2 \left\|H ^ {-1}\right\|}\right),
$$
thereby leading to 
$$
\lim_n\sup_{\FF_n\in \cF_n} \mathbb{P}_{\mathbb{F}_n}[ \Upsilon_n \in \widetilde{D}_n^c] =0.
$$
Our main claim is proven by combining the results in \eqref{first:part:rare} and \eqref{second:part:rare}.
\end{proof}

%--------------------------------------------------
%--------------------------------------------------
\section{Additional simulation experiments for Section \ref{sec:simulation}}
\label{appdx:simulation}

In this section, we provide additional simulation results for Section \ref{sec:simulation} with the bandwidths $h$ and $h'$ possibly different.

\subsection{Case I: $n = 800$ and $p = 200$.}
\label{appdx:simulation_previous}
We generate datasets with $n = 800$ i.i.d. observations and a set of $p = 200$ potential covariates from the following models.
\begin{enumerate}[label={Model \arabic*:}, leftmargin=*]
    \item $y_i = \beta_0 + x_i ^ \top \beta + \varepsilon_i$, where $\beta_0 = 0.2$, $\beta = (c, c, c, c, c, 0 \ldots, 0) ^ \top \in \RR ^ p$, $\varepsilon_i \sim \cN(0, 4)$ and $x_i \sim \cN(0, \Sigma)$ for an autoregressive design matrix with $\Sigma_{j, k} = 0.5 ^ {|j - k|}$, where $\varepsilon_i $ and $x_i$ are independent.
	\item $y_i = \beta_0 + x_{i, 1} ^ \top \beta + 1.5 x_{i, 2} \varepsilon_i$, where $\beta_0 = 0.2$, $\beta = (c, c, c, c, c, 0 \ldots, 0) ^ \top \in \RR ^ p$ and $\varepsilon_i \sim \cN(0, 4)$. In this model, $x = (x_{i, 2}, x_{i, 1} ^ \top) ^ \top \in \RR ^ p$, where the variable $x_{i, 2} \in \RR$ is drawn from $U(0 ,2)$ and the variables $x_{i, 1} \in \RR ^ {p - 1}$ are drawn independently from $x_{i, 2}$ and from $\cN(0, \Sigma)$ for an autoregressive design matrix with $\Sigma_{j, k} = 0.5 ^ {|j - k|}$.
	\item $y_i = \beta_0(u_i) + x_i ^ \top \beta(u_i)$ for $i=1, \ldots, n$, where $\beta_0(u_i) = 2 c u_i$ and $\beta(u_i) = (c u_i, c u_i, c, c, c, 0 \ldots, 0) ^ \top \in \RR ^ p$, $u_i \sim U(0, 1)$. 
	In this model, $x_{i, 1}, x_{i, 2} \in \RR$, the first and second elements of $x_i$, are drawn from $U(0, 2)$ and $x_{i, 3} \in \RR ^ {p - 2}$, the subvector of $x_i$ after removing $x_{i, 1}, x_{i, 2}$, is drawn from $\cN(0, \Sigma)$ for an autoregressive design matrix $\Sigma_{j, k} = 0.5 ^ {|j - k|}$.
\end{enumerate}

We draw selective inference in each of our models after estimating $F_{y|x} ^ {-1} (\tau)$, which is the $\tau$-th population conditional quantile of $y$ given $x$. 
Note that $F_{y|x} ^ {-1} (\tau) = \beta_0 (\tau) + x ^ \top \beta (\tau)$, where in 
\begin{equation*}
\begin{gathered}
\text{ Model 1: } \beta_0 (\tau) = 0.2 + \Phi ^ {-1} (\tau; 0, 4) \text{ and } \beta (\tau) = (c, c, c, c, c, 0 \ldots, 0) ^ \top, \\
\text{ Model 2: } \beta_0 (\tau) = 0.2 \text{ and } \beta (\tau) = (1.5 \Phi ^ {-1} (\tau; 0, 4), c, c, c, c, c, 0 \ldots, 0) ^ \top, \\ 
\text{ Model 3: } \beta_0 (\tau) = 2 c \tau \text{ and } \beta (\tau) = (c \tau, c \tau, c, c, c, 0 \ldots, 0) ^ \top. 
\end{gathered}
\end{equation*}

In our simulation, we set the quantile level at $\tau = 0.7$. 
The signal strength settings are categorized as ``Low'', ``Medium'',  and ``High'', depending on the value of $c$ from the set $\{0.1, 0.5, 1\}$.
We apply the randomized $\ell_1$-penalized SQR problem with tuning parameter $\lambda = 0.6 \sqrt{\log p / n}$.
To form the smoothed loss function, we use Gaussian kernels with bandwidths $h = \max \big\{0.05, \sqrt{\tau(1-\tau)} (\log (p) / n)^{1 / 4}\big\}$ and $h' = \{(q+\log n) / n\}^{2 / 5}$ for $q = |E|$ at the time of selection and inference, respectively.
Our proposed method is implemented with white noise $\omega_n$ drawn from $\cN(0_p, \Omega/n)$ with $\varrand = I_{p, p}$.
We report our findsing in Figure \ref{fig:coverage_old}, Figure \ref{fig:length_old} and Table \ref{table:F1scores_old}.
These results are based on $500$ independent Monte Carlo datasets for each pair of model and signal setting.

\begin{figure}[h]
	\centering
 	\includegraphics[scale = 0.62]{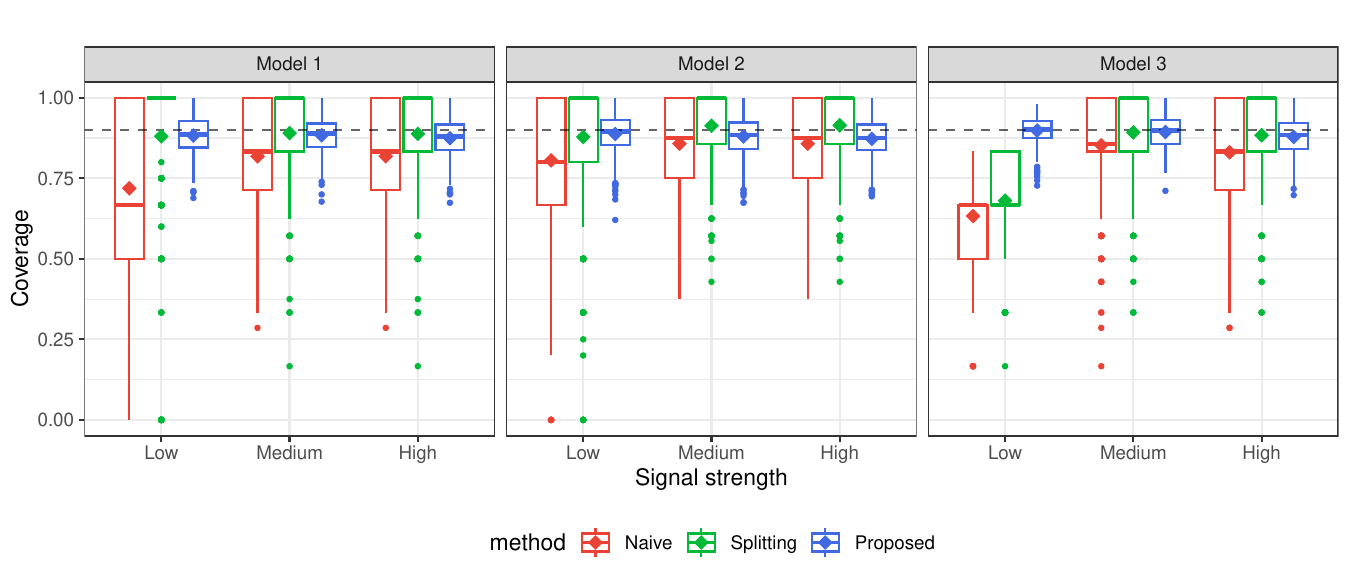}
	\caption{
    \small{Coverage rates of different methods across different models and signal settings. 
    The gray dashed line represents the prespecified target coverage rate at $0.9$, and the diamond marks highlight the averaged coverage rates over all replications.}}
	\label{fig:coverage_old}
\end{figure}

\begin{figure}[h]
	\centering
	\includegraphics[scale = 0.62]{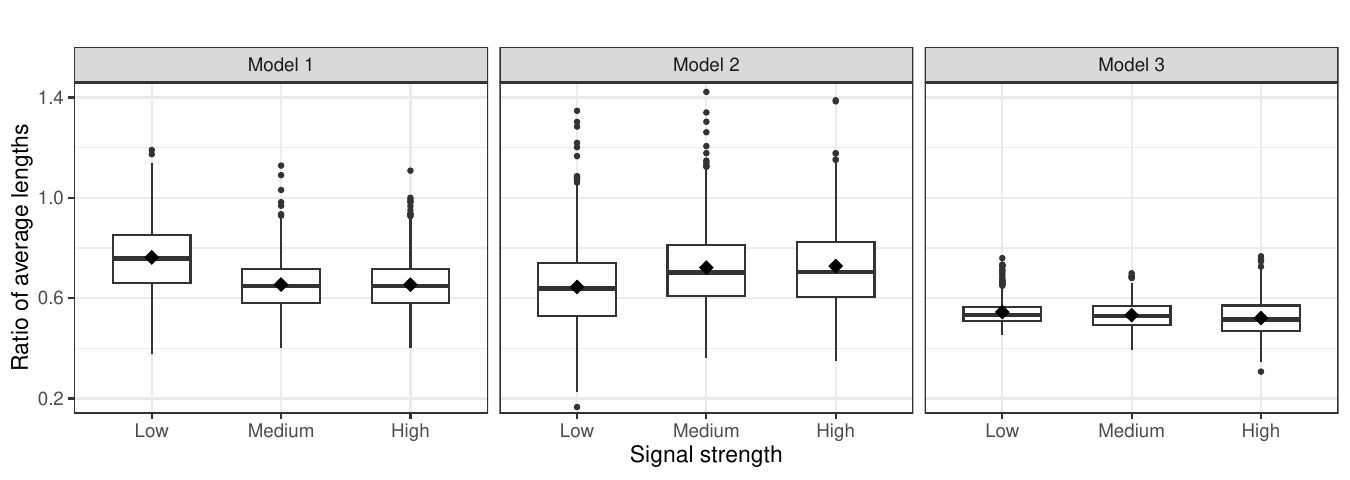}
	\caption{
    \small{The boxplots present the ratio of average interval lengths for the selected parameters between the ``Proposed'' method and the ``Splitting'' method across different models and signal strengths.}}
	\label{fig:length_old}
\end{figure}

\begin{table}[h]
\caption{\small{Accuracy based on F1 scores before and after applying ``Proposed'' in the three models and signal settings. }}
\vspace{10pt}
\footnotesize
\begin{tabular*}{\textwidth}{@{\extracolsep{\fill}}*{10}{c}}
\hline
\hline
    & & Model 1 & & & Model 2 & & & Model 3 & \\
     \cline{2 - 4} \cline{5 - 7} \cline{8 - 10} 
Signal Strength & Low & Medium & High & Low & Medium & High & Low & Medium & High \\ 
\hline
\hline
F1 score before inference
    & 0.10 & 0.19 & 0.22 & 0.12 & 0.19 & 0.23 & 0.22 & 0.21 & 0.21 \\
F1 score after inference
    & 0.27 & 0.63 & 0.70 & 0.30 & 0.59 & 0.67 & 0.74 & 0.70 & 0.67 \\
    %\\
\hline
\hline
\end{tabular*}
\label{table:F1scores_old}
\end{table}

%--------------------------------------------------
%--------------------------------------------------
\subsection{Case II: $n = 800$ and $p = 1000$.}
\label{appdx:simulation_case1}
We generate datasets with $n$ i.i.d. observations and a set of $p$ potential covariates with $n = 800$ and $p = 1000$ from the following models.
\begin{enumerate}[label={Model \arabic*:}, leftmargin=*]
    \setcounter{enumi}{3}
    \item $y_i = \beta_0 + x_i ^ \top \beta + \varepsilon_i$, where $\beta_0 = 0.2$, $\beta = (c, c, c, c, c, 0 \ldots, 0) ^ \top \in \RR ^ p$, $\varepsilon_i \sim \cN(0, 1)$ and $x_i \sim \cN(0, \Sigma)$ for an autoregressive design matrix with $\Sigma_{j, k} = 0.5 ^ {|j - k|}$, where $\varepsilon_i $ and $x_i$ are independent.
	\item $y_i = \beta_0 + x_{i, 1} ^ {\top} \beta + 1.5 x_{i, 2} \varepsilon_i$, where $\beta_0 = 0.2$, $\beta = (c, c, c, c, c, 0 \ldots, 0) ^ \top \in \RR ^ p$ and $\varepsilon_i \sim \cN(0, 1)$. In this model, $x = (x_{i, 2}, x_{i, 1} ^ {\top}) ^ \top \in \RR ^ p$, where the variable $x_{i, 2} \in \RR$ is drawn from $U(0 ,2)$ and the variables $x_{i, 1} \in \RR ^ {p - 1}$ are drawn independently from $x_{i, 2}$ and from $\cN(0, \Sigma)$ for an autoregressive design matrix with $\Sigma_{j, k} = 0.5 ^ {|j - k|}$.
	\item $y_i = \beta_0(u_i) + x_i ^ \top \beta(u_i)$ for $i=1, \ldots, n$, where $\beta_0(u_i) = 2 c u_i$ and $\beta(u_i) = (c u_i, c u_i, c, c, c, 0 \ldots, 0) ^ \top \in \RR ^ p$, $u_i \sim U(0, 1)$. 
	In this model, $x_{i, 1}, x_{i, 2} \in \RR$, the first and second elements of $x_i$, are drawn from $U(0, 2)$ and $x_{i, 3} \in \RR ^ {p - 2}$, the subvector of $x_i$ after removing $x_{i, 1}, x_{i, 2}$, is drawn from $\cN(0, \Sigma)$ for an autoregressive design matrix $\Sigma_{j, k} = 0.5 ^ {|j - k|}$.
\end{enumerate}

We draw selective inferences in each of our models after estimating $F_{y|x} ^ {-1} (\tau)$, which is the $\tau$-th population conditional quantile of $y$ given $x$. 
Note that $F_{y|x} ^ {-1} (\tau) = \beta_0 (\tau) + x ^ \top \beta (\tau)$, where in 
\[
\begin{gathered}
\text{ Model 4: } \beta_0 (\tau) = 0.2 + \Phi ^ {-1} (\tau; 0, 1) \text{ and } \beta (\tau) = (c, c, c, c, c, 0 \ldots, 0) ^ \top, \\
\text{ Model 5: } \beta_0 (\tau) = 0.2 \text{ and } \beta (\tau) = (1.5 \Phi ^ {-1} (\tau; 0, 1), c, c, c, c, c, 0 \ldots, 0) ^ \top, \\ 
\text{ Model 6: } \beta_0 (\tau) = 2 c \tau \text{ and } \beta (\tau) = (c \tau, c \tau, c, c, c, 0 \ldots, 0) ^ \top.
\end{gathered}
\]

In our simulation, we set the quantile level at $\tau = 0.7$. The signal strength settings are categorized as ``Low'', ``Medium'',  and ``High'', depending on the value of $c$ from the set $\{0.1, 0.5, 1\}$.
We apply the randomized $\ell_1$-penalized SQR problem with tuning parameter $\lambda = 1.8 \sqrt{\log p / n}$.
To form the smoothed loss function, we use Gaussian kernels with bandwidths $h = \max \big\{0.05, \sqrt{\tau(1-\tau)} (\log (p) / n)^{1 / 4}\big\}$ and $h' = \{(q+\log n) / n\}^{2 / 5}$ for $q = |E|$ at the selection and inference stages, respectively.
Our proposed method is implemented with white noise $\omega$
drawn from $\cN(0_p, \Omega)$ with $\varrand = \frac{4}{n} I_{p, p}$.
We report our findsing in Figure \ref{fig:coverage_new}, Figure \ref{fig:length_new} and Table \ref{table:F1scores_new}.
These results are based on $500$ independent Monte Carlo datasets for each pair of model and signal setting.

%----------------------------
% \label{subsec:coverage}
\begin{figure}[h]
	\centering
	\includegraphics[scale = 0.62]{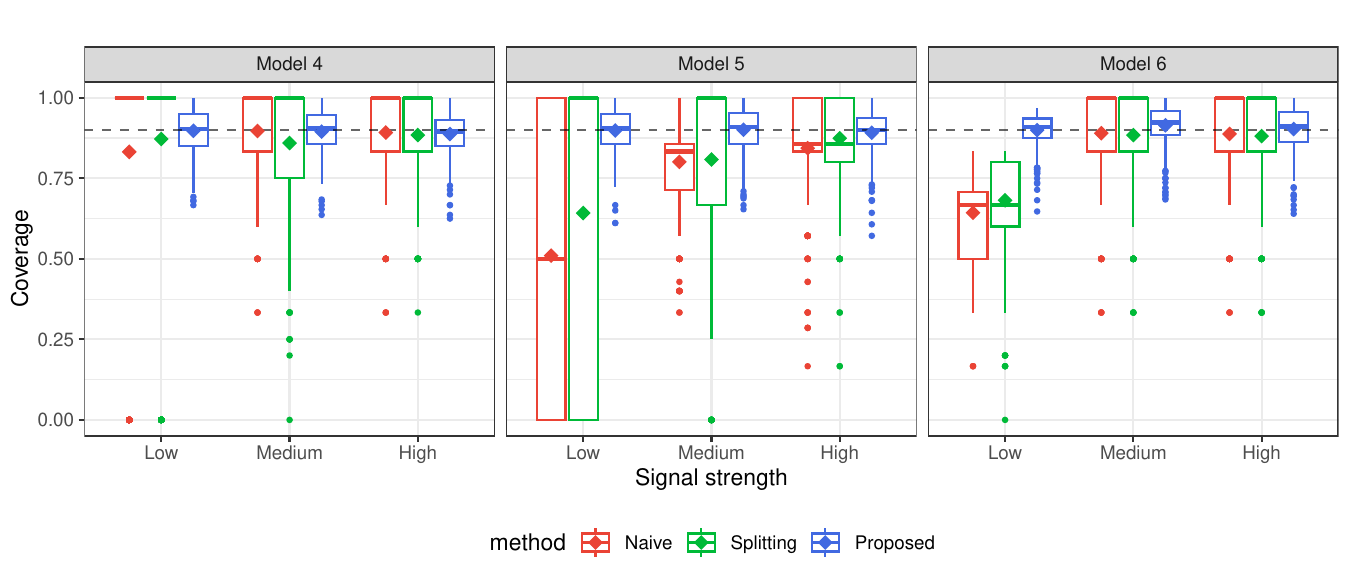}
	\caption{
    \small{Coverage rates of different methods across different models and signal settings. 
    The gray dashed line represents the prespecified target coverage rate at $0.9$, and the diamond marks highlight the averaged coverage rates over all replications. 
    }}
	\label{fig:coverage_new}
\end{figure}

%----------------------------
\begin{figure}[h]
	\centering
	\includegraphics[scale = 0.62]{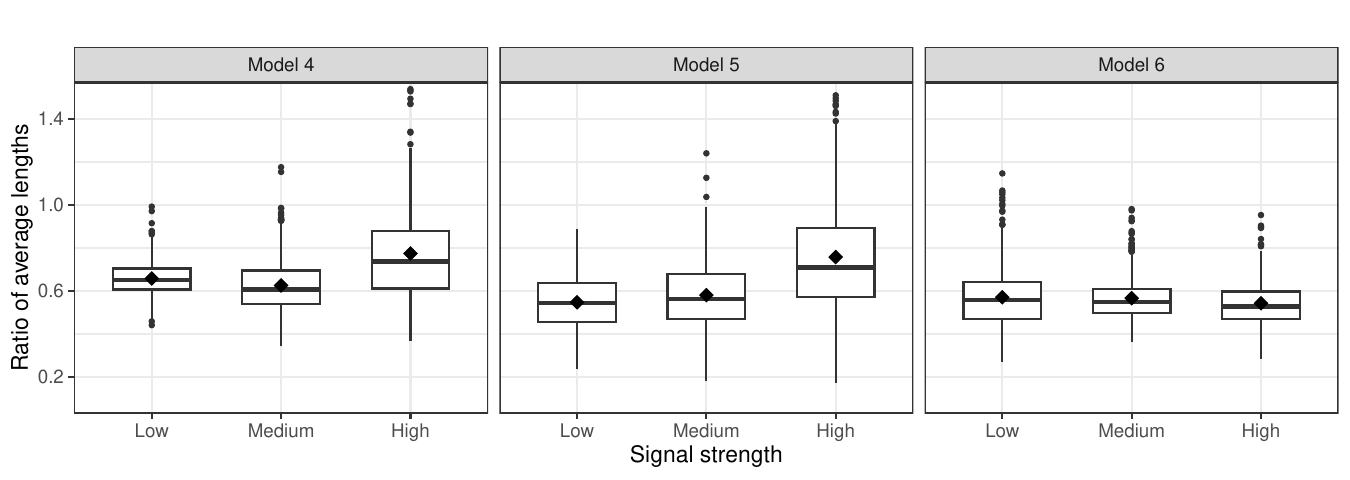}
	\caption{
    \small{The boxplots present the ratio of average interval lengths for the selected parameters between the ``Proposed'' method and the ``Splitting'' method across different models and signal strengths. 
    }}
	\label{fig:length_new}
\end{figure}
%----------------------------
\begin{table}[h]
\caption{\small{Accuracy based on F1 scores before and after applying ``Proposed'' in the three models and signal settings.}}
\vspace{10pt}
\footnotesize
\begin{tabular*}{\textwidth}{@{\extracolsep{\fill}}*{10}{c}}
\hline
\hline
    & & Model 4 & & & Model 5 & & & Model 6 & \\
     \cline{2 - 4} \cline{5 - 7} \cline{8 - 10} 
Signal Strength & Low & Medium & High & Low & Medium & High & Low & Medium & High \\ 
\hline
\hline
F1 score before inference
    & 0.10 & 0.24 & 0.30 & 0.10 & 0.22 & 0.26 & 0.25 & 0.27 & 0.27 \\
F1 score after inference
    & 0.28 & 0.62 & 0.71 & 0.28 & 0.54 & 0.62 & 0.64 & 0.67 & 0.67 \\
    %\\
\hline
\hline
\end{tabular*}
\label{table:F1scores_new}
\end{table}
%----------------------------

%--------------------------------------------------
%--------------------------------------------------
\subsection{Case III: model misspecification.}
\label{appdx:simulation_case2}

We generate datasets with $n$ i.i.d. observations and a set of $p$ potential covariates with $n = 400$ and $p = 100$.
\begin{enumerate}[label={Model \arabic*:}, leftmargin=*]
    \setcounter{enumi}{6}
    \item $y_i ^ 3 = \beta_0 + x_i ^ \top \beta + \varepsilon_i$, where $\beta_0 = 0.2$, $\beta = (c, c, c, c, c, 0 \ldots, 0) ^ \top \in \RR ^ p$, $\varepsilon_i \sim \cN(0, 1)$ and $x_i \sim \cN(0, \Sigma)$ for an autoregressive design matrix with $\Sigma_{j, k} = 0.5 ^ {|j - k|}$, where $\varepsilon_i $ and $x_i$ are independent.
\end{enumerate}
We draw selective inferences in each of our models after estimating $F_{y|x} ^ {-1} (\tau) = \beta_0 (\tau) + x ^ \top \beta (\tau)$.
To compute an approximate of our target, we generate an independent dataset with $n ^ \prime = 100n$ and solve the following problem:
\[
\argmin_{b \in \RR ^ {|E|}} \frac{1}{n'} \sum_{i=1} ^ {n'} \int_{-\infty} ^ \infty \rho_\tau(u) K_{h'}(u + x_{i, E} ^ \top b - y_i) du,
\]
where $E$ is our selected variables.

In our simulation, we set the quantile level at $\tau = 0.7$. The signal strength settings are categorized as ``Low'', ``Medium'',  and ``High'', depending on the value of $c$ from the set $\{0.1, 0.5, 1\}$.
We apply the randomized $\ell_1$-penalized SQR problem with tuning parameter $\lambda = 0.6 \sqrt{\log p / n}$.
To form the smoothed loss function, we use Gaussian kernels with bandwidths $h = \max \big\{0.05, \sqrt{\tau(1-\tau)} (\log (p) / n)^{1 / 4}\big\}$ and $h' = \{(q+\log n) / n\}^{2 / 5}$ for $q = |E|$ at the selection and inference stages, respectively.
Our proposed method is implemented with white noise $\omega$
drawn from $\cN(0_p, \Omega)$ with $\varrand = \frac{1}{n} I_{p, p}$.
We report our findsing in Figure \ref{fig:coverage_new2}, Figure \ref{fig:length_new2} and Table \ref{table:F1scores_new2}.
These results are based on $500$ independent Monte Carlo datasets for each pair of model and signal setting.

%----------------------------
% \label{subsec:coverage}
\begin{figure}[h]
	\centering
	\includegraphics[scale = 0.7]{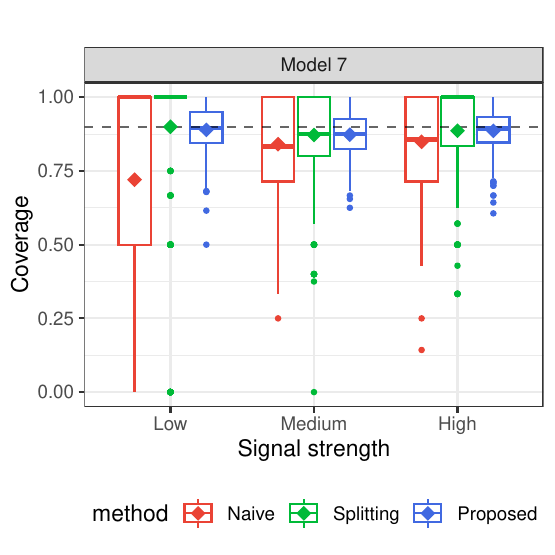}
	\caption{
    \small{Coverage rates of different methods across different models and signal settings. 
    The gray dashed line represents the prespecified target coverage rate at $0.9$, and the diamond marks highlight the averaged coverage rates over all replications. 
    }}
	\label{fig:coverage_new2}
\end{figure}

%----------------------------
\begin{figure}[h]
	\centering
	\includegraphics[scale = 0.7]{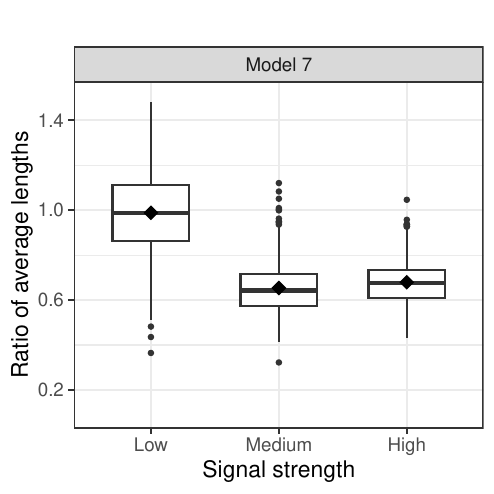}
	\caption{
    \small{The boxplots present the ratio of average interval lengths for the selected parameters between the ``Proposed'' method and the ``Splitting'' method across different models and signal strengths. 
    }}
	\label{fig:length_new2}
\end{figure}

%----------------------------
\begin{table}[h]
\caption{\small{Accuracy based on F1 scores before and after applying ``Proposed'' in three signal settings.}}
\vspace{10pt}
\footnotesize
\centering
\begin{tabular}{cccc}
\hline
\hline
    & & Model 7 & \\
     \cline{2 - 4}  
Signal Strength & Low & Medium & High \\ 
\hline
\hline
F1 score before inference
    & 0.16 & 0.26 & 0.30  \\
F1 score after inference
    & 0.28 & 0.63 & 0.74  \\
    %\\
\hline
\hline
\end{tabular}
\label{table:F1scores_new2}
\end{table}
%----------------------------

\section{Additional information for Section \ref{sec:real}}
\label{appdx:data}

\subsection{Data processing and definition of variables}

The list of the variables and their description after our data was processed are provided in Table \ref{table:real_variable}.
Note that the variables ``Obstetric Estimate Edited'', ``Live Birth Order'', ``Delivery Weight'' were removed to avoid multicollinearity in our data. 
In order to ensure an adequate sample size for our replication analysis, we eliminated certain categorical variables from the dataset that had less than $150$ samples for one of their categories. 
These variables are: “Successful External Cephalic Version”, “Failed External Cephalic Version”, “Mother Transferred”, “Seizures”, “No Congenital Anomalies Checked”, “Mother’s Race/Hispanic Origin\_3”, “Mother’s Race/Hispanic Origin\_5”, and “Father’s Race/Hispanic Origin\_5”.

{\small
\begin{longtable}{ll}
\caption{Definition of variables} \\
\hline\hline
Variable                          & Description \\
\hline\hline
\endfirsthead

\multicolumn{2}{c}%
{{\tablename\ \thetable{} -- continued from previous page}} \\
\hline\hline
Variable                          & Description \\
\hline\hline
\endhead

\hline\hline
\endfoot

\hline\hline
\endlastfoot

Time of Birth                     & Time of Birth \\
Birth Day of Week                 & 1: Sunday, 2: Monday, 3: Tuesday, 4: Wednesday, \\
                                  & 5: Thursday, 6: Friday, 7: Saturday \\
Birth Place                       & 1: In hospital, 2: Not in hospital \\
Mother's Age                      & Mother's single years of age \\
Mother's Nativity                 & 1: Born in the U.S., 2: Born outside the U.S. \\
Paternity Acknowledged            & 1: Yes, 0: No \\
Marital Status                    & 1: Married, 2: Unmarried \\
Father's Age                      & Father's Combined Age \\
Prior Births Now Living           & Number of children still living from previous live births \\
Prior Births Now Dead             & Number of children dead from previous live births \\
Prior Other Terminations          & Number other terminations \\
Total Birth Order                 & Number of total birth order \\
Month Prenatal Care Began         & Month prenatal care began \\
Number of Prenatal Visits         & Number of prenatal visits \\
Cigarettes Before Pregnancy       & Number of cigarettes daily \\
Mother's Height                   & Height in inches \\
Pre-pregnancy Weight              & Weight in pounds \\
Cigarettes 1st Trimester          & Number of cigarettes daily \\
Cigarettes 2nd Trimester          & Number of cigarettes daily \\
Cigarettes 3rd Trimester          & Number of cigarettes daily \\
Weight Gain                       & Weight gain in pounds \\
Pre-pregnancy Diabetes            & 1: Yes, 0: No \\
Gestational Diabetes              & 1: Yes, 0: No \\
Pre-pregnancy Hypertension        & 1: Yes, 0: No \\
Gestational Hypertension          & 1: Yes, 0: No \\
Hypertension Eclampsia            & 1: Yes, 0: No \\
Previous Preterm Birth            & 1: Yes, 0: No \\
Infertility Treatment Used        & 1: Yes, 0: No \\
Number of Previous Cesareans      & Number of Previous Cesareans \\
No Infections Reported            & 1: True, 0: False \\
Induction of Labor                & 1: Yes, 0: No \\
Augmentation of Labor             & 1: Yes, 0: No \\
Steroids                          & 1: Yes, 0: No \\
Antibiotics                       & 1: Yes, 0: No \\
Chorioamnionitis                  & 1: Yes, 0: No \\
Anesthesia                        & 1: Yes, 0: No \\
No Maternal Morbidity Reported    & 1: True, 0: False \\
Attendant at Birth                & 1: Doctor of Medicine, 2: Doctor of Osteopathy, \\
                                  & 3: Certified Midwife, 4: Other Midwife, 5: Other \\
Five Minute APGAR Score           & A score of 0-10 \\
Ten Minute APGAR Score            & A score of 0-10 \\
Sex of Infant                     & 1: Male, 0: Female \\
Combined Gestation                & Week of Gestation \\
Assisted Ventilation (immediately)& 1: Yes, 0: No \\
Assisted Ventilation $>$ 6 hrs      & 1: Yes, 0: No \\
Admission to NICU                 & 1: Yes, 0: No \\
Surfactant                        & 1: Yes, 0: No \\
Antibiotics for Newborn           & 1: Yes, 0: No \\
Infant Transferred                & 1: Yes, 0: No \\
Infant Living at Time of Report   & 1: Yes, 0: No \\
Infant Breastfed at Discharge     & 1: Yes, 0: No \\
Mother's Education\_2             & 1: 9th through 12th grade with no diploma, 0: Otherwise \\
Mother's Education\_3             & 1: High school graduate or GED completed, 0: Otherwise \\
Mother's Education\_4             & 1: Some college credit, but not a degree., 0: Otherwise \\
Mother's Education\_5             & 1: Associate degree, 0: Otherwise \\
Mother's Education\_6             & 1: Bachelor's degree, 0: Otherwise \\
Mother's Education\_7             & 1: Master's degree, 0: Otherwise \\
Mother's Education\_8             & 1: Doctorate or Professional Degree, 0: Otherwise \\
Father's Education\_2             & 1: 9th through 12th grade with no diploma, 0: Otherwise \\
Father's Education\_3             & 1: High school graduate or GED completed, 0: Otherwise \\
Father's Education\_4             & 1: Some college credit, but not a degree., 0: Otherwise \\
Father's Education\_5             & 1: Associate degree, 0: Otherwise \\
Father's Education\_6             & 1: Bachelor's degree, 0: Otherwise \\
Father's Education\_7             & 1: Master's degree, 0: Otherwise \\
Father's Education\_8             & 1: Doctorate or Professional Degree, 0: Otherwise \\
Mother's Race/Hispanic Origin\_2  & 1: Non-Hispanic Black (only), 0: Otherwise \\
Mother's Race/Hispanic Origin\_4  & 1: Non-Hispanic Asian (only), 0: Otherwise \\
Mother's Race/Hispanic Origin\_6  & 1: Non-Hispanic more than one race, 0: Otherwise \\
Mother's Race/Hispanic Origin\_7  & 1: Hispanic, 0: Otherwise \\
Father's Race/Hispanic Origin\_2  & 1: Non-Hispanic Black (only), 0: Otherwise \\
Father's Race/Hispanic Origin\_3  & 1: Non-Hispanic AIAN (only), 0: Otherwise \\
Father's Race/Hispanic Origin\_4  & 1: Non-Hispanic Asian (only), 0: Otherwise \\
Father's Race/Hispanic Origin\_6  & 1: Non-Hispanic more than one race, 0: Otherwise \\
Father's Race/Hispanic Origin\_7  & 1: Hispanic, 0: Otherwise \\
Father's Race/Hispanic Origin\_8  & 1: Origin unknown or not stated, 0: Otherwise \\
Fetal Presentation at Delivery\_2 & 1: Breech, 0: Otherwise \\
Fetal Presentation at Delivery\_3 & 1: Other, 0: Otherwise \\
Delivery Method\_2                & 1: C-Section, 0: Otherwise \\
Payment\_2                        & 1: Private Insurance, 0: Otherwise \\
Payment\_3                        & 1: Self Pay, 0: Otherwise \\
Payment\_4                        & 1: Other, 0: Otherwise \\
Birth Season\_Spring              & 1: Spring, 0: Otherwise \\
Birth Season\_Summer              & 1: Summer, 0: Otherwise \\
Birth Season\_Winter              & 1: Winter, 0: Otherwise
\label{table:real_variable}
\end{longtable}
}

\subsection{Additional results}

In this section, we provide additional results for Section \ref{sec:real} with the bandwidths $h$ and $h'$ possibly different.

%--------------------------------------------------
%--------------------------------------------------
\subsubsection{Risk factors for low birth weight}

In this section, we use the proposed method to investigate the association between low birth weight in twins and various risk factors. 
We use the tuning parameter of $\lambda=0.4 \sqrt{\log p / n}$, the bandwidths for selection and inference $h = \max \big\{0.05, \sqrt{\tau(1-\tau)} (\log (p) / n)^{1 / 4}\big\} $ and $h' = \{(q+\log n) / n\}^{2 / 5}$.
We draw Gaussian white noise $\omega_n$ from $\cN\left(0_p, \frac{1}{2n}I_{p, p}\right)$. 
For data splitting, we use two-thirds of the samples for model selection and the remaining one-third of the samples for constructing confidence intervals for the effects of the selected variables on the $10\%$ conditional quantile. 
We focus on $\tau=0.1$ in our analysis.
In Figure \ref{fig:birth_CI_old}, we show the $90\%$ confidence intervals for the variables selected by ``Baseline''. 
For this same set of variables, we show the results from ``Proposed'' and ``Splitting'' when implemented on a randomly drawn subsample of size $n=500$.

\begin{figure}[h]
	\centering
	\includegraphics[scale = 0.68]{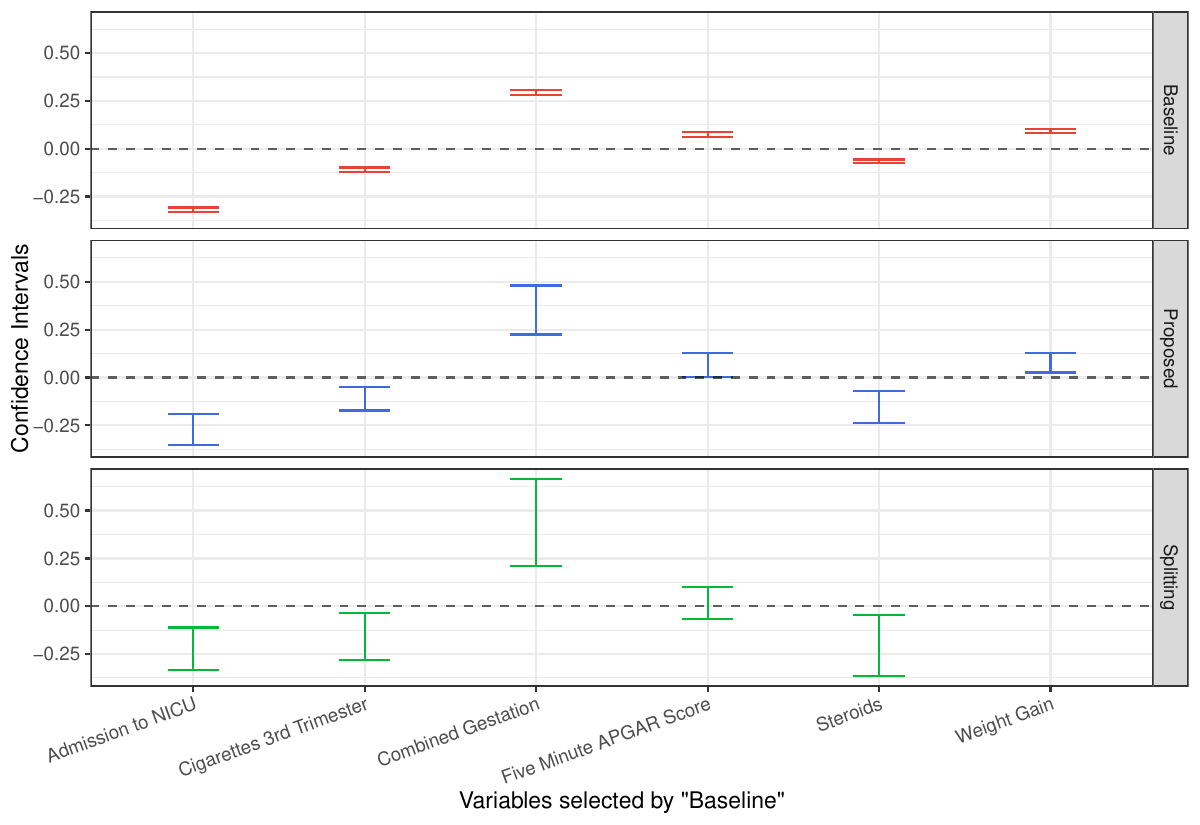}
	\caption{
   \small{$90\%$ confidence intervals for the variables chosen as significant by the full data analysis. 
     ``Splitting'' method does not select the ``Weight Gain'' factor and fails to identify the significance of ``Five Minute APGAR Score'' based on subsamples.
     In contrast, ``Proposed'' identifies the association between these factors and the low birth weight in twins even when a ``Baseline'' on the full data. 
     The average length of the confidence intervals produced by ``Proposed'' is $0.156$, while ``Splitting'' results in an average interval length of $0.283$}.}
	\label{fig:birth_CI_old}
\end{figure}

\subsubsection{Replicating our analysis at $10\%$ quantile level}

We repeat the above-described analysis on $100$ random subsamples of size of $500$. We report in Table \ref{table:birth_length_10_old}, the fraction of times that a selected variable was reported as significant with the post-selection interval estimators.
Additionally, in Figure \ref{fig:birth_length_10_old}, we display the interval lengths for each selected variable as well as the average interval lengths for all selected variables.

\begin{table}[h]
\caption{\small{The fraction of times the post-selection interval for a variable did not include zero out of the total number of times the same variable was selected.
}}
\vspace{10pt}
\small
\centering
\begin{tabular}{ccccccc}
\hline 
\hline
Variables & \makecell{Admission to\\ NICU} & \makecell{Cigarettes 3rd\\ Trimester} & \makecell{Combined\\ Gestation} & \makecell{Five Minute\\ APGAR Score} & \makecell{Steroids} & \makecell{Weight\\ Gain} \\ 
\hline 
\hline
Proposed  & 0.979 & 0.728 & 0.980 & 0.456 & 0.404 & 0.622  \\ 
Splitting & 0.960 & 0.660 & 0.940 & 0.135 & 0.462 & 0.488  \\
\hline 
\hline
\end{tabular}
\label{table:birth_length_10_old}
\end{table}

\begin{figure}[H]
	\centering
	\includegraphics[scale = 0.7]{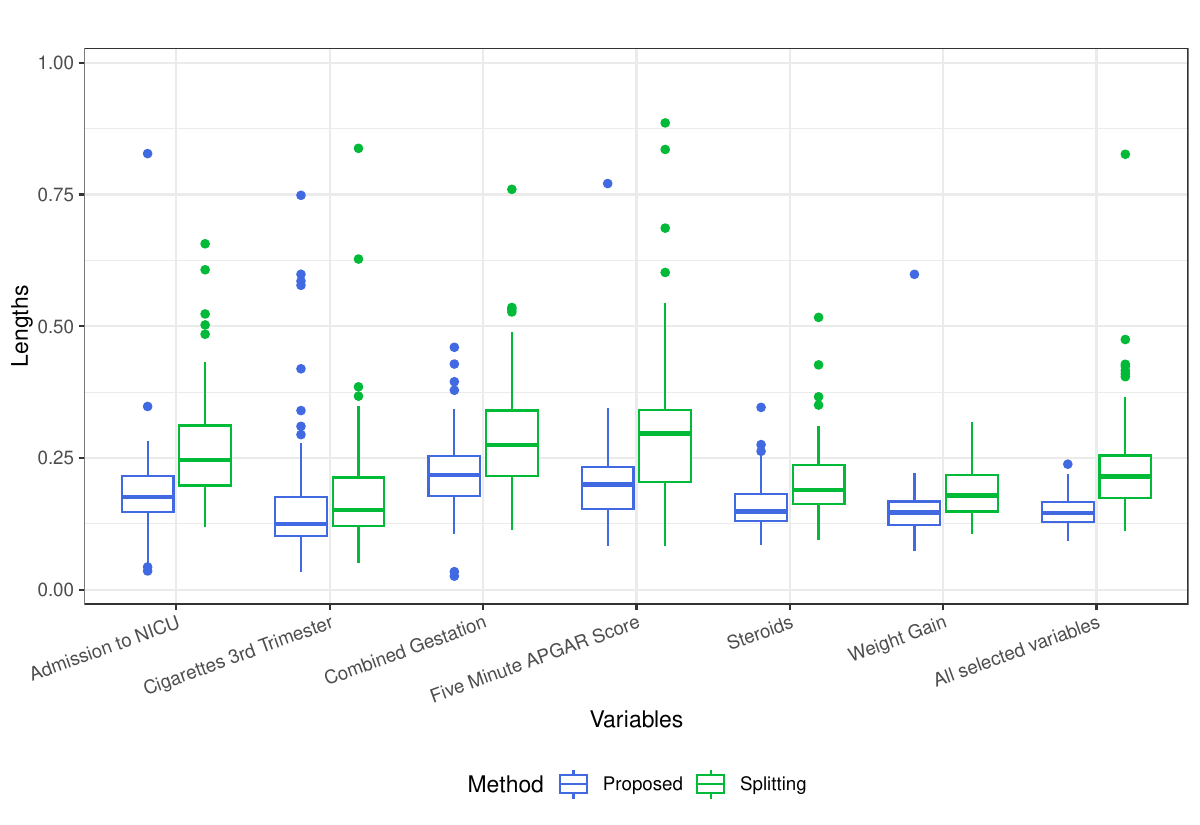}
	\caption{
   \small{
   Box plots for the lengths of $90\%$ confidence intervals of each selected variable and the average lengths of all selected variables at $10\%$ quantile level. 
   ``Proposed'' results in shorter intervals compared to ``Splitting'' for each variable and overall on average for all variables.}}
	\label{fig:birth_length_10_old}
\end{figure}

\subsubsection{Replicating our analysis at $1\%$ quantile level}

We repeat the above-described analysis on $100$ random subsamples of size of $500$. 
In Figure \ref{fig:birth_length_1_old}, we display the interval lengths for each selected variable as well as the average interval lengths for all selected variables.

\begin{figure}[H]
	\centering
	\includegraphics[scale = 0.7]{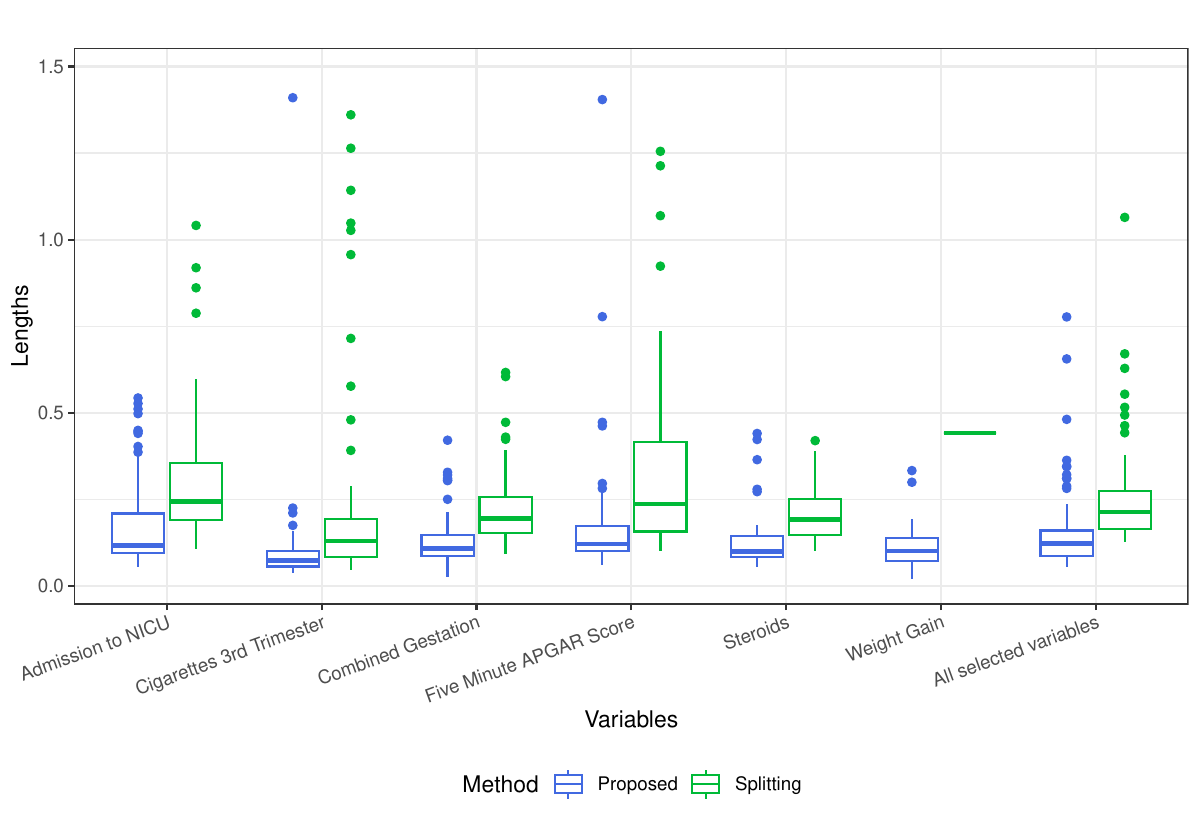}
	\caption{
   \small{
   Box plots for the lengths of $90\%$ confidence intervals of each selected variable and the average lengths of all selected variables at $1\%$ quantile level. 
   ``Proposed'' results in shorter intervals compared to ``Splitting'' for each variable and overall on average for all variables.}}
	\label{fig:birth_length_1_old}
\end{figure}

\end{spacing}
\end{document}